\documentclass[letterpaper,11pt]{article}
\usepackage{amsmath}
\usepackage[font=small,labelfont=bf]{caption}
\usepackage[dvipsnames,x11names,table]{xcolor}
\usepackage{amsthm}
\usepackage{amssymb}
\usepackage{xspace}
\usepackage{graphicx}
\usepackage{fullpage}
\usepackage{hyperref}
\hypersetup{colorlinks=true, urlcolor=Blue, citecolor=DeepSkyBlue4, linkcolor=DeepPink4, breaklinks, unicode}
\usepackage{url}
\usepackage[capitalise]{cleveref}
\usepackage[shortlabels]{enumitem}
\usepackage{thm-restate}
\usepackage{mathtools}
\usepackage[T1]{fontenc}
\usepackage{babel}
\usepackage[utf8]{inputenc}
\usepackage[numbers]{natbib}
\usepackage{multirow}
\usepackage{array}
\usepackage{makecell}

\RequirePackage[charter]{mathdesign}
\RequirePackage{berasans,beramono}
\RequirePackage{microtype}
\RequirePackage[mathcal]{eucal}

\setitemize{itemsep=0pt}
\setenumerate{itemsep=0pt}

\RequirePackage{arydshln}

\newtheorem{theorem}{Theorem}
\newtheorem{lemma}[theorem]{Lemma}
\newtheorem{definition}[theorem]{Definition}
\newtheorem{question}[theorem]{Question}
\newtheorem{corollary}[theorem]{Corollary}
\newtheorem{claim}[theorem]{Claim}
\newtheorem{observation}[theorem]{Observation}

\newenvironment{innerproof}
 {\proof}
 {\endproof}

\newcommand{\eps}{\varepsilon}

\newcommand{\Oh}{\mathcal{O}}

\newcommand{\sub}{\subseteq}
\newcommand{\sm}{\setminus}
\newcommand{\Ot}{\widetilde{\Oh}}

\newcommand{\nn}{\ensuremath{\mathbb{N}}}
\newcommand{\fcal}{\ensuremath{\mathcal{F}}}
\newcommand{\qcal}{\ensuremath{\mathcal{Q}}}
\newcommand{\ccal}{\ensuremath{\mathcal{C}}}
\newcommand{\rcal}{\ensuremath{\mathcal{R}}}

\newcommand{\dcal}{\ensuremath{\mathcal{D}}}
\newcommand{\scal}{\ensuremath{\mathcal{S}}}

\newcommand{\mycol}{DeepPink4}
\newcommand{\coltext}[1]{{\textcolor{\mycol}{#1}}}
\newcommand{\coldef}[1]{{\em \color{\mycol}{#1}}}
\newcommand{\re}{{\color{black}{reachability emulator}}\xspace}
\newcommand{\fre}{{\color{black}{faithful reachability emulator}}\xspace}
\newcommand{\reach}{{{\mathrm{Reach}}}}
\newcommand{\reachsys}{{\mathrm{ReachSystem}}}

\newcommand{\mybox}[2]{
	\vspace{1mm}
	\begin{center}
		\noindent\colorbox{Honeydew2}{
            \begin{minipage}{#1\linewidth}{#2}
            \end{minipage}
        }
    \end{center}
	\vspace{1mm}
}

\title{Breaking Two Barriers for Dynamic Planar Reachability}
\author{
	Adam Karczmarz\footnote{Supported by Polish National Science Centre SONATA-18 grant number 2022/47/D/ST6/02184.} \\
	{\small University of Warsaw} \\
	{\small a.karczmarz@mimuw.edu.pl } \and
    Michał Włodarczyk\footnote{Supported by Polish National Science Centre SONATA-19 grant number 2023/51/D/ST6/00155.} \\
	{\small University of Warsaw} \\
	{\small m.wlodarczyk@mimuw.edu.pl }
}
\date{}

\begin{document}

\maketitle

\begin{abstract}
We consider the reachability problem in fully dynamic directed planar graphs: the graph undergoes edge updates (insertions and deletions) and we receive queries asking if one vertex is reachable from another.
The current-best online algorithm for dynamic planar reachability spends worst-case time $\Ot(n^{2/3})$ per query/update~\cite{subramanian1993fully}.
This can be improved to $\Ot(n^{1/2})$ in the offline setting or when the embedding is fixed~\cite{Diks07}.
Our contribution is twofold:

\begin{enumerate}
    \item We present the first algorithm that breaks the square-root barrier for dynamic planar reachability in the offline setting, obtaining time
      $\Oh(n^{0.49})$ per query/update.
    This shows that the problem is easier than
    answering distance queries in a dynamic undirected planar graph,
    for which a~conditional lower bound of $n^{1/2 - o(1)}$ is known~\cite{Abboud16}.
    Notably, this lower bound applies to both online and offline algorithms.

    \item Second, in the general online setting we design a randomized data structure with amortized
    update and query time $\Oh(n^{0.664})$.
    This is the first improvement upon the exponent $2/3$ in over three decades.
\end{enumerate}

The main engine behind our theorems is a new sparsification framework aimed at maintaining information about reachability.
Specifically, we study digraphs through the lens of VC dimensionality of certain set systems.
First, we design sparse reachability emulators for planar digraphs that preserve the VC dimension.
Second, we utilize orderings with a low ``stabbing number''~\cite{chazelle1989quasi} to compute
subquadratic reachability emulators in subquadratic time whenever the VC dimension is bounded.

\end{abstract}
\thispagestyle{empty}

\newpage
\setcounter{page}{1}

\section{Introduction}
The fully dynamic reachability problem (also called dynamic transitive closure) asks for a data structure maintaining a directed graph $G=(V,E)$ under edge insertions and deletions and supporting reachability queries: given $u,v\in V$, does $G$ currently contain an $u\to v$ path?
It is one of the most fundamental and well-studied problems in the area of dynamic graph algorithms.

The naive solution is either to recompute the transitive closure from scratch upon update or to run full graph search upon query.
Numerous data structures have been proposed to improve upon these.
Let $n, m$ denote the numbers of vertices and edges.
The full reachability matrix can be maintained explicitly in $\Oh(n^2)$ time per update~\cite{DBLP:journals/algorithmica/DemetrescuI08,DBLP:journals/talg/Roditty08,DBLP:conf/focs/Sankowski04}.
With update time near-linear in $m$, one can achieve $\Oh(n)$ query time~\cite{DBLP:journals/siamcomp/RodittyZ16}.
Algebraic methods~\cite{DBLP:conf/focs/BrandNS19, DBLP:conf/focs/Sankowski04} can be used to obtain data structures: (a)~with $\Oh(n^{1.41})$ update and query time, and (b) with $\Oh(n^{1.53})$ update time and $\Oh(n^{0.53})$ query time\footnote{The upper bounds of algebraic dynamic reachability data structures depend on the matrix multiplication exponent $\omega$.
Currently, it is known that $\omega\leq 2.3714$.~\cite{DBLP:conf/soda/AlmanDWXXZ25}}.
Via combinatorial means, one can also achieve $\Oh(m\sqrt{n})$ update time and $\Oh(\sqrt{n})$ query time~\cite{DBLP:journals/siamcomp/RodittyZ08}.

Notably, none of the aforementioned results improves upon the naive graph search approach for sparse graphs when queries occur as often as updates.
Specifically, no known data structure achieves $\Oh(n^{1-\epsilon})$ update and query time at the same time.
This is in fact impossible for data structures with $n^{2-\Omega(1)}$ time preprocessing~\cite{DBLP:conf/focs/AbboudW14} unless the so-called $k$-Cycle hypothesis fails~\cite{DBLP:conf/icalp/AnconaHRWW19,DBLP:conf/stoc/GutenbergWW20, DBLP:conf/soda/LincolnWW18}\footnote{Higher lower bounds on the sum of update and query times are known in the dense case~\cite{DBLP:conf/focs/BrandNS19,DBLP:conf/focs/AbboudW14,DBLP:conf/stoc/HenzingerKNS15,DBLP:conf/stoc/JinX22}.}.
Interestingly, this conditional lower bound holds also in the \emph{offline} setting, where the sequence of issued updates and queries is known upfront.
Weaker $n^{1/2-o(1)}$ conditional lower bounds have been established assuming any polynomial preprocessing in either online setting~\cite{DBLP:conf/stoc/HenzingerKNS15} or offline setting when limited to combinatorial means (no fast matrix multiplication allowed)~\cite{DBLP:conf/stoc/JinX22}.

The known lower bounds suggest that, in the search for more efficient
data structures in the sparse case, one should probably focus on special graph classes.
Therefore, in this paper we study fully dynamic reachability in \emph{planar} directed graphs.
Planar graphs constitute, arguably, the most popular and well-studied class of sparse graphs and
a large body of work has been devoted to directed planar reachability, both in the static and dynamic settings.
A~planar digraph can be preprocessed in linear-time so that
a single reachability query can be addressed in constant time~\cite{DBLP:conf/focs/HolmRT15}.
For the dynamic setting, this implies that the naive algorithm already achieves $\Oh(n)$ update time and $\Oh(1)$ query time.
Even before a near-optimal planar reachability oracle was first shown~\cite{thorup2004compact}, Subramanian~\cite{subramanian1993fully} proposed a dynamic
data structure for planar digraphs with $\Ot(n^{2/3})$ update and query time.
It only requires the graph to remain planar after every update, and does not assume any fixed embedding of the maintained~graph.

More efficient
data structures for planar graphs have been developed under stronger assumptions.
Diks and Sankowski \cite{Diks07} improved upon~\cite{subramanian1993fully} in the fixed-embedding setting (i.e., when an edge $(u,v)$ can be inserted only inside a specified face containing $u$ and $v$) and achieved $\Ot(\sqrt{n})$ update and query time.
This setting captures, for instance, the case when the graph $G$ is fixed but the edges can be switched on or off.
The same bound can be achieved in the offline setting without fixed embedding (see \Cref{subsec:offline}).
For acyclic plane-embedded graphs with exactly one source and one sink, a near-optimal data structure with $\Oh(\log{n})$ update and query time has been proposed~\cite{Tamassia93,DBLP:conf/esa/Husfeldt95}.
Bringmann, Künnemann, and Nusser~\cite{Bringmann19} considered directed grids, oriented towards one corner, in a model where each vertex can be switched on/off and one asks whether one corner is reachable from another.
For a grid of size $\sqrt{n}\times \sqrt{n}$, they presented  an offline algorithm processing $t$ updates in total time $\Ot(n+t\cdot n^{1/3})$.
{An online data structure supporting $\Ot(1)$ time reachability queries from a~single-source and $\Ot(n^{4/5})$ time updates and source changes has been described in~\cite{DBLP:journals/jcss/Charalampopoulos22}.}

To the best of our knowledge, fully dynamic reachability has not been previously studied in more expressive classes of sparse graphs, such as minor-free graphs.
{At the same time, partially dynamic reachability, i.e., allowing either only edge insertions or only deletions, has been studied in both planar and minor-free graphs~\cite{DBLP:conf/stoc/ItalianoKLS17,DBLP:conf/soda/Karczmarz18,KarczmarzZ25}.}

\subsection{The square-root barrier}

It is common for square-root terms to appear in running times of various algorithms for planar graphs, see, e.g.,~\cite{Marx13}.
They can be usually traced back to the classic Lipton-Tarjan theorem, ensuring that an $n$-vertex planar graph admits a balanced separator of size $\Oh(\sqrt{n})$~\cite{Lipton79,DBLP:journals/jcss/Miller86}.
It is then no surprise that improving below $n^{1 / 2 - o(1)}$ time forms a natural barrier for several fully dynamic problems in planar graphs.
Interestingly, this barrier is not merely an artifact of the known techniques.
Abboud and Dahlgaard~\cite{Abboud16} proved that no dynamic algorithm can support both edge weight updates and exact distance queries in an {\em undirected} planar graph within amortized time $n^{1 / 2 - \Omega(1)}$ unless
the {All Pairs Shortest Paths} problem can be solved in truly subcubic time.
The conjecture that no such truly subcubic algorithm exists, known as the APSP conjecture\footnote{Very recently, it has been announced that the APSP conjecture is false~\cite{alman2026trulysubquadratic3sumtruly}. When this result is fully confirmed, it will obviously remove the existing square-root barrier for dynamic planar distances.}, is one of the main hardness assumptions in fine-grained complexity~\cite{Williams18, abboud2023subcubic}.
When it comes to maintaining $(1+\eps)$-approximate distances in the \emph{undirected} case, square-root time remained a state-of-the-art for a decade~\cite{abraham2012fully} and breaking this barrier led directly to $n^{o(1)}$ time~\cite{chang2022arxiv, Filtser24} -- see \Cref{tab:results}.
The same phenomenon was observed in dynamic planarity testing~\cite{Holm20}.
This suggests that $\sqrt{n}$ is not only a quantitative barrier and leads to the following question.

\mybox{0.7}{For which dynamic planar problems $\sqrt{n}$ constitutes an impassable barrier?}

Specifically, Dahlgaard~\cite[Open Problem~8]{dagstuhl} asked in 2016
whether this barrier holds for dynamic planar reachability:
can one achieve
$n^{1/2 - \Omega(1)}$ query/update time or is such an algorithm unlikely due to some hardness assumption?

\renewcommand{\arraystretch}{1.4}
\newcolumntype{C}[1]{>{\centering\arraybackslash}p{#1}}

\begin{table}[t]
\centering
\footnotesize
\begin{tabular}{|C{3.5cm}|C{2.5cm}|C{2cm}|C{5cm}|}
\hline
\rowcolor{Honeydew2}
problem & model & time & reference \\ \hline
 \multirow{3}{*}{\makecell{directed\\exact distance}}  & online & $n^{2/3}$ & \cite{FR06}, cf. \cite{KleinMS13,gawrychowski16,DBLP:journals/talg/KaplanMNS17}\\
\cline{2-4}
  & {\makecell{offline\\weight updates}}  & {$n^{1/2}$} & {\cite{Das22}} \\
  \cline{2-4}
 & lower bound & $n^{1/2 - o(1)}$ & \cite{Abboud16} \\
\hline
 \multirow{4}{*}{\makecell{undirected\\$(1+\eps)$-apx. distance}}   & online & $n^{2/3}$ &  \cite{DBLP:journals/algorithmica/KleinS98} \\
 \cline{2-4}
  & online & $n^{1/2}$ & \cite{abraham2012fully} \\
\cline{2-4}
 & online & $n^{o(1)}$ &  \cite{Filtser24} \\
\cline{2-4}
  & offline & $\mathrm{polylog}(n)$ & \cite{chang2022arxiv} \\
  \hline
  & online & $n^{2/3}$ & \cite{subramanian1993fully}  \\
 \cline{2-4}
  & online & ${\textcolor{\mycol}{n^{0.664}}}$ & \textcolor{\mycol}{this work} \\
 \cline{2-4}
directed reachability & \makecell{online\\fixed embedding}  & $n^{1/2}$ & \cite{Diks07} \\
 \cline{2-4}
 & offline  & $n^{1/2}$ & corollary from \cite{thorup2004compact}, see \S\ref{subsec:offline} \\
\cline{2-4}
 & offline & ${\textcolor{\mycol}{n^{0.49}}}$ & \textcolor{\mycol}{this work} \\
 \hline
\end{tabular}
\caption{Previous and new results on dynamic reachability, distances, and approximate distances in planar graphs. The maximum of update and query time bounds is presented. We omit polylogarithmic factors in polynomial running times.
We also do not delve into details regarding amortization or randomization.
The upper bounds for reporting exact distances work for digraphs as well and the lower bound works already for undirected graphs.}\label{tab:results}
\end{table}

\subsection{Our main theorems}

We provide the first evidence that the square-root barrier does not apply to dynamic planar reachability.
Precisely, we prove that the exponent can be improved to $0.49$ in the offline model.

\begin{theorem}\label{thm:main:offine}
Let $G$ be a planar digraph.
Given a sequence of (a) edge updates (insertions or deletions) that preserve planarity of $G$ and (b) $(u \to v)$-reachability queries,
one can compute the answers to all the queries in deterministic time $\Ot(n+t\cdot n^{0.49})$, where $t$ is the length of the sequence.
\end{theorem}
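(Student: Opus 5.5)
We follow the offline $\Ot(\sqrt{n})$ scheme of \Cref{subsec:offline} and shrink its bottleneck with the sparsification framework from the abstract.

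\textbf{Batching.} Split the sequence into phases of $k$ operations. In a phase only $\Oh(k)$ edges are touched. Let $G_0$ be the graph of edges present throughout the phase, and let $T$ be the set of $\Oh(k)$ terminals: endpoints of updated edges and of queried pairs. Every query in the phase is then answered by a search on the graph $H$ consisting of a reachability emulator of $G_0$ on $T$ plus the currently active updated edges. It suffices to build an emulator with $s(k)$ edges in time $f(n,k)$. Each query then costs $\Oh(s(k))$, and the total cost is
\[
\Ot\!\left(\frac{t}{k}\,f(n,k) + t\,s(k)\right).
\]

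\textbf{The $\sqrt{n}$ baseline and where it is lost.} The terminals can be chosen to lie on $\Oh(1)$ faces after an $r$-division style decomposition. Thorup's separator paths and dipaths~\cite{thorup2004compact} give an emulator with $\Ot(k)$ edges in $\Ot(n)$ time. Balancing $n/k$ against $k$ yields $\sqrt{n}$. To beat this we must either build the emulator faster than $\Ot(n)$ or reuse work across phases.

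\textbf{Reusing work via a hierarchy.} I would precompute a recursive decomposition (an $r$-division hierarchy) of the union of all edges that ever appear; this union is possibly nonplanar. To avoid that, I would instead take a planar graph per block of $q$ phases and restrict to $G_0$ of the block. For each piece not touched by the current phase, the emulator on its boundary is precomputed. For a phase, only the $\Oh(k)$ affected pieces are rebuilt. We then compress the reachability between the $\Oh(n/\sqrt r)$ boundary vertices of untouched pieces and the terminals.

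\textbf{Compressing the boundary structure.} This is the crux. The reachability relation between boundary vertices of a piece, seen as a set system, has bounded VC dimension. The planar emulators preserve this property. Orderings with low stabbing number~\cite{chazelle1989quasi} then let us compute a subquadratic emulator in subquadratic time: roughly $\Oh(b^{2-\delta})$ edges for $b$ boundary vertices, in comparable time. Plugging this into the recursion beats the $\sqrt{r}\cdot(n/r)$ cost of stitching boundaries. After choosing $r$, $k$, and the block length to balance the terms, we obtain an exponent slightly below $1/2$, namely $0.49$. The calculation is routine once the exponent $\delta$ is fixed.

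\textbf{Main obstacle.} I expect the hardest step to be proving that composed emulators, obtained by gluing pieces and replacing them by emulators, keep bounded VC dimension. This is needed so the stabbing-number sparsification can be applied repeatedly up the hierarchy. I would first try to maintain faithful emulators that remain planar or near-planar, so the bounded-VC argument for planar reachability set systems (paths crossing a face boundary in a consistent order) continues to apply.
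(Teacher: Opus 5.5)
Your proposal has a genuine gap. It never fixes a scheme whose cost terms provably go below $\sqrt n$, and the scheme it sketches cannot reach that. Routing each phase through an $r$-division of the block graph has two consequences. First, the graph you compress per phase (the union of piece emulators) has size $\Theta(n/\sqrt r)$ in the worst case. Second, rebuilding the $\Oh(k)$ touched pieces costs $\Ot(r)$ per operation.

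The stabbing-number compression also does not run in time ``comparable to $b^{2-\delta}$''. By \Cref{lem:faithful:optimize-general} its cost is $\Ot(|H|\cdot|Z|^{1-\delta/d}+|Z|^{1+(2d+2)\delta})$. So it is driven by the size of the graph $H$ being compressed, and its only gain is a polynomial saving over the naive $|H|\cdot|Z|$. Now take $d=3$, and note that you need $r,k\le\sqrt n$. Then the per-operation share of compressing onto the $k$ terminals is $\Ot((n/\sqrt r)\,k^{-\delta/3})\ge n^{3/4-1/6}$, which exceeds $\sqrt n$.

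The paper avoids $r$-divisions offline and uses two layers. For a large chunk of length $n^{25/49}$ with affected set $X$, it builds one emulator $H$ for $(G\ominus X,X)$ of size $\Ot(|X|)$ in $\Ot(n)$ time (\Cref{lem:faithful:construct}). For each small chunk of length $n^{3/49}$ with affected set $Y\subseteq X$, it compresses $\widehat H=H\cup(\widehat G[X]\ominus Y)$ onto $Y$. This graph has size $\Ot(n^{25/49})$, not $n/\sqrt r$, and \Cref{lem:faithful:optimize} with $d=3,\delta=1$ compresses it in time $\Ot(|\widehat H|\cdot|Y|^{2/3}+|Y|^9)$. Queries are then answered by DFS on $J\cup G_i[Y]$. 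Balancing $1-\alpha$, $\alpha-\beta/3$ and $8\beta$ gives the exponent $24/49$, so this calculation is the actual content, not a routine afterthought.

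Note also that only one compression step is used. The output $J$ is not faithful, so the repeated application ``up the hierarchy'' that you plan is exactly what the paper leaves open.

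On the obstacle you flag, your premise is false and your proposed fix is blocked. Thorup's emulators do not preserve bounded VC dimension. Their Steiner vertices on separator paths can have reachability sets outside $\reachsys(G,X)$. Moreover, the compression needs bounded (dual) VC dimension of $\reach(v,Y)$ for every vertex $v$ of the graph being compressed. This is because it processes all vertices in reverse topological order, and Sauer--Shelah bounds the number of recomputations of the ordering only for such a family. Bounded VC dimension of the terminal-to-terminal relation is therefore not enough.

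Keeping emulators planar is impossible at near-linear size: \Cref{thm:lb} shows that $(k-2)^2$ vertices can be necessary even for terminals on four faces. The missing idea is faithfulness. Modify Thorup's construction to skip the decomposition phase and to add, for every $v\in Y$, an edge to $\rel(\first(P_i^j,v))$. This guarantees $\reach_H(v,X)=\reach_G(v,X)$ for all $v\in V(H)$, so $\reachsys(H,X)\subseteq\reachsys(G,X)$ regardless of whether $H$ is planar.

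Faithfulness survives adding edges inside $X$ (\Cref{lem:emulator-union-simple}) and shrinking the terminal set (\Cref{obs:prelim:faithful-subset}). Hence $\widehat H$ is faithful for the planar graph $\widehat G\ominus Y$, and \Cref{thm:prelim:planar-vc} bounds its VC and dual VC dimension by $3$. Finally, your assumption that the terminals lie on $\Oh(1)$ faces is unjustified, since the terminals are dictated by the operation sequence. It is also unnecessary.
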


The relevance of the offline regime stems from the fact the conditional lower bounds for dynamic data structures often work against offline algorithms as well~\cite{DBLP:conf/focs/AbboudW14, DBLP:conf/stoc/JinX22, Patrascu2010, roditty2004dynamic, Kopelowitz16, AbboundVH15, DBLP:conf/icalp/AnconaHRWW19, DBLP:conf/stoc/GutenbergWW20}.
This applies to the lower bound for dynamic planar shortest paths~\cite{Abboud16}.
Notably, \Cref{thm:main:offine} rules out the existence of an $n^{ 1 / 2 - o(1)}$ lower bound for offline dynamic planar reachability under a static hardness assumption such as the APSP conjecture.
This provides a sharp separation between maintaining directed reachability and undirected distances in planar graphs.
Another motivation comes from fine-grained complexity.
Improving the running time in \Cref{thm:main:offine} to $n^{o(1)}$ would imply an $n^{4+o(1)}$-time algorithm
 for computing discrete Fr\'echet distance under translation, essentially matching a conditional lower bound~\cite{Bringmann19}.
We believe that the obtained breach constitutes the first step towards achieving subpolynomial time for dynamic planar reachability.
We elaborate on this perspective in \Cref{sec:conclusion}.

Next, we move on to the online setting.
In both fully dynamic problems, directed reachability and exact distances,
$\widetilde\Oh(\sqrt{n})$ time has been so far only achieved in restricted models~\cite{Das22,Diks07}\footnote{The offline fully dynamic data structure of~\cite{Das22} is in fact achieved by showing an \emph{online} $\Ot(\sqrt{n})$ worst-case bound in the incremental setting, i.e., with only edge weight decreases.}
while in general the best exponent remains $2/3$ \cite{subramanian1993fully, FR06,DBLP:journals/talg/KaplanMNS17,DBLP:conf/soda/Klein05,gawrychowski16}\footnote{The literature on dynamic exact distance queries covers edge weight updates. However, the $\Ot(n^{2/3})$ update/query bound carries to arbitrary planarity preserving edge updates by maintaining an $r$-division with few holes as~\cite{subramanian1993fully,DBLP:journals/algorithmica/KleinS98,DBLP:journals/jcss/Charalampopoulos22} did.}.
This corresponds to the trade-off between the boundary size $\Ot(n/\sqrt{r})$ and update time $\Ot(r)$ of a dynamically maintained $r$-division with few holes~\cite{KleinMS13} (see also~\cite{subramanian1993fully, DBLP:journals/jcss/Charalampopoulos22}).
Our second main theorem constitutes the first improvement in the general online setting since the work of Subramanian~\cite{subramanian1993fully} from 1993.

\begin{theorem}\label{thm:planar-dyn-reach}
    There exists a dynamic reachability data structure maintaining a planar digraph $G$ subject to planarity-preserving edge insertions and deletions with $\Oh(n^{0.664})$ amortized update time and $\Oh(n^{0.664})$ query time.
    The data structure is Monte Carlo randomized and the answers are correct with high probability.
\end{theorem}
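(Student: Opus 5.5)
We would build on Subramanian's framework: keep a dynamic $r$-division with few holes of the current planar digraph $G$ (as in~\cite{subramanian1993fully,KleinMS13}). There are $\Oh(n/r)$ pieces, each with $\Oh(r)$ vertices and $\Oh(\sqrt{r})$ boundary vertices spread over $\Oh(1)$ holes, and each planarity-preserving edge update is handled by rebuilding $\Oh(1)$ pieces (amortized) in $\Ot(r)$ time. Subramanian's $2/3$ exponent comes from two costs. The first is the $\Ot(r)$ piece rebuild. The second is a search over the compressed graph formed by all piece boundaries, which has $\Ot(n/\sqrt r)$ vertices. A query currently pays roughly the number of dense-distance-graph-like edges in that graph. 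The plan is to shrink the second term by replacing each piece's boundary-to-boundary reachability relation with a \emph{sparse reachability emulator}. This is the sparsification framework advertised in the abstract.

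\textbf{Step 1 (emulators for pieces).} For a piece $P$ with boundary $\partial P$ lying on $\Oh(1)$ holes, the relation ``$a$ reaches $b$ inside $P$'' on $\partial P$ has bounded VC dimension. For each $a$, the reach set $\reach(a)\cap\partial P$ is a union of $\Oh(1)$ cyclic intervals per hole, by a non-crossing-paths argument. We would then use the paper's construction of reachability emulators that preserve VC dimension, together with low-stabbing-number orderings in the style of~\cite{chazelle1989quasi}. Together these give an emulator $H_P$ of size $\Oh(|\partial P|^{2-\delta})$, or, more usefully, one whose reachability can be queried within subquadratic total work, with $\delta>0$ an absolute constant. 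The randomization (Monte Carlo) would enter here, through sampling-based computation of the low-stabbing ordering or spanning-path construction in subquadratic time.

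\textbf{Step 2 (recursion / two levels).} A plain single-level scheme only changes constants. To obtain a polynomial gain we would use two levels. A coarse $r$-division has pieces of size $r=n^{\alpha}$, and for each piece we maintain its emulator $H_P$, recomputed after each update touching $P$ in time $\Ot(r^{1-\epsilon'})$ using a finer $r'$-division inside $P$ and the subquadratic construction of Step 1. On a query $(u,v)$, we add the pieces of $u$ and $v$ explicitly and run a graph search over $\bigcup_P H_P$. Its size is $\sum_P |\partial P|^{2-\delta}\approx (n/r)\, r^{1-\delta/2}$. Balancing the update cost (rebuild of $H_P$) against the query cost gives an exponent of the form $2/3-\gamma$. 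With the constants coming from the VC-dimension bound we expect $\gamma$ to be small, giving $0.664$. Amortization enters through the periodic rebuilding of the dynamic $r$-division.

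\textbf{Main obstacle.} The hard step is Step 1's time bound, namely computing $H_P$ in truly subquadratic time in $|\partial P|$ after each update, rather than merely proving that a sparse emulator exists. The naive approach computes all $|\partial P|^2$ reachability bits. The plan is to use the interval structure to sample a small set of boundary vertices and build a low-stabbing ordering of $\partial P$ from their reach sets. We would then encode each reach set as few changes along this ordering, querying individual bits via a static planar reachability oracle~\cite{DBLP:conf/focs/HolmRT15} on $P$. We expect the delicate part to be ensuring correctness with high probability while keeping the total number of oracle calls at $|\partial P|^{2-\Omega(1)}$. The final exponent calculation is then routine balancing.
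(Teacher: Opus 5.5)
\textbf{There is a genuine gap: your plan targets the wrong cost term and cannot go below $n^{2/3}$.}

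\textbf{Why Step~1 and the balancing fail.} When $\partial P$ lies on $\Oh(1)$ holes, piece emulators of size $\Ot(|\partial P|)$ already exist (Subramanian's; see also \Cref{lem:reach-matrix-parts} and \Cref{lem:matrix-to-emulator}). So the global search in Subramanian's scheme already costs only $\Ot(n/\sqrt r)$; it does not pay for a dense boundary relation. An emulator of size $|\partial P|^{2-\delta}$ is therefore a regression. By your own count the union has size $n r^{-\delta/2}$. Your query also expands the pieces of $u$ and $v$, which costs $\Theta(r)$. The query time is thus at least $\max(n r^{-\delta/2}, r)\ge n^{2/(2+\delta)}$. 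This exceeds $n^{2/3}$ for every $\delta<1$, and is about $n^{6/7}$ for the $\delta=1/3$ that stabbing numbers give in VC dimension three. No choice of parameters yields $2/3-\gamma$.

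Separately, your claim that each reach set is $\Oh(1)$ cyclic intervals per hole is false: a vertex may reach exactly every other vertex of a face. The true structure is the block partition with intervals over the active columns of each block.

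The real bottleneck is the $\Ot(r)$ time to rebuild a piece's emulator after each update, plus restoring the few-holes property after insertions. Your plan for the former, bit queries to a static oracle on $P$, does not help, because that oracle needs $\Theta(r)$ preprocessing after every change to $P$.

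\textbf{The missing idea: an asymmetric split of the terminals.}
\begin{itemize}
\item \emph{Frozen decomposition.} Within a phase of $\sqrt r$ updates, the paper freezes the $r$-division with few holes. Original pieces then only undergo deletions and keep their static few-holes boundaries. Inserted edges are kept aside, and their endpoints outside $\partial\mathcal{R}$ form a set $B$ with $|B|\le 2\sqrt r$.
\item \emph{Boundary-to-boundary part.} The relation on $\partial P\times\partial P$ is maintained by the decremental few-holes emulator of \Cref{thm:decremental-bnd-emulator}. It has size $\Ot(\sqrt r)$ and total time $\Ot(r+k\sqrt r)$ over $k$ deletions. This component, not the stabbing ordering (which is computed deterministically), is the source of randomization and amortization.
\item \emph{New terminals only.} The VC machinery is used only for the rows and columns of $B_P$. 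Take the union $H_0$ of \emph{faithful} emulators (\Cref{lem:faithful:construct}) of the subpieces of an $s$-division of $P$; it has size $\Ot(r/\sqrt s)$. From $H_0$, \Cref{lem:faithful:optimize} extracts emulators for $(P,B_P,\partial P\cup B_P)$ and its reverse, of size $\Ot(|B_P|\,r^{1/2-\delta/6}+\sqrt r)$, in time $\Ot(r^{3/2-\delta/6}/\sqrt s)$.
\item \emph{Role of faithfulness.} Faithfulness is what bounds the VC dimension of the reach sets of the Steiner vertices of $H_0$.
\item \emph{Global size and queries.} These emulators have size proportional to $|B_P|$, and $\sum_P|B_P|\le 2\sqrt r$. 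Hence the global graph has size $\Ot(n/\sqrt r+r^{1-\delta/6})$. A query temporarily adds $u,v$ to $B$ instead of expanding whole pieces.
\end{itemize}
Only with this split can $r$ be taken slightly above $n^{2/3}$, namely $r=n^{37/55}$, $s=n^{73/110}$, $\delta=9/74$, giving $n^{73/110}$.

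Your finer division inside $P$ is the right instinct. Applied symmetrically to all of $\partial P$, however, it only reproduces a sublinear bound in the style of \Cref{thm:minor-free-dyn-reach}.
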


Our techniques work also beyond planarity: in \Cref{thm:minor-free-dyn-reach}
we present truly
sublinear online algorithms for reachability in minor-free digraphs.

The main engine behind our theorems is a new sparsification framework aimed at maintaining information about reachability.
Specifically, we design novel \coldef{reachability emulators} that exploit  VC dimensionality of certain set systems and utilize the \coldef{Stabbing Number Theorem} (\Cref{thm:prelim:stab})~\cite{chazelle1989quasi, welzl1988partition}.
In the recent years, analyzing VC dimension of various graph-related set systems has led to exciting developments in diameter computation~\cite{li2019planar, Ducoffe22, LeW24, kluk2025faster, wlodarczyk25, chan2025truly}, clustering~\cite{Cohen-AddadD0SS25}, model checking~\cite{Bonnet22, Dreier24}, as well as distance oracles~\cite{LeW24, KarczmarzZ25}.
Among the above, \cite{Ducoffe22,Dreier24,chan2025truly} also utilize the {Stabbing Number Theorem}.
To put our developments in context, we now discuss the relation between sparsification and dynamic algorithms.

\subsection{Sparsification in dynamic algorithms}
\label{sec:sparsification}

Understanding why some dynamic planar problems are easier than others boils down to  understanding what kind of graph information can be effectively compressed.
The APSP-based lower bound for dynamic distance queries in undirected planar graphs~\cite{Abboud16} exploits the fact that a Boolean matrix of size $k\times k$ can be encoded using pairwise distances within a set of $\Oh(k)$ vertices in a planar graph of size $\Oh(k^2)$ with edge weights in $\Oh(k)$.
In particular, this implies that encoding all the pairwise distances between $\Oh(k)$ vertices requires $\Omega(k^2)$ bits (cf.~\cite{gavoille2004distance}).
However, a non-trivial compression is possible
when we only want to preserve undirected distances approximately.
Chang, Krauthgamer, and Tan~\cite{ChangKT22, chang2022arxiv} showed that for any set of $k$ terminal vertices in a planar graph $G$ and $\eps > 0$, there exists a planar graph $H$ of size $\Ot(k / \eps^{\Oh(1)})$ so that the distances among terminals differ by a factor at most  $1+\varepsilon$ in $G$ and $H$ (cf.~\cite{chang2025distance}).
It was already known~\cite{chen2020fast} that such a construction allows for a repetitive compression of a graph\footnote{In fact, \cite{chen2020fast} required the graph $H$ be a minor of $G$ which does not hold in \cite{chang2022arxiv} but this is not an issue therein.} and gives
an offline algorithm for dynamic $(1+\varepsilon)$-approximate distances with query/update time $\mathrm{polylog}(n)$.
This construction was later fully dynamized by
Filtser, Goranci, Patel,  and Gutenberg~\cite{Filtser24}, resulting in an online algorithm with subpolynomial query/update time.

A peculiar issue with directed reachability in planar graphs is that its compressibility status is somewhat in between the two cases above.
For a digraph $G$ and $X \sub V(G)$ we say that a digraph $H$ is a \coldef{reachability emulator}\footnote{A different name appearing in the literature is ``reachability preserver'' but sometimes it
 refers to a minor or a subgraph of the original graph~\cite{AbboudB24, BodwinL25, Goranci20}.
 Another name is ``reachability certificate''~\cite{subramanian1993fully, Diks07}.
 }
for $(G,X)$ if $X \sub V(H)$ and the reachability relation on $X \times X$ is the same in $G$ and~$H$.
On one hand, when $G$ is planar, the reachability oracle by Thorup~\cite{thorup2004compact} can be utilized to
construct a \re $H$ for $(G,X)$ with $\Oh(|X|\log |X|)$ vertices and edges (see \Cref{sec:faithful}).
This means that one cannot mimic the hardness reduction from~\cite{Abboud16} by encoding matrices of size $k \times k$ using reachability relation on $\Oh(k)$ vertices.
On the other hand, the graph $H$ is not planar anymore and in fact no alternative procedure could ensure planarity.
Goranci, Henzinger, and Peng~\cite{Goranci20} showed
that a \re for $G$ that is required to be a minor of $G$ may need $\Omega(k^2)$ vertices.
We strengthen their argument and prove that the same lower bound applies even when we do not require~$H$ to be a minor of~$G$, but only request $H$ to be planar.

\begin{restatable}{theorem}{thmLB}
\label{thm:lb}
    For every $k \ge 3$ there exists a plane digraph $G$ with a vertex subset $T \sub V(G)$ of size $4k + 5$ such that the vertices from $T$ lie on four faces of $G$ and
    every planar digraph that is a \re for $(G,T)$  has at least $(k-2)^2$ vertices.
\end{restatable}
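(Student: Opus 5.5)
The plan is to build a grid-like gadget in which reachability between terminals forces $(k-2)^2$ pairwise ``crossings''. Any planar emulator must then realize each crossing with its own vertex.

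\textbf{Construction.} Take a $k\times k$ directed grid $G$: horizontal paths oriented left to right and vertical paths oriented top to bottom, all meeting at the grid vertices. Place terminals $a_1,\dots,a_k$ at the left ends and $b_1,\dots,b_k$ at the right ends of the rows. Place $c_1,\dots,c_k$ at the top ends and $d_1,\dots,d_k$ at the bottom ends of the columns. A few extra terminals (giving the $+5$) sit at the corners or on small attached faces. Their role is to pin down the cyclic order of the four terminal groups.

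To keep row and column traffic from mixing too freely, I would route each row and column so that the reachability pattern is rigid. For instance, I would require that $a_i$ reaches $b_j$ iff $j\ge i$ and $c_i$ reaches $d_j$ iff $j\ge i$. The four groups lie on four distinct faces, for example by cutting small holes near each side. I would tune the construction so that the crossing pairs are exactly those with $1<i,j<k$, which is where $(k-2)^2$ comes from.

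\textbf{Lower bound argument.} Let $H$ be any planar emulator for $(G,T)$. For each $i$, $H$ contains an $a_i\to b_i$ path $P_i$ and a $c_j\to d_j$ path $Q_j$. The key topological claim is that for $2\le i,j\le k-1$, $P_i$ and $Q_j$ must share a vertex.

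I would argue this by contradiction using the extra terminals. They ensure that in $H$ there are paths forming a closed curve (a ``frame'') that separates $a_i$ from $b_i$ on one side and $c_j$ from $d_j$ on the other, in an interleaved cyclic order. If $P_i$ and $Q_j$ were vertex-disjoint, the Jordan curve theorem, applied to the planar drawing of $H$ together with the frame, forces $P_i$ or $Q_j$ to meet a frame path. That meeting creates a reachability between terminals that is absent in $G$, for example from $a_i$ to some $d_\ell$ that $G$ forbids. The frame paths come from reachabilities that $G$ does have, such as $a_1\to b_1$, $c_1 \to d_1$, and the corner connections.

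I would then refine the argument to show that the intersection vertices $v_{ij}\in P_i\cap Q_j$ are pairwise distinct. If $v_{ij}=v_{i'j'}$ with $(i,j)\neq(i',j')$, then combining subpaths yields $a_i\to v\to d_{j'}$ and $c_{j}\to v\to b_{i'}$. The gadget should be designed so that one of these concatenations is a forbidden reachability. The simplest way to guarantee this is to make row-to-column reachability highly selective, e.g. $a_i$ reaches $d_j$ only when $j= i$-dependent values. Distinct intersections then give $(k-2)^2$ vertices.

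\textbf{Main obstacle.} The hardest step is the topological claim. In $H$ the paths $P_i$ need not be disjoint from one another, and the drawing of $H$ is arbitrary, so there is no inherited embedding and the terminals need not lie on a common face. I would first try to use the four-face structure: add to $H$ the reachability-witnessing paths among the boundary terminals so as to reconstruct a cycle-like frame through the terminals in the correct cyclic order. The delicate part is proving that the frame is non-self-crossing enough for Jordan-curve arguments, which is exactly what the auxiliary terminals must enforce. I would then apply a standard planar crossing lemma: two disjoint paths connecting alternating pairs on a cycle's exterior must intersect.
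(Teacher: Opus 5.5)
\textbf{Gap: the crossing claim fails without a fifth gadget.} The claim that $P_i$ and $Q_j$ must meet does not follow from a frame and the Jordan curve theorem, and for the construction you describe it is false. A cycle through the four terminal groups plus two vertex-disjoint chords joining alternating pairs is planar: draw one chord inside the cycle and the other outside. Your closing ``crossing lemma'' needs both chords on the same side, and nothing in your construction forces that. Pinning down the cyclic order of the groups does not help.

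Concretely, take the natural version of your gadget: a grid whose terminals are its boundary vertices, corners included. The boundary oriented as in $G$, plus chords $a_i\to b_i$ drawn inside and $c_j\to d_j$ drawn outside, is already a planar reachability emulator on $4k+4$ vertices. The paper makes this point for the Goranci--Henzinger--Peng grid in \Cref{fig:counter}.

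The missing idea is the fifth terminal. In your notation, the paper adds $z$ in the outer face with edges $a_1\to z$, $c_1\to z$, $z\to b_k$, $z\to d_k$; this is also what puts $T$ on four faces. The $H$-paths witnessing these four reachabilities, together with $z$, form a connected set $Z_\cup$. It contains no other terminal, and it avoids $P_i$ and $Q_j$ for $2\le i,j\le k-1$, since a common vertex would give $a_i\leadsto z$ or $z\leadsto b_i$ (resp.\ $c_j\leadsto z$ or $z\leadsto d_j$) in $G$. If $P_i$ and $Q_j$ were disjoint, the following five sets would be branch sets of a $K_5$ minor: the top row with its corners plus $Q_j-d_j$; the left column plus $P_i-b_i$; the bottom row with its corners; the right column; and $Z_\cup$. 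This gadget is also the real source of $(k-2)^2$: indices $1$ and $k$ are lost because $P_1,Q_1,P_k,Q_k$ may touch $Z_\cup$, not because of any tuning of the grid.

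\textbf{Two further steps fail as written.} First, a frame assembled from $H$-paths witnessing $a_1\leadsto b_1$, $c_1\leadsto d_1$, etc.\ gives no contradiction when hit. If $Q_j$ meets the $a_1\to b_1$ path, this only yields $c_j\leadsto b_1$ and $a_1\leadsto d_j$, both true in $G$. The paper instead uses the boundary terminals themselves as the frame and obtains it by minimality. In a vertex-minimum planar emulator, no edge joins two vertices with equal $\reach_H(\cdot,T)$, since otherwise it could be contracted. Hence internal vertices of terminal-to-terminal paths have strictly intermediate reachability sets, and a case analysis shows every grid edge between consecutive boundary terminals lies in $H$. Reachability then shows that $P_i,Q_j$ with $2\le i,j\le k-1$ have no internal terminals, which gives disjointness of the branch sets.

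Second, your distinctness step relies on concatenations that can never be forbidden. If $v\in P_i\cap Q_{j'}$, then $a_i\leadsto d_{j'}$ holds in $H$ and hence in $G$, and the crossing claim itself forces such meetings for all $i,j'\in[2,k-1]$. So ``selective row-to-column reachability'' is self-defeating. The correct argument uses parallel paths: $v\in P_i\cap P_{i'}$ with $i<i'$ gives $a_{i'}\leadsto b_i$, which the pattern you already stipulated forbids. The same holds for columns, so the meeting points are pairwise distinct.
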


This means that we cannot
follow the strategy of repetitive compression from~\cite{chen2020fast} as this requires
inserting additional edges among $X$ and compressing the digraph again.
Insertion of new edges invalidates the balanced separators used in the construction from~\cite{thorup2004compact} while non-planarity of $H$ precludes the ability to construct this kind of reachability emulators from scratch.
However, this does not rule out other kinds of compression that exploit different properties than planarity.

\mybox{0.8}
{The main idea of this work is that one can compute reachability emulators for planar digraphs that (1) have near-linear size in the number of terminals, and (2) preserve some property that is weaker than planarity but still sufficient to efficiently compress the reachability information again.}

\paragraph{Organization of the paper.}
We begin with an informal overview of our techniques in \Cref{sec:overview}.
In \Cref{sec:prelim} we collect the basic facts about VC dimension and \re{s}.
We develop our sparsification framework in Sections~\ref{sec:faithful} and~\ref{sec:emul-vc}.
\Cref{sec:offline} concerns the offline algorithm.
In \Cref{sec:online} we present our online algorithm, starting from a simplified construction with sublinear query/update time, which works for minor-free digraphs as well.
\Cref{sec:improved-vc} provides an improved bound on the VC dimension of a reachability set system in a planar digraph
and \Cref{sec:lower-bound} is devoted to \Cref{thm:lb}.
We conclude in \Cref{sec:conclusion}.
We also provide an appendix with proofs included for completeness.

\section{Overview}
\label{sec:overview}

\subsection{A simple offline algorithm}
\label{subsec:offline}

As a warm-up before discussing \Cref{thm:main:offine}, we first sketch a simple offline algorithm with query/update time $\Oh(\sqrt{n}\log n)$.
This follows from the work of Thorup~\cite{thorup2004compact}, but we are not aware of
any work stating it explicitly.
The advantage of the offline model is that one can foresee which parts of a graph are going to be modified in the nearest future and which remain unaffected, hence it pays off to compress the second ones.
Similarly as in the offline algorithm for grids~\cite{Bringmann19},
we divide the sequence of updates and queries into \coldef{chunks}.
We choose the length of a chunk as $\sqrt{n}$ and we want to process each chunk in time $\Oh(n\log n)$, which will give $\Oh(\sqrt{n}\log n)$ operations per event on average.

Consider a single chunk and denote by $G_i$ the graph after processing the $i$-th event; also let $G = G_0$.
Let $X \sub V(G)$ be the set of vertices that are \coldef{affected} by this chunk, i.e., they appear in at least one update or query.
Clearly, $|X| \le 2\sqrt{n}$.
We refer to edges with both endpoints in $X$ as the \coldef{$X$-edges}.
Let \coldef{$G \ominus X$} be the graph obtained from $G$ by removing all the $X$-edges.
Observe that  $G_i \ominus X = G \ominus X$ for all $i$, because this graph omits all the edges that may appear or disappear in the current chunk.
We prefer to first remove all the $X$-edges and later insert an appropriate subset of them because maintaining a \re under edge insertion is way easier compared to edge deletion.

We start processing the chunk by computing a \re $H$ for $(G \ominus X,X)$ of size $\Oh(\sqrt{n}\log n)$ in time $\Oh(n\log n)$~\cite{thorup2004compact} (see \Cref{sec:faithful}).
Note that $H \cup G_i[X]$ forms a \re for $(G_i,X)$.
Consider a query $(u\to v)$ that arrives as the $i$-th event.
We have $u,v \in X$ and we want to determine if $v$ is reachable from $u$ in $G_i$.
This can be checked by running depth-first search from $u$ in $H \cup G_i[X]$.
Since $G_i[X]$ is planar, it has $\Oh(|X|)$ edges and so the size of $H \cup G_i[X]$ is $\Oh(\sqrt{n}\log n)$.
Hence, the algorithm processes a chunk of length $\sqrt{n}$ in total time  $\Oh({n}\log n)$.

\subsection{Offline reachability with two layers}
\label{subsec:twolayers}

A natural idea to improve the above algorithm is to divide the chunks into smaller chunks
and proceed recursively, similarly as in the offline algorithm for approximate distances~\cite{chen2020fast, chang2022arxiv}.
However, unlike that approach, we operate in just  two layers and our treatment of the second layer differs significantly from the first one.
Let us check what kind of compression is needed to make this idea~work.

Let $n^\alpha$ and $n^\beta$ be the sizes of a \coldef{large} and \coldef{small chunk}, respectively, for constants $\alpha > \beta > 0$ to be chosen later.
We spend $\Oh(n\log n)$ time at the beginning of each large chunk to compute an~emulator, which translates to $\Ot(n^{1-\alpha})$ time per event.
Hence to improve upon the square-root, we  must set $\alpha > \frac 1 2$.

\begin{figure}[h]
    \centering
    \includegraphics[width=\textwidth]{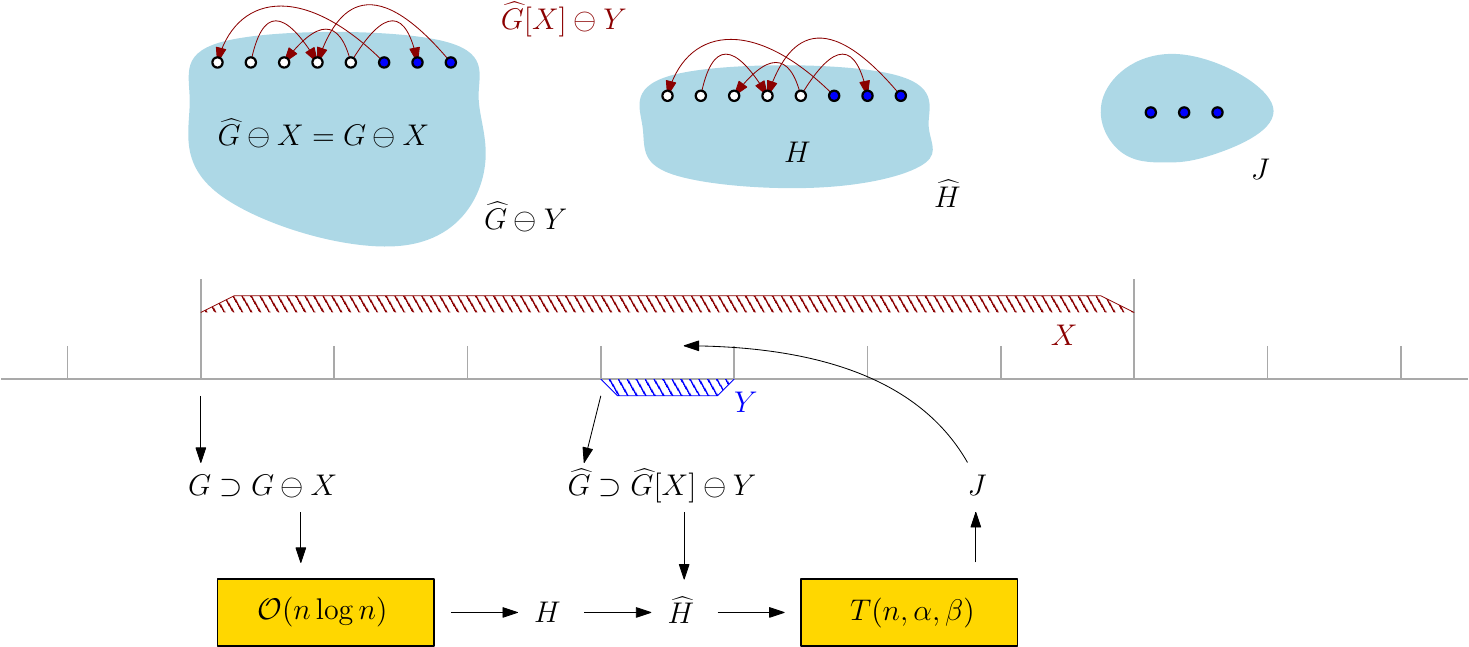}
    \caption{The two-layer scheme of reachability compression.
    The sets $X \supseteq Y$ are the vertices affected by  the large chunk $C_L$ of size $n^\alpha$ and the small chunk $C_S$ of size $n^\beta$, respectively.
    At the start of $C_L$ we observe digraph $G$ and at the start of $C_S$ we observe $\widehat G$.
    The blue disks represent the vertices from $Y$ and the white ones represent $X \sm Y$.
    At the top left we see the digraph $\widehat G \ominus Y$ which is a union of $G \ominus X$ (blue) and $\widehat G[X] \ominus Y$ (dark red edges).
    The first one is replaced by the \re $H$, computed at the start of $C_L$.
    Then at the start of $C_S$ the vertices from $X\sm Y$ become non-terminals
    and digraph $\widehat H = H \cup (\widehat G[X] \ominus Y)$ is compressed to $J$.
    This digraph contains all the information about pairwise connections among $Y$ in $\widehat G$ that does not use the $Y$-edges.
    To process the small chunk $C_S$ we keep track of the subgraph of size $\Oh(n^\beta)$ induced by $Y$ and we address each query by merging it with~$J$.
    }
    \label{fig:twolayers}
\end{figure}

For a fixed large chunk $C_L$, let $G$ denote the graph observed at its start and let $X$ and $H$ be defined as in \Cref{subsec:offline}.
Consider a small chunk $C_S$ inside $C_L$ and let $Y$ be the set of vertices affected by~$C_S$.
Note that $|X| \le 2n^\alpha$ and  $|Y| \le 2n^\beta$.
To process $C_S$ we need a \re for $(\widehat G \ominus Y,Y)$, where $\widehat G$ denotes the graph observed at the start of $C_S$.
It holds that $\widehat G \ominus Y = (\widehat G \ominus X) \cup (\widehat G[X] \ominus Y)$
and $\widehat G \ominus X =  G \ominus X$.
Recall that $H$ is a \re for $(G \ominus X,X)$.
Observe that $\widehat H := H \cup (\widehat G[X] \ominus Y)$ is a \re for $(\widehat G \ominus Y,X)$ and
$Y \sub X$ so $\widehat H$ also forms a \re $(\widehat G \ominus Y,Y)$.
These relations are illustrated in \Cref{fig:twolayers}.

The problem is that $\widehat H$ is too large: it may have  $\Omega(n^\alpha)$ vertices so inspecting it at every event in $C_S$ would result in prohibitive running time  $\Omega(n^\alpha)$ {per event}.
We would like to replace this representation with one of size  depending only on $|Y|$, e.g., by storing the entire reachability matrix for $Y \times Y$.
To this end, we will compute a \re $J$ for $(\widehat H,Y)$ of size $\Oh(n^{2\beta})$ at the start of $C_S$.
Then $J$ will constitute a \re for $(\widehat G \ominus Y,Y)$,  containing all the information about connections on  $Y \times Y$ in $\widehat G$ that do not use any $Y$-edges.

Let $T(n,\alpha,\beta)$ denote the running time required to compute the reachability matrix for $Y \times Y$
when $\widehat H$ of size $\Oh(n^\alpha\log n)$ is given.
Note that we cannot exploit the balanced separators in $G \ominus X$ used for computing $H$ because they become obsolete after adding new edges to $G \ominus X$ and $H$ (see \Cref{subsec:faithful}).
We compute $J$ once for each small chunk, which adds term $T(n,\alpha,\beta)\cdot n^{-\beta}$ to the {average time needed to process an event.}
Then to process the $i$-th query in $C_S$ we run depth-first search in $J \cup G_i[Y]$ where $G_i$ denotes the graph observed when the $i$-th query arrives.
The size of $J$ is $\Oh(n^{2\beta})$ and the number of edges in $G_i[Y]$ is $\Oh(n^{\beta})$ so this term becomes irrelevant whenever we set $\beta < \frac 1 4$.

The crux of the analysis is to ensure that $T(n,\alpha,\beta)\cdot n^{-\beta} = n^{1/2 - \Omega(1)}$ for some $\alpha > \frac 1 2$, $\beta < \frac 1 4$.
To this end, we aim at a bound
\begin{equation}\label{eq:alpha-beta}
T(n,\alpha,\beta) = n^{\alpha + (1-\delta)\cdot \beta} + n^{C \cdot \beta}
\end{equation}

\noindent for some universal constants $\delta, C > 0$.
Indeed, observe that the above formula implies
\[T(n,\alpha,\beta)\cdot n^{-\beta} = n^{\alpha - \delta\cdot \beta} + n^{(C-1) \cdot \beta}.\]

\noindent  By adjusting $\alpha,\beta$ to $\delta, C$ we can always guarantee that
\[M := \max\left(1-\alpha,\, \alpha - \delta\cdot \beta,\, (C-1) \cdot \beta\right) < \frac 1 2.\]

\noindent Concretely, we will get $\delta = \frac 1 3$ and $C = 9$ for which $\alpha = \frac{25}{49}$ and $\beta = \frac 3 {49}$ give $M = \frac{24}{49} < \frac 1 2$.

Recall that $T(n,\alpha,\beta)$ is the time required to compute the reachability matrix for $Y \times Y$ in $\widehat H$ where the size of the graph $\widehat H$ is roughly $n^\alpha$ and $|Y| \approx n^\beta$.
The condition \eqref{eq:alpha-beta} asks for an algorithm that performs such computation in time $\Oh(|\widehat H|\cdot|Y|^{1-\delta} + |Y|^C)$, that is, it asks for any polynomial improvement upon the naive complexity $\Oh(|\widehat H|\cdot|Y|)$ corresponding to running depth-first search from every vertex in $Y$.
While we do not expect such an algorithm to exist for arbitrary $\widehat H$, we will ensure that $\widehat H$ belongs to a certain class of digraphs, in which such an improvement is possible.

\subsection{VC-dimension-based emulators}
\label{subsec:vc}

For a digraph $G$, a set $X \sub V(G)$, and vertex $v \in V(G)$, let $\textcolor{\mycol}{\reach}_G(v,X)$ denote the set of vertices in $X$ reachable from $v$.
We define $\textcolor{\mycol}{\reachsys}(G,X)$ as the set system $\{{\reach}_G(v,X) \colon v \in V(G)\}$ on the ground set $X$.
A set $Z \sub X$ is \coldef{shattered} by a set system $(X,\rcal)$ if for every $Z' \sub Z$ there is $R \in \rcal$ such that $Z' = Z \cap R$.
The \coldef{VC dimension} of a set system $(X,\rcal)$ is the maximum size of a set shattered by $(X,\rcal)$.
It is known~\cite{LeW24, KarczmarzZ25} that when a digraph $G$ is planar then $\reachsys(G,X)$ has VC dimension at most four for every $X \sub V(G)$.
In \Cref{thm:prelim:planar-vc} we improve this bound to three, which results in slightly better running times.

Our strategy is as follows.
First, we will compute a special \re for a large chunk to ensure that
for $\widehat H, Y$ defined in \Cref{subsec:twolayers} the set system
$\reachsys(\widehat H, Y)$ has bounded VC dimension.
Second, we will prove that under this assumption a small \re for $(\widehat H, Y)$
can be computed in time $\Oh(|\widehat H|\cdot|Y|^{1-\delta} + |Y|^C)$, for some $\delta,C > 0$.
Within this subsection we focus on the second step: we assume that the VC dimension bound holds and we explain how to take advantage of it.

The \coldef{Stabbing Number Theorem}~\cite{chazelle1989quasi, welzl1988partition} states that when a set system $\fcal \sub 2^Y$ has VC dimension~$d$, then there is $\delta > 0$ depending on $d$ and an ordering $\pi$ of $Y$ such that every set $F \in \fcal$ is a union of $\widetilde\Oh(|Y|^{1-\delta})$ intervals over $\pi$.
Furthermore, the ordering $\pi$ can be computed in time polynomial in $|Y|+|\fcal|$.
In fact, \coldef{Sauer–Shelah Lemma}~\cite{sauer1972density} guarantees that the size of $\fcal$ cannot exceed $\Oh(|Y|^d)$ so the running time is also polynomial in $|Y|$.

We apply this theorem to the set system $\reachsys(\widehat H, Y)$.
Assume for now that the corresponding ordering $\pi$ of $Y$ is given.
We compute a reachability emulator for $(\widehat H, Y)$ as follows.
First, we can contract each strongly connected component in $\widehat H$ because vertices in a single component have the same reachability sets $\reach_{\widehat H}(v,Y)$.
We can thus assume that $\widehat H$ is acyclic and we
 consider its vertices in reverse topological order, computing $\reach_{\widehat H}(v,Y)$ one by one.
When $v$ has outgoing edges to $u_1,u_2,\dots u_\ell$, then $\reach_{\widehat H}(v,Y)$ is the union of the sets $\reach_{\widehat H}(u_i,Y)$, together with $\{v\}$ if $v \in Y$.
Since every set $\reach_{\widehat H}(u_i,Y)$ is represented as $\widetilde\Oh(|Y|^{1-\delta})$ intervals over $\pi$, we can easily compute their union in time $\widetilde\Oh(\ell \cdot |Y|^{1-\delta})$.
See \Cref{fig:interval} for an illustration.
This yields total running time of $\widetilde\Oh(|\widehat H| \cdot |Y|^{1-\delta})$.
In the end, we have computed an interval representation of $\reach_{\widehat H}(v,Y)$ for every $v \in Y$, which immediately gives a \re for  $(\widehat H, Y)$ of size $|Y|^2$.

\begin{figure}[t]
    \centering
    \includegraphics[width=0.8\textwidth]{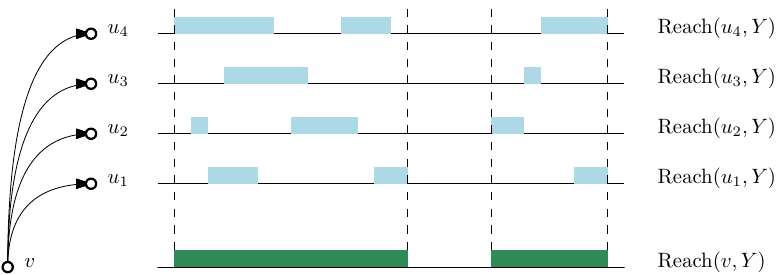}
    \caption{Let $u_1,u_2,u_3,u_4$ be the out-neighbors of $v$ in $\widehat{H}$.
    For each $i \in [4]$ we represent the set $\reach_{\widehat H}(u_i,Y) \sub Y$ as a union of $\widetilde\Oh(|Y|^{1-\delta})$ intervals with respect to the ordering $\pi$ of $Y$, as shown on the right.
    The set $\reach_{\widehat H}(v,Y)$ is the union of $\reach_{\widehat H}(u_i,Y)$ for $i \in [4]$.
    We can compute the interval representation of $\reach_{\widehat H}(v,Y)$ in time near-linear in the total number of the blue intervals.
    }
    \label{fig:interval}
\end{figure}

In the above sketch we have assumed that we know the ordering $\pi$ of $Y$ which yields sublinear interval representation of every set in $\reachsys(\widehat H, Y)$.
Of course, we do not know $\reachsys(\widehat H, Y)$ upfront and so $\pi$ has to be computed on the run.
However, the number of distinct sets in $\reachsys(\widehat H, Y)$ is $\Oh(|Y|^d)$ due to
Sauer–Shelah Lemma.
We can thus afford to recompute
$\pi$ and all the interval representations each time we observe a new set from $\reachsys(\widehat H, Y)$.
In summary, we obtain running time $\widetilde\Oh(|\widehat H| \cdot |Y|^{1-\delta} + |Y|^{\Oh(d)})$ which is exactly what we needed in \Cref{subsec:twolayers}.

For the later applications in the online setting, we need two refinements of this construction.
First, we would like to reduce the size of the emulator for $(\widehat H, Y)$ to be truly subquadratic in $|Y|$.
To this end, we apply the classic \coldef{segment tree} technique~\cite{bentley1977algorithms}
to build a binary tree of size $\Oh(|Y|)$ in which every node represents some ``base'' interval over $\pi$ and every other interval is a union of $\Oh(\log |Y|)$ base intervals (see \Cref{fig:segment-tree} on page \pageref{fig:segment-tree}).
This allows us to reduce the out-degree of every vertex in the emulator to $\widetilde\Oh(|Y|^{1-\delta})$.
Second, we would like to get rid of the additive term $|Y|^{\Oh(d)}$ and obtain purely subquadratic running time.
For this, we incorporate the argument used in the VC-based subquadratic algorithm for computing graph diameter~\cite{Ducoffe22}.
We split the set $Y$ into $|Y|^{1-\gamma}$ subsets $Y_1,\dots,Y_\ell$ of size $|Y|^{\gamma}$ for some small $\gamma > 0$.
Then we compute sets $\reach_{\widehat H}(v,Y_i)$ for each $v \in Y$ and $i \in [\ell]$ by running the algorithm independently for every $i \in [\ell]$.
This reduces the time spent on recomputing $\pi$ and the total running time can be estimated as $\widetilde\Oh(|\widehat H| \cdot |Y|^{1-\gamma\delta})$.

\subsection{Faithful emulators}
\label{subsec:faithful}

Now we justify the assumption that the set system
$\reachsys(\widehat H, Y)$ has bounded VC dimension.
Let us first understand what can go wrong if we compute the \re $H$ for $G \ominus X$ using  Thorup's algorithm~\cite{thorup2004compact}.
It relies on the following property: every spanning tree in an undirected planar graph contains three root-to-leaf paths whose union forms a balanced separator.
Thorup leveraged this fact to decompose a planar digraph into (overlapping) pieces, each of which admits a balanced separator~$S$ consisting of six directed paths $P_i$.
When processing a single piece,
in total linear time one can mark, for every $v \in X$ and $i \in [6]$, the first vertex on $P_i$ that can be  reached from $v$ and the last vertex on $P_i$ that can reach $v$.
In order to preserve information about paths traversing $S$,
it suffices to keep the $\Oh(|X|)$ marked vertices on $S$ and insert the edges as specified above.
Then one recurses on the weakly connected components obtained after removal of $S$ and
creates $\Oh(|X|\log |X|)$ vertices and edges in total.

Unfortunately, even though the VC dimension of  $\reachsys(G \ominus X,X)$ is bounded,
it might be not the case for $\reachsys(H,X)$.
This is because the marking procedure adds to $V(H)$
``Steiner vertices'' from $S \sm X$ and provides no control over their reachability sets:
they can be smaller than their original reachability sets which may increase the VC dimension.
Note that in \Cref{subsec:vc} we needed an interval representation of $\reach_{\widehat H}(v,Y)$ for every vertex $v$, not only these in $Y$.
To mitigate this, we modify the Thorup's construction in two places.
First, we rearrange the order of operations so that we avoid the
decomposition phase.
Second, for every vertex $v \in V(H)$ and every dipath $P_i$ in a balanced separator from each of $\Oh(\log |X|)$ ancestral recursive calls, we insert an edge from $v$ to the first marked vertex on $P_i$ that is reachable from $v$.
This increases the number of edges in $H$ to  $\Oh(|X|\log^2 |X|)$.
Thanks to this modification, we preserve all the connections from $v$ to $X$, that is, $\reach_H(v,X) = \reach_G(v,X)$ (see \Cref{fig:thorup-sep} on page \pageref{fig:thorup-sep}).
We call a \re with this property \coldef{faithful}.

Faithfulness ensures that $\reachsys(H,X) \sub \reachsys(G \ominus X,X)$ so the VC dimension cannot increase.
Moreover, this property is preserved by operations such as edge insertion:
when $H$ is a \fre for $(G \ominus X,X)$ and $u,v \in X$ then $H + (u,v)$ is a \fre for $(G \ominus X + (u,v),\,X)$.
By inspecting the construction of $\widehat H$ in \Cref{subsec:twolayers} we can thus prove that also
$\widehat H$ is a \fre for  $(\widehat G \ominus Y, X)$ and hence also for $(\widehat G \ominus Y, Y)$.
This allows us to bound the VC dimension of $\reachsys(\widehat H, Y)$ and justifies the assumption from \Cref{subsec:vc}, concluding the proof sketch of \Cref{thm:main:offine}.

\newcommand{\bnd}{\partial}
\newcommand{\rdiv}{\mathcal{R}}

\subsection{Online reachability}\label{subsec:online}
Let us start by reviewing a variant of the state-of-the-art online dynamic reachability data structure for planar graphs~\cite{subramanian1993fully}.
For simplicity, we assume that we only need to track reachability for a single fixed vertex pair $(s,t)$.
Then, we can also focus on optimizing the update time exclusively.

The data structure of~\cite{subramanian1993fully} is essentially obtained via a combination of an \coldef{$r$-division}~\cite{DBLP:journals/siamcomp/Frederickson87} with reachability emulators.
An $r$-division $\rdiv$ (see~\Cref{s:rdiv}) is a partition of a graph into $\Oh(n/r)$ pieces, each of size $\Oh(r)$, such that each piece $P$ shares $\Oh(\sqrt{r})$ vertices with other pieces.
The shared vertices are called the boundary vertices of $P$ and are denoted by $\bnd{P}$.
For simplicity, let us assume that the vertex~$s$ (resp. vertex $t$) is included in the boundary set $\bnd{P}$ of each piece that contains $s$ (resp. $t$).

Suppose that one constructs, for each piece $P\in\rdiv$, a reachability emulator $H_P$ for $(P,\bnd{P})$.
Then, since the pieces interact only through their boundary vertices, one can prove that the union $H$ of piecewise emulators $H_P$ through all $P\in\rdiv$ is a reachability emulator for $(G,\bnd{\rdiv})$, where $\bnd{\rdiv}=\bigcup_{P\in\rdiv}\bnd{P}$.
Since $\{s,t\}\subseteq \bnd{\rdiv}$,
one can determine if an $s\to t$ path exists in $G$ by running graph search on $H$ in $\Oh(|H|)$ time.
In particular, if one uses Thorup's reachability emulators~\cite{thorup2004compact} for $H_P$, each of size $\Ot(|\bnd{P}|)$, this yields $|H|=\Ot(n/\sqrt{r})$.

To make the data structure dynamic, it is enough to maintain the $r$-division and the associated emulators $H_P$ under edge updates.
Note that an $r$-division can accept $\Theta(\sqrt{r})$ edge insertions while not compromising its properties\footnote{The number of pieces may increase to $\Theta(n/r+\sqrt{r})$, which exceeds $\Oh(n/r)$ for $r>n^{2/3}$, but this is not an issue in our case.}:
one can make the inserted edges single-edge pieces with both its endpoints boundary vertices. Hence, only $\Oh(1)$ pieces are affected by a single update: a non-boundary vertex of an existing piece $P$ might need to be inserted into~$\bnd{P}$.
After $\Theta(\sqrt{r})$ edge updates we recompute the $r$-division from scratch, leading to amortized cost $\Ot(n/\sqrt{r})$.

The challenging part though is to maintain the emulators $H_P$ under (1) edge deletions and (2) boundary set extensions.
Specifically, it is not clear how to update the emulator $H_P$ faster
than by recomputing it from scratch in $\Ot(r)$ time.
Hence, the update procedure of such a data structure has two bottlenecks:
the $\Ot(n/\sqrt{r})$ cost of running graph search on the ``global'' emulator $H$, and $\Ot(r)$ time to recompute $\Oh(1)$ ``local'' emulators $H_P$.
The optimal choice is $r=n^{2/3}$, leading to $\Ot(n^{2/3})$ update time.

It is worth noting that Subramanian~\cite{subramanian1993fully} used a stronger version of an $r$-division, where each piece's boundary $\bnd{P}$ lies on $\Oh(1)$ faces (called \coldef{holes}) of some embedding of $P$.
This allows one to~rely on weaker emulators that require the terminal set to be located on few faces of the ground plane graph -- recall that~\cite{subramanian1993fully} predates Thorup's work~\cite{thorup2004compact} by almost a decade.
Subramanian's emulator can also be computed statically in $\Ot(r)$ time and has size $\Ot(|\bnd{P}|)$ as well.
The stronger $r$-division is more expensive to maintain, though, as the pieces' boundaries now have to obey an extra guarantee.
Nevertheless, if one applies an additional $\Ot(r)$-time rebalancing step to $\Oh(1)$ pieces per update, the stronger guarantee can be preserved.
However, this introduces another $\Ot(r)$ bottleneck per update.

Our data structure that breaches the $n^{2/3}$ bound in the most general setting of planarity-preserving updates follows the above framework of~\cite{subramanian1993fully}.
However, we show that one can in $\Oh(r^{0.99})$ time handle the situation where the pieces have only a limited number of \emph{non-structured} boundary vertices on average, which is the case when an $r$-division undergoes only $\Oh(\sqrt{r})$ updates.
To this end, we employ both types of our emulators as well as the decremental data structure from~\cite{DBLP:conf/soda/Karczmarz18} that can be applied to static well-structured boundaries.
Recall that, on the other hand, we currently do not know a better-than-naive way to update Thorup's reachability emulators under edge deletions.

We maintain an \coldef{$r$-division with few holes} (as in~\cite{subramanian1993fully} and~\cite{KleinMS13}, see also~\Cref{s:rdiv}) in $\Ot(n/\sqrt{r})$ amortized time per update as described before.
That is, every $\sqrt{r}$ updates, we restart in $\Ot(n)$ time with a fresh $r$-division $\rdiv$ with few holes.
Each inserted edge turns into a separate piece of the $r$-division, so that the original pieces of $\rdiv$ only undergo edge deletions.
Let $B$ be the set of $\Oh(\sqrt{r})$ vertices constituting the endpoints of the edges inserted to $G$ since the last restart, excluding vertices of $\bnd\rdiv$.
For each piece $P\in\rdiv$, we maintain a reachability emulator $H_P$ for $(P,\,\bnd{P}\cup B_P)$, where $B_P=B\cap V(P)$.
Our main idea here is to treat the vertices $B$ differently from the initial boundary vertices $\bnd{P}$ when updating the
piecewise emulators.
To this end, we need to generalize the concept of a reachability emulator to the asymmetric case: we say that $J$ is a reachability emulator for $(G,X,Y)$ if $J$ preserves the reachability relation in $G$ for all pairs $(x,y)\in X\times Y$.
Given that, our goal is to separately maintain emulators for $(P,\,\bnd{P})$, $(P,\,B_P,\,\bnd{P}\cup B_P)$ and $(P,\,\bnd{P}\cup B_P,\,B_P)$ and combine them into the desired emulator $H_P$.

First, since the piece $P$ is decremental and $\bnd{P}$ does not grow,
we may use the specialized decremental transitive closure data structure from~\cite{DBLP:conf/soda/Karczmarz18} to maintain a reachability emulator for $(P,\bnd{P})$.
While this data structure guarantees $\Ot(r)$ total update time and $\Ot(\sqrt{r})$ query time, it does not operate on reachability emulators per se.
Nevertheless, we show that it can be adjusted to yield an $\Ot(|\bnd{P}|)$-sized emulator for $(P,\bnd{P})$ after each deletion with only $\Ot(|\bnd{P}|)=\Ot(\sqrt{r})$ additional overhead per update.
The work~\cite{DBLP:conf/soda/Karczmarz18} is the only source of randomization and amortization in our data structure because the $r$-division reconstruction step can be deamortized in a standard way (see \Cref{sec:online-minor-free}).

Now consider the task of maintaining the emulator $H_P'$ for $(P,\,B_P,\,\bnd{P}\cup B_P)$; dealing with
\linebreak $(P,\,\bnd{P}\cup B_P,\,B_P)$ is analogous by edge reversal.
It suffices to show that $H_P'$ can be updated under either an edge deletion in~$P$ or an insertion of a new vertex to $B_P$ in $\Oh(r^{0.99})$
time in such a way that the total size of all emulators over each $P\in\rdiv$ remains $\Ot(n/\sqrt{r}+r^{0.99})$.
To this end, for each piece $P$ we construct a separate $s$-division $\rdiv_P$ for some $s = r^{1-\Omega(1)}$.
For each subpiece $Q\in\rdiv_P$, we maintain a \emph{faithful} emulator  $H_{P,Q}$ for terminals $\bnd{Q}$ plus the vertices $(\bnd{P}\cup B_P)\cap V(Q)$.
Whenever an edge is deleted from $Q$ or $Q$ receives a new terminal, the emulator $H_{P,Q}$ is recomputed from scratch in $\Ot(s)$ time.
The collection of faithful emulators can be combined into a faithful emulator $H_0$ for $(P,\,\bnd{P}\cup \bnd{\rdiv_P}\cup B_P)$ of size $\Ot(r/\sqrt{s}+|B_P|)=\Ot(r/\sqrt{s})$.
It is still too large to guarantee the total bound $\Ot(n/\sqrt{r}+r^{0.99})$, so we need to extract from $H_0$ a smaller emulator $H_P'$ for $(X,Y) = (B_P,\, \bnd P\cup B_P)$.
By faithfulness, $\reachsys(H_0,\,\bnd{P}\cup B_P)$ has bounded VC dimension.
As explained in the end of~\Cref{subsec:vc}, we can compute emulator $H_P'$ of size
\[\Ot\left(|B_P|\cdot |\bnd{P}\cup B_P|^{1-2\epsilon}+|\bnd{P}\cup B_P|\right)=\Ot\left(|B_P|\cdot r^{1/2-\epsilon}+\sqrt{r}\right)\]
in time
\[\Ot\left(|H_0|\cdot |\bnd{P}\cup B_P|^{1-2\epsilon}\right)=\Ot\left(r^{3/2-\epsilon}/\sqrt{s}\right)\]
 for some sufficiently small $\epsilon>0$.
Indeed, the sum of $|H_P'|$ over all $P\in \rdiv$ is $\Ot(n/\sqrt{r}+r^{1-\epsilon})$.
By setting $s$ and $\eps$ appropriately,
we can also make the term $r^{3/2-\epsilon}/\sqrt{s}$ sublinear in $r$.
Finally, we achieve the desired update time by choosing $r$ slightly above $n^{2/3}$.
This concludes the proof sketch of \Cref{thm:planar-dyn-reach}.

\vspace{0.5cm}
\begin{figure}[h]
    \centering
    \includegraphics[width=\textwidth]{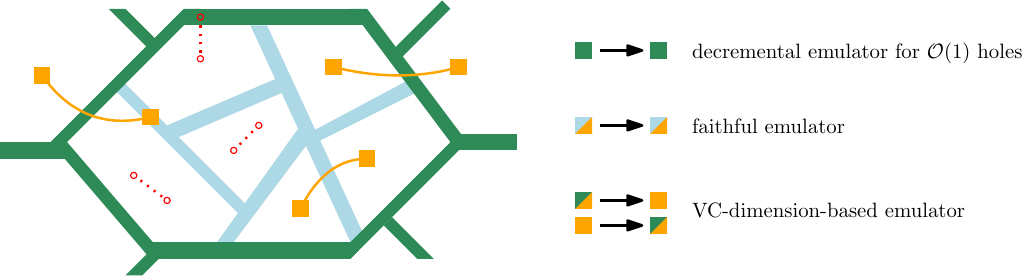}
    \caption{Illustration to the online algorithm. The green cell is the boundary $\bnd P$ of a piece $P$ from an $r$-division, depicted with a single hole.
    It is further decomposed into an $s$-division (in blue).
    The removed edges are marked as dotted red and the inserted ones are orange.
    Their endpoints outside the initial boundary form the set $B$.
    Each orange edge turns into a separate piece of the $r$-division.
    {We employ the decremental data structure of~\cite{DBLP:conf/soda/Karczmarz18} to maintain a \re keeping track of connections among $\bnd P$ within $P$ with respect to the removed edges.}
    We use faithful emulators to maintain reachability within each small piece of the $s$-division, including the vertices from $B_P$ that appeared in there.
    From these emulators we extract an asymmetric VC-dimension-based emulator of size $\Ot(|B_P|\cdot r^{1/2-\eps}+\sqrt{r})$ that keeps track of connections within $P$ from $B_P$ to $\bnd P \cup B_P$ and vice versa.
    The total size of the asymmetric modulators is sufficiently small because the sum of $|B_P|$ over all pieces $P$ is~$\Oh(\sqrt{r})$.
    }
    \label{fig:online}
\end{figure}

\section{Preliminaries}
\label{sec:prelim}

A digraph is a pair $(V,E)$ where $E \sub V\times V$.
A $(u,w)$-dipath in $G$ is a sequence $(v_1,\dots,v_\ell)$ such that $v_1 = u$, $v_\ell = w$, and $(v_i,v_{i+1}) \in E(G)$ for each $i \in [\ell-1]$.
We write
$u \overset{G}\leadsto w$ to indicate that $w$ is reachable from $u$ in $G$, that is, $G$ contains a $(u,w)$-dipath.
The {size} $|G|$ of a digraph $G$ is defined as $|V(G)| + |E(G)|$.

For a digraph $G$, $v \in V(G)$, $X \sub V(G)$, we define $\coltext{\reach}_G(v,X) \sub X$ as the set of vertices in $X$ that are reachable from $v$ in $G$.
Next, $\coltext{\reachsys}(G,X) = \{\reach_G(v,X) \colon v \in V(G)\}$ and $\reachsys(G) = \reachsys(G,V(G))$.

We use the $\Ot$-notation to hide factors polylogarithmic in the instance size.

\subsection{VC dimension}

A \coldef{set system} is a pair $\scal = (X,\rcal)$ where $X$ is called the \coldef{ground set} and $\rcal \sub 2^X$.
The \coldef{dual set system} of $\scal$ is $\scal^* = (\rcal, X^*)$ where $X^* = \{\rcal_x \colon x \in X\}$ and $\rcal_x \sub \rcal$ comprises these sets that contain~$x$.
A set $Z \sub X$ is \coldef{shattered} by $(X,\rcal)$ if for every $Z' \sub Z$ there is $R \in \rcal$ such that $Z' = Z \cap R$.
The \coldef{Vapnik-Chervonenkis dimension} (VC dimension) of $\scal$ is the largest size of a set shattered by $\scal$.
The dual VC dimension of $\scal$ is the VC dimension of $\scal^*$.
It is known that if the VC dimension of $\scal$ is bounded by a constant then so is its dual VC dimension, and vice versa~\cite{chazelle1989quasi}.
It is easy to see that both measures are monotone under truncation of a set system.

\begin{observation}\label{lem:prelim:truncate}
    Let $\scal = (X,\rcal)$ be a set system, $Y \sub X$, and $\scal' = (Y,\rcal')$ where $\rcal' = \{R \cap Y \colon R \in \rcal\}$.
    Then the VC dimension (resp. dual VC dimension) of $\scal'$ is at most the VC dimension (resp. dual VC dimension)~of~$\scal$.
\end{observation}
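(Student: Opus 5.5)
The plan is to check the two claims of the observation separately, working straight from the definitions of shattering and of the dual set system. Nothing earlier in the paper is needed; the only subtlety is bookkeeping in the dual.

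\emph{Primal part.} Let $Z \sub Y$ be shattered by $\scal'$. I will show that $Z$ is also shattered by $\scal$, which gives the bound. Fix $Z' \sub Z$. By shattering in $\scal'$ there is some $R \in \rcal$ with $Z' = Z \cap (R \cap Y)$. Since $Z \sub Y$, we have $Z \cap (R\cap Y) = Z \cap R$, so $Z' = Z \cap R$. Hence every subset of $Z$ is cut out by a set of $\rcal$, so $Z$ is shattered by $\scal$ and $|Z|$ is at most the VC dimension of $\scal$.

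\emph{Dual part.} This is the step needing care, because distinct $R_1,R_2 \in \rcal$ may collapse to the same set $R_1\cap Y = R_2 \cap Y$. I will use the following reformulation of shattering in the dual. A family $\{R_1,\dots,R_k\}$ of distinct ground elements of $\scal^*$ is shattered by $X^*$ exactly when, for every $I \sub [k]$, some $x \in X$ satisfies $x \in R_i \iff i \in I$.

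Now suppose $\{R'_1,\dots,R'_k\} \sub \rcal'$ consists of distinct sets and is shattered by $(\scal')^*$. Then for every $I \sub [k]$ there is $y \in Y$ with $y \in R'_i \iff i\in I$. For each $i$, choose a preimage $R_i \in \rcal$ with $R'_i = R_i \cap Y$. The $R_i$ are distinct, because their traces on $Y$ are distinct. For $y \in Y$ we have $y \in R_i \iff y \in R'_i$, so the same witnesses $y \in Y \sub X$ show that $\{R_1,\dots,R_k\}$ is shattered by $X^*$. Therefore $k$ is at most the dual VC dimension of $\scal$.

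The main obstacle is exactly the non-injectivity just mentioned, and it is handled by choosing one preimage per set, which keeps the $R_i$ distinct.
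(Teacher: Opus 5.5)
Your proof is correct: both parts follow directly from the definitions as you argue, and choosing one preimage $R_i$ for each distinct trace $R'_i$ is the right way to keep the shattered family in the dual of size $k$. The paper gives no proof at all, remarking only that monotonicity under truncation is easy to see, so your definitional check is exactly the argument it leaves implicit.
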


The classic Sauer–Shelah Lemma states that a set system of bounded VC dimension can contain only polynomially many sets.

\begin{lemma}[Sauer–Shelah Lemma \cite{sauer1972density}]
\label{lem:prelim:size-bound}
    Let $(X,\rcal)$ be a set system  of VC dimension $d$.
    Then $|\rcal| = \Oh(|X|^d)$.
\end{lemma}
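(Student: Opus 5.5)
The proof is the standard shifting (compression) argument, proving the sharper bound $|\rcal| \le \sum_{i=0}^{d}\binom{|X|}{i}$. This is $\Oh(|X|^d)$ for constant $d$, since $\sum_{i\le d}\binom{n}{i} \le (n+1)^d$. I would argue by induction on $n = |X|$, and simultaneously on $d$, for the stronger statement: every family $\rcal\sub 2^X$ satisfies $|\rcal| \le |\{Z \sub X : Z \text{ is shattered by } \rcal\}|$. This implies the bound, because every shattered set has size at most $d$, and there are at most $\sum_{i\le d}\binom{n}{i}$ such sets.

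\textbf{Induction step.} Fix an element $x\in X$ and put $X' = X\sm\{x\}$. Define two families on $X'$:
\[
\rcal_1 = \{R\sm\{x\} : R\in\rcal\}, \qquad \rcal_2 = \{R \sub X' : R\in\rcal \text{ and } R\cup\{x\}\in\rcal\}.
\]
Every set of $\rcal_1$ has either one or two preimages in $\rcal$, and it has two exactly when it lies in $\rcal_2$. Hence $|\rcal| = |\rcal_1| + |\rcal_2|$. Applying the induction hypothesis to both families on $X'$, it suffices to show that the number of sets shattered by $\rcal$ is at least the number shattered by $\rcal_1$ plus the number shattered by $\rcal_2$.

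\textbf{Injection.} A set $Z\sub X'$ shattered by $\rcal_1$ is also shattered by $\rcal$, since traces on $X'$ are unaffected by $x$. If $Z\sub X'$ is shattered by $\rcal_2$, then $Z\cup\{x\}$ is shattered by $\rcal$: each trace on $Z$ is realized by some $R\in\rcal_2$, and both $R$ and $R\cup\{x\}$ lie in $\rcal$, so the trace can be realized with or without $x$. The map sending $Z$ to $Z$ on the first family and $Z$ to $Z\cup\{x\}$ on the second is injective into the sets shattered by $\rcal$, because the images differ in whether they contain $x$. A set shattered by both $\rcal_1$ and $\rcal_2$ is counted once via $Z$ and once via $Z\cup\{x\}$, which are distinct sets. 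This completes the induction; the base case $|X|=0$ is trivial, since $\rcal\sub\{\emptyset\}$ and $\emptyset$ is shattered whenever $\rcal$ is nonempty.

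The only delicate point is this last accounting: the bound must be stated as a count of shattered sets rather than with a fixed $d$. The reason is that $\rcal_2$ has VC dimension at most $d-1$ while $\rcal_1$ has VC dimension at most $d$, and the stronger statement is what lets the two pieces be combined without losing a factor.
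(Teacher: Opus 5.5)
Your proof is correct and complete. The identity $|\rcal| = |\rcal_1| + |\rcal_2|$, the two transfers of shattering, and the disjointness of the two image families together give Pajor's strengthening that $\rcal$ shatters at least $|\rcal|$ sets. The transfers are: $Z$ shattered by $\rcal_1$ is also shattered by $\rcal$, and $Z$ shattered by $\rcal_2$ makes $Z \cup \{x\}$ shattered by $\rcal$. The two image families are disjoint because they differ in whether they contain $x$. Hence $|\rcal| \le \sum_{i \le d}\binom{|X|}{i} \le (|X|+1)^d$. The paper does not prove this lemma; it quotes it from Sauer and applies it only with $d$ a fixed constant. So the factor $2^d$ hidden in $(n+1)^d \le 2^d n^d$ is harmless there, and there is no proof of the paper's to compare against.

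Two remarks on your commentary, neither of which affects validity.

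First, your closing claim that the bound must be phrased as a count of shattered sets is not right. The classical proof keeps $d$ fixed and runs a double induction on $(n,d)$ with exactly your $\rcal_1$ and $\rcal_2$. It uses three facts: $\rcal_1$ has VC dimension at most $d$; $\rcal_2$ has VC dimension at most $d-1$, by your own observation that $\rcal_2$ shattering $Z$ makes $\rcal$ shatter $Z \cup \{x\}$; and Pascal's identity $\Phi_d(n) = \Phi_d(n-1) + \Phi_{d-1}(n-1)$ for $\Phi_d(n) = \sum_{i \le d}\binom{n}{i}$. Nothing is lost when the two pieces are combined. Conversely, since your stronger statement does not mention $d$, the phrase ``simultaneously on $d$'' is superfluous; plain induction on $|X|$ is what your argument actually uses.

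Second, what you give is the inductive splitting argument, not the shifting (compression) argument. The latter instead applies down-shifts until the family is downward closed.
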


Consider a set system $(X,\rcal)$.
For an ordering $\pi = (x_1,x_2, \dots , x_n)$ of $X$ and a set $R \sub X$, let $\textcolor{\mycol}{\mathrm{stab}}(\pi,R)$ denote the set of indices $i \in [1,n-1]$ for which $|\{x_i,x_{i+1}\} \cap R| = 1$.
The \coldef{stabbing number} of $\pi$ is the maximum $|\mathrm{stab}(\pi,R)|$ over $R \in \rcal$.
Observe that each set in $\rcal$ can be represented as a union of at most $|\mathrm{stab}(\pi,R)|+1$ intervals over $\pi$.

Chazelle and Welzl~\cite{chazelle1989quasi, welzl1988partition} proved that when $(X,\rcal)$ has dual VC dimension $d$, then $X$ admits an ordering with stabbing number $\Oh_d(n^{1-1/d}\cdot \log n)$.
\footnote{Throughout this paper, the $\Oh_d(\cdot)$ notation allows the hidden constant to depend on $d$.}
They also provided a polynomial algorithm to find such an ordering,
whose rough complexity estimation appears under {\cite[Theorem 4.2]{welzl1988partition}}.
Faster approximation algorithms for this task rely on LP relaxations~\cite{bilo2004crossing, harpeled2009, Chekuri21} but they involve randomized
LP rounding procedures.
Another fast randomized algorithm is given in~\cite{chan2025truly} but it only provides a bound of the form $\sum_{R\in\rcal}|\mathrm{stab}(\pi,R)| = \widetilde\Oh(m\cdot n^{1-1/d})$, which is insufficient in our arguments (e.g., \Cref{lem:vsbased:dag}).

To avoid randomization in \Cref{thm:main:offine}, we employ the algorithm from~\cite{chazelle1989quasi,welzl1988partition}.
Since the analysis in {\cite[Theorem 4.2]{welzl1988partition}} omits the treatment of large numbers, we provide a detailed analysis for the sake of completeness, following the exposition in \cite{chazelle1989quasi}.

\begin{theorem}[{\cite[Theorem 4.3]{chazelle1989quasi}}]
\label{thm:prelim:stab}
    Let $(X,\rcal)$ be a set system of dual VC dimension $d \ge 2$ and $n = |X|$, $m = |\rcal|$.
    Then $X$ admits an ordering $\pi$
    with stabbing number $\Oh_d(n^{1-1/d}\cdot \log n)$ with respect to $\rcal$.
    Furthermore, $\pi$ can be computed in time $\Oh(n^3(n+m))$.
\end{theorem}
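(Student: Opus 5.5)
We follow the iterative reweighting (multiplicative weights) scheme of Chazelle and Welzl. Instead of an ordering, we first build a spanning path on $X$, i.e., a sequence of $n-1$ ``edges'' $\{x,y\}$, such that every $R\in\rcal$ \emph{crosses} (contains exactly one endpoint of) few path edges. The crossing number of such a path with respect to $R$ is exactly $|\mathrm{stab}(\pi,R)|$ for the induced ordering $\pi$. It is convenient to first build a spanning tree with low crossing number. A tree can be turned into a path by a DFS traversal that lists vertices in order of first visit. Each edge $\{x_i,x_{i+1}\}$ of that order corresponds to a tree walk between $x_i$ and $x_{i+1}$, and each tree edge is used by at most two such walks. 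Since $R$ can separate $x_i$ from $x_{i+1}$ only if it crosses some tree edge on the walk between them, the stabbing number is at most twice the tree crossing number.

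\textbf{Building the tree.} Give each $R\in\rcal$ weight $w(R)=1$ initially. We repeat for $i=1,\dots,n-1$, maintaining a forest on $X$ whose components are merged along the way (equivalently, removing one endpoint each time, which gives the same bound). In each round we choose, among all pairs $\{x,y\}$ of distinct points still available, one minimizing the total weight of sets crossing it. We add this pair, remove one of its endpoints from further consideration, and double the weight of every $R$ crossing the pair.

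\textbf{Key lemma.} If the current point set has $k$ points and the total weight is $W$, some pair is crossed by weight at most $\Oh_d(W\cdot k^{-1/d})$. To prove it, view the pairs as a metric-like structure. Points $x,y$ are ``close'' if the weighted measure of sets separating them is small. The dual VC dimension $d$ bounds, via the dual shatter function, the number of equivalence classes of points cut out by any $t$ sets by $\Oh(t^d)$. We take a random sample (or $\epsilon$-net argument over the weighted family) of $t\approx k^{1/d}$ sets. Some class then contains two points, and the expected weight separating them is $\Oh(W/t)$. A cleaner deterministic route is a packing/pigeonhole argument: if every pair had crossing weight $>\epsilon W$, then by a Haussler-type packing bound for dual VC dimension $d$ at most $\Oh((1/\epsilon)^d)$ points could exist, forcing $\epsilon=\Oh(k^{-1/d})$. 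I expect this lemma to be the main obstacle, since the packing bound under arbitrary weights needs care. I would try the sampling/pigeonhole version first, which loses only constants.

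\textbf{Doubling analysis.} At the end, $W_{\mathrm{final}}\le m\prod_{k}(1+c k^{-1/d})\le m\exp(\Oh(n^{1-1/d}))$. A set crossed $c_R$ times has weight $2^{c_R}\le W_{\mathrm{final}}$, so $c_R\le \log m+\Oh(n^{1-1/d})$. The Sauer–Shelah Lemma applied to the primal system, whose VC dimension is bounded in terms of $d$, gives $\log m=\Oh_d(\log n)$. The resulting bound $\Oh_d(n^{1-1/d}+\log n)$ is within the claimed $\Oh_d(n^{1-1/d}\log n)$; the extra $\log$ factor leaves slack for the constructive version of the lemma.

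\textbf{Running time.} Each of the $n$ rounds evaluates all $\Oh(n^2)$ pairs. Crossing weights of all pairs are recomputed in $\Oh(n^2 m)$ naively, but incremental maintenance (updating only the sets whose weights changed) or direct evaluation costs $\Oh(n^2(n+m))$ per round, for a total of $\Oh(n^3(n+m))$. Weights reach $2^{\Oh(n)}$, so we store exponents rather than raw values and compare sums with arithmetic on $\Oh(n)$-bit numbers, or round weights to powers of $(1+1/n)$. This is exactly the ``large numbers'' issue to handle, and it costs at most the stated polynomial factor.
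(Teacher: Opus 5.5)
Your outline is the same Chazelle--Welzl reweighting scheme that the paper invokes: pick a lightly crossed pair, delete one endpoint, double the weights of the crossing sets, and bound $2^{c_R}$ by the final total weight. The correctness part, including the tree-to-path step and $\log m=\Oh_d(\log n)$ via Sauer--Shelah, is fine. The paper does not reprove the key lemma and simply cites \cite[Lemma 4.1]{chazelle1989quasi}.

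One correction to your key lemma: the sampling/$\eps$-net route does not ``lose only constants''. The pair you find depends on the sample, so you cannot claim that the expected separating weight is $\Oh(W/t)$. A union bound over pairs, or an $\eps$-net of size $\Oh(\eps^{-1}\log\eps^{-1})$, gives crossing weight $\Oh(Wk^{-1/d}\log k)$. This is harmless here: the product becomes $\exp(\Oh(n^{1-1/d}\log n))$, which is exactly the claimed bound and the form of Chazelle--Welzl's lemma. Only the Haussler packing route gives the log-free version.

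The step that actually needs an argument, and the only one the paper's proof supplies, is the running time. You leave it as an assertion. Your ``naive $\Oh(n^2m)$'' count assumes unit-cost arithmetic, but the weights are $\Theta(n)$-bit numbers. Summing $|\qcal_{x,y}|$ of them with generic big-number additions costs $\Oh(nm)$ per pair, which gives $\Oh(n^4m)$ overall, above the claimed $\Oh(n^3(n+m))$.

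The missing idea is to exploit that every weight is exactly $2^i$ with $0\le i<n$:
\begin{enumerate}
\item Keep only exponents, build the histogram of exponents over $\qcal_{x,y}$, and resolve carries in one pass from the least significant bit.
\item The sum has at most $2n$ bits, since $m\le 2^n$. So each pair costs $\Oh(n+|\qcal_{x,y}|)=\Oh(n+m)$.
\item Find the minimizing pair with $\Oh(n^2)$ comparisons of $\Oh(n)$ time each. This gives $\Oh(n^2(n+m))$ per round and $\Oh(n^3(n+m))$ in total.
\end{enumerate}
In your incremental variant, the same effect comes from the binary-counter potential argument: adding a power of two to a counter costs amortized $\Oh(1)$.

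Your other suggestion, rounding weights to powers of $(1+1/n)$, does not address the problem. The weights are already powers of two; the cost lies in summing them. Approximate floating-point summation with $\Oh(\log n)$-bit mantissas would also work. In that case you must add the argument that a constant-factor approximate minimizer still satisfies the key lemma.
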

\begin{proof}
    For a set system $(Y,\qcal)$ and distinct $x,y \in Y$, let $\qcal_{x,y}$ denote the family of sets $Q \in \qcal$ satisfying $|\{x,y\} \cap Q| = 1$.
    \cite[Lemma 4.1]{chazelle1989quasi} states that when the dual VC dimension of $(Y,\qcal)$ is $d$ and $w\colon \qcal \to \nn$ is any function,
    then there exist $x,y \in Y$ for which
    \begin{equation}\label{eq:chazelle}
    \sum_{q \in \qcal_{x,y}} w(Q) \le c_d \cdot \sum_{q \in \qcal} w(Q)\cdot \log|Y| \cdot |Y|^{-1/d}
    \end{equation}
    where $c_d$ depends on $d$.

    The algorithm described in \cite[Lemma 4.2]{chazelle1989quasi} starts with $(Y,\qcal) = (X,\rcal)$, $w \equiv 1$, and finds $x,y \in Y$ for which \eqref{eq:chazelle} holds.
    Then the edge $xy$ is added to an intermediate spanning tree of $X$, one of $x,y$ is removed from $Y$, and for each $Q \in \qcal_{x,y}$ the weight $w(Q)$ is multiplied by two.
    The total number of such iterations in \cite[Theorem 4.3]{chazelle1989quasi} is less than $n$ and so each weight $w(Q)$ is always of the form $2^i$ where $0 \le i < n$.

    We justify that a single iteration can be implemented to run in time $\Oh(n^2(n+m))$.
    For each pair $x,y \in Y$ the sum $w(\qcal_{x,y}) := \sum_{Q \in \qcal_{x,y}} w(Q)$ can be computed in time linear in $n + |\qcal_{x,y}| \le n + m$
    because each $w(Q)$ is a power of two.
    This is performed by inspecting the bits from the least significant and moving the surplus to the next bit.
    Note that the number of bits needed to represent $w(\qcal_{x,y})$ is at most $2n$.
    Next, we find a pair $x,y \in Y$ for which $w(\qcal_{x,y})$ is minimized.
    This requires $\Oh(n^2)$ comparisons, each taking time $\Oh(n)$.
    The chosen pair satisfies condition \eqref{eq:chazelle}.
\end{proof}

VC dimension bound $d$ implies a bound of $2^{d+1}$ on the dual VC dimension~\cite{chazelle1989quasi}.
However, in our applications in reachability this relation can be tightened.

\begin{lemma}\label{thm:prelim:dual-vc}
    Let $G$ be a digraph and $G^r$ be obtained from $G$ by reversing all edges.
    Then the dual VC dimension of $\reachsys(G)$ equals the VC dimension of $\reachsys(G^r)$.
\end{lemma}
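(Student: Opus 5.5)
The plan is to show that the dual set system of $\reachsys(G)$ is isomorphic, as a set system, to $\reachsys(G^r)$. If two set systems are isomorphic (a bijection between ground sets carrying one family onto the other), they have the same VC dimension, so the claim follows.

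First I write the dual explicitly. The primal system is $\scal = (V, \rcal)$ with $\rcal = \{\reach_G(v,V) : v \in V\}$. Its dual is $\scal^* = (\rcal, X^*)$, where for $x \in V$ we have $\rcal_x = \{R \in \rcal : x \in R\}$. Equivalently, $\rcal_x = \{\reach_G(v,V) : v \in V,\ v \overset{G}\leadsto x\}$.

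One subtlety is that $\rcal$ is a set of sets, so distinct vertices may give equal reachability sets. This can only merge ground elements of the dual, and it does not affect shattering, provided I first justify the following. Since $v \overset{G}\leadsto v$, the set $\reach_G(v,V)$ contains $v$. Hence $\reach_G(v,V) = \reach_G(u,V)$ holds exactly when $u$ and $v$ reach each other, i.e.\ when they lie in the same strongly connected component.

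So I pass to the condensation. Ground elements of the dual correspond bijectively to the SCCs $C$ of $G$, via $C \mapsto \reach_G(v,V)$ for any $v \in C$. A dual set $\rcal_x$ is then identified with the set of SCCs that can reach $x$, i.e.\ that can reach the component of $x$.

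On the other side, consider $\reachsys(G^r)$ on ground set $V$. A set there is $\reach_{G^r}(x,V) = \{v : v \overset{G}\leadsto x\}$, which is a union of whole SCCs. I claim it suffices to compare VC dimension in the quotient where SCCs are ground elements. Two vertices in the same SCC belong to exactly the same sets of $\reachsys(G^r)$, so a shattered set can contain at most one vertex per SCC. Therefore the VC dimension of $\reachsys(G^r)$ equals that of the system on SCCs whose sets are $\{C : C \leadsto x\}$. This is exactly the dual system under the bijection above.

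Hence the two VC dimensions are equal. The only delicate step is the handling of duplicate sets and merged ground elements, and reflexivity of reachability resolves it. The remainder is bookkeeping.
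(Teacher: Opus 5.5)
Your proof is correct and uses essentially the same idea as the paper. The paper picks a representative vertex $\rho(R)$ for each $R \in \reachsys(G)$, observes that $v \in R$ iff $\rho(R) \overset{G}\leadsto v$, and transfers shattered sets between the two systems, whereas you package this as an isomorphism of set systems through the SCC condensation. Your observation that a set shattered by $\reachsys(G^r)$ contains at most one vertex per strongly connected component is exactly what makes the paper's ``opposite direction is analogous'' step work, so your version handles both directions at once and slightly more explicitly.
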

\begin{proof}
    Let $\rcal = \reachsys(G)$.
    Next, let $\rho \colon \rcal \to V(G)$ assign to $R \in \rcal$ a vertex $v \in V(G)$ such that $\reach_G(v,V(G)) = R$.
    Recall that for $v \in V(G)$ we define $\rcal_v \sub \rcal$ as the subfamily of sets containing $v$.
    The dual VC dimension of $\reachsys(G)$ equals the maximum size of a subset of $\rcal$ that is shattered by the family $\{\rcal_v \colon v \in V(G)\}$.
    Consider a set $\scal \sub \rcal$ that is shattered by this family and let $S = \rho(\scal)$ (we use convention $\rho(\scal) = \bigcup_{s \in \scal} \rho(s)$).
    For every $\qcal \sub \scal$ there exists $v \in V(G)$ such that $\rcal_v \cap \scal = \qcal$.
    It means that the only vertices in $S$ that can reach $v$ are these in $\rho(\qcal)$.
    In other words, $\reach_{G^r}(v, S) = \rho(\qcal)$.
    Because the mapping $\rho$ is injective, we obtain that the set $S$ is shattered by $\reachsys(G^r)$.
    Consequently, when the dual set system to $\reachsys(G)$ has a shattered set of size $d$, then $\reachsys(G^r)$ has a shattered set of size $d$ as well, which gives the $(\le)$ inequality of the lemma.
    The opposite direction is analogous.
\end{proof}

When $G$ is a $K_h$-minor-free digraph,
the bound $h-1$ on the VC dimension of  $\reachsys(G)$ follows from a more general bound for  the set system of directed balls.
In light of \Cref{thm:prelim:dual-vc}, this implies a bound on both the VC dimension and the dual VC dimension of $\reachsys(G)$.

\begin{lemma}[{\cite[Theorem 5]{LeW24}}]
\label{thm:prelim:minor-vc}
    Let $h \in \nn$ and $G$ be a $K_h$-minor-free digraph.
    Then $\reachsys(G)$ has both VC dimension and dual VC dimension at most~$h-1$.
\end{lemma}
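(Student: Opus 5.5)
The statement is Lemma~\ref{thm:prelim:minor-vc}, which is cited from \cite[Theorem 5]{LeW24}. I would nonetheless plan a self-contained argument, and derive the dual bound from the primal one via Lemma~\ref{thm:prelim:dual-vc}.

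The reduction is as follows. Reversing all edges of $G$ preserves $K_h$-minor-freeness, since minors only depend on the underlying undirected graph. So once we know that every $K_h$-minor-free digraph has $\reachsys$ of VC dimension at most $h-1$, we can apply this to $G^r$. Then Lemma~\ref{thm:prelim:dual-vc} gives that the dual VC dimension of $\reachsys(G)$ is at most $h-1$. It therefore suffices to bound the primal VC dimension.

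For the primal bound, suppose for contradiction that a set $Z=\{z_1,\dots,z_h\}$ of size $h$ is shattered by $\reachsys(G)$. The goal is to build a $K_h$ minor with one branch set per $z_i$.

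For each pair $i<j$, pick a witness vertex $w_{ij}$ with $\reach_G(w_{ij},Z)=\{z_i,z_j\}$. Fix dipaths $P_{ij}\colon w_{ij}\leadsto z_i$ and $Q_{ij}\colon w_{ij}\leadsto z_j$. Every vertex $x$ on $P_{ij}\cup Q_{ij}$ satisfies $\reach_G(x,Z)\sub\{z_i,z_j\}$, and it reaches $z_i$ if $x\in P_{ij}$ and $z_j$ if $x\in Q_{ij}$.

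The natural plan is to assign each vertex $x$ with nonempty $\reach_G(x,Z)$ to a branch set indexed by some canonical element of $\reach_G(x,Z)$. The hardest step is making the branch sets connected and disjoint while guaranteeing an edge between every pair. Here the reachability labels help. If two paths from different pairs share a vertex $x$, then $\reach_G(x,Z)$ lies in the intersection of the two pairs. So paths of disjoint pairs are vertex-disjoint, and paths of pairs sharing index $i$ can only meet at vertices whose reach is exactly $\{z_i\}$.

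I would first try the following construction. Let $B_i$ be the set of vertices $x$ with $\reach_G(x,Z)=\{z_i\}$ that lie on some chosen path, together with $z_i$.
\begin{itemize}
\item Each such vertex reaches $z_i$ along the suffix of its path. Every vertex on that suffix also has reach exactly $\{z_i\}$, because reach can only shrink along a path and must still contain $z_i$. Hence $B_i$ is connected through $z_i$.
\item The sets $B_i$ are disjoint, since their members have different reach sets.
\item For the pair $\{i,j\}$, the vertices on $P_{ij}\cup Q_{ij}$ with reach exactly $\{z_i,z_j\}$ form a connected set $C_{ij}$. It contains $w_{ij}$ and is closed under prefixes. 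It is adjacent to both $B_i$ and $B_j$ at the first vertices where the reach drops to a singleton.
\end{itemize}
The remaining difficulty is that the sets $C_{ij}$ for different pairs might intersect. This is impossible, since their vertices have different reach sets. So we obtain disjoint connected sets $B_i$ and $C_{ij}$ in which each $C_{ij}$ touches $B_i$ and $B_j$. Contracting each $C_{ij}$ into $B_i$ would destroy disjointness only if I merged the same $C_{ij}$ into two branch sets, which I do not. Instead I merge $C_{ij}$ into $B_i$ for $i<j$. The merged sets $B_i\cup\bigcup_{j>i}C_{ij}$ are disjoint and connected, and each pair $i<j$ is joined by the edge from $C_{ij}$ to $B_j$. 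This yields a $K_h$ minor, contradicting the assumption. So the VC dimension is at most $h-1$.
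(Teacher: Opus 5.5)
Your proof is correct. For the dual bound you do exactly what the paper does: $G^r$ is still $K_h$-minor-free, and \Cref{thm:prelim:dual-vc} turns the primal bound for $\reachsys(G^r)$ into the dual bound for $\reachsys(G)$.

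For the primal bound the paper gives no argument of its own. It imports the bound from \cite[Theorem 5]{LeW24}, where it follows from a more general bound of $h-1$ for the set system of directed balls. You instead give a direct minor-model argument specialized to reachability, which uses only the witnesses for subsets of size at most two. This is essentially the first half of the paper's own proof of \Cref{thm:prelim:planar-vc}, where the sets $V_i$ and $U_{ij}$ form a subdivision of $K_4$, carried out for $h$ terminals. Classifying path vertices by their exact reach set $\reach_G(x,Z)$ is slightly cleaner than the paper's device of choosing $u_{ij}$ in the latest strongly connected component, because disjointness of all the pieces $B_i$ and $C_{ij}$ becomes immediate. The citation gives a stronger, distance-based statement; your route gives a self-contained proof.

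One detail you should state explicitly: $\reach_G(z_i,Z)=\{z_i\}$. If $z_i$ reached some $z_j$, then no vertex could have reach set exactly $\{z_i\}$, contradicting shattering. This fact is what makes the reach set actually drop to a singleton along $P_{ij}$, so that $C_{ij}$ touches $B_i$. It is also what keeps $z_i$, which you add to $B_i$, out of the other $B_j$ and the $C_{jk}$.
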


Since planar graphs are $K_5$-minor-free, their reachability set systems can have VC dimension at most four.
By inspecting the structure of hypothetical shattered sets more closely, we improve this bound to three.
The proof is given
in \Cref{sec:improved-vc}.

\begin{restatable}{lemma}{lemPlanarVC}
\label{thm:prelim:planar-vc}
    Let $G$ be a planar digraph.
    Then $\reachsys(G)$ has both VC dimension and dual VC dimension at most~three.
\end{restatable}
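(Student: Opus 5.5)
By \Cref{thm:prelim:dual-vc} it is enough to bound the VC dimension. Indeed, the reversal $G^r$ of a planar digraph is planar, so once the VC dimension of $\reachsys(G)$ is at most three for every planar $G$, the same bound applies to $\reachsys(G^r)$. That quantity equals the dual VC dimension of $\reachsys(G)$. So I will assume that $S=\{a,b,c,d\}$ is shattered by $\reachsys(G)$ and derive a contradiction with planarity.

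\textbf{Monotonicity of reach sets.} The basic observation is that along any dipath the set $\reach_G(\cdot,S)$ can only shrink. Hence every vertex on a dipath from $v$ to $x\in S$ has a reach set that contains $x$ and is contained in $\reach_G(v,S)$.

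For each pair $\{x,y\}\subseteq S$, shattering gives a vertex $v_{xy}$ with $\reach_G(v_{xy},S)=\{x,y\}$. It also gives a vertex $v_S$ with $\reach_G(v_S,S)=S$. Fix a dipath from $v_{xy}$ to $x$ and one to $y$, and let $H$ be the union of these twelve paths. Partition $V(H)$ according to reach sets:
\begin{itemize}
\item $B_x$ is the set of vertices of $H$ whose reach set is exactly $\{x\}$;
\item $C_{xy}$ is the set of vertices of $H$ whose reach set is exactly $\{x,y\}$.
\end{itemize}
By monotonicity, every vertex of $H$ falls into one of these ten classes. Each path to $x$ ends with a suffix lying in $B_x$, and that suffix contains $x$, so $B_x$ induces a connected subgraph. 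The class $C_{xy}$ consists of the prefixes of the two paths from $v_{xy}$, so it is connected. Moreover, $C_{xy}$ has an edge into $B_x$ and an edge into $B_y$. The ten classes are pairwise disjoint by definition.

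\textbf{Building the minor.} Contract every $B_x$ and every $C_{xy}$ to a single vertex. The result contains a subdivision $K$ of $K_4$: the branch vertices are the contracted $B_x$, and each $C_{xy}$ is a subdivision vertex on the edge $xy$.

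Next bring in $v_S$. For each $x\in S$, follow a dipath from $v_S$ to $x$ until its first vertex in $H$. That vertex reaches $x$, so by monotonicity it lies in $B_x$ or in some $C_{xy}$; call such a class \emph{incident to $x$}. Every vertex on the prefix before this first hit is outside $H$. Let $D$ be the union of the four prefixes, excluding the hit vertices. Then $D$ is connected (it contains $v_S$) and disjoint from $H$. Contract $D$ to a vertex $z$. For every $x$, the vertex $z$ is adjacent to some class incident to $x$. All of this yields a minor $G'$ of $G$, and $G'$ is planar.

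\textbf{The contradiction.} Since $K_4$ is $3$-connected, every planar embedding of its subdivision $K$ has four faces. The face that avoids a branch vertex $d$ is bounded only by $B_a,B_b,B_c$ and $C_{ab},C_{bc},C_{ca}$. In the embedding of $G'$, the vertex $z$ lies inside one face of $K$, say the one avoiding $d$. Its neighbours therefore lie on that face's boundary, and none of them is $B_d$ or any $C_{xd}$. This contradicts the fact that $z$ is adjacent to a class incident to $d$. So no $4$-element set is shattered.

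I expect the main care to be needed in the claim that face boundaries of the subdivided $K_4$ contain exactly the classes incident to three branch vertices. One must make sure the contracted classes really form a subdivision of $K_4$ and not something with extra adjacencies. For example, $C_{xy}$ might also be adjacent to $B_z$; this cannot happen, since a vertex whose reach set is $\{x,y\}$ cannot have an out-neighbour reaching only $z$. Any extra edges could still be deleted before embedding without affecting the face argument.
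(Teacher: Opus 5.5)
Your proof is correct. Its skeleton matches the paper's: a shattered $4$-set, the twelve pair-paths, contraction to a subdivided $K_4$, and a connected set of prefixes from the all-reaching vertex (the paper's $S$, your $D$) that must touch, for each terminal $x$, either the singleton class of $x$ or a pair class containing $x$.

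You differ from the paper in two places.
\begin{itemize}
\item \emph{Defining the classes.} The paper picks each $u_{ij}$ in the topologically latest strongly connected component. It takes $U_{ij}$ to be that entire component and $V_i$ to be the remaining path vertices. It then needs an antichain claim to show that vertices of $V_i$ reach only $v_i$. You instead partition $V(H)$ directly by exact reach sets. This makes disjointness automatic, and connectivity of $B_x$ and $C_{xy}$ follows from monotonicity along the paths. The only implicit step is $x\in B_x$, which holds because $x$ ends two pair-paths with different partners.
\item \emph{The final contradiction.} This is the more substantial difference. The paper enumerates six scenarios, up to symmetry, for which classes $S$ touches, and exhibits a $K_5$ or $K_{3,3}$ minor in each. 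You instead use that every plane embedding of a subdivided $K_4$ has exactly four faces, each bounded by a subdivided triangle. This follows from Whitney's theorem, or directly from Euler's formula: there are four faces of total boundary length $24$, each of length at least $6$. Hence the face containing $z$ misses $B_d$ and every $C_{xd}$ for some $d$.
\end{itemize}

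Your route is uniform and shorter. It turns the intuition in the caption of the paper's figure into the proof itself. The paper's route needs only the non-planarity of $K_5$ and $K_{3,3}$, at the price of a case analysis verified through the figure.
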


These bounds are transferred to $\reachsys(G,X)$ for any $X \sub V(G)$ via \Cref{lem:prelim:truncate}.

\subsection{Reachability emulators}

This subsection covers the basic properties of \re{s}.
Below we introduce the novel concept of
 {\em faithfulness}.

\begin{definition}
    Let $G$ be a  digraph and $X,Y \sub V(G)$.
    A  digraph $H$ with $X \cup Y \sub V(H)$ is called a \coldef{reachability emulator}  for $(G,X,Y)$ if for every $u \in X$, $v \in Y$ it holds that: $u \overset{G}\leadsto v$ if and only if $u \overset{H}\leadsto v$.

    If $X=Y$ then we simply say that $H$ is a \re for $(G,X)$.

    Furthermore, when $H$ is a \re for $(G,X)$
    we say that $H$ is  \coldef{faithful} if additionally for every $u \in V(H)$ there exists $v \in V(G)$ such that $\reach_H(u,X) = \reach_G(v,X)$.
\end{definition}

\begin{observation}\label{obs:prelim:faithful-reach}
    Let $G$ be a digraph, $X \sub V(G)$, and $H$ be a \fre for $(G,X)$. Then $\reachsys(H,X) \sub \reachsys(G,X)$.
\end{observation}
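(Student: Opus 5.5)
This follows directly from unfolding the definitions. I would show that every set in $\reachsys(H,X)$ already occurs in $\reachsys(G,X)$.

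Take an arbitrary member $R \in \reachsys(H,X)$. By the definition of $\reachsys$, there is a vertex $u \in V(H)$ with $R = \reach_H(u,X)$. Since $H$ is a \fre for $(G,X)$, the faithfulness condition in the definition applies to every vertex of $H$, including $u$. It gives a vertex $v \in V(G)$ with $\reach_H(u,X) = \reach_G(v,X)$. Hence $R = \reach_G(v,X)$, which is a member of $\reachsys(G,X)$.

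As $R$ was arbitrary, the inclusion $\reachsys(H,X) \sub \reachsys(G,X)$ follows. The argument needs only the faithfulness clause, not the emulator property on $X \times X$. There is no real obstacle here. The one point to check is that the definition of $\reachsys(H,X)$ ranges over all of $V(H)$, including Steiner vertices outside $X$. Faithfulness is quantified over exactly that vertex set, so the two match.
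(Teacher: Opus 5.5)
Your proof is correct and is exactly the intended argument. The paper states this as an observation without a separate proof, because it follows directly from the faithfulness clause, which is quantified over all of $V(H)$ — precisely the vertex set that $\mathrm{ReachSystem}(H,X)$ ranges over.
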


\begin{observation}\label{obs:prelim:faithful-subset}
    Let $G$ be a digraph, $Y \sub X \sub V(G)$, and $H$ be a \fre for $(G,X)$. Then $H$ is also a \fre for $(G,Y)$.
\end{observation}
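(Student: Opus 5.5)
The statement to prove is the last observation: if $H$ is a \fre for $(G,X)$ and $Y \sub X$, then $H$ is also a \fre for $(G,Y)$. The plan is to check the two defining conditions separately, each directly from the corresponding condition for $X$.

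First, $H$ is a \re for $(G,Y)$. Since $Y \sub X$, we have $Y \sub V(H)$. Any pair $u,v \in Y$ is also a pair in $X \times X$. The hypothesis therefore gives $u \overset{G}\leadsto v$ if and only if $u \overset{H}\leadsto v$.

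Second, faithfulness. Fix $u \in V(H)$. By faithfulness for $X$ there is $v \in V(G)$ with $\reach_H(u,X) = \reach_G(v,X)$. For any digraph $F$, vertex $w$, and $Y \sub X$, we have $\reach_F(w,Y) = \reach_F(w,X) \cap Y$, since both sides are the vertices of $Y$ reachable from $w$. Intersecting both sides of the equality with $Y$ then gives $\reach_H(u,Y) = \reach_G(v,Y)$, with the same witness $v$.

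No step is a real obstacle. The only point to state explicitly is that the witness $v$ chosen for $X$ still works after restricting to $Y$, and this is exactly the identity $\reach_F(w,Y) = \reach_F(w,X) \cap Y$.
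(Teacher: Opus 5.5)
Your proof is correct, and it is the direct argument the paper has in mind; the paper states this as an observation without giving a proof. The emulator condition restricts from $X \times X$ to $Y \times Y$. For faithfulness, the same witness $v$ works because $\reach_F(w,Y)=\reach_F(w,X)\cap Y$.
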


A convenient property of (faithful) \re{s} is that they can be easily combined.
In other words, emulating reachability inside a subgraph with respect to its boundary preserves reachability in the entire graph.

\begin{lemma}\label{lem:emulator-union}
    Let $G_1,G_2$ be digraphs, $X_1 \sub V(G_1),\, X_2 \sub V(G_2)$, such that $V(G_1) \cap V(G_2) = X_1 \cap X_2$.
    Next, let $H_1,H_2$ be \re{s}
    for $(G_1,X_1)$ and $(G_2,X_2)$, respectively, satisfying $V(H_1) \cap V(H_2) = X_1 \cap X_2$.
    Then the union $H_1 \cup H_2$ is a \re for $(G_1 \cup G_2, X_1 \cup X_2)$.
    Additionally, if  $H_1,H_2$ are faithful then so is   $H_1 \cup H_2$.
\end{lemma}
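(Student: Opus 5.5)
The plan is to prove the reachability statement by a path-decomposition argument, and then obtain faithfulness from the same argument applied to reachability sets. Set $G=G_1\cup G_2$, $H=H_1\cup H_2$, $X=X_1\cup X_2$, and $S=X_1\cap X_2$. By hypothesis $S=V(G_1)\cap V(G_2)=V(H_1)\cap V(H_2)$. The key fact is that any dipath in $G$ (resp.\ $H$) splits into maximal subpaths that each lie entirely in $G_1$ or entirely in $G_2$ (resp.\ $H_1$ or $H_2$). Every edge belongs to one of the two graphs, and a switch from one graph to the other can only occur at a common vertex, i.e., at a vertex of $S$.

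\textbf{Step 1: the emulator property.} Take $u,v\in X$ with $u\overset{G}\leadsto v$, and split the path at its switch points $u=w_0,w_1,\dots,w_k=v$. Each segment $w_{j}\to w_{j+1}$ lies in some $G_i$. Its endpoints are in $X_i$: the interior switch points lie in $S\subseteq X_1\cap X_2$, and an endpoint $u$ or $v$ lying in $V(G_i)$ is in $X_i$. That last claim holds because $u\in X=X_1\cup X_2$; if $u\in X_{3-i}$ and $u\in V(G_i)$, then $u\in V(G_1)\cap V(G_2)=S$. Since $H_i$ emulates $(G_i,X_i)$, each segment is replaced by a $w_j\to w_{j+1}$ dipath in $H_i$, and concatenating gives $u\overset{H}\leadsto v$. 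The converse direction is symmetric, using $V(H_1)\cap V(H_2)=S$ and the fact that $X_i\subseteq V(H_i)$.

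\textbf{Step 2: faithfulness.} Fix $u\in V(H)$, say $u\in V(H_1)$. By faithfulness of $H_1$ there is $v\in V(G_1)$ with $\reach_{H_1}(u,X_1)=\reach_{G_1}(v,X_1)$. I will show $\reach_H(u,X)=\reach_G(v,X)$. For any target $x\in X$, decompose a witnessing path in $H$ at its first switch point. Either the path stays in $H_1$, so $x\in X_1$ and $x\in\reach_{H_1}(u,X_1)$. Or it reaches some $s\in S$ inside $H_1$ and continues from $s$; then $s\in\reach_{H_1}(u,X_1)$, and $s\overset{H}\leadsto x$ is equivalent to $s\overset{G}\leadsto x$ by Step 1, since $s,x\in X$. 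Applying the same case split to a path in $G$ from $v$ with $G_1$ in place of $H_1$ gives the same description of $\reach_G(v,X)$, using $\reach_{G_1}(v,X_1)$ in place of $\reach_{H_1}(u,X_1)$. Because these two sets are equal, the two reachability sets coincide. One subtlety in the first case: a path ending inside $H_1$ at $x\in X_2\setminus X_1$ is impossible, since such an $x$ is not in $V(H_1)$. The case $u\in V(H_2)$ is symmetric.

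\textbf{Main obstacle.} The main difficulty is bookkeeping rather than depth. Step 1 needs the endpoint-membership check ($u\in X_i$ whenever $u\in V(G_i)$) to hold. Step 2 must stay within the definition of faithfulness for $H$, which only asks for a witness $v$ in $V(G)$, so choosing $v\in V(G_1)$ is allowed.
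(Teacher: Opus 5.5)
Your proposal is correct and follows essentially the same route as the paper. Step 1 is the paper's decomposition of a path into maximal segments lying in one $G_i$ (resp. $H_i$), with switch points in $X_1\cap X_2$. Step 2 matches the paper's faithfulness argument, which cuts a path from $u$ (resp. from the witness $v\in V(G_1)$) at its first vertex in $X_1$ and transfers the remainder using Step 1; cutting instead at the first switch point, as you do, is an equivalent variant.
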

\begin{proof}
Consider any $u,v\in X_1\cup X_2$. A $u\to v$ path $P$ in $H_1\cup H_2$ (if it exists) can be decomposed into maximal subpaths $P_i=u_i\to u_{i+1}$ ($i=1,\ldots,k$) that are fully contained in either $H_1$ or $H_2$.
Note that for $1<i\leq k$, $u_i\in V(H_1)\cap V(H_2)=X_1\cap X_2$.
This shows that for each $i$, if $P_i$ is fully contained in $H_j$, $j\in \{1,2\}$, then $u_i,u_{i+1}\in X_j\subseteq V(G_j)$.
Thus, the path $P_i$ certifies that $u_i$ can reach $u_{i+1}$ in $G_j$ and thus also in $G_1\cup G_2$.
Therefore, we have $u=u_1\overset{G_1\cup G_2}\leadsto u_2\overset{G_1\cup G_2}\leadsto \ldots \overset{G_1\cup G_2}\leadsto u_{k+1}=v$ as well.

Now consider a $u\to v$ path $Q$ in $G_1\cup G_2$, if one exists.
Decompose it into maximal subpaths $Q_i=u_i\to u_{i+1}$ ($i=1,\ldots,l$) that are fully contained in either $G_1$ or $G_2$.
We now argue that for each $i$, we have $u_i\overset{H_j}\leadsto u_{i+1}$ for some $j\in \{1,2\}$.
This will prove that $u=u_1\overset{H_1\cup H_2}\leadsto \ldots \overset{H_1\cup H_2}\leadsto u_{l+1}=v$.
Note that if $Q_1\subseteq G_j$, then $u_1\in (X_1\cup X_2)\cap V(G_j)=X_j$.
Similarly, if $Q_l\subseteq G_j$, then $u_{l+1}\in X_j$.
For each $1<i\leq l$ we have $u_i\in V(G_1)\cap V(G_2)=X_1\cap X_2$.
This proves that for any $i$, if $Q_i\subseteq G_j$, then $u_i,u_{i+1}\in X_j$.
In such a case, $u_i\overset{H_j}\leadsto u_{i+1}$ follows by the definition of $H_j$.

Finally, suppose now that $H_1,H_2$ are additionally faithful and consider any $z\in V(H_1)\cup V(H_2)$.
Without loss of generality assume that $z\in V(H_1)$ and let $z'\in V(G_1)$ be such that $\reach_{H_1}(z,X_1)=\reach_{G_1}(z',X_1)$.
We will prove that $\reach_{H_1\cup H_2}(z,X_1\cup X_2)=\reach_{G_1\cup G_2}(z',X_1\cup X_2)$.

Let $x\in \reach_{H_1\cup H_2}(z,X_1\cup X_2)$ and consider a $z\to x$ path $R$ in $H_1\cup H_2$.
Note that $R$ has some first vertex $y\in X_1$:
if $R$ is fully contained in $H_1$, then the endpoint $x$ belongs to $X_1$,
and otherwise, $R$ contains a vertex from $V(H_1)\cap V(H_2)=X_1\cap X_2$.
Note that the $z\to y$ subpath of $R$ is fully contained in $H_1$,
and thus we obtain $y\in \reach_{G_1}(z',X_1)$.
Moreover, since $H_1\cup H_2$ is a reachability emulator for $(G_1\cup G_2,X_1\cup X_2)$, $x\in \reach_{G_1\cup G_2}(y,X_1\cup X_2)$.
Thus proves that a path $z'\to y\to x$ exists in $G_1\cup G_2$.

Now take any $x'\in \reach_{G_1\cup G_2}(z',X_1\cup X_2)$.
Again, one can argue similarly as before that any path $S=z'\to x'$ in $G_1\cup G_2$ has some first vertex $y'$ in $X_1$.
This implies that $x'\in \reach_{H_1\cup H_2}(y',X_1\cup X_2)$
and $y'\in \reach_{H_1}(z,X_1)$.
Consequently, we get $x'\in \reach_{H_1\cup H_2}(z,X_1\cup X_2)$.
\end{proof}

\begin{corollary}\label{lem:emulator-union-general}
    Let $k\geq 1$. For $i=1,\ldots,k$, let $G_i$ be a digraph and let $X_i\subseteq V(G_i)$.
    Let $H_i$ be a \re for $(G_i,X_i)$.
    Suppose $V(G_i)\cap V(G_j)=V(H_i)\cap V(H_j)=X_i\cap X_j$ for all $1\leq i,j\leq k$, $i\neq j$.
    Then $H_1\cup \ldots\cup H_k$ is a \re for $(G_1\cup\ldots\cup G_k,X_1\cup \ldots\cup X_k)$.
    Additionally, if $H_1,\ldots,H_k$ are all faithful, then so is $H_1\cup\ldots\cup H_k$.
\end{corollary}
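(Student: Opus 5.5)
We prove \Cref{lem:emulator-union-general} by induction on $k$, applying \Cref{lem:emulator-union} at each step. For $k=1$ there is nothing to prove. For $k\ge 2$, set $G' = G_1\cup\dots\cup G_{k-1}$, $X' = X_1\cup\dots\cup X_{k-1}$ and $H' = H_1\cup\dots\cup H_{k-1}$. The hypotheses of the corollary hold for the first $k-1$ triples, so the inductive hypothesis gives that $H'$ is a \re for $(G',X')$, and that it is faithful whenever all $H_i$ are. It then suffices to apply \Cref{lem:emulator-union} to $G_1:=G'$, $G_2:=G_k$ with terminal sets $X'$, $X_k$ and emulators $H'$, $H_k$.

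We need to check the two intersection conditions of \Cref{lem:emulator-union}. For the graphs we compute
\[V(G')\cap V(G_k)=\bigcup_{i<k}\bigl(V(G_i)\cap V(G_k)\bigr)=\bigcup_{i<k}(X_i\cap X_k)=X'\cap X_k .\]
For the emulators the same computation gives $V(H')\cap V(H_k)=\bigcup_{i<k}(V(H_i)\cap V(H_k))=\bigcup_{i<k}(X_i\cap X_k)=X'\cap X_k$. We also need $X'\sub V(G')$, which holds because $X_i\sub V(G_i)$, and $X'\sub V(H')$, which holds by the definition of a \re. Hence \Cref{lem:emulator-union} shows that $H'\cup H_k$ is a \re for $(G'\cup G_k,\,X'\cup X_k)$, and that it is faithful when $H'$ and $H_k$ are faithful. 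This completes the induction.

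The only step requiring care is the distributivity of intersection over the union, which reduces the combined intersection to the pairwise hypotheses. That identity is elementary, so no real obstacle is expected.
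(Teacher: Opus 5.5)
Your proposal is correct and is essentially the paper's proof. Like the paper, you induct on $k$, obtain $V(G')\cap V(G_k)=X'\cap X_k$ and the analogous identity for the emulators by distributing the intersection over the union, and then apply \Cref{lem:emulator-union}, which also carries faithfulness through the induction.
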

\begin{proof}
For $j=1,\ldots,k$, let $H_j^*=\bigcup_{i=1}^j H_i$, and similarly $G_j^*=\bigcup_{i=1}^j G_i$ and $X_j^*=\bigcup_{i=1}^j X_i$.
We prove the claim by induction on $k$.
For $k=1$ the claim is trivial.

Suppose $k\geq 2$ and $H^*_{k-1}$ is a reachability emulator for $(G^*_{k-1},X^*_{k-1})$.
We have $V(G^*_{k-1})\cap V(G_k)=\bigcup_{i=1}^{k-1} (V(G_i)\cap V(G_k))=\bigcup_{i=1}^{k-1}(X_i\cap X_k)=\left(\bigcup_{i=1}^{k-1}X_i\right)\cap X_k=X^*_{k-1}\cap X_k$.
Similarly, we obtain \linebreak $V(H^*_{k-1})\cap V(H_k)=X^*_{k-1}\cap X_k$.
Thus, by~\Cref{lem:emulator-union}, $H^*_{k-1}\cup H_k=H^*_k$ is a reachability emulator for $(G^*_{k-1}\cup G_k,X^*_{k-1}\cup X_k)=(G^*_k,X^*_k)$, as desired.
Similarly, if $H_1,\ldots,H_{k-1}$ are faithful, then so is $H^*_{k-1}$. Hence, if $H_k$ is faithful, then so is $H^*_k$.
\end{proof}

\begin{corollary}\label{lem:emulator-union-simple}
    Let $G, J$ be digraphs, $V(J) \sub X \sub V(G)$, and $F$ be a \fre for $(G,X)$.
    Then $F \cup J$ is a \fre for $(G \cup J,X)$.
\end{corollary}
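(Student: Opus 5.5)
The plan is to derive the corollary from \Cref{lem:emulator-union}, taking $G_1 = G$, $X_1 = X$, $H_1 = F$ and $G_2 = H_2 = J$, $X_2 = V(J)$. Trivially $J$ is a \re for $(J, V(J))$. It is also faithful, since for every $u \in V(J)$ we may take $v = u$ in the faithfulness condition. Since $V(J) \sub X$, we have $X_1 \cup X_2 = X$. The lemma would then give that $F \cup J$ is a \fre for $(G \cup J, X)$.

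The step that needs care is the disjointness hypothesis. The lemma requires $V(G_1)\cap V(G_2) = X_1\cap X_2$ and $V(H_1)\cap V(H_2)=X_1\cap X_2$. Here $X_1 \cap X_2 = V(J)$. Because $V(J) \sub X \sub V(G)$, we get $V(G)\cap V(J) = V(J)$. Because $X \sub V(F)$, which holds for any emulator of $(G,X)$, we get $V(F)\cap V(J) = V(J)$. So both equalities hold.

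One subtlety is that the lemma's proof decomposes paths into maximal subpaths lying in $H_1$ or in $H_2$. Edges of $J$ may join vertices of $F$, but every such edge has both endpoints in $X_1\cap X_2$, so the decomposition argument goes through unchanged. The lemma therefore applies verbatim, with no case analysis and no obstacle beyond verifying these hypotheses.

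If a direct argument were preferred, one could instead note the following. Any path in $F\cup J$ starting at some $z$ either stays in $F$ or first reaches a vertex of $V(J)\sub X$. Faithfulness of $F$ maps that first $F$-segment to a path in $G$ from a suitable $z'$. Each subsequent segment between consecutive $X$-vertices is realized in $G\cup J$ because $F$ is a \re for $(G,X)$. The converse inclusion follows symmetrically.
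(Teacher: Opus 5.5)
Your proposal is correct and matches the paper's proof: it applies \Cref{lem:emulator-union} with $G_2=H_2=J$ and $X_2=V(J)$, using that $J$ is trivially a faithful emulator for $(J,V(J))$. Both intersection hypotheses hold because $V(J)\subseteq X\subseteq V(G)\cap V(F)$, so the alternative direct argument in your last paragraph is not needed.
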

\begin{proof}
Digraph $J$ is trivially a faithful reachability emulator for $(J,V(J))$ and we have $V(J)\cap V(G)=V(J)\cap X=V(J)$, and similarly $V(J)\cap V(F)=V(J)\cap X=V(J)$, so by~\Cref{lem:emulator-union},
$F\cup J$ is a faithful reachability emulator for $(G\cup J,\,X\cup V(J))=(G\cup J,X)$.
\end{proof}

In the last preliminary lemma, we observe that in order to preserve reachability among $X \times Y$ one can split  $X \times Y$ into several sets and compute an emulator for each of them separately.

\begin{lemma}\label{l:assemble-from-parts}
    Let $G$ be a digraph, $X,Y \sub V(G)$,
    and $(X_1,Y_1), \dots, (X_\ell,Y_\ell)$ be a collection of pairs such that $X_i \sub X$, $Y_i \sub Y$, and for every pair $(x,y) \in X\times Y$ there exists $i \in [\ell]$ with $x \in X_i,\,y\in Y_i$.
    Next,
    {suppose that for each $i\in[\ell]$  we are given
    a \re $H_i$} for $(G,X_i,Y_i)$ .
    Then a reachability emulator for {$(G,X,Y)$} of size $\Oh(|H_1|+\ldots+|H_\ell|)$ can be computed in $\Oh(|H_1|+\ldots+|H_\ell|)$ time.
\end{lemma}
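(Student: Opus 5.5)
The emulator is simply the union of the $H_i$, after renaming their non-terminal vertices apart so they do not interact. For each $i\in[\ell]$, rename every vertex of $V(H_i)\sm(X_i\cup Y_i)$ to a fresh name used by no other $H_j$ and not lying in $X\cup Y$. Call the relabelled copy $H_i'$; it is still a \re for $(G,X_i,Y_i)$, since renaming Steiner vertices preserves reachability between the terminals. Then set $H:=H_1'\cup\dots\cup H_\ell'$, whose vertex set includes $X\cup Y$ (adding isolated vertices of $X\cup Y$ that occur in no $X_i\cup Y_i$ if needed). The renaming uses a global counter and a per-$H_i$ lookup table, so constructing $H$ takes $\Oh(\sum_i|H_i|)$ time, and $|H|\le |X\cup Y|+\sum_i|H_i|$.

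Correctness has two directions.

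If $x\overset{G}\leadsto y$ with $(x,y)\in X\times Y$, pick $i$ with $x\in X_i$ and $y\in Y_i$. Then $x\overset{H_i}\leadsto y$, and hence $x\overset{H}\leadsto y$.

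The converse is the delicate step: a path in $H$ may pass through several copies, and a vertex shared between copies lies in some $X_i\cup Y_i$. Let $x\in X$ and $y\in Y$ with a path $P$ from $x$ to $y$ in $H$. Split $P$ into maximal subpaths, each contained in a single $H_i'$. A transition point between consecutive subpaths is a vertex common to two copies, so it is a terminal. However, a segment of $H_i'$ may then go from a vertex of $Y_i$ to a vertex of $X_i$, or from $Y_i$ to $Y_i$. An emulator for $(G,X_i,Y_i)$ only certifies pairs in $X_i\times Y_i$, so such segments are not justified by $G$.

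So the naive union is not sound, and the fix is to make each copy emulate exactly its own pair set with directed entry and exit. For each $i$, rename every vertex of $H_i$, including its terminals, to fresh copies. Then add an edge from the original vertex $x$ to its copy $x^{(i)}$ for each $x\in X_i$, and an edge from $y^{(i)}$ to the original vertex $y$ for each $y\in Y_i$. The original terminals $X\cup Y$ form the only shared vertex set.

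A path from $x\in X$ to $y\in Y$ in $H$ now alternates between original terminals and private copies. Each copy can be entered only at a vertex $x'^{(i)}$ with $x'\in X_i$ and left only at a vertex $y'^{(i)}$ with $y'\in Y_i$. Hence each excursion certifies $x'\overset{H_i}\leadsto y'$, that is, $x'\overset{G}\leadsto y'$.

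Intermediate original vertices on the path may be in $X$ or in $Y$. Every step in $H$ joins two of the original vertices $x=z_0,z_1,\dots,z_k=y$ by reachability in $G$, so transitivity of reachability in $G$ gives $x\overset{G}\leadsto y$. This transitivity argument is what makes soundness go through; it works because every excursion certifies a genuine $G$-reachability.

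The forward direction goes through the single copy of $H_i$ with one entry edge and one exit edge. The size and time remain $\Oh(\sum_i|H_i|)$, because the added edges number at most $\sum_i(|X_i|+|Y_i|)\le\sum_i|H_i|$.
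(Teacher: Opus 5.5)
Your corrected construction is correct and matches the paper's proof: fresh copies of every vertex of each $H_i$, entry edges $x\to x^{(i)}$ for $x\in X_i$ and exit edges $y^{(i)}\to y$ for $y\in Y_i$, with soundness argued by splitting a path at the original vertices of $X\cup Y$. You were also right to reject the naive union. One harmless slip is that the added edges number at most $\sum_i(|X_i|+|Y_i|)\le 2\sum_i|H_i|$ rather than $\sum_i|H_i|$, which does not affect the $\Oh(\cdot)$ bound.
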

\begin{proof}
For any $i\in[\ell]$ and every $v\in V(H_i)$, introduce a separate copy $v_i$ of vertex $v$.
Let $H_i'$ be the graph~$H_i$ with each vertex $u\in V(H_i)$ replaced with $u_i$, i.e., $H_i'=\{\{u_i:u\in V(H_i)\},\{(u_i,v_i):(u,v)\in E(H_i)\}\}$.
Note that no two graphs $H_i',H_j'$, $i\neq j$, share a common vertex.
Let $F$ be a graph containing edges from the original vertices of $X$ to their copies, i.e., $V(F)=X\cup \{v_i:v\in X_i,i\in[\ell]\}$ and \linebreak $E(F)=\{(v,v_i): v\in X_i, i\in[\ell]\}$.
Similarly, let $B$ be a graph induced on edges from the copies of vertices in $Y$ to their original counterparts, i.e., $V(B)=Y\cup \{v_i:v\in Y_i,i\in[\ell]\}$ and $E(B)=\{(v_i,v) : v\in Y_i,i\in[\ell]\}$.
We will prove that $H^*:=F\cup B\cup H_1' \cup \dots \cup H_\ell'$ is a \re for $(G,X,Y)$.

Let $(x,y)\in X\times Y$ and let $i$ be such that $x\in X_i$ and $y\in Y_i$.
If there is a path $x\to y$ in $G$, then there is also one in $H_i$.
Equivalently, there is a $x_i\to y_i$ path in $H_i'$, and thus
also a path $(x,x_i)\cdot x_i\to y_i\cdot (y_i,y)$ in $H^*$.
This proves $\reach_G(x,Y)\subseteq \reach_{H^*}(x,Y)$ for each $x\in X$.

Now consider an $x\to y$ path $P$ in $H^*$.
Split $P$ at every occurrence of an original vertex in $X\cup Y$, obtaining subpaths $P_1,\ldots,P_k$.
Each subpath $P_i=a\to b$ enters and leaves a single copy $H_j'$ and is of the form $(a,a_j)\cdot (a_j\to b_j)\cdot (b_j,b)$, where $a\in X_j$, $b\in Y_j$, and the $a_j\to b_j$ path is fully contained in $H_j'$.
By the definitions of $H_j$ and $H_j'$, there is a path $P_i'=a\to b$ in~$G$.
Consequently, by joining the corresponding paths $P_i'$, we obtain that there is an $x\to y$ path $P_1'\cdots P'_k$ in $G$ as well.
We have thus proved that $\reach_{H^*}(x,Y)\subseteq \reach_G(x,Y)$ holds for each $x\in X$.

The $\Oh(|H_1|+\ldots+|H_\ell|)$ size bound on $H^*$ follows by $|F|+|B|=O\left(\sum_{i}|X_i|+|Y_i|\right)$.
Clearly, $H^*$ can be constructed in time linear in its size.
\end{proof}

\section{Constructing a faithful emulator}
\label{sec:faithful}

\newcommand{\first}{{\color{\mycol}{\mathrm{first}}}}
\newcommand{\rel}{{\color{\mycol}{\mathrm{rel}}}}
\newcommand{\last}{{\color{\mycol}{\mathrm{last}}}}
\newcommand{\layer}{{\color{\mycol}{\mathrm{layer}}}}

In this section we prove that when a digraph $G$ is planar then one can construct a \fre for $(G,X)$ of size $\Oh(|X|\log^2|X|)$ in near-linear time.
We use the same building blocks as Thorup~\cite{{thorup2004compact}} in his compact reachability oracle for planar digraphs but we put them together in a different fashion.
We remark that yet another combination of these ideas has been used in~\cite{Goranci20} to construct a
{\em reachability preserving minor} for $(G,X)$ of size $\Oh(|X|^2\log |X|)$ (see also \Cref{sec:lower-bound}).

We begin by decomposing $G$ into layers, similarly as in the proof of
\cite[Lemma 2.2]{thorup2004compact}.
However, here we do not use them to decompose $G$ into pieces, each with a separate spanning tree, but we construct a single spanning tree piercing all the layers.
Avoiding the decomposition phase will facilitate inserting additional edges that are needed for faithfulness.

In the statement below, given a tree $T$ and $u,v \in V(T)$ we denote by $T[u \to v]$ the unique $(u,v)$-path in $T$, oriented from $u$ to $v$.

\begin{lemma}\label{lem:faithful:layers}
    Given a weakly connected $n$-vertex digraph $G$ we can in linear time compute a layering function $\layer \colon V(G) \to [n]$ and a rooted spanning tree $T$ of $G$ so that the following holds for any $u,v \in V(G)$.
    \begin{enumerate}\item When $u \overset{G}\leadsto v$ then $|\layer(u) - \layer(v)| \le 1$. \label{item:layers:diff}
        \item When $u$ is an ancestor of $v$ in $T$ then $\layer(u) \le \layer(v)$. \label{item:layers:monotone}
        \item When $u$ is an ancestor of $v$ in $T$
        and $\layer(u) = \layer(v)$ then either $T[u\to v]$ or $T[v \to u]$ forms a~dipath in~$G$. \label{item:layers:dipath}
    \end{enumerate}
\end{lemma}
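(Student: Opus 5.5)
The plan is to reproduce Thorup's alternating forward/backward search layering. I will build a single spanning tree out of the search trees, instead of decomposing the graph into pieces as Thorup does. Fix an arbitrary root $r$. Let $L_1$ be the set of vertices reachable from $r$, and write $U_i = L_1\cup\dots\cup L_i$. For $i\ge 2$ even, let $L_{i}$ be the set of vertices not in $U_{i-1}$ that can reach $U_{i-1}$ (a backward search). For $i\ge 3$ odd, let $L_i$ be the set of vertices not in $U_{i-1}$ reachable from $U_{i-1}$ (a forward search). Set $\layer(v)=i$ for $v\in L_i$.

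\textbf{Closure and termination.} The key invariant is that $U_i$ is closed under out-reachability for odd $i$ and under in-reachability for even $i$. For odd $i$, anything reachable from $U_i$ is reachable from $U_{i-1}$ or from a vertex of $L_i$, and every vertex of $L_i$ is itself reachable from $U_{i-1}$; the even case is symmetric. Termination uses weak connectivity. If $U_i\neq V(G)$, some edge crosses between $U_i$ and its complement. Since $U_i$ is closed in one direction, that edge must point the other way, so its outer endpoint lands in $L_{i+1}$. Thus every layer is nonempty until all vertices are covered, and at most $n$ layers are used.

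\textbf{Property~\ref{item:layers:diff}.} Suppose $u\overset{G}\leadsto v$ with $a=\layer(u)$ and $b=\layer(v)$.
\begin{itemize}
\item If $b\ge a+2$, pick the odd index $i\in\{a,a+1\}$. Then $u\in U_i$ and $U_i$ is forward-closed, so $v\in U_i$, giving $b\le a+1$, a contradiction.
\item If $a\ge b+2$, pick the even index $i\in\{b,b+1\}$ (using $b\ge 1$). Then $v\in U_i$ and $U_i$ is backward-closed, so $u\in U_i$, again a contradiction.
\end{itemize}

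\textbf{The tree $T$.} Its edges are the discovery edges of the searches, taken as undirected. The root $r$ with the forward search tree spans $L_1$. For $i\ge 2$, each $v\in L_i$ gets as parent the vertex $w$ from which it was discovered. That $w$ lies in $U_{i-1}\cup L_i$, and $w$ and $v$ are connected by a dipath (in one direction or the other). If $w\in U_{i-1}$, Property~\ref{item:layers:diff} forces $\layer(w)=i-1$. So a parent's layer is always $i-1$ or $i$, which gives Property~\ref{item:layers:monotone} by induction along root paths.

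\textbf{Property~\ref{item:layers:dipath}.} If $u$ is an ancestor of $v$ and both are in layer $i$, then by monotonicity every vertex of $T[u\to v]$ is in layer $i$. All tree edges within $L_i$ come from the same search, hence are oriented uniformly. For odd $i$ they point from parent to child, so $T[u\to v]$ is a dipath. For even $i$ they point from child to parent, so $T[v\to u]$ is a dipath.

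\textbf{Linear time.} This is the main implementation obstacle: a naive restart of each search from all of $U_{i-1}$ costs quadratic time. The remedy is to keep global visited flags and scan the out-edges of each vertex at most once overall and its in-edges at most once overall. A vertex whose out-edges were scanned at an earlier odd layer needs no rescan, since everything it reaches is already in $U_{i-1}$ by forward closure; the same holds for in-edges at even layers. So search $i$ seeds only the vertices of $U_{i-1}$ not yet scanned in the current direction. Every edge is then touched $\Oh(1)$ times, for $\Oh(n+m)$ total work.
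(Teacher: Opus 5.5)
Your proof is correct and follows essentially the paper's route. The layering is the same alternating forward/backward one, your tree of discovery edges is exactly the paper's union of DFS trees in the contracted graphs $G_i$ lifted back to $G$, and item 3 is argued identically. The differences are local: you prove item 1 via one-sided closure of the unions $U_i$ rather than the paper's minimum-layer-on-a-dipath argument, and you derive item 2 directly from item 1 (a parent in $U_{i-1}$ must lie in $L_{i-1}$) instead of the paper's contradiction via two edge-disjoint paths in $T$, which is slightly cleaner; your once-per-direction edge scanning also spells out the linear-time bound the paper calls straightforward.
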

\begin{proof}
    We construct a partitioning $V(G) = L_1 \cup L_2 \cup \dots \cup L_n$ corresponding to the layering function.
    Let $r \in V(G)$ be an arbitrary vertex.
    We define $L_1 = \reach_G(r,V(G))$.
    In the formula below, $L_{<i}$ stands for $L_1 \cup L_2 \cup \dots \cup L_{i-1}$ and
    $v \leadsto X$ (resp. $X \leadsto v$) means that some vertex in $X$ is reachable from $v$ (resp. $v$ is reachable from some vertex in $X$).
    For $i > 1$ we define

    \begin{align*}
    L_i=\begin{cases}
        v \in V(G) \sm L_{<i}\, \colon\, v \overset{G}\leadsto L_{<i} & \text{if }i\text{ is even},\\
        v \in V(G) \sm L_{<i}\, \colon\, L_{<i} \overset{G}\leadsto v & \text{if }i\text{ is odd}.
    \end{cases}
\end{align*}

It is straightforward to construct the sets $(L_i)_{i=1}^n$ is linear time.
Since $G$ is weakly connected, every vertex belongs to some $L_i$.
We set $\layer(v) = i$ for every $v \in L_i$.

We prove condition \eqref{item:layers:diff}.
Consider any dipath $P$ from $u$ to $v$.
Let $i = \min(\layer(w))$ over $w \in V(P)$ and let $x \in V(P) \cap L_i$.
If $i$ is even then every vertex after $x$ on $P$ is
reachable from $L_{\le i}$ so it is contained in $L_{\le i+1}$.
Also, every vertex $y$ occurring before $x$ on $P$ can reach $L_{\le i}$ so $y \in  L_{\le {i}}$.
Hence, $V(P) \sub L_i \cup L_{i+1}$.
The case when $i$ is odd is symmetric.

\begin{figure}[h]
    \centering
    \includegraphics[width=0.7\textwidth]{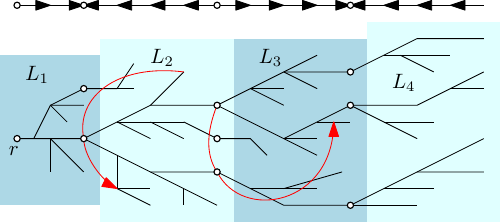}
    \caption{The layers $L_i$ and the tree $T$ in \Cref{lem:faithful:layers}.
    The top arrows indicate the orientation of edges precisely beneath them.
    The marked vertices are the ones in $L_i$ that are adjacent to $L_{i+1}$.
    There are two examples of dipaths in $G$ (red); each can visit only two consecutive layers.
    }
    \label{fig:thorup-layers}
\end{figure}

Now we construct the tree $T$.
For readers familiar with the proof in \cite{thorup2004compact}, we remark that the notation below, concerning graphs $G_i$, diverges from the one therein.
Let $G_1 = G[L_1]$ and $r_1 = r$.
For $i > 1$, let $G_i$ be the graph obtained from $G$ be removing vertices $y$ with $\layer(y) > i$ and contracting $L_{<i}$ into a single vertex $r_i$.
Moreover, when $i$ is even, we reverse all the edges in $G_i$.
Note that every edge in $G$ contributes to exactly one graph $G_i$ and so they can be constructed in linear time.
Next, observe that $r_i \overset{G_i}\leadsto v$ for every $v \in V(G_i)$ and every $i \ge 1$.
We compute a depth-first search tree $D_i$ in $G_i$ starting from $r_i$.
For $i > 1$, we can lift $D_i$ to a rooted forest $T_i$ in $G$, by replacing edges outgoing from $r_i$ with edges incident to $L_{<i}$.
For $i = 1$ we set $T_i = D_i$.
We define $T$ as the union of all $T_i$ over $i \ge 1$.

We argue that $T$ is indeed a spanning tree.
Let $T_{\le i}$ be the subgraph given by the union of $T_1,T_2,\dots,T_i$.
Clearly, $T_1$ is connected and
an inductive argument implies that each  $T_{\le i}$ is connected, therefore $T$ is connected as well.
For $i>1$ we have $|V(G_i)| = |L_i| + 1$ hence $|E(T_i)| = |E(D_i)| =  |V(G_i)| - 1 = |L_i|$.
Since $|E(T_1)| = |L_1| - 1$ we infer that $|E(T)| = n-1$.
Moreover, every vertex is incident to some edge from $T$.
A connected graph on $n$ vertices and $n-1$ edges, without isolated vertices, is indeed a spanning tree.

We set the root of $T$ to $r$.
To see condition \eqref{item:layers:monotone} consider a root-to-leaf path $P$ from $T$.
We argue that function $\layer$ is non-decreasing along $P$.
Suppose otherwise; then there exists $v$ on $P$ with a parent $u$ such that $i \coloneqq \layer(v) < \layer(u)$.
Let $w$ be the last vertex on $P$ before $u$ such that $\layer(w) \le i$; such vertex must exist because $\layer(r) = 1$.
Furthermore, let $Q$ be the subpath of $P$ from $w$ to $v$.
It has at least one internal vertex ($u$) and no internal vertex of $Q$ belongs to $L_{\le i}$.
Since $V(T_{\le i}) \sub L_{\le i}$, the path $Q$ is edge-disjoint from $T_{\le i}$.
As noted above, $T_{\le i}$ is connected and also $v,w \in V(T_{\le i})$.
This means that $v,w$ are connected by two edge-disjoint paths in $T$, which is impossible as $T$ is a tree; a contradiction.

Finally, condition \eqref{item:layers:dipath} follows from \eqref{item:layers:monotone} because when  $\layer(v) = i = \layer(u)$ then $T[u \to v]$ is an ancestor-descendant path in $T_i$.
Depending on the parity of $i$, we infer that either $T[u \to v]$ is a dipath in $G$ or  $T[v \to u]$ is.
\end{proof}

The next ingredient is the following variant of the Lipton-Tarjan Separator Theorem~\cite{Lipton79}.

\begin{lemma}[{\cite[Lemma 2.3]{thorup2004compact}}]
\label{lem:faithful:balanced}
In linear time, given an undirected planar graph $H$ with a rooted
spanning tree $T$ and non-negative vertex weights, we can find three ancestor-descendant paths $P_1,P_2,P_3$ in $T$ such
 that each component of $H - (V(P_1) \cup V(P_2) \cup V(P_3))$ has at most half
the weight of $H$.
\end{lemma}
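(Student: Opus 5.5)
This is the standard Lipton–Tarjan fundamental-cycle argument, phrased for a tree that may have arbitrary depth. The plan below is one I would follow.

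\textbf{Reductions.} First we triangulate $H$ in linear time by adding edges. The new edges keep $T$ a spanning tree, and they can only merge components of $H - S$, so a separator that works for the triangulation also works for $H$. We fix a planar embedding and work in the dual. Let $C$ be the set of non-tree edges and let $T^*$ be the dual edges of $C$. Then $T^*$ is a spanning tree of the dual (interdigitating trees), and every face has degree at most $3$ in it.

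\textbf{Walking the dual tree.} Each non-tree edge $e=(a,b)$ closes a fundamental cycle $C_e = T[\mathrm{lca}\to a]\cup T[\mathrm{lca}\to b] + e$. This cycle is a union of two ancestor--descendant paths plus one edge, so it is not yet three paths.

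The key step is to pick a face $f$ that is a centroid of $T^*$ with respect to weights. Each vertex's weight is assigned to one incident face, so the subtree weights can be computed in linear time. Then $f$ has at most three incident dual edges $e_1,e_2,e_3$, and removing $f$ splits $T^*$ into at most three subtrees. Each subtree carries at most half of the total weight, after we subtract the weight of vertices on the separator.

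Let $f=xyz$ and take $S = T[r\to x]\cup T[r\to y]\cup T[r\to z]$, which is three ancestor--descendant paths from the root. $S$ contains all three fundamental cycles $C_{e_i}$, apart from their non-tree edges. Removing $S$ therefore cuts the plane along the boundaries between the regions corresponding to the dual subtrees hanging off $f$. Each component of $H-S$ lies inside one such region, because it is connected without crossing $S$ and hence without crossing any $C_{e_i}$. Its weight is therefore at most the weight of that subtree, which is at most half the total.

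The subtle point is that the vertices of each component must be charged only to faces inside the same region. To arrange this, we assign each vertex to an incident face lying in the same region as the vertex. This is consistent because a non-separator vertex lies strictly inside one region.

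\textbf{Main obstacle.} The hardest part is making this centroid-face argument rigorous. It requires the Jordan-curve fact that the cycles $C_{e_i}$ bound nested or disjoint regions corresponding to the dual subtrees, together with the weight-assignment consistency just described.

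An alternative I would try first is a plain recursive choice. Start from any face and move toward the heavy dual subtree while it exceeds half the weight. When no move is possible we have found $f$, which gives a linear-time algorithm.

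If the centroid face does not directly give half, the fallback is the classical Lipton–Tarjan shrinking step. Choose a non-tree edge whose cycle has interior weight between $1/3$ and $2/3$, and split the interior using one extra root path through a third vertex of a face. This again yields at most three root paths, possibly with halving replaced by the $2/3$ bound, and then one more refinement to reach $1/2$.

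Linear time follows because the triangulation, the dual tree, the subtree weight sums, and the walk to $f$ each take linear time.
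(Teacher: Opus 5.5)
Your proposal is correct. It is the standard Lipton--Tarjan dual-tree centroid argument behind Thorup's Lemma~2.3, which the paper only cites. One clarification: every fundamental cycle $C_{e_i}$ of a non-tree edge of the centroid face $f=xyz$ has all its vertices in $S$, so every vertex outside $S$ has all of its incident faces in a single subtree of $T^*-f$. Hence an arbitrary charging of vertex weights to incident faces already works, and your weight assignment involves no circularity. Since the walk you describe always reaches a face whose subtrees each weigh at most half the total, drop the ``fallback'' paragraph, which is unnecessary and too vague to stand on its own.
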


Let $Q$ be a dipath in a digraph $H$ and $v \in V(H)$, let $\first_H(Q,v)$ denote the first vertex on $Q$ that is reachable from $v$ in $H$.
Next, we define $\last_H(Q,v)$ as the last vertex on $Q$ from which $v$ is reachable in $H$.
If such a vertex on $Q$ does not exist, we set $\first_H(Q,v) = \bot$ or $\last_H(Q,v) = \bot$.
We drop the subscript when $H$ is clear from the context.

\begin{lemma}[{\cite[Lemma 2.5]{thorup2004compact}}]
\label{lem:faithful:dipath}
For a digraph $H$ and a dipath $Q$ in $H$, we can compute  $\first(Q,v)$ and $\last(Q,v)$ for all $v \in V(H)$ in linear time.
\end{lemma}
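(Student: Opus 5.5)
We compute $\first(Q,v)$ for all $v$ by a backward sweep along $Q$, using one graph search over the whole run. Write $Q=(q_1,\dots,q_k)$. A vertex $v$ has $\first(Q,v)=q_j$ exactly when $q_j$ is the smallest-index vertex of $Q$ reachable from $v$, i.e., $v$ can reach $q_j$ in $H$ but cannot reach any $q_i$ with $i<j$.

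The procedure processes $j=1,2,\dots,k$ in increasing order. In round $j$ it runs a search from $q_j$ in the reversed graph $H^r$. This search visits only vertices that have not been labeled in earlier rounds, and it labels every newly visited vertex with $q_j$. When it meets an already labeled vertex, it does not enter it and does not explore past it. Every vertex is labeled at most once and every edge is scanned only from its labeled endpoint, so the total work is $\Oh(|H|)$.

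The key step is correctness: round $j$ must label exactly the unlabeled vertices that can reach $q_j$ in $H$. Pruning at labeled vertices cannot lose any of these. Suppose $v$ reaches $q_j$ only through some path that passes a vertex $w$ labeled in an earlier round $i<j$. Then $w$ reaches $q_i$, so $v$ reaches $q_i$ and must itself already carry a label with index $\le i$. Hence every unlabeled $v$ that reaches $q_j$ has a path to $q_j$ made only of vertices unlabeled before round $j$. The search finds all such vertices, which are exactly those with $\first(Q,v)=q_j$.

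One subtlety concerns the fact that $q_j$ itself may already be labeled when round $j$ begins. This happens when $q_j$ reaches some earlier $q_i$, and then round $j$ is simply skipped. This is consistent with the above: any vertex reaching $q_j$ also reaches $q_i$, so it has already been labeled. The path structure of $Q$ is not even needed for this argument; the sequence order alone suffices. Vertices that remain unlabeled at the end get $\first(Q,v)=\bot$.

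$\last(Q,v)$ is symmetric. We run the same procedure on $H$ itself rather than $H^r$, processing $j=k,k-1,\dots,1$ in decreasing order and searching forward from $q_j$. The correctness argument transfers verbatim. The only point needing care is this pruning argument, which is the one non-routine step; everything else is bookkeeping. The whole computation runs in $\Oh(|H|)$ time.
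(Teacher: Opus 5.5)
Your proposal is correct. The paper gives no proof of its own and imports the lemma from Thorup, whose argument is essentially yours: for $\mathrm{first}$, run backward searches from $q_1,\dots,q_k$ in order, never re-entering previously labeled vertices, with the invariant that after round $j$ exactly the vertices reaching some $q_i$ with $i\le j$ are labeled. For $\mathrm{last}$, run forward searches from $q_k,\dots,q_1$ symmetrically, and the whole computation takes $\Oh(|H|)$ time.
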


We are ready to prove the main result of this section.

\begin{lemma}\label{lem:faithful:construct}
    There is an algorithm that, given an $n$-vertex planar digraph $G$ and vertex set $X \sub V(G)$ of size $k$, runs in time $\Oh(n\log k)$ and outputs a \fre $H$ for $(G,X)$ satisfying $|V(H)| = \Oh(k \log k)$ and $|E(H)| = \Oh(k \log^2 k)$.
\end{lemma}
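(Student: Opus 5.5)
We will build $H$ by a recursive separator decomposition of $G$. We rely on a single layered spanning tree from \Cref{lem:faithful:layers} rather than Thorup's piece decomposition. Assume $G$ is weakly connected, otherwise treat each component separately. Compute $\layer$ and $T$ via \Cref{lem:faithful:layers}. Put weight $1$ on the vertices of $X$ and $0$ elsewhere, and apply \Cref{lem:faithful:balanced} to the underlying undirected graph with tree $T$. This gives three root-to-descendant paths $P_1,P_2,P_3$ of $T$. By \Cref{lem:faithful:layers}\eqref{item:layers:monotone} the layer is non-decreasing along each $P_j$. Hence each $P_j$ splits into maximal same-layer segments, and by \eqref{item:layers:dipath} each segment is a dipath, in one of the two orientations. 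The problem is that there can be many layers. To handle this, we use \eqref{item:layers:diff}: any dipath touching a separator vertex of layer $i$ stays within layers $\{i-1,i\}$ or $\{i,i+1\}$. So we do not cut $G$ by all segments at once. For each layer $i$ we let the separator $S_i$ consist of the $\Oh(1)$ segments lying in layers $i,i+1$. Reachability in $G$ restricted to paths within $L_i\cup L_{i+1}$ is then captured by these $\Oh(1)$ dipaths. If this per-layer bookkeeping becomes unwieldy, the fallback is to recurse on the subgraphs $G[L_i\cup L_{i+1}]$ as Thorup does, but on a common vertex set, so that shortcut edges can still be added globally. Either way, the separator is a set $S$ of $\Oh(1)$ dipaths, together with the guarantee that every dipath in $G$ hitting $S$ hits one of the dipaths relevant to its layer pair.

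For a separator dipath $Q$, compute $\first(Q,v)$ and $\last(Q,v)$ for every $v$ by \Cref{lem:faithful:dipath}, in linear time per dipath. Mark on $Q$ all vertices $\first(Q,x)$ and $\last(Q,x)$ for $x\in X$. Then add to $H$ the marked vertices, the edges of $Q$ compressed between consecutive marked vertices, and the edges $x\to\first(Q,x)$ and $\last(Q,x)\to x$. This costs $\Oh(k)$ vertices and edges per separator level. Next, recurse on each weakly connected component $C$ of $G-V(S)$ with terminal set $X\cap V(C)$. Each recursive call now has at most half of the terminals, which gives depth $\Oh(\log k)$, $\Oh(k\log k)$ vertices, and running time $\Oh(n\log k)$ (pieces at one level are disjoint). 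Standard reasoning shows that $H$ is a reachability emulator for $(G,X)$. Indeed, take a $u\to v$ dipath in $G$ with $u,v\in X$. Either it avoids all separators and is handled inside some recursive piece, or it meets a separator dipath $Q$ at the topmost level where it does so. In the second case the route $u\to\first(Q,u)\to\dots\to\last(Q,v)\to v$ exists in $H$, because $\first(Q,u)$ precedes the hitting point, which precedes $\last(Q,v)$. Soundness holds because every edge we add corresponds to a dipath in $G$.

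The main obstacle is faithfulness, since the marked Steiner vertices may reach fewer terminals in $H$ than any vertex of $G$ does. We plan the following fix. For every vertex $w$ of $H$ created in a recursive call, and for every separator dipath $Q$ of that call and of each of its $\Oh(\log k)$ ancestral calls, add the edge $w\to\first_{G'}(Q,w)$. Here $G'$ is the subgraph of the ancestor call that contains $Q$, and the target is rounded forward to the next marked vertex on $Q$. Rounding to the next marked vertex loses nothing, because $Q$ itself is kept in compressed form. These first values can be obtained from the same linear-time computation of \Cref{lem:faithful:dipath} at the ancestor level, so the running time is unchanged. The edge count becomes $\Oh(k\log k)\cdot\Oh(\log k)=\Oh(k\log^2 k)$.

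Faithfulness is then proved by taking, for a marked vertex $w\in V(Q)$, the witness vertex $w$ itself in $G$, and showing $\reach_H(w,X)=\reach_G(w,X)$. The inclusion $\subseteq$ is immediate from soundness. For $\supseteq$, take a dipath from $w$ to $x\in X$ in $G$. Consider the first separator dipath $Q'$ (possibly $Q$ itself) of an ancestral or the current call that this dipath meets after leaving the current piece. The added edge $w\to\first(Q',w)$, followed by the compressed $Q'$ and $\last(Q',x)\to x$, reproduces the connection. If instead the dipath stays inside descendants, induction on the recursion handles it. The delicate point is verifying that "first separator hit" is well defined. It is, because leaving a piece forces the path through an ancestor separator, and this is precisely why the shortcut edges to all ancestral dipaths are needed.
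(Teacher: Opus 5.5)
Your construction is essentially the paper's: one layered spanning tree, and three tree paths cut into same-layer dipaths $P_i^j$. You recurse on the components of $G-S$, keep compressed segments through marked vertices, add edges $\last(P_i^j,x)\to x$, and, for faithfulness, add an edge from every vertex produced in the recursion to the next marked vertex at or after its $\first$ on every segment of its own and of each ancestral call. The paper marks only the vertices $\last(P_i^j,x)$, $x\in X$, which already suffices. Two steps, however, do not work as written.

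\textbf{Running time.} $S$ is not a set of $\Oh(1)$ dipaths: a tree path can cross $\Theta(n)$ layers. So invoking \Cref{lem:faithful:dipath} on the whole graph once per segment costs $\Theta(n^2)$. The fix uses the fact you already state:
\begin{itemize}
\item each $P_i$ has at most one segment per layer;
\item a vertex of layer $\ell$ can have $\first$ or $\last$ different from $\bot$ only on the at most nine segments in layers $\ell-1,\ell,\ell+1$;
\item every witnessing dipath for $P_i^j$ lies in $G[L_{j-1}\cup L_j\cup L_{j+1}]$.
\end{itemize}
So you run \Cref{lem:faithful:dipath} for $P_i^j$ on that induced subgraph only. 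Every vertex and edge takes part in $\Oh(1)$ such runs, which gives linear time per call and $\Oh(1)$ shortcut edges per vertex per call. With this you simply cut by all of $S$ at once. The per-layer separators are unnecessary, and so is the Thorup-style fallback, which would reintroduce the overlapping pieces that the single tree is meant to avoid.

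\textbf{Faithfulness.} The selection of $Q'$ is the whole point, and ``the first separator dipath the path meets'' is the wrong rule if read along the path. Suppose $Q'$ belongs to a call with graph $G_c$ and the path afterwards leaves $G_c$ through a shallower separator. Then $x\notin V(G_c)$, so no vertex $\last_{G_c}(Q',x)$ is marked and there is no edge into $x$ at that level. Moreover, the next marked vertex after $\first_{G_c}(Q',w)$ may lie beyond the point where the path meets $Q'$, so nothing certifies that it reaches $x$. You must take the \emph{shallowest} call on the ancestral chain of $w$ whose separator the path meets. The path avoids all shallower separators and is connected, so it lies entirely in that call's graph $G_c$. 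Hence $x\in X\cap V(G_c)$, and the shortcut out of $w$ and the edge $\last_{G_c}(Q',x)\to x$ are computed in the same graph. Since $\first_{G_c}(Q',w)$ precedes the meeting point, which precedes the marked vertex $\last_{G_c}(Q',x)$, rounding forward stops no later than $\last_{G_c}(Q',x)$. Your terminal-to-terminal argument already uses this ``topmost level'' rule.

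The paper packages this as a bottom-up induction. Each call returns $H_c$ with $V(H_c)=Y_c$ that is an emulator for $(G_c,Y_c,X\cap V(G_c))$, proved using only the current separator. This handles terminals and Steiner vertices uniformly and gives faithfulness with $w$ as its own witness.
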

\begin{proof}
    We will recursively construct a vertex set $Y \sub V(G)$, $Y \supseteq X$, and a graph $H$ with $V(H) = Y$ that is a \re for $(G,Y,X)$.
    This will entail that $H$ is faithful.

    The base case occurs when $|X| \le 1$.
    If $X = \emptyset$ we return empty $Y,H$.  If $|X| = 1$ we return $Y=X$ and $H$ with no edges.

    We can assume that $G$ is weakly connected by treating each component separately.
    We apply \Cref{lem:faithful:layers} to compute, in linear time, the layering function $\layer \colon V(G) \to \nn$ and the spanning tree $T$.
    Next, we apply \Cref{lem:faithful:balanced} to tree $T$ and weight function $w \colon V(G) \to \{0,1\}$ assigning 1 to the vertices from $X$ and 0 to $V(G) \sm X$.
    This allows us to compute, in linear time, three ancestor-descendant paths $P_1,P_2,P_3$ in $T$, such that $S = V(P_1) \cup V(P_2) \cup V(P_3)$ forms a balanced separator in $G$ with respect to $w$.
    More precisely, each weakly connected component $C$ of $G-S$ satisfies $|X \cap V(C)| \le |X|/2$.
    We can assume that the paths $P_1,P_2,P_3$ are vertex-disjoint by removing their common prefixes.
    For each  component $C$ of $G-S$ we process the pair $(C, \,X \cap V(C))$ recursively to compute $Y_C \sub V(C)$ and a graph $H_C$, so that the pair $(Y_C,H_C)$ obeys the specification given at the beginning.
    That is, $X\cap V(C)\subseteq Y_C\subseteq V(C)$, and $H_C$ is a reachability emulator for $(C,Y_C,X\cap V(C))$.

    For $j \in [n]$ let $L_j = \layer^{-1}(j)$ be the set of vertices in the $j$-th layer.
    To simplify the formulas, we also define $L_{0} = \emptyset$.
    For each $i \in \{1,2,3\}$ and $j \in [n]$ we define dipath $P_{i}^j$ as the subpath of $P_i$ induced by~$L_j$.
    The condition \eqref{item:layers:dipath} of \Cref{lem:faithful:layers} guarantee that each $P_i^j$ is indeed a dipath in $G$.
    Of course, for some $j$ such a dipath may be empty.
    Note that the sets $V(P_i^j)$ are pairwise disjoint and their union is~$S$.

    \begin{claim}\label{claim:construct:diff}
        Suppose that for $v \in V(G)$ we have $\first_G(P_i^j,v) \ne \bot$ or $\last_G(P_i^j,v) \ne \bot$.
        Then $\layer(v) \in [j-1,j+1]$ and the corresponding $(v,P_i^j)$-dipath (resp.  $(P_i^j,v)$-dipath) is contained in $G[L_{j-1} \cup L_j \cup L_{j+1}]$.
    \end{claim}
    \begin{innerproof}
        Suppose that $\first_G(P_i^j,v) \ne \bot$.
        Let $u = \first_G(P_i^j,v)$ and $Q$ be any $(v,u)$-dipath in $G$.
        Note that $\layer(u) = j$.
        \Cref{lem:faithful:layers}\eqref{item:layers:diff} guarantees that $|j - \layer(v)| \le 1$, hence $\layer(v) \in [j-1,j+1]$.
        Also, $u$ is reachable from every vertex $z$ on $Q$, so by the same argument $\layer(z) \in [j-1,j+1]$.
        This implies that $V(Q) \sub L_{j-1} \cup L_j \cup L_{j+1}$.
        The proof for $\last_G(P_i^j,v) \ne \bot$ is symmetric.
    \end{innerproof}

    We want to compute all the values $\first_G(P_i^j,v)$ and $\last_G(P_i^j,v)$.
    It follows from \Cref{claim:construct:diff} that for each $v \in V(G)$ there are 9 pairs $(i,j)$ for which it may hold $\first_G(P_i^j,v) \ne \bot$ or $\last_G(P_i^j,v) \ne \bot$.
    Therefore, it suffices to consider only these 9 pairs and the value will be $\bot$ for the remaining pairs $(i,j)$.

    \begin{claim}\label{claim:construct:compute}
        One can compute $\first_G(P_i^j,v)$ and $\last_G(P_i^j,v)$ for all $v \in V(G)$ and all $(i,j) \in [3] \times [n]$ in total linear time.
    \end{claim}
    \begin{innerproof}
        On default we evaluate $\first_G(P_i^j,v)$ and $\last_G(P_i^j,v)$ to $\bot$.
        By \Cref{claim:construct:diff} we know that a non-$\bot$ value is possible only for $v \in L_{j-1} \cup L_j \cup L_{j+1}$, so for each $v$ we only need to store $\Oh(1)$ values.

        For each $(i,j) \in [3] \times [n]$ we apply \Cref{lem:faithful:dipath} to the graph $H_j = G[L_{j-1} \cup L_j \cup L_{j+1}]$ and dipath $P_i^j$.
        This takes time linear in the size of $H_j$.
        By \Cref{claim:construct:diff} the values of $\first(P_i^j,v)$ and $\last(P_i^j,v)$ coincide in $G$ and $H_j$ for all $v\in V(H_j)$.
        That is $\first_G(P_i^j,v) = \first_{H_j}(P_i^j,v)$ and $\last_G(P_i^j,v) = \last_{H_j}(P_i^j,v)$.
        Therefore, the values computed within $H_j$ are valid for $G$.
        Since every vertex and edge of $G$ contributes to at most three graphs $H_j$, the total time is linear in the size of $G$.
    \end{innerproof}

    From this point we drop the subscript $G$ at $\first$ and $\last$ since we will consider these only within~$G$.
     We call a vertex $u \in V(P_i^j)$ \coldef{relevant} if
     $u = \last(P_i^j,v)$ for some $v \in X$.
     Let $R \sub S$ denote the set of relevant vertices.
     \Cref{claim:construct:diff} implies that $|R| \le 9\cdot k$.
    For  $(i,j) \in [3] \times [n]$ we define dipath $R_i^j$ on vertex set $V(P_i^j) \cap R$ that connects the relevant vertices on $P_i^j$ in the same order as they lie on $P_i^j$.
    Next, we define function $\rel \colon S \cup \bot \to R\cup \bot$.
    For $v \in V(P_i^j)$ we set $\rel(v)$ as the first relevant vertex on $P_i^j$ after $v$, inclusively.
    If such a vertex does not exist, we set $\rel(v) = \bot$.
    Also $\rel(\bot) = \bot$.
    All the values of $\rel(\cdot)$ can be readily computed in total linear time.

    Let $\ccal$ be the set of weakly connected components of $G-S$.
    The set $Y$ is defined as the union of $X$, $R$, and $Y_C$ over all $C \in \ccal$, computed in the recursive calls.
    The digraph $H$ on the vertex set $Y$ is constructed as follows.
    \begin{enumerate}
        \item Take the union of the digraphs $H_C$ for all $C \in \ccal$.
        \item Take the union of the dipaths $R_i^j$ for all $(i,j) \in [3] \times [n]$.
        \item For each  $v \in X$ and  $(i,j) \in [3] \times [n]$ insert an edge from $\last(P_i^j,v)$ to $v$ as long as $\last(P_i^j,v) \ne \bot$.
        \item For each $v \in Y$ and  $(i,j) \in [3] \times [n]$ insert an edge from $v$ to $\rel(\first(P_i^j,v))$ as long as the latter is not $\bot$.
    \end{enumerate}

    \begin{figure}[t]
    \centering
    \includegraphics[width=0.45\textwidth]{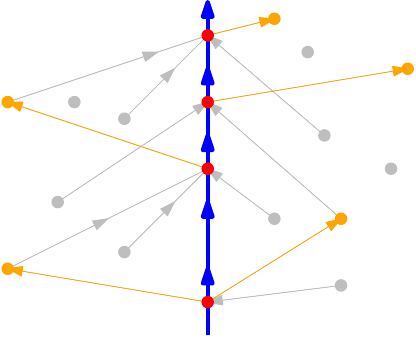}
    \caption{Illustration to the construction of $H$, focused on a single path $P_i^j$ (blue) from the separator $S$.
    The orange vertices are in $X$, the red ones are in $R$, and the gray ones are in $Y \sm (X \cup R)$.
    For each $v \in X$ we insert an edge (orange) to $v$ from the last vertex on $P_i^j$ that reaches $v$ (if it exists).
    Such vertices on $P_i^j$ are {\em relevant} and they are included in $R$.
    Next, for each $v \in Y$ we insert an edge (gray) from $v$ to the first vertex from $R$ that is reachable from $v$ (if it exists).
    This way, we preserve all the connections from any $u \in Y$ to any $v \in X$ that pass through  $P_i^j$.
    }
    \label{fig:thorup-sep}
\end{figure}

    See \Cref{fig:thorup-sep} for an illustration. Note that in the two last steps, for each vertex $v$ we only need to consider $\Oh(1)$ pairs $(i,j)$ (due to \Cref{claim:construct:diff}) so in total these steps take linear time.

    The last step is novel compared to the scheme in~\cite{thorup2004compact} and allows us to preserve $\reach_G(v,X)$ also for $v \in Y \sm X$.
    Note that we do not aim to preserve $\reach_G(v,Y)$
    and so we do not need to add to $R$ relevant vertices corresponding to vertices in $Y \sm X$.
    This is the crucial argument that allows us to keep the size of $H$ near-linear in $k$.

    \begin{claim}
        $H$ is a \re for $(G,Y,X)$.
    \end{claim}
    \begin{innerproof}
        We need to show that for each $u \in Y$ and $v \in X$ it holds that: $u \overset{G}\leadsto v$ if and only if $u \overset{H}\leadsto v$.

        We begin with the first implication $(\Rightarrow)$.
        Let $P$ be a $(u,v)$-path in $G$.
        If $V(P) \cap S = \emptyset$ then $P$ is entirely contained in some $C \in \ccal$.
        Also, $u \in Y_C$ and $v \in V(C) \cap X$.
        By induction, $H_C$ is a \re for $(C,Y_C,V(C) \cap X)$ so it contains a $(u,v)$-path.
        In the second case $V(P) \cap S \ne \emptyset$. Then $V(P) \cap V(P_i^j) \ne \emptyset$ for some $(i,j) \in [3] \times [n]$.
        This implies that $u' := \first(P_i^j,u) \ne \bot$, $v' := \last(P_i^j,v) \ne \bot$, and $v'$ is reachable from $u'$ in $P_i^j$.
        Next, we know that $v'$ is relevant, so $\rel(u') \ne \bot$ and $v'$ is reachable from $\rel(u')$ in $R_i^j$, which is a subgraph of $H$.
        Finally, the edges $(u,\rel(u'))$ and $(v',v)$ are included in $E(H)$, so $H$ contains a $(u,v)$-path.

        Now we show that $u \overset{H}\leadsto v$ implies $u \overset{G}\leadsto v$.
        We will prove by induction that if $(u,v) \in E(H)$ then $u \overset{G}\leadsto v$, which implies the statement above.
        We consider four types of edges corresponding to the four steps in the construction of $H$.
        \begin{enumerate}
            \item $(u,v) \in E(H_C)$ for some $C \in \ccal$. By induction we get that $G_C$ contains a $(u,v)$-path, hence $G$ as well.
            \item $(u,v) \in E(R_i^j)$ for some $(i,j) \in [3] \times [n]$. Then $P_i^j$ contains a $(u,v)$-path and $P_i^j$ is a dipath in $G$.
            \item $u = \last(P_i^j, v)$ for some $(i,j) \in [3] \times [n]$. Then $u \overset{G}\leadsto v$ by the definition of $\last(\cdot,\cdot)$.
            \item $v = \rel(\first(P_i^j, u))$ for some $(i,j) \in [3] \times [n]$. Let $u' = \first(P_i^j, u)$.
            Then $u \overset{G}\leadsto u'$ by the definition of $\first(\cdot,\cdot)$ and
             $u' \overset{G}\leadsto v$ because $\rel(u')$ is reachable from $u'$ in $P_i^j$.
        \end{enumerate}
    The claim follows.
    \end{innerproof}

    \begin{claim}
        $H$ is a \fre for $(G,X)$.
    \end{claim}
    \begin{innerproof}
        We need to show that for every $v \in V(H)$ there exists $u \in V(G)$ so that $\reach_H(v,X) = \reach_G(u,X)$.
        But $V(H) = Y$ and $H$ is a \re for $(G,Y,X)$.
        This means that for every $v \in Y$ the sets $\reach_H(v,X),\, \reach_G(v,X)$ coincide,
        so $H$ is indeed faithful.
    \end{innerproof}

    \begin{claim}\label{claim:construct:vertex-count}
        $|V(H)| = \Oh(k\log k)$.
    \end{claim}
    \begin{innerproof}
        The set $V(H) = Y$ is the union of all sets $Y_C$, collected from the recursive calls, together with $X$ and $R$, where $|R| \le 9\cdot |X|$.
        Therefore, $|Y|$ is proportional to the sum of sizes of $X_t$ over all recursive calls $(G_t,X_t)$.
        Since $|X|$ halves in each recursive call and the recursion stops when $|X| \le 1$, there are $\log k$ levels of recursion.
        At each level the corresponding sets $X_t$ are disjoint and so their total size is at most~$k$.
        The claim follows.
    \end{innerproof}

    \begin{claim}
        $|E(H)| = \Oh(k \log^2 k)$.
    \end{claim}
    \begin{innerproof}
        We have four types of edges corresponding to the four steps in the construction of $H$.
        First, we take all the edges in the digraphs constructed in the recursive calls.
        Second, we take the union of vertex-disjoint dipaths  $R_i^j$, which give at most $9\cdot |X|$ edges.
        In the third step, we insert at most 9 edges for each $v \in X$ due to \Cref{claim:construct:diff}.
        The fourth step is most complicated as we insert at most 9 edges for each $v \in Y$, where $Y$ accumulates the newly created vertices in $R$ and all vertices created in the recursive calls, together with $X$.
        Since $|X| \le |Y|$, we can upper bound the number of newly created edges as $\Oh(|Y|)$.
        Therefore, the total number of edges is proportional to the sum of sizes of sets $Y_t$ constructed in all recursive calls $(G_t,X_t)$.
        Applying \Cref{claim:construct:vertex-count} to each recursive call yields $|Y_t| \le c \cdot |X_t| \log |X_t| \le c \cdot |X_t| \log k$ for some constant $c$.
        Furthermore, the previous claim shows that the sum of sizes of $|X_t|$ over all calls is $\Oh(k \log k)$.
        Hence, the total number of created edges is $\Oh(k \log^2 k)$.
    \end{innerproof}

    \begin{claim}
        The algorithm runs in time $\Oh(n \log k)$.
    \end{claim}
    \begin{innerproof}
        The total time used outside the recursive calls is linear in the size of the processed graph
        and
        there are $\log k$ levels of recursion.
        At each level, we process a family of vertex-disjoint subgraphs of $G$, so the total time spend on each level is~$\Oh(n)$.
    \end{innerproof}

This concludes the proof of \Cref{lem:faithful:construct}.
\end{proof}

\section{Reachability emulators in bounded VC dimension }
\label{sec:emul-vc}

In this section we give an algorithm
computing a \re for $(G,X)$ or $(G,Y,X)$,
faster than $\Oh(|G|\cdot |X|)$ whenever the VC dimension of $\reachsys(G,X)$ is bounded and $X$ is small compared to~$|G|$.
Our algorithm relies on \Cref{thm:prelim:stab} and represents each set $\reach_G(v,X) \sub X$ using only $|X|^{1-\Omega(1)}$ intervals.
To make the presentation cleaner, we encapsulate our usage of \Cref{thm:prelim:stab} inside a data structure
processing sets via their representations of sublinear size.

Let $\pi = (x_1,\dots, x_k)$ be an ordering of a set $X$.
An \coldef{interval representation} of a set $A \sub X$ with respect to $\pi$ is the minimum-size sequence of disjoint intervals $I_1,\dots,I_\ell \sub [n]$ such that $\bigcup_{i=1}^\ell \{x_j \colon j \in I_i\} = A$.
We denote the sorted list $(I_1,\dots,I_\ell)$ by $\pi(A)$.
Note that $\pi(A)$ is uniquely determined for any $\pi$ and $A$.

\begin{definition}
    The \coldef{Interval Representation Data Structure} $\dcal$ maintains a sequence $(F_1,\dots,F_m)$ of subsets of a ground set $X$ and
    an ordering $\pi = (x_1,\dots, x_k)$ of $X$ in the following way.
    \begin{enumerate}
        \item (Retrieval) Given an integer $p \in [m]$, $\dcal$ outputs $\pi(F_p)$.
        \item (Set lookup) Given an interval representation $I$ over $\pi$, $\dcal$ determines whether there is $p \in [m]$ such that $\pi(F_p) = I$. If yes,  $\dcal$ outputs $p$.
        \item (Element lookup) Given an element $x \in X$, $\dcal$ outputs the index of $x$ in the ordering $\pi$.
        \item (Insertion) Given an interval representation $I$ over $\pi$, describing a set $A \sub X$ that does not belong to~$\{F_1,\dots,F_m\}$, $\dcal$
        appends $A$ at the end of the sequence.
        \item (Consistency) The retrieval and lookup operations do not modify the data structure. The insertion operation does not affect the ordering of the sets already present in the sequence $(F_1,\dots,F_m)$.
    \end{enumerate}
\end{definition}

We explain how to implement this data structure to process retrieval and lookup operation in sublinear time whenever the maintained set family has bounded VC dimension.
This comes at a cost of processing insertions slower but we will later argue that the total number of insertions cannot be too large.
A~bound~$d$ on the VC dimension implies a bound of $2^{d+1}$ on the dual VC dimension but in our applications we can estimate the latter in a better way (see \Cref{thm:prelim:dual-vc}).
It is thus convenient to assume that both measures are bounded by $d$ in the statement below.

\begin{lemma}\label{lem:vsbased:implement}
    Let $d \ge 2$ be a constant and $k$ denote the size of the ground set $X$.
    There is an implementation of the {Interval Representation Data Structure} that works under the assumption that the maintained family $\fcal = \{F_1,\dots,F_m\}$ has both VC dimension and dual VC dimension at most $d$, and provides the following guarantees.
    \begin{enumerate}
        \item The ordering $\pi$ always has stabbing number $\Oh_d(k^{1-1/d}\cdot \log k)$ with respect to $\fcal$.
        \item The retrieval operation takes time $\Oh_d(k^{1-1/d}\cdot \log k)$.
        \item The set lookup operation takes time $\Oh_d(k^{1-1/d}\cdot \log^2 k )$.
        \item The element lookup operation takes time $\Oh(1)$.
        \item The insertion operation takes time  $\Oh_d(k^{d+3})$.
    \end{enumerate}
\end{lemma}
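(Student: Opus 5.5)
The implementation stores the current ordering $\pi$, an array giving the position of each element of $X$ in $\pi$ (this gives $\Oh(1)$ element lookup), and for every $p\in[m]$ the list $\pi(F_p)$. Set lookup is answered by a dictionary keyed by interval representations. Whenever a new set is inserted we rebuild everything from scratch under a freshly computed ordering. The key point is that, by \Cref{lem:prelim:size-bound}, the number of sets ever stored is $m=\Oh(k^d)$, so expensive rebuilds are affordable.

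\textbf{Stabbing number and retrieval.} When $A$ is inserted, we first decode $A$ from its interval representation in $\Oh(k)$ time. The family $\fcal\cup\{A\}$ has VC dimension and dual VC dimension at most $d$ by assumption, so $m+1=\Oh(k^d)$. We apply \Cref{thm:prelim:stab} to $(X,\fcal\cup\{A\})$ and obtain a new ordering $\pi$ with stabbing number $\Oh_d(k^{1-1/d}\log k)$, in time $\Oh(k^3(k+m))=\Oh_d(k^{d+3})$. We then keep the sets as explicit bitvectors, which costs $\Oh(km)$ storage, and recompute every $\pi(F_p)$ by a linear scan along $\pi$ in $\Oh(k)$ each, for $\Oh(km)=\Oh_d(k^{d+1})$ in total. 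We also refresh the position array. The sequence order of $F_1,\dots,F_m$ is kept, with $A$ appended at the end, which gives consistency. Since each $\pi(F_p)$ has at most $|\mathrm{stab}(\pi,F_p)|+1$ intervals, guarantee 1 holds. Retrieval just outputs the stored list, in time $\Oh_d(k^{1-1/d}\log k)$.

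\textbf{Set lookup.} We store all lists $\pi(F_p)$ in a balanced binary search tree, ordered lexicographically as sequences of endpoint pairs, and rebuild the tree at each insertion (this is within budget). Given a query $I$, we first reject it if it has more than the stabbing bound plus one intervals; any such $I$ cannot equal a stored representation. Otherwise, each of the $\Oh(\log m)=\Oh_d(\log k)$ lexicographic comparisons costs $\Oh_d(k^{1-1/d}\log k)$. The total is $\Oh_d(k^{1-1/d}\log^2 k)$. A trie over the interval sequences would also work.

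The only delicate step is making the insertion cost $\Oh_d(k^{d+3})$ independent of any a-priori bound on $m$. This relies on the hypothesis that the family stays of VC dimension at most $d$ after the insertion, since Sauer–Shelah then gives $m=\Oh(k^d)$ and the running time of \Cref{thm:prelim:stab} collapses to $\Oh(k^{d+3})$. The remaining work is careful bookkeeping, in particular rebuilding the interval lists and the search tree after every change of $\pi$.
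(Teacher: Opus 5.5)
Your proposal is correct and matches the paper's proof essentially step for step. On every insertion you recompute the ordering via \Cref{thm:prelim:stab}, with Sauer--Shelah bounding $m=\Oh(k^d)$ so that this costs $\Oh_d(k^{d+3})$. You store the interval lists explicitly for retrieval, and you answer set lookup by a lexicographic search over the stored lists after rejecting queries with too many intervals. Your extra bookkeeping (bitvectors for re-encoding, and a balanced search tree instead of the paper's sorted array with binary search) is a harmless implementation detail within the same time budget.
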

\begin{proof}
    By \Cref{lem:prelim:size-bound} the family $\{F_1,\dots,F_m\}$ may contain at most $\Oh(k^d)$ sets.
    Since we only insert sets that do not yet appear in the sequence, the sets remain pairwise distinct and so $m$ cannot exceed $\Oh(k^d)$.

    Consider an insertion operation and let $\fcal$ refer to the family of sets present in the sequence after insertion.
    At this point we recompute the entire data structure while maintaining the ordering of $(F_1,\dots,F_m)$.
    By \Cref{thm:prelim:stab} there is $c_d \in \nn$ so that
    there exists an ordering $\pi$ of $X$ with stabbing number at most $c_d \cdot k^{1-1/d}\log k$ with respect to $\fcal$.
    Furthermore, we can compute $\pi$ in time $\Oh_d(k^3(k + m)) \le \Oh_d(k^{d+3})$.
    We store for each $i \in [m]$ the new interval representation $\pi(F_i)$
    and for each $x \in X$ we store the index of $x$ in $\pi$.
    Next, we build a sequence of pairs $(\pi(F_i),i)_{i=1}^m$ and we sort it using lexicographic ordering on the first coordinate.
    Recall that $\pi(F_i)$ is a list of pairs from $[k]$, each describing one interval.
    Sorting requires $\Oh(m \log m)$ comparisons, each taking time $\Oh_d(k^{1-1/d}\log k)$.
    This readily fits in the stated running time of insertion.

    The retrieval operation is trivial as we explicitly store each $\pi(F_i)$.
    It takes time proportional to the size of  $\pi(F_i)$, which is $\Oh_d(k^{1-1/d}\log k)$.
    The element lookup is also trivial because the answer is explicitly stored.

    Consider the set lookup operation.
    We are given an interval representation $I = \pi(A)$ of some set $A \sub X$.
    If $|I| > c_d \cdot k^{1-1/d}\log k$, we can already answer ``no'' because $A$ apparently has not been taken into account when $\pi$ was computed.
    Otherwise, we look for $I$ in the sorted sequence $(\pi(F_i),i)_{i=1}^m$.
    This requires $\Oh(\log m) \le \Oh(d \log k)$ comparisons, each taking time $\Oh_d(k^{1-1/d}\log k)$.
    Recall that $\pi(A)$ is uniquely determined so if $A$ is present in the sequence then we will find pair $(\pi(A),i)$ with some $i \in [m]$.
    Then $A = F_i$ and we output $i$.
    If the search failed, we answer that $A$ is not present.

    The consistency property follows directly from the description of operations.
\end{proof}

The following simple lemma allows us to quickly compute a union of interval representations.

\begin{lemma}\label{lem:vsbased:union}
    Given interval representations of $t\ge 1$ sets $A_1,\dots, A_t \sub X$ with respect to some ordering $\pi$ of $X$, we can compute the interval representation of $\bigcup_{i=1}^t A_i$, with respect to $\pi$, in time
    {$\Oh(s  \log s)$} where $s$ is the total number of input intervals.
\end{lemma}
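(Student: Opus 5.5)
The plan is a sweep over interval endpoints. We do not need to touch every element of the ground set $X$, which would cost time depending on $k = |X|$ rather than on $s$.

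First, flatten the input. Each interval representation $\pi(A_i)$ is a list of disjoint integer intervals $[a,b]$ with $1 \le a \le b \le k$. We concatenate all $t$ lists into one list $\mathcal{L}$ of $s$ intervals, and sort $\mathcal{L}$ by left endpoint, breaking ties by right endpoint. This takes $\Oh(s\log s)$ time by any comparison sort. Integer endpoints are compared in $\Oh(1)$ time.

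Second, merge in a single left-to-right scan. We keep a current interval $[a^*,b^*]$, initialised to the first interval of $\mathcal{L}$. For each next interval $[a,b]$ we test whether $a \le b^*+1$. If so, we set $b^* \leftarrow \max(b^*,b)$. If not, we output $[a^*,b^*]$ and start a new current interval $[a,b]$. At the end we output the last current interval.

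The test uses $b^*+1$ rather than $b^*$, and this is the one point needing care. Two intervals that merely abut over $\pi$, such as $[1,3]$ and $[4,6]$, must be merged. Otherwise the output is not the minimum-size sequence that the definition of $\pi(\cdot)$ requires.

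For correctness, two facts need checking.
\begin{enumerate}
\item The union of the emitted intervals equals $\bigcup_i A_i$. This holds because the scan only merges intervals that overlap or abut, so their union is contiguous.
\item The output is minimal, i.e.\ equals $\pi\bigl(\bigcup_i A_i\bigr)$. Any two consecutive emitted intervals $[a^*,b^*]$ and $[a',b']$ satisfy $a' > b^*+1$. This uses the sort order: every later interval has left endpoint at least $a'$. Hence the index $b^*+1$ does not belong to the union, so the emitted intervals are the maximal runs of the union. The maximal runs are exactly the unique minimum representation.
\end{enumerate}
The scan is linear in $s$, so the total running time is $\Oh(s\log s)$. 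No step here is a real obstacle. The only subtle point is the adjacency condition $a \le b^*+1$.
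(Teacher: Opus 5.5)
Your proposal is correct and matches the paper's proof essentially step for step. Both sort the $s$ intervals by left endpoint in $\Oh(s\log s)$ time, then do one linear sweep that extends the current interval whenever $a \le b^*+1$, so that touching intervals are merged as well as overlapping ones. Your explicit argument that the output consists of the maximal runs of the union, and is therefore the unique minimum representation, is a slightly more detailed version of the paper's remark that the emitted interval neither intersects nor touches any remaining interval.
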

\begin{proof}
    We say that two disjoint intervals $[a,b]$, $[c,d]$ touch, if $c=b+1$ or $a=d+1$.

    Let $\{[a_i,b_i]\, \colon\, i \in [s]\}$ be the collection of all intervals appearing in the interval representations of  $A_1,\dots, A_t$.
    Our goal is to represent $\bigcup_{i=1}^s [a_i,b_i]$ as a union of disjoint non-touching intervals.

    We sort the input intervals by the first coordinate in time $\Oh(s  \log s)$.
    We will maintain an interval $[a',b']$ representing the union of processed intervals.
    Initially $[a',b'] = [a_1,b_1]$.
    For each $i \ge 2$ we check if $a_i \le b' + 1$.
    If yes, we extend $[a',b']$ by setting $b' \leftarrow \max(b', b_i)$.
    Note that $a' \le a_i$ so we do not have to extend $[a',b']$ to the left.
    If  $a_i > b' + 1$ then $[a',b']$ is disjoint from the remaining intervals and it does not touch any of them.
    In this case, we output the current $[a',b']$ and continue the process with $[a',b'] = [a_i,b_i]$.
    After processing $i = s$ we also output the final $[a',b']$.
\end{proof}

Now we present an algorithm computing all the reachability sets $\reach_G(v,X)$ for $v \in V(G)$.
This algorithm is faster than $\Oh(|G|\cdot |X|)$ whenever the VC dimension of $\reachsys(G,X)$ is bounded and $X$ is sufficiently small.
To estimate the gain,
we assume a bound $d$ on both VC dimension and dual VC dimension of $\reachsys(G,X)$.
Note that a priori these two measures may not be equal, unlike in \Cref{thm:prelim:dual-vc},
and the bound on both will later follow from \Cref{obs:prelim:faithful-reach}.

We first give an algorithm for acyclic digraphs.

\newcommand{\rep}{{\color{\mycol}{\mathrm{rep}}}}
\newcommand{\scc}{{\color{\mycol}{\mathrm{scc}}}}
\newcommand{\ind}{{\color{\mycol}{\mathrm{index}}}}

\begin{lemma}\label{lem:vsbased:dag}
        Let $d \ge 2$ be a constant.
    There is an algorithm with the following specification.
    \begin{itemize}
    \item {Input:} acyclic digraph $G$, set $X \sub V(G)$, such that $\reachsys(G,X)$ has VC dimension and dual VC dimension at most $d$

    \item Output: ordering $\pi$ of $X$ with stabbing number $\widetilde\Oh_d(|X|^{1-1/d})$ with respect to $\reachsys(G,X)$ and interval representation of $\reach_G(v,X)$, for each $v \in V(G)$, with respect to $\pi$

    \item Running time: $\widetilde\Oh_d(|G|\cdot |X|^{1-1/d} + |X|^{2d+3})$
    \end{itemize}
\end{lemma}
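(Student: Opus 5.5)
We process the vertices of $G$ in reverse topological order, so that every out-neighbour of $v$ is handled before $v$. Throughout, we keep an instance $\dcal$ of the Interval Representation Data Structure from \Cref{lem:vsbased:implement} on the ground set $X$. Every set inserted into $\dcal$ will have the form $\reach_G(u,X)$, so the stored family is always a subfamily of $\reachsys(G,X)$. By \Cref{lem:prelim:truncate}, applied trivially to subfamilies, it therefore has VC dimension and dual VC dimension at most $d$, and the guarantees of \Cref{lem:vsbased:implement} hold. For each processed vertex $u$ we do not store $\pi(\reach_G(u,X))$ directly, since $\pi$ may change later. Instead we store an index $\ind(u)$ such that $F_{\ind(u)} = \reach_G(u,X)$. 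By the consistency property, this index stays valid after later insertions, even though $\pi$ and the interval representations are recomputed.

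To process $v$ with out-neighbours $u_1,\dots,u_\ell$, we first retrieve $\pi(F_{\ind(u_i)})$ for each $i$. For the singleton $\{v\}$ when $v\in X$, we use element lookup to get its position. By acyclicity, $\reach_G(v,X) = \bigcup_i \reach_G(u_i,X) \cup (\{v\}\cap X)$. Each retrieved representation has $\Oh_d(|X|^{1-1/d}\log|X|)$ intervals, so \Cref{lem:vsbased:union} computes $\pi(\reach_G(v,X))$ in $\widetilde\Oh_d((\ell+1)|X|^{1-1/d})$ time. We then do a set lookup. If it succeeds, we set $\ind(v)$ to the returned index. Otherwise, we insert the set and set $\ind(v)=m$.

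A new set is inserted only when it differs from every stored set. By the Sauer–Shelah Lemma (\Cref{lem:prelim:size-bound}), there are therefore at most $\Oh(|X|^d)$ insertions, each costing $\Oh_d(|X|^{d+3})$, for a total of $\Oh_d(|X|^{2d+3})$. Summing the union and lookup costs over all vertices gives $\widetilde\Oh_d(|G|\cdot|X|^{1-1/d})$. At the end, we retrieve $\pi(F_{\ind(v)})$ for every $v$ under the final ordering $\pi$. This adds another $\widetilde\Oh_d(|V(G)|\cdot |X|^{1-1/d})$. The final $\pi$ has the stated stabbing number with respect to the stored family. That family is exactly $\reachsys(G,X)$, since every set $\reach_G(v,X)$ was either inserted or already present.

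The main subtlety is ensuring that nothing ever depends on a stale ordering. We avoid this by storing indices only, retrieving representations immediately before each union, and performing the union before any insertion. This ensures that all inputs to \Cref{lem:vsbased:union} are taken with respect to the same current $\pi$. A second subtlety is that the computed union may contain many intervals if it is not yet in the family. This does not affect the bound, because the union's size is bounded by its input size, which is already charged to the out-degree of $v$.
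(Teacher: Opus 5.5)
Your proposal is correct and follows the paper's proof essentially step by step: you process vertices in reverse topological order with the Interval Representation Data Structure and store only indices, which stay valid by the consistency property. You retrieve and merge the out-neighbours' representations via \Cref{lem:vsbased:union} before any insertion, and you use Sauer--Shelah to bound the number of insertions by $\Oh(|X|^d)$, which gives the $\Oh_d(|X|^{2d+3})$ term. (Minor quibble: the needed preservation of the VC and dual VC bounds when passing to a subfamily is immediate, but it is not what \Cref{lem:prelim:truncate} states, since that observation concerns restricting the ground set.)
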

\begin{proof}
    Consider a reverse topological ordering $(v_1,\dots, v_n)$ of $V(G)$, that is,  $v_j \overset{G}\leadsto v_i$ implies $i < j$.
    Let $R_i$ stand for $\reach_G(v_i,X)$.
    We will utilize the {Interval Representation Data Structure} to compute $R_i$ one by one.
    Since $\reachsys(G,X)$ has bounded VC dimension and dual VC dimension, we can use the implementation from \Cref{lem:vsbased:implement}.

    \paragraph{Invariant.}
    At the beginning of the $j$-th step we will keep the following invariant.
    The {Interval Representation Data Structure} $\dcal$ stores a sequence comprising subsets $\{R_i \colon i < j\}$ of the ground set $X$ and for each $i < j$ we have computed integer $\ind[i]$ equal to the position of the set $R_i$ in the sequence.

    \paragraph{Algorithm.}
    Let $j \in [n]$; we now describe the $j$-th step of the algorithm.
    Let $N_j$ be the set of indices $i$ for which $(v_j,v_i) \in E(G)$.
    We know that $N_j \sub [j-1]$ due to topological ordering.
    Our goal is to compute the interval representation of the set $R_j$.
    This set is given by the union of $R_i$, over $i \in N_j$, together with $\{v_j\}$ whenever $v_j \in X$.
    For each $i \in N_j$ we perform the retrieval operation on $\dcal$ for $p = \ind[i]$  to read the interval representation of $R_i$ with respect to the ordering of $X$ currently stored in $\dcal$.
    Additionally, if $v_j \in X$ we perform the element lookup operation to obtain  the (singleton) interval representation of $\{v_j\}$.
    Then we compute the interval representation $I$ of
    the union of these sets using \Cref{lem:vsbased:union}, which yields a representation of $R_j$ (see \Cref{fig:interval} on page \pageref{fig:interval}).
    Next, we perform the set lookup operation for $I$ to check if $R_j$ is already present in $\dcal$.
    If yes, $\dcal$ outputs the index $p$ of $R_j$ on the list; we set $\ind[j] = p$.
    Otherwise, we perform the insertion operation for $I$.
    Since $R_j$ is appended at the end of the sequence in $\dcal$, we
     write at $\ind[j]$ the current length of the sequence.

    This process terminates after the $n$-th step.
    After that we read the stored ordering $\pi$ of $X$ by
    performing the element lookup operation on $\dcal$ for each $v_i\in X$.
    Finally, we read
    the interval representation of each $R_i$, $i \in [n]$, by
    performing the retrieval operation  with $p = \ind[i]$.

    \paragraph{Correctness.}
    The invariant is clearly satisfied for $j = 1$ and for larger $j$ we proceed via induction.
    Let $\pi$ be the ordering of $X$ stored in $\dcal$ at the beginning of the $j$-th step.
    The invariant implies that for each $i \in N_j$ we read the correct interval representation of the set $R_i$ with respect to $\pi$.
    Then $I$ is the interval representation of $R_j$ with respect to $\pi$.
    If $R_j$ is not already present in $\dcal$ then it is inserted
    at the end of the $j$-th step.
    Due to the consistency property of the {Interval Representation Data Structure} every operation preserves the validity of the indices stored in the array $\ind$.
    Hence the invariant is maintained.

    \paragraph{Running time.}
    Let $k = |X|$.
    The set lookup operation is performed once for each vertex in $G$.
    A single invocation takes time $\widetilde\Oh_d(k^{1-1/d})$, as specified in \Cref{lem:vsbased:implement}.
    The retrieval operation is performed once for each edge in $G$ and then $n$ times to read the output.
    \Cref{lem:vsbased:implement} guarantees that each call takes time $\widetilde\Oh_d(k^{1-1/d})$.
    The element lookup takes constant time and is performed at most twice per vertex.
    All these calls together yield time $\widetilde\Oh_d(|G|\cdot k^{1-1/d})$.

    For each $j \in [n]$ we invoke \Cref{lem:vsbased:union} to compute the union of at most $|N_j|+1$ interval representations, each with $\widetilde\Oh_d(k^{1-1/d})$ intervals.
    This takes time near-linear in the total number of intervals.
    Summing over all $j \in [n]$, this gives $\sum_{j=1}^n \Ot_d\left((|N_j|+1)\cdot (k^{1-1/d})\right) = \widetilde\Oh_d(|G|\cdot k^{1-1/d})$.

    The insertion operation can occur at most $\Oh(k^d)$ times because this is the maximum number of sets in $\reachsys(G,X)$ due to \Cref{lem:prelim:size-bound}.
    Each insertion takes time $\Oh_d(k^{d+3})$ by \Cref{lem:vsbased:implement}.
    In total we get  $\Oh_d(k^{2d+3})$.
\end{proof}

To turn this reachability representation into an emulator of small size, we employ the well-known \coldef{segment tree} technique attributed to Bentley~\cite{bentley1977algorithms}.
    \begin{lemma}[{e.g.,~\cite[Section~1.2.3.1]{PreparataS85}}]\label{lem:segment-tree}
    Let $L=\{v_1,\ldots,v_{\ell}\}$ be any set.
    In $\Oh(\ell)$ time one can build an $\Oh(\ell)$-node rooted binary tree $T$ with leaves $L$ and the following property.
    For any $1\leq a\leq b\leq \ell$, there exists a subset $T_{[a,b]}\subseteq V(T)$ of $\Oh(\log\ell)$ nodes of $T$ such that the leaf-descendants
    of the nodes in $T_{[a,b]}$ are precisely $\{v_a,\ldots,v_b\}$.
    Moreover, for any $a,b$, $T_{[a,b]}$ can be computed in $\Oh(\log{\ell})$ time.
    \end{lemma}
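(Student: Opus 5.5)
The segment tree is the standard static construction over the sequence $v_1,\dots,v_\ell$: a balanced binary tree whose nodes correspond to canonical index intervals. I build it recursively. A node responsible for the index interval $[i,j]$ is a leaf $v_i$ if $i=j$. Otherwise it has two children, responsible for $[i,m]$ and $[m+1,j]$, where $m=\lfloor (i+j)/2\rfloor$. The leaf-descendants of a node responsible for $[i,j]$ are then exactly $\{v_i,\dots,v_j\}$, which follows by a trivial induction. The tree has $\ell$ leaves and $\ell-1$ internal nodes, so it has $\Oh(\ell)$ nodes in total. It is built in $\Oh(\ell)$ time, and its depth is $\lceil\log_2\ell\rceil$ because the interval length halves at each level.

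For a query $[a,b]$, I define $T_{[a,b]}$ through the usual recursive decomposition. Start at the root. At a node with interval $[i,j]$:
\begin{itemize}
\item if $[i,j]\subseteq[a,b]$, output the node;
\item if $[i,j]\cap[a,b]=\emptyset$, return;
\item otherwise, recurse into both children.
\end{itemize}
Correctness is immediate by induction. The output nodes have pairwise disjoint leaf sets, and their union is exactly $\{v_a,\dots,v_b\}$, since every index in $[a,b]$ lies in some maximal node interval contained in $[a,b]$.

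The only real point is the $\Oh(\log\ell)$ bound on both $|T_{[a,b]}|$ and the running time. I would argue it per depth level. At any fixed depth, the node intervals partition a contiguous range of indices. So at most two nodes on that level \emph{partially} overlap $[a,b]$, namely those containing $a$ and $b$ that are not fully inside. Only partially overlapping nodes are expanded, so at most two nodes per level are expanded. Hence each level contains at most four visited nodes, and in particular at most four output nodes. Summing over the $\Oh(\log\ell)$ levels gives $\Oh(\log\ell)$ visited nodes, $\Oh(\log\ell)$ nodes in $T_{[a,b]}$, and $\Oh(\log\ell)$ query time. A slightly sharper count, that after the split point only one side can be partial on each path, would give at most $2$ output nodes per level, but the cruder bound suffices.

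I expect no step to be a genuine obstacle. The per-level counting of partially overlapping nodes is the only part requiring care, and it relies solely on the fact that node intervals at a fixed depth are disjoint and contiguous.
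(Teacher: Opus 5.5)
Your proposal is correct and is the standard segment-tree construction with the canonical-decomposition query; the paper gives no proof of its own and simply cites this construction from Preparata--Shamos. One small slip is that the intervals at a fixed depth need not cover a contiguous range (for $\ell=6$ the deepest level covers $\{1,2,4,5\}$), but your counting only needs them to be pairwise disjoint intervals, since a partially overlapping node interval must contain $a$ or $b$.
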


We are ready to present the main lemma of this section.

\begin{lemma}\label{lem:vsbased:general}
    Let $d \ge 2$ be a constant.
    There is an algorithm with the following specification.
    \begin{itemize}
    \item {Input:} digraph $G$, sets $Y, X \sub V(G)$, such that $\reachsys(G,X)$ has VC dimension and dual VC dimension at most~$d$

    \item Output: \re for $(G,Y,X)$ of size $\widetilde\Oh_d(|Y| \cdot |X|^{1-1/d} + |X|)$

    \item Running time: $\widetilde\Oh_d(|G|\cdot |X|^{1-1/d} + |X|^{2d+3})$
    \end{itemize}
\end{lemma}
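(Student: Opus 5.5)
The plan is to reduce to the acyclic case, call \Cref{lem:vsbased:dag}, and then turn the interval representations into a small emulator using segment trees. First, compute the strongly connected components of $G$ in linear time and contract each one, obtaining an acyclic digraph $G'$. Map each vertex $v$ to its component $\scc(v)$, and let $X'$ be the set of components containing a vertex of $X$.

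Two vertices in the same component have identical reachability sets. Hence $\reachsys(G',X')$ is, up to the obvious bijection, a coarsening of $\reachsys(G,X)$: a component $c$ of $G'$ reaches $c'\in X'$ iff any vertex of $c$ reaches the vertices of $X$ in $c'$. Shattered sets in $\reachsys(G',X')$ lift to shattered sets in $\reachsys(G,X)$ by picking one representative per component, and the same holds for dual shattering. So both the VC dimension and the dual VC dimension remain at most $d$. To be safe, one could instead keep a private copy of each $X$-vertex: for every $x\in X$ add a fresh sink $x^\ast$ with an edge $\scc(x)\to x^\ast$, and take the ground set $X^\ast$. The set system is then literally $\reachsys(G,X)$ with the ground set renamed, while $|G'|=\Oh(|G|)$ and $|X^\ast|=|X|$.

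Next, apply \Cref{lem:vsbased:dag} to $(G',X^\ast)$. This yields an ordering $\pi$ of $X$ and, for every component $c$, an interval representation of $\reach_G(v,X)$ for $v\in c$ consisting of $\widetilde\Oh_d(|X|^{1-1/d})$ intervals. This step costs $\widetilde\Oh_d(|G|\cdot|X|^{1-1/d}+|X|^{2d+3})$ time.

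Then build the emulator $H$ from a segment tree. Apply \Cref{lem:segment-tree} to the leaves $x_1,\dots,x_k$ in the order $\pi$, obtaining a tree $T$ with $\Oh(k)$ nodes. Orient every edge of $T$ from parent to child, and identify each leaf with the actual vertex $x_i\in X$; all internal nodes are fresh vertices. For each $y\in Y$, take every interval $[a,b]$ in the representation of $\reach_G(y,X)$ and add an edge from $y$ to each node of $T_{[a,b]}$.

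The size of $H$ is $\Oh(k)$ for the tree plus $|Y|\cdot\widetilde\Oh_d(k^{1-1/d})\cdot\Oh(\log k)$ for the added edges, which matches the claimed bound $\widetilde\Oh_d(|Y|\cdot|X|^{1-1/d}+|X|)$. The construction time is dominated by the call to \Cref{lem:vsbased:dag}.

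The only delicate point is correctness, because vertices of $Y\cap X$ appear both as sources and as tree leaves. Since tree leaves have no outgoing tree edges, any path in $H$ starting at $y$ uses exactly one added edge followed by tree edges, or ends at $y$ itself. Therefore $\reach_H(y,X)$ equals the union of the leaf sets below the chosen nodes, which is $\reach_G(y,X)$; for $y\in X$, $y$ also reaches itself trivially in both graphs. Paths cannot chain through a second vertex $y'\in Y\cap X$, because leaves carry only their own added out-edges. This would be harmful in general, but it is harmless here. Any path $y\to y'\to z$ in $H$ implies that $y'\in\reach_G(y,X)$ and $z\in\reach_G(y',X)$, so $z\in\reach_G(y,X)$ by transitivity, and no spurious pairs arise. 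The main obstacle I expect is verifying the VC-preservation under contraction cleanly, and the private-copy trick addresses that.
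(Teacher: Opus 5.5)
Your overall route matches the paper's: contract SCCs, invoke \Cref{lem:vsbased:dag}, and turn the interval representations into an emulator with a segment tree. Your first justification of the VC bounds, via representatives for the condensation with ground set $X'$, is exactly the paper's claim. The gap is in the ``private copy'' variant, which is the one you actually feed into \Cref{lem:vsbased:dag}.

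\textbf{The gap.} $\reachsys(G',X^\ast)$ is not $\reachsys(G,X)$ with the ground set renamed. Each sink $x^\ast$ is itself a vertex of $G'$, so it contributes the singleton $\{x^\ast\}$. In general $\{x\}$ is not a reachability set in $G$, for instance when $x$ reaches another vertex of $X$. These extra singletons can push the VC dimension above $d$, so the hypothesis of \Cref{lem:vsbased:dag} is not met. Here is a concrete instance with $d=2$:
\begin{itemize}
\item Take $X=\{a,b,c,e\}$, an edge $(e,a)$, and for each $W\sub\{a,b,c\}$ with $|W|\ge 2$ a vertex $u_W$ with edges to all of $W$.
\item Then $\reachsys(G,X)$ consists of $\{a\},\{b\},\{c\},\{a,e\}$ and the four sets $W$.
\item Every member meets $\{a,b,c\}$, and none contains $e$ together with $b$ or $c$. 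Hence no $3$-set is shattered.
\item The dual VC dimension is at most $\lfloor\log_2|X|\rfloor=2$.
\item In $G'$, however, $\reach_{G'}(e^\ast,X^\ast)=\{e^\ast\}$ has empty trace on $\{a^\ast,b^\ast,c^\ast\}$. All other traces on this set are already present, so it is shattered and the VC dimension becomes $3$.
\end{itemize}
Invoking the lemma with a larger parameter would weaken both the size and the running-time bounds.

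\textbf{How to repair it.} Either of two fixes works.
\begin{enumerate}
\item Drop the sinks. Apply \Cref{lem:vsbased:dag} to the condensation with ground set $X'$, which your first argument justifies. Build the segment tree on the components, and hang each $x\in X$ below its component's leaf with an edge from that leaf to $x$. This is precisely the paper's construction.
\item Keep the sinks but open the black box.
\begin{itemize}
\item A dual-shattered family $\mathcal S$ with $|\mathcal S|\ge 2$ cannot contain a singleton $\{x^\ast\}$. All $2^{|\mathcal S|-1}\ge 2$ subfamilies containing it would need the same witness $x^\ast$. So the dual VC dimension stays at most $d$.
\item In the proofs of \Cref{lem:vsbased:implement} and \Cref{lem:vsbased:dag}, the primal VC bound is used only through Sauer--Shelah, to bound the number of distinct sets. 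That number grows by at most $|X|$ and so remains $\Oh(|X|^d)$.
\end{itemize}
\end{enumerate}

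\textbf{A separate wording problem.} In your correctness paragraph, two sentences are false when $X\cap Y\neq\emptyset$, because a leaf lying in $Y$ has added out-edges:
\begin{itemize}
\item that every path from $y$ uses exactly one added edge;
\item that paths cannot chain through $y'\in X\cap Y$.
\end{itemize}
Discard them and keep your transitivity argument. Split a path at its vertices in $X\cup Y$ and apply the single-segment argument to each piece, as the paper does.
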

\begin{proof}
    Let $D$ be a DAG obtained from $G$ by contracting each strongly connected component (SCC) to a single vertex.
    We can compute it in linear time, together with a mapping $\scc \colon V(D) \to 2^{V(G)}$ representing the SCCs of $G$.
    Let $X_D$ be the set of those vertices $v \in V(D)$ such that $\scc(v) \cap X \ne \emptyset$.
    Clearly, $|X_D| \le |X|$ since $\scc$ represents a partition of $V(G)$.
    Let $\rep \colon X_D \to X$ be a mapping that assigns to $v \in X_D$ an arbitrary element from $\scc(v) \cap X$.

    \begin{observation}\label{obs:vsbased:scc}
        Let $v \in V(D)$, $u \in \scc(v)$, and $x \in X_D$.
        Then $v \overset{D}\leadsto x$ if and only if $u \overset{G}\leadsto \rep(x)$.
    \end{observation}

    \begin{claim}
        The VC dimension and the dual VC dimension of $\reachsys(D,X_D)$ is at most $d$.
    \end{claim}
    \begin{innerproof}
        The (dual) VC dimension of $\reachsys(G,\rep(X_D))$ is at most $d$ since $\rep(X_D) \sub X$ (see \Cref{lem:prelim:truncate}).
        We claim that for every $A \in \reachsys(D,X_D)$ the set $\rep(A)$ belongs to \linebreak $\reachsys(G, \rep(X_D))$.
        To see this, consider $v \in V(D)$ such that $\reach_D(v,X_D) = A$.
        Then for any $u \in \scc(v)$ we have that $\reach_G(u,\rep(X_D)) = \rep(A)$ due to \Cref{obs:vsbased:scc}.
        The mapping $\rep$ is thus a bijection between the ground sets that transforms $\reachsys(D,X_D)$ into a set system contained in $\reachsys(G,\rep(X_D))$.
        The claim follows.
    \end{innerproof}

    Thanks to the above claim, we can
     execute the algorithm from \Cref{lem:vsbased:dag} for $D$ and $X_D$ within the desired running time bound.
    We obtain an ordering $\pi$ of $X_D$ with stabbing number $\Ot_d(|X_D|^{1-1/d})$ and a corresponding interval representation of each  set $\reach_D(v,X_D)$, $v \in V(D)$.

    Let us now construct the sought reachability emulator $H$ for $(G,Y,X)$.
    Let $T$ be the segment tree of~\Cref{lem:segment-tree} with leaves
    $X_D$ ordered by $\pi$ (that is, $X_D\subseteq V(T)$).
    Here, we assume $V(T)$, and thus also $X_D$, to be disjoint from $X\cup Y$.
    The vertex set of the emulator $H$ is $V(T)\cup X\cup Y$.
    The edges of $H$ include:
    \begin{itemize}
        \item The edges of the tree $T$ directed from parents to children.
        \item For each $x\in X$, an edge $(x_D,x)$ from the unique vertex $x_D\in X_D$ such that $x\in \scc(x_D)$.
        \item For each $y\in Y$, $\Ot_d(|X|^{1-1/d})$ edges to certain nodes of $T$. Specifically, if $y\in \scc(y_D)$, then for each interval $I=[a,b]$ in the interval representation of $\reach_D(y_D,X_D)$, we have an edge $(y,z)$ for each $z\in T_{[a,b]}$, where $T_{[a,b]}$ is defined as in~\Cref{lem:segment-tree}.
    \end{itemize}
    An example is given on \Cref{fig:segment-tree}. By construction, the emulator $H$ has $\Oh_d(|X|+|Y|)$ vertices and $\widetilde\Oh_d(|Y|\cdot |X|^{1-1/d}+|X|)$ edges.
    Moreover, it can be constructed in time $\widetilde\Oh_d(|Y|\cdot |X|^{1-1/d}+|X|)$.

\begin{figure}[h]
    \centering
    \includegraphics[width=0.7\textwidth]{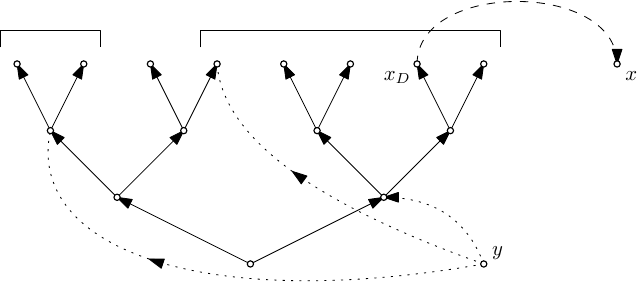}
    \caption{The construction of the \re $H$.
    The vertices of $D$ are the leaves of the tree $T$, ordered according to $\pi$.
    For each $x \in X$ there is an edge from $x_D$ to $x$.
    For each $y \in Y$  we insert edges from $y$ to $t_i \in V(T)$ so that the combined reachability of the vertices $t_i$ in $T$ corresponds to $\reach_D(y_D,X_D)$.
    In this example, the interval representation of  $\reach_D(y_D,X_D)$ is $([1,2],\, [4,8])$.
    }
    \label{fig:segment-tree}
\end{figure}

    \begin{claim}
    The constructed digraph $H$ is indeed a reachability emulator for $(G,Y,X)$.
    \end{claim}
    \begin{innerproof}
        Let $y\in Y$ and $x\in X$. Let $y_D,x_D$ be the vertices representing the SCCs of $y$ and $x$ in $D$, respectively.
        By the definition of $D$, $x\in \reach_G(y,X)$ is equivalent to $x_D\in \reach_D(y_D,X_D)$.

        First suppose that $x\in \reach_G(y,X)$.
        Since $(x_D,x)\in E(H)$, it is enough to argue that\linebreak $x_D\in \reach_H(y,X_D)$.
        Recall that the tree $T$ of~\Cref{lem:segment-tree} with $L=X_D$ ordered by $\pi$ is included in $H$.
        Next, $y$ has a direct outgoing edge to some $v\in V(T)$ such that $x_D$ is a leaf descendant of $v$ in $T$.
        As the edges of~$T$ are directed downwards, there exists a path $y\to v\to x_D$ in $H$, as desired.

        Now assume that $x\in \reach_H(y,X)$.
        Consider any $y\to x$ path $P\subseteq H$.
        Split $P$ into minimal non-empty paths $P_1=y_1\to x_1,\ldots,P_k=y_k\to x_k$ with both endpoints in $X\cup Y$.
        Note that each $x_i$ has exactly one incoming edge in $H$, one from $(x_i)_D$, whereas $y_i$ has only direct edges to $V(T)$ in $H$.
        Moreover, $x_i\in X$.
        By construction, each  path $P_i$ is of the form $(y_i,z)\cdot (z\to_T (x_i)_D)\cdot ((x_i)_D,x_i)$, where $z\in T_{[a,b]}$ and $I=[a,b]$ is a part of the interval representation of $\reach_D((y_i)_D,X_D)$.
        Since the edges of $T$ are directed downwards in $H$, the only leaves of $T$ (or, in other words, the only vertices in $X_D$) that $z$ can reach are its leaf descendants in $T$.
        By the guarantee in~\Cref{lem:segment-tree}, all leaf descendants of $z$ belong to $I\subseteq \reach_D((y_i)_D,X_D)$ as well.
        We obtain that $(x_i)_D\in \reach_D((y_i)_D,X_D)$, and thus $x_i\in \reach_G(y_i,X)$ as well.
        For $i<k$, $y_{i+1}=x_{i}$, so we conclude that there is a path $y=y_1\to \ldots \to y_{k}\to x_k=x$ in $G$. $x\in \reach_G(y,X)$ follows, as desired.
    \end{innerproof}
This concludes the proof of \Cref{lem:vsbased:general}.
\end{proof}

Now we justify that the additive term $|X|^{2d+3}$ in the running time can be removed at the expense of a slight increase in the emulator size.
Note that $|X| \le |G|$ so for sufficiently small $\delta$ the first summand becomes dominating in the running time below.

\begin{lemma}\label{lem:faithful:optimize-general}
    Let $\delta \in (0,1]$ and $d \ge 2$ be constants.
    There is an algorithm with the following specification.
    \begin{itemize}
    \item {Input:} digraph $G$, sets $Y, X \sub V(G)$, such that $\reachsys(G,X)$ has VC dimension and dual VC dimension at most~$d$

    \item Output: \re for $(G,Y,X)$ of size $\widetilde\Oh_d(|Y| \cdot |X|^{1-\delta/d} + |X|)$

    \item Running time: $\widetilde\Oh_d(|G|\cdot |X|^{1-\delta/d} + |X|^{1 + (2d+2)\cdot\delta})$
    \end{itemize}
    \end{lemma}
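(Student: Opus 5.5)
We will follow the splitting trick sketched at the end of \Cref{subsec:vc}. Let $k = |X|$. Partition $X$ arbitrarily into $\ell = \lceil k^{1-\delta}\rceil$ groups $X_1,\dots,X_\ell$, each of size at most $k^{\delta}$. For every $i\in[\ell]$, the set system $\reachsys(G,X_i)$ is obtained from $\reachsys(G,X)$ by truncation to $X_i$. By \Cref{lem:prelim:truncate}, both its VC dimension and its dual VC dimension are at most $d$. Hence we may invoke \Cref{lem:vsbased:general} on the input $(G, Y, X_i)$ and obtain a \re $H_i$ for $(G,Y,X_i)$.

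Each such invocation produces an emulator of size $\widetilde\Oh_d(|Y|\cdot |X_i|^{1-1/d} + |X_i|) = \widetilde\Oh_d(|Y|\cdot k^{\delta(1-1/d)} + k^{\delta})$. It runs in time $\widetilde\Oh_d(|G|\cdot k^{\delta(1-1/d)} + k^{\delta(2d+3)})$. Finally we combine the $H_i$ using \Cref{l:assemble-from-parts} with the pairs $(Y, X_i)$, $i\in[\ell]$. These pairs cover $Y\times X$, and the lemma produces a \re for $(G,Y,X)$ in time linear in $\sum_i |H_i|$.

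It remains to sum over the $\ell = \Oh(k^{1-\delta})$ groups. For the size we get
\[\widetilde\Oh_d\bigl(k^{1-\delta}\cdot(|Y|\cdot k^{\delta-\delta/d} + k^{\delta})\bigr) = \widetilde\Oh_d\bigl(|Y|\cdot k^{1-\delta/d} + k\bigr),\]
which matches the claimed size. For the running time we get
\[\widetilde\Oh_d\bigl(|G|\cdot k^{1-\delta/d} + k^{1-\delta+\delta(2d+3)}\bigr) = \widetilde\Oh_d\bigl(|G|\cdot k^{1-\delta/d} + k^{1+(2d+2)\delta}\bigr),\]
which matches the claimed running time. The combination step adds only time linear in the output size. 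That output size is dominated by the first running-time term because $|Y|\le|G|$ and $k\le|G|$.

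The only step that needs care is the applicability of \Cref{lem:vsbased:general} to each $X_i$. This requires both measures to be bounded for the truncated system. \Cref{lem:prelim:truncate} states exactly this for the VC dimension and the dual VC dimension, so no real obstacle arises. One edge case: if $k^{\delta}<2$, or more generally when $k$ is small, the groups are tiny. In that regime all bounds are trivially satisfied, for instance by taking groups of size $1$ with constant-size emulators. The polylogarithmic factors hidden in $\widetilde\Oh_d$ are uniform across groups, so they survive the summation.
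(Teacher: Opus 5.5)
Your proposal is correct and matches the paper's proof: partition $X$ into $\Oh(|X|^{1-\delta})$ blocks of size at most $\lceil |X|^{\delta}\rceil$, apply \Cref{lem:vsbased:general} to each $(G,Y,X_i)$ (valid by \Cref{lem:prelim:truncate}), and combine with \Cref{l:assemble-from-parts}, using the same size and time arithmetic as the paper. Your small-$k$ edge-case remark is unnecessary, since the general computation already covers that regime, and it is slightly inaccurate: a singleton block still needs an emulator with the $|Y|$ source vertices, not a constant-size one.
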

\begin{proof}
    Let $k = |X|$ and $\ell = |Y|$.
    Partition $X$ into $q=\Oh(k^{1-\delta})$ blocks $X_1,\ldots,X_q$ of size at most $\lceil k^\delta \rceil$.
    For each $i=1,\ldots,q$, construct a reachability emulator $H_i$ for $(G,Y,X_i)$ using~\Cref{lem:vsbased:general}.
    The total size of all reachability emulators $H_i$ is
    \[\Ot\left(k^{1-\delta}\cdot \left(\ell\cdot (k^\delta)^{1-1/d}+k^\delta\right)\right)=\Ot\left(\ell\cdot k^{1-\delta/d}+k\right),\]
    whereas the total time required to compute them is
    \[\Ot\left(k^{1-\delta}\cdot \left(|G|\cdot (k^\delta)^{1-1/d}+(k^\delta)^{2d+3}\right)\right)=\Ot\left(|G|\cdot k^{1-\delta/d}+k^{1+(2d+2)\delta}\right).\]
    In the end one can obtain a reachability emulator for $(G,Y,X)$ from $H_1,\ldots,H_q$ using~\Cref{l:assemble-from-parts}.
    The resulting emulator has asymptotically the same size as $|H_1|+\ldots+|H_q|$.
\end{proof}

Finally, we formulate a convenient corollary to be used in both our algorithms.
When $H$ is a \fre for $(G,X)$ then the VC dimension and the dual VC dimension of  $\reachsys(H,X)$ are not larger than the ones in $\reachsys(G,X)$ due to \Cref{obs:prelim:faithful-reach}.
By \Cref{obs:prelim:faithful-subset}, the same upper bound applies to $\reachsys(H,Z)$ for any subset $Z \sub X$.
By applying \Cref{lem:faithful:optimize-general} to $H, Y, Z$ we obtain the following.

\begin{corollary}\label{lem:faithful:optimize}
    Let $\delta \in (0,1]$ and $d \ge 2$ be constants.
    There is an algorithm with the following specification.
    \begin{itemize}
    \item {Input:} digraph $H$, sets $Y, Z \sub X \sub V(H)$

    \item {Assumption:} there exists a digraph $G$ with $X \sub V(G)$, such that $\reachsys(G, X)$ has VC dimension and dual VC dimension at most~$d$, and $H$ is a \fre for $(G,X)$

    \item Output: \re for $(G,Y,Z)$ of size $\widetilde\Oh_d(|Y| \cdot |Z|^{1-\delta/d} + |Z|)$

    \item Running time: $\widetilde\Oh_d(|H|\cdot |Z|^{1-\delta/d} + |Z|^{1 + (2d+2)\cdot\delta})$
    \end{itemize}
    \end{corollary}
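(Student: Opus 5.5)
The corollary follows by applying \Cref{lem:faithful:optimize-general} to the digraph $H$ with the sets $Y$ and $Z$ in the roles of $Y$ and $X$. The only real work is to check the hypothesis of that lemma: that $\reachsys(H,Z)$ has VC dimension and dual VC dimension at most $d$. We then translate the output guarantee from $H$ back to $G$.

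First, the VC bound. Since $H$ is a \fre for $(G,X)$ and $Z \sub X$, \Cref{obs:prelim:faithful-subset} says $H$ is also a \fre for $(G,Z)$. \Cref{obs:prelim:faithful-reach} then gives $\reachsys(H,Z) \sub \reachsys(G,Z)$. Next, $\reachsys(G,Z)$ is exactly the truncation $\{R \cap Z \colon R \in \reachsys(G,X)\}$, because $\reach_G(v,Z) = \reach_G(v,X) \cap Z$. By \Cref{lem:prelim:truncate}, both its VC dimension and its dual VC dimension are at most $d$.

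Passing to a subfamily on the same ground set cannot increase either measure. For the VC dimension this is immediate: a set shattered by a subfamily is shattered by the whole family. For the dual, a subfamily $\rcal' \sub \rcal$ shattered by the sets $\rcal'_x$ is also shattered by the sets $\rcal_x$, since $\rcal_x \cap \rcal' = \rcal'_x$. Hence $\reachsys(H,Z)$ satisfies the hypothesis of \Cref{lem:faithful:optimize-general}. Invoking that lemma on $(H,Y,Z)$ yields a \re $J$ for $(H,Y,Z)$ with size $\widetilde\Oh_d(|Y|\cdot|Z|^{1-\delta/d}+|Z|)$, in time $\widetilde\Oh_d(|H|\cdot|Z|^{1-\delta/d}+|Z|^{1+(2d+2)\delta})$. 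These are precisely the stated bounds.

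Second, correctness with respect to $G$. Take $y \in Y$ and $z \in Z$; both lie in $X$. By the emulator property of $J$, $y \overset{J}\leadsto z$ holds iff $y \overset{H}\leadsto z$. Since $H$ is a \re for $(G,X)$, this holds iff $y \overset{G}\leadsto z$. So $J$ is a \re for $(G,Y,Z)$. The only step needing care is the monotonicity of the dual VC dimension under taking subfamilies, which the one-line argument above handles.
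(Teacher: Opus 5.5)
Your proposal is correct and follows the same route as the paper: you transfer the VC and dual VC bounds to $\reachsys(H,Z)$ via faithfulness (\Cref{obs:prelim:faithful-reach} and \Cref{obs:prelim:faithful-subset}), then invoke \Cref{lem:faithful:optimize-general} on $(H,Y,Z)$. You also make explicit two points the paper leaves implicit: that VC and dual VC dimension are monotone under passing to a subfamily, and that an emulator for $(H,Y,Z)$ is one for $(G,Y,Z)$ because $Y,Z\sub X$.
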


\section{Offline Dynamic Reachability}
\label{sec:offline}

In this section we prove \Cref{thm:main:offine}.
For the offline algorithm, reducing the additive term in the running time in \Cref{lem:faithful:optimize} does not lead to any beneficial trade-off so we will simply set the parameter $\delta$ to one.
We refer to a single edge update (insertion or deletion) or a reachability query as an \coldef{event}.
The algorithm first divides the sequence of events into \coldef{large chunks} of length $t = n^{ {25}/{49}}$, i.e., slightly larger than $\sqrt n$.
We show that a single chunk of this form can be processed in near-linear time.

\begin{lemma}
    There is an algorithm that, given an $n$-vertex planar digraph $G$ and a sequence of $t = n^{ {25}/{49}}$ query/update events, answers all the queries in total time $\Ot(n)$.
\end{lemma}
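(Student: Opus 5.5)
We follow the two-layer scheme of \Cref{subsec:twolayers} with $\alpha = \frac{25}{49}$ and $\beta = \frac{3}{49}$, and $\delta = 1$ in \Cref{lem:faithful:optimize}. Let $X$ be the set of vertices affected by the large chunk, so $|X| \le 2t = \Oh(n^{25/49})$. First we build $G \ominus X$, which is planar as a subgraph of $G$. By \Cref{lem:faithful:construct} we compute, in time $\Oh(n\log n)$, a \fre $H$ for $(G\ominus X, X)$ with $|H| = \Oh(t\log^2 t)$. By \Cref{thm:prelim:planar-vc} and \Cref{lem:prelim:truncate}, $\reachsys(G\ominus X, X)$ has VC dimension and dual VC dimension at most $3$, so we take $d = 3$.

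Next we split the large chunk into small chunks of length $s = n^{3/49}$. Consider a small chunk, let $\widehat G$ be the graph at its start, and let $Y \sub X$ be its affected vertices, so $|Y| \le 2s$. We form $\widehat H = H \cup (\widehat G[X] \ominus Y)$. Since $V(\widehat G[X]\ominus Y) \sub X$, \Cref{lem:emulator-union-simple} shows that $\widehat H$ is a \fre for $((G\ominus X)\cup(\widehat G[X]\ominus Y), X) = (\widehat G\ominus Y, X)$. This uses $\widehat G \ominus X = G\ominus X$, because no edge outside $G[X]$ changes during the chunk.

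The \emph{main obstacle} is that the assumption of \Cref{lem:faithful:optimize} requires $\reachsys(\widehat G \ominus Y, X)$ itself to have VC dimension at most $3$. This holds because $\widehat G\ominus Y$ is a subgraph of the planar graph $\widehat G$, so \Cref{thm:prelim:planar-vc} applies to it. We then invoke \Cref{lem:faithful:optimize} with $G := \widehat G\ominus Y$, both target sets equal to $Y$, $d=3$, and $\delta = 1$. This yields a \re $J$ for $(\widehat G\ominus Y, Y)$ of size $\Ot(|Y|^{5/3}) = \Ot(s^{2})$. The running time is $\Ot(|\widehat H|\cdot |Y|^{2/3} + |Y|^{9})$. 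Here $|\widehat H| = \Ot(t)$, since $\widehat G[X]$ is planar and hence has $\Oh(|X|)$ edges. So one small chunk costs $\Ot(n^{25/49 + 2/49} + n^{27/49}) = \Ot(n^{27/49})$, including building $\widehat G[X]$ in $\Oh(t)$ time by maintaining the $X$-edges explicitly.

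For a query at event $i$ of the small chunk we run a graph search in $J \cup G_i[Y]$. This is valid by \Cref{lem:emulator-union-simple}: $J \cup G_i[Y]$ is a \re for $((\widehat G\ominus Y)\cup G_i[Y], Y) = (G_i, Y)$, because $G_i \ominus Y = \widehat G\ominus Y$. One query costs $\Oh(s^2) = \Oh(n^{6/49})$.

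Summing up, there are $t/s = n^{22/49}$ small chunks, each costing $\Ot(n^{27/49})$, for a total of $\Ot(n)$. The $t$ queries cost $\Oh(t\cdot n^{6/49}) = \Oh(n^{31/49})$ in total. Together with the initial $\Oh(n\log n)$ this gives total time $\Ot(n)$. The remaining bookkeeping, namely maintaining the current edge set within $X$ and extracting each $Y$, is linear in $t$ per large chunk.
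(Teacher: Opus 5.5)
Your proposal is correct and follows the paper's proof essentially step for step. It uses the same choices $t=n^{25/49}$, $s=n^{3/49}$, $\delta=1$, $d=3$, the same faithful emulator $H\cup(\widehat G[X]\ominus Y)$ for $(\widehat G\ominus Y,X)$, graph search in $J\cup G_i[Y]$ per query, and the same accounting $t^2s^{-1/3}+ts^8=\Oh(n)$.

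One small fix concerns the query step. The output $J$ of \Cref{lem:faithful:optimize} is not faithful, so you should not cite \Cref{lem:emulator-union-simple}, whose hypothesis requires faithfulness. Instead, justify that $J\cup G_i[Y]$ is a reachability emulator for $(G_i,Y)$ via the non-faithful part of \Cref{lem:emulator-union}, as the paper does.
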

\begin{proof}
    Let us refer to the $i$-th event as $e_i$.
    For each $i \in [t]$ let $G_i$ be the digraph obtained from $G$ by applying all the updates from the events $e_1,e_2,\dots, e_{i-1}$.
    When the $i$-th event is a query $(u \to v)$, our goal is to report whether $u \overset{G_i}\leadsto v$.

    Let $X \sub V(G)$ be the set of vertices appearing in at least one event.
    Clearly, $|X| \le 2t$.
    Let $s = n^{ {3}/{49}}$.
    We divide the interval $[t]$ into $t / s$ \coldef{small chunks} of size $s$.
    For $j \in [t / s]$ let $i(j)$ denote the index of the first event of the $j$-th block and $Y_j \sub X$ be the set of vertices appearing in at least one event in the $j$-th block.
    Similarly as before, we have $|Y_j| \le 2s$.

    Let $G \ominus X$ be the graph obtained from $G$ by removing all the edges with both endpoints in $X$.
    Note that $G \ominus X = G_i \ominus X$ for all $i \in [t]$ because edge
    updates can affect only the vertex pairs from $X \times X$.
    We invoke \Cref{lem:faithful:construct}
    to compute a \fre $H$ for $(G \ominus X,X)$ of size $\Oh(t\cdot\log^2{n})$ in time $\Oh(n\cdot\log n)$.
    By \Cref{lem:emulator-union-simple}, for each  $i \in [t]$ the graph $H \cup G_i[X]$ is a \fre for $(G_i,X)$.

    The algorithm processes the small chunks one by one.
    We maintain the graph $G_{i(j)}[X]$ explicitly: when $j$ increases, we apply all the edge updates in the previously considered small chunk, in time $\Oh(s)$.
    Consider $j \in [t / s]$ and graph
    $R_j = G_{i(j)}[X] \ominus Y_j$.
    It can be computed directly from $G_{i(j)}[X]$ by inspecting all pairs in $Y_j \times Y_j$, in time $\Oh(s^2)$.
    Note that $(G_{i(j)} \ominus X) \cup R_j = G_{i(j)} \ominus Y_j$.
    Since $R_j$ is planar, it has $\Oh(|V(R_j)|) = \Oh(t)$ edges.
    By \Cref{lem:emulator-union-simple} we know that $H \cup R_j$ is a \fre for $(G_{i(j)} \ominus Y_j, \,X)$.
    The graph $H \cup R_j$ is referred to as $\widehat H$ in \Cref{fig:twolayers} on page \pageref{fig:twolayers}.
    Since $G_{i(j)} \ominus Y_j$ is planar, \Cref{thm:prelim:planar-vc} implies that both the VC dimension and the dual VC dimension of $\reachsys(G_{i(j)} \ominus Y_j,\, X)$ are at most three.

    We execute the algorithm from \Cref{lem:faithful:optimize} for $\delta = 1$, $d = 3$, the digraph $H \cup R_j$, and both $Y,Z$ set to $Y_j$.
    As an outcome, we receive a \re $J_j$ for $(G_{i(j)} \ominus Y_j,\, Y_j)$ of size $\Ot(|Y_j|^{5 / 3}) = \Ot(s^{5/3})$.
    We estimate this as just $\Oh(s^2)$ because processing $J_j$ will not form a bottleneck.
    The running time of the algorithm from \Cref{lem:faithful:optimize} is $\Ot(|H \cup R_j| \cdot |Y_j|^{2/3} + |Y_j|^9) = \Ot(t\cdot s^{2/3} + s^9)$.

    To process the queries in the $j$-th small chunk, we maintain explicitly the graph $G_i[Y_j]$ for each $i \in [t]$ corresponding to an event in the $j$-th small chunk.
    Note that $J_j$ is a \re for $(G_{i} \ominus Y_j,\, Y_j)$ for each such $i$ because $G_i \ominus Y_j = G_{i(j)} \ominus Y_j$ by the definition of $Y_j$.
    \Cref{lem:emulator-union} implies that $J_j \cup G_i[Y_j]$ is a \re for $(G_i, Y_j)$.
    Consider an event $e_i$ that is a query $(u \to v)$.
    Since $u,v \in Y_j$ we have $u \overset{G_i}\leadsto v$ if and only if $u \overset{J_j \cup G_i[Y_j]}\leadsto v$.
    Therefore, to address this query, it suffices to run depth-first search on $J_j \cup G_i[Y_j]$.
    This takes time linear in the size of the graph, which is $\Oh(s^2)$.

    Let us summarize the running time.
    Computing $H$ with \Cref{lem:faithful:construct} takes time $\Ot(n)$.
    At the beginning of each small chunk we spend time $\Ot(t\cdot s^{2/3} + s^9)$ to compute $J_j$.
    This dominates the term $\Oh(s^3)$ for
    preparing the graph $R_j$ and answering all the queries in the small chunk.
    Since the number of small chunks is $t/s$, the total running time is $\Ot(n + t^2\cdot s^{-1/3} + t\cdot s^8)$.
    For our choice of $t  = n^{ {25}/{49}}$
     and $s = n^{ {3}/{49}}$ we obtain $t^2\cdot s^{-1/3} + t\cdot s^8 = 2n$.
     This concludes the proof.
\end{proof}

\Cref{thm:main:offine} follows by dividing the sequence of events into large chunks of length  $t  = n^{ {25}/{49}}$ and processing each in time $\Ot(n)$.
Then the average time spent per event is $\Ot(n^{ {24}/{49}})$ and $\frac{24}{49} < 0.49$.

\section{Online Dynamic Reachability}
\label{sec:online}

This section is devoted to the proof of \Cref{thm:planar-dyn-reach}.
We first provide a background on $r$-divisions, next we present a simplified data structure that uses only half of our framework and achieves sublinear query/update time, and then we move on to the final construction, which has been outlined in \Cref{subsec:online}.

\subsection{Graph partitions}\label{s:rdiv}
In this paper, an \coldef{$r$-division}~\cite{DBLP:journals/siamcomp/Frederickson87} $\rdiv$ of a graph $G$, for $r \in [1,n]$,
is a collection of $\Oh(n/r)$ edge-disjoint\footnote{When defining an $r$-division, pieces are not always required to be edge-disjoint and may have some limited edge overlap (e.g.,~\cite{KleinMS13}). However, note that an edge shared among two pieces connects their common boundary vertices.} subgraphs $P\subseteq G$ of size $\Oh(r)$, called \coldef{pieces}, whose union equals $G$.
Each piece $P$ shares at most $\Oh(\sqrt{r})$ vertices with other
pieces.
The \coldef{boundary vertices} $\bnd{P}\subseteq V(P)$ include all vertices shared with other pieces. In the following, we will also assume that $\bnd{P}$ may include some additional pre-defined vertices, subject to $|\bnd{P}|=\Oh(\sqrt r)$.
The boundary $\bnd\rdiv$ of $\rdiv$ is defined as  $\bigcup_{P\in\rdiv}\bnd{P}$.

The fastest algorithm for computing
$r$-divisions of $K_h$-minor-free graphs follows by
 plugging the recent linear-time  algorithm for finding $\Oh(\sqrt{n})$-size balanced separators~\cite{DBLP:journals/corr/abs-2512-01587} into the partitioning algorithm of~\cite{DBLP:journals/siamcomp/Frederickson87}.

\begin{theorem}[\cite{DBLP:journals/corr/abs-2512-01587,DBLP:journals/siamcomp/Frederickson87}]\label{t:rdiv}
  Let $G$ be a simple $K_h$-minor-free graph with $n$ vertices, where $h=\Oh(1)$.
  For any $r\in [1,n]$, an $r$-division of $G$
  can be computed in $\Oh(n\log{n})$ time.
\end{theorem}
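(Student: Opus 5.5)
The plan is to follow Frederickson's recursive partitioning scheme and replace its separator subroutine with the linear-time separator algorithm of~\cite{DBLP:journals/corr/abs-2512-01587}. That algorithm finds a balanced separator of size $\Oh(\sqrt{n})$ in any $K_h$-minor-free graph. The one structural fact used throughout is that $K_h$-minor-freeness is closed under taking subgraphs. Hence every subgraph produced during the recursion is again $K_h$-minor-free, has $\Oh(m)$ edges, and admits an $\Oh(\sqrt{m})$ separator computable in $\Oh(m)$ time.

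\textbf{Phase 1 (size split).} Recursively separate $G$ with the linear-time separator until every piece has $\Oh(r)$ vertices. A piece's boundary vertices are those it shares with other pieces. Separator vertices are copied into all adjacent subgraphs, and each edge is assigned to exactly one piece, which keeps the pieces edge-disjoint. The recursion tree has depth $\Oh(\log(n/r))$. At each level the subgraphs are edge-disjoint, and their total size is $n$ plus the boundary copies made so far. Frederickson's accounting bounds those copies by $\Oh(n/\sqrt{r})$ in total, which follows from the recurrence $B(n)\le c\sqrt{n}+B(\alpha n+\Oh(\sqrt n))+B((1-\alpha) n+\Oh(\sqrt n))$. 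Therefore each level costs $\Oh(n)$, this phase costs $\Oh(n\log n)$ time, and it produces $\Oh(n/r)$ pieces with total boundary $\Oh(n/\sqrt{r})$.

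\textbf{Phase 2 (boundary split).} For each piece $P$ with more than $c\sqrt{r}$ boundary vertices, repeatedly separate $P$ using the boundary vertices as weights. This uses the weighted version of the separator theorem, obtained for example by giving weight to the boundary vertices or by the standard reduction to unweighted separators. Repeat until every piece has $\Oh(\sqrt r)$ boundary vertices. A final pass merges small pieces that share boundary, so that the number of pieces stays $\Oh(n/r)$. Frederickson's counting argument carries over verbatim, because it uses only the $\Oh(\sqrt{m})$ separator bound together with closure under subgraphs. The cost is again $\Oh(\log n)$ levels of linear work, or less, since pieces have size $\Oh(r)$. Any pre-defined extra boundary vertices, as allowed in the definition, are simply added at the end, subject to the stated $\Oh(\sqrt{r})$ bound.

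\textbf{Main obstacle.} The technical work lies in the Phase 2 accounting. We must show that the number of pieces remains $\Oh(n/r)$ and that each piece has boundary $\Oh(\sqrt r)$ after boundary-weighted splitting. We must also check that the weighted separator is still available in linear time for minor-free graphs. For the latter I would first use the standard trick of finding separators with respect to vertex weights through the unweighted algorithm on an auxiliary subdivided graph. This keeps the graph $K_h$-minor-free and changes its size only by a constant factor, with the weights concentrated on $\Oh(\sqrt r)$ vertices. For the former I would import Frederickson's original analysis, which depends only on the separator exponent $1/2$.
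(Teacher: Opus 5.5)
Your plan is the same as the paper's argument: the paper obtains the theorem by plugging the linear-time $\Oh(\sqrt{n})$-size separator algorithm for $K_h$-minor-free graphs into Frederickson's partitioning scheme, and closure under subgraphs together with linear edge density is all the structure needed. A few minor points: the final merging pass is unnecessary, since Frederickson's boundary-splitting phase already yields $\Oh(n/r)$ pieces. If the separator algorithm does not accept vertex weights, the right weighted-to-unweighted gadget is to attach $\Theta(r/b)$ pendant vertices to each of the $b$ boundary vertices of a piece, not to subdivide edges; this preserves $K_h$-minor-freeness and keeps the size $\Oh(r)$. Finally, pre-defined boundary vertices, as in the initial-boundary variant used later in the paper, must be counted in the weights during the recursion rather than added at the end.
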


If $G$ is a planar graph, an $r$-division $\rdiv$ with stronger topological properties is possible.
We call $\rdiv$ an \coldef{$r$-division with few holes}
if
the boundary vertices of each piece $P\in\rdiv$ are distributed
among $\Oh(1)$ \emph{pairwise disjoint} {holes} of~$P$, defined as follows.\footnote{This definition is different from the usual one~\cite{KleinMS13}; see also~\Cref{app:rdiv-few-holes}. Namely, our definition of an $r$-division does not assume a fixed embedding of the entire $G$; it only assumes some fixed embeddings of individual pieces. Moreover, a hole is defined in a more general way that suits dynamic reachability much better.}
Given an embedded piece $P$, a \coldef{hole} is a weakly simple closed curve $h=\partial D$ in the plane (i.e., a closed curve that can possibly touch itself but not cross), where $D$ is a connected open region disjoint from the embedding of $P$. The region $D$ may be unbounded.
Note that a hole of $P$ is also a hole of any subgraph $P'\subseteq P$, assuming $P'$ inherits the embedding from $P$.

The next lemma follows easily from the decomposition algorithm of~\cite{KleinMS13}.
For completeness, it is derived from~\cite{KleinMS13} in~\Cref{app:rdiv-few-holes}.\begin{lemma}\label{lem:rdiv-few-holes-weak}
Let $G$ be a planar graph with $n$ vertices.
For any $r\in [1,n]$, an $r$-division with few holes of $G$ can be computed in $\Oh(n)$ time.
\end{lemma}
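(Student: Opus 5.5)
We obtain the required $r$-division from the decomposition algorithm of Klein, Mozes and Sommer~\cite{KleinMS13}, run on a suitable planar graph, and then check that the boundary of each of its pieces can be covered by $\Oh(1)$ pairwise disjoint holes in the weaker sense defined above.

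\textbf{Reduction to a triangulated plane graph.} Fix an arbitrary planar embedding of $G$, computable in $\Oh(n)$ time. Since~\cite{KleinMS13} works for triangulated graphs of bounded degree, we preprocess as follows. First, triangulate each face by adding artificial edges. Second, split every vertex of high degree into a cycle of degree-three copies, so that the total size stays $\Oh(n)$. Call the resulting plane graph $G'$. Then apply the linear-time algorithm of~\cite{KleinMS13}. It produces an $r$-division of $G'$ with $\Oh(n/r)$ pieces, each piece of size $\Oh(r)$ with $\Oh(\sqrt r)$ boundary vertices. Moreover, in the embedding inherited from $G'$, each piece has only $\Oh(1)$ faces that are not faces of $G'$, and all boundary vertices lie on these faces.

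\textbf{Mapping back to $G$.} Define each piece $P$ of $G$ by restricting a piece $P'$ of $G'$: remove the artificial edges and contract each cycle of copies back into its original vertex. If a copy of some vertex was a boundary vertex in $G'$, declare the original vertex a boundary vertex of $P$. Pieces must be edge-disjoint and cover $G$, so if an original edge belongs to several pieces, assign it to exactly one of them. Any vertex that is thereby shared between pieces is a boundary vertex. The number of pieces and the bounds on piece and boundary size are preserved up to constants.

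\textbf{Holes.} Let $f$ be a face of $P'$ that is not a face of $G'$; its open interior $D$ is a connected region avoiding $P'$. Deleting edges and contracting within $P'$ only merges regions, so the image of $D$ still lies in a connected open region disjoint from the drawing of $P$. Its boundary is a weakly simple closed curve, and this curve passes through every boundary vertex that was on $f$. This gives $\Oh(1)$ holes covering $\bnd P$.

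\textbf{Expected obstacle: disjointness.} The definition requires the holes to be pairwise disjoint, but distinct faces of $P'$ may share vertices. I would handle this by repeatedly merging: whenever two hole regions touch at a vertex, take the region formed by their union together with a small neighbourhood of the shared point, which avoids the drawing elsewhere. A touching point may be a vertex of $P$, and the new region must not contain it, since regions are disjoint from the embedding. So I would instead shrink slightly and join the two regions through a thin corridor between consecutive edges at that vertex, keeping the vertex on the boundary of the merged region. Each merge reduces the number of holes by one, so at most $\Oh(1)$ merges are needed, and all of this is combinatorial bookkeeping within linear time.
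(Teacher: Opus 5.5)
Your overall route matches the paper's: embed, connect and triangulate, apply the Klein--Mozes--Sommer decomposition, restrict each piece to $G$, reassign shared edges, and use the face-holes of the pieces as holes. However, two steps fail as written.

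The first is the vertex splitting, which the paper does not need. The decomposition it invokes only requires a simple, connected, triangulated plane graph $G^+\supseteq G$. So each piece of $G$ arises from a piece $P^+$ by deleting edges only, and every face-hole region of $P^+$ is then directly a connected open region disjoint from the drawing of $P$. (Also, splitting vertices after triangulating leaves $G'$ non-triangulated.) In your version, a piece $P'$ may contain copies of $v$ on several non-consecutive arcs of the cycle $C_v$, without the cycle edges joining them. Turning them back into $v$ is then not a contraction inside $P'$. Topologically it collapses the disk bounded by $C_v$ to the point $v$, and a face-hole of $P'$ that runs through this disk is pinched at $v$ into several distinct faces of $P$. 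Nothing prevents this at many of the $\Oh(\sqrt r)$ boundary vertices of a piece. Hence ``contracting only merges regions'' is false, and $\Oh(1)$ holes does not follow.

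The second is the disjointness repair. Any connected open region avoiding the drawing of $P$ lies inside a single face of $P$. Near $v$, the complement of the drawing splits into the sectors between cyclically consecutive edges, and a thin corridor that avoids $v$ stays in one sector. Two holes touching at $v$ are in general separated there by edges of $P$, i.e., they lie in different faces of $P$. Then no such corridor, and no merged hole, exists. Moreover, two holes can share an entire edge, not only a vertex. The correct fix, which is the paper's, goes the other way. Assign each boundary vertex lying on several holes to one of them, and shrink the other regions slightly so that their boundary curves bypass that vertex (and each other). This is allowed because a hole only has to bound \emph{some} connected open region disjoint from the drawing, not a whole face.
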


\paragraph{Initial boundaries.} The algorithms behind Theorems~\ref{t:rdiv}~and~\Cref{lem:rdiv-few-holes-weak} (see also~\Cref{app:rdiv-few-holes}) both recursively decompose the graph using $\Oh(\sqrt{n})$ sized balanced (cycle) separators~\cite{DBLP:journals/jcss/Miller86}.
On the way, one decomposes pieces $P$ with some boundary set $\bnd{P}$ into parts that balance certain quantities, such as size of $P$, boundary size, or the number of holes.
It is well-known (and has been exploited before, e.g., in~\cite{DBLP:conf/soda/GawrychowskiMWW18,DBLP:conf/soda/MozesS12}) that such decomposition algorithms work with no change if initialized with a pre-defined $\Oh(\sqrt{n})$-sized initial boundary set $\bnd{G}\subseteq V$
with the purpose of producing a decomposition $\rdiv$ satisfying $\bnd{G}\subseteq \bnd{\rdiv}$.
In the case of $r$-divisions with few holes, we require $\bnd G$ to lie on $\Oh(1)$ holes of $G$.
The pre-defined vertices are included in the boundary counts balanced by the recursion; the initial bound $|\bnd G|=\Oh(\sqrt n)$ preserves the usual piece size and boundary estimates.
For details, we refer the reader to, e.g.,~\cite[Section~5]{DBLP:conf/soda/GawrychowskiMWW18}.

\paragraph{Maintaining an $r$-division under edge updates.}
Suppose a graph $G=(V,E)$ equipped with an $r$-division $\rdiv$ is subject to edge updates.
First consider edge deletions.
Note that upon a deletion of edge $e\in E$, if one naturally removes~$e$ from the piece it resides in, the obtained collection of pieces remains an $r$-division in the sense that no new boundary vertices arise.
We assume that pieces never lose their vertices which implies the pieces' boundaries are fixed when $G$ is subject to edge deletions.

Similarly, if $G$ is a planar graph and $\rdiv$ is an $r$-division with few holes, deleting a piece's edge does not invalidate its holes: the piece's embedding remains weakly on the same side of the hole.

Handling insertions requires more care and leaves space for trade-offs.
First, when a new edge $(u,v)$ is inserted, no current piece may be able to accommodate it without breaking its invariants, e.g., concerning its size or the structure of boundary vertices.
A simple but effective way to deal with this problem is to make each inserted edge a new single-edge piece of $\rdiv$.
This generally requires promoting the edge's endpoints $u,v$ to boundary vertices of $\rdiv$ and appropriate pieces.
Note that this way $\Oh(\sqrt{r})$ edges can be inserted without compromising any properties of the original $r$-division asymptotically when $r=\Oh(n^{2/3})$:
the number of new pieces is $\Oh(\sqrt{r})\subseteq \Oh(n/r)$,
and a single piece can get $\Oh(\sqrt{r})$ new boundary vertices worst-case.

In the case of planar graphs and $r$-divisions with few holes, turning some arbitrary $u,v\in V(P)$ into boundary vertices generally breaks the $\Oh(1)$-holes boundary distribution of $P$.
While there are ways to deal with that (see, e.g.,~\cite{subramanian1993fully,DBLP:journals/jcss/Charalampopoulos22}), this will not be needed in our construction.

\subsection{The data structure for minor-free graphs}\label{sec:online-minor-free}

As a warm-up, we first present a simpler reachability data structure that only utilizes emulators from \Cref{sec:emul-vc}.
For this, we can work under a weaker assumption: we consider digraphs that exclude the clique $K_h$ as a minor (in the undirected sense) for any fixed $h \in \nn$.
To the best of our knowledge, this is the first fully dynamic reachability data structure beyond planar digraphs that achieves sublinear query and update time.
In \Cref{sec:conclusion} we discuss possibilities of obtaining better algorithms for this case~as~well.

\begin{theorem}\label{thm:minor-free-dyn-reach}
    Let $h \in \nn$. There exists a deterministic fully dynamic reachability data structure for directed $K_h$-minor-free graphs with $\Oh_h(n^{1-\Omega(1/h^2)})$ worst-case update and query time.
\end{theorem}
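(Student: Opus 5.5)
We follow the framework of \Cref{subsec:online} but replace all local emulators by the VC-dimension-based emulators of \Cref{sec:emul-vc}, so that no planarity-specific tool is needed. Throughout, $d=h-1$: by \Cref{thm:prelim:minor-vc} and \Cref{lem:prelim:truncate}, $\reachsys(P,Z)$ has VC dimension and dual VC dimension at most $d$ for every subgraph $P$ of a $K_h$-minor-free digraph and every $Z\sub V(P)$.

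\textbf{Maintaining an $r$-division.} We keep an $r$-division $\rdiv$ computed with \Cref{t:rdiv} in $\Oh(n\log n)$ time, and rebuild it from scratch every $\sqrt r$ updates. Between rebuilds, a deleted edge is removed from its piece. An inserted edge $(u,v)$ becomes a new single-edge piece, and $u,v$ are promoted to boundary vertices of every piece containing them. To answer a query $(s,t)$ we also temporarily promote $s$ and $t$ in this way. An update or query therefore changes $\Oh(1)$ pieces, either by an edge deletion or by enlarging the boundary of a piece that contains $u$. The boundary of each piece stays $\Oh(\sqrt r)$ in size, and $|\bnd\rdiv|=\Oh(n/\sqrt r+\sqrt r)$. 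We deamortize the rebuild in the standard way: the next division and its emulators are built in the background over the $\sqrt r$ operations before the switch, while the updates that arrive meanwhile are replayed as single-edge pieces. This keeps the rebuild cost at $\Ot(n/\sqrt r)$ worst-case per operation.

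\textbf{Emulators and queries.} For every piece $P$ we store an emulator $H_P$ for $(P,\bnd P)$. It is computed with \Cref{lem:faithful:optimize-general} applied to $G:=P$ and $X=Y=\bnd P$, with some small constant $\delta$. This yields size $\Ot(|\bnd P|^{2-\delta/d})=\Ot(r^{1-\delta/(2d)})$. The running time is
\[\Ot\bigl(r\cdot r^{(1-\delta/d)/2}+r^{(1+(2d+2)\delta)/2}\bigr),\]
and the first term dominates once $\delta\le 1/(2d+2)$, giving $\Ot(r^{3/2-\delta/(2d)})$. Whenever a piece changes, we recompute its emulator from scratch. By \Cref{lem:emulator-union-general}, $H=\bigcup_P H_P$ is an emulator for $(G,\bnd\rdiv)$. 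A query is answered by promoting $s,t$, recomputing the $\Oh(1)$ affected emulators, and running a DFS in $H$. The DFS costs
\[|H|=\Ot\bigl((n/r+\sqrt r)\cdot r^{1-\delta/(2d)}\bigr)=\Ot\bigl(n\,r^{-\delta/(2d)}+r^{3/2-\delta/(2d)}\bigr).\]

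\textbf{Choice of $r$ and the main obstacle.} The main obstacle is the recomputation term $r^{3/2-\delta/(2d)}$. It is superlinear in $r$, so we must take $r=n^{c}$ with $c<2/3$ and then balance $n\,r^{-\delta/(2d)}$ against $r^{3/2-\delta/(2d)}$. This gives $r^{3/2}\approx n$, that is $r\approx n^{2/3}$, and a total cost of $n^{1-\delta/(3d)}$. With $\delta=\Theta(1/d)$ this is $n^{1-\Omega(1/h^2)}$, as claimed; the leading exponent $3/2$ therefore remains harmless. Two details still need checking. First, the additive $\sqrt r$ pieces created by insertions each have constant size and do not inflate $|H|$. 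Second, the worst-case (rather than amortized) bound requires the background rebuild to be spread evenly, and we expect that to be routine.
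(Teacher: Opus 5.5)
Your proposal is correct and follows the paper's proof essentially step by step. Both use $r=n^{2/3}$, phases of $\sqrt r$ updates with inserted edges turned into single-edge pieces, per-piece emulators for $(P,\bnd P)$ recomputed via \Cref{lem:faithful:optimize-general}, a DFS over their union, and $\delta=\Theta(1/h)$, which gives $\Ot(n^{1-1/(6h(h-1))})$ for $\delta=1/(2h)$. One accounting slip: the rebuild is not $\Ot(n/\sqrt r)$ per operation, because recomputing all $\Oh(n/r)$ piece emulators costs $\Ot(n\,r^{1/2-\delta/(2d)})$ per phase, i.e.\ $\Ot(n\,r^{-\delta/(2d)})$ per operation; this coincides with your DFS term, so the final bound is unaffected.
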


Within this subsection we assume $h= \Oh(1)$ and so we hide in the $\Oh$-notation factors depending on~$h$.

Let $r=n^{2/3}$.
The data structure for minor-free graphs operates in phases of $\sqrt{r}=n/r$ edge updates.
When a phase starts, an $r$-division $\rdiv$ of the current graph $G$ is recomputed from scratch in $\Ot(n)$ time using~\Cref{t:rdiv}.
The amortized time required by $r$-division recomputation is thus $\Ot(n/\sqrt{r})$ per update.

Besides the $r$-division, at all times we maintain for each piece $P\in\rdiv$ a reachability emulator $H_P$ for $(P,\bnd{P})$ obtained using~\Cref{lem:faithful:optimize-general}.
Since each piece $P$ is a $K_h$-minor-free graph, by~\Cref{thm:prelim:minor-vc},
the emulator $H_P$ has size
\begin{equation}\label{eq:minor-free-piece-emulator-size}
\Ot\left(|\bnd{P}|\cdot |\bnd{P}|^{1-\delta/(h-1)}+|\bnd{P}|\right)=\Ot\left(r^{1-\delta/(2h-2)}\right)
\end{equation}
and can be computed in time
\begin{equation}\label{eq:minor-free-piece-emulator-time}
\Ot\left(|P|\cdot |\bnd{P}|^{1-\delta/(h-1)}+|\bnd{P}|^{1+(2(h-1)+2)\delta}\right)=\Ot\left(r^{3/2-\delta/(2h-2)}+r^{1/2+h\delta}\right)
\end{equation}
where $\delta\in [0,1]$ is a constant to be set later.

Upon a phase start, all $\Oh(n/r)$ reachability emulators $H_P$ are recomputed from scratch; this takes
\[\Ot\left(\frac{n}{r}\cdot \left(r^{3/2-\delta/(2h-2)}+r^{1/2+h\delta}\right)\right)=\Ot\left(nr^{1/2-\delta/(2h-2)}+nr^{h\delta-1/2}\right)\]
time.
However, since phase initialization happens every $\sqrt{r}$ updates, the amortized time overhead incurred per update is
\begin{equation}\label{eq:minor-free-piece-emulator-recompute}\Ot\left(\frac{1}{\sqrt{r}}\cdot \left(nr^{1/2-\delta/(2h-2)}+nr^{h\delta-1/2}\right)\right)=\Ot\left(n/r^{\delta/(2h-2)}+nr^{h\delta-1}\right).
\end{equation}

\paragraph{Edge updates.}
While a phase proceeds, we maintain $\rdiv$ under edge updates as described in~\Cref{s:rdiv}.
Recall that the $r$-division undergoes $\sqrt{r}=n/r$ updates during a phase.
While edge deletion does not invalidate the $r$-division, we process edge insertion $(u,v)$
by introducing a new single-edge piece and adding $u,v$ to the boundaries of their pieces.
Note that a non-boundary vertex belongs to exactly one piece.
Therefore, we need to update boundaries of at most two pieces.

Upon edge update, at most three pieces (including the potential newly created one) of $\rdiv$ are affected by either an edge set update or a boundary set extension.
For these $\Oh(1)$ pieces, an emulator $H_P$ is recomputed from scratch using~\Cref{lem:faithful:optimize-general} in time estimated in \eqref{eq:minor-free-piece-emulator-time}.

\paragraph{Answering queries.}
Consider a reachability query $(u,v)$.
First, we temporarily make $u$ and $v$ boundary vertices of~$\rdiv$.
This involves recomputing the reachability emulators $H_P$ for at most two pieces $P\in\rdiv$, using time given in \eqref{eq:minor-free-piece-emulator-time}.
The temporary change of $\bnd{\rdiv}$ and all its consequences are reverted once the query procedure terminates.

To answer the query, we simply check whether a $u\to v$ path exists in the graph $H_{u,v}=\bigcup_{P\in \rdiv} H_P$ using a linear-time algorithm.
Note that by~\eqref{eq:minor-free-piece-emulator-size}, the graph $H_{u,v}$ has size $\Ot\left(n/r^{\delta/(2h-2)}\right)$.
Hence, the time needed to process this graph is dominated by \eqref{eq:minor-free-piece-emulator-recompute}.

\begin{lemma}
    The vertex $v$ is reachable from $u$ in $G$ if and only if it is reachable from $u$ in $H_{u,v}$.
\end{lemma}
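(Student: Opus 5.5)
The plan is to apply \Cref{lem:emulator-union-general} to the collection of pieces of $\rdiv$, with the boundary sets $\bnd{P}$ (after the temporary inclusion of $u$ and $v$) playing the role of the terminal sets $X_i$. Let $\rdiv=\{P_1,\ldots,P_k\}$ be the current pieces, including the single-edge pieces created by insertions during the phase. Set $G_i=P_i$, $X_i=\bnd{P_i}$, and $H_i=H_{P_i}$. By construction, each $H_{P_i}$ was produced by \Cref{lem:faithful:optimize-general} with $Y=X=\bnd{P_i}$, so it is a \re for $(P_i,\bnd{P_i})$. For the single-edge pieces we may take the piece itself as its emulator, or apply the same lemma; either works. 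Since the pieces are edge-disjoint and cover $G$, we have $\bigcup_i G_i=G$. Once $u$ and $v$ lie in $\bnd{\rdiv}=\bigcup_i X_i$, the conclusion of the corollary, namely that $H_{u,v}$ is a \re for $(G,\bnd{\rdiv})$, immediately yields the equivalence $u\overset{G}\leadsto v \iff u\overset{H_{u,v}}\leadsto v$.

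The remaining work is to verify the two intersection hypotheses. For the pieces, $V(P_i)\cap V(P_j)=\bnd{P_i}\cap\bnd{P_j}$ holds for $i\neq j$. Indeed, every vertex shared by two pieces is a boundary vertex of both by the definition of an $r$-division. Conversely, a vertex lying in $\bnd{P_i}\cap\bnd{P_j}$ is trivially in both vertex sets. This invariant is preserved during the phase: deletions never remove vertices or change boundaries, and an insertion of $(a,b)$ creates a new piece with boundary $\{a,b\}$ while promoting $a,b$ into the boundaries of every piece containing them. A non-boundary vertex lies in only one piece, so this promotion touches at most two old pieces. The same argument covers the temporary promotion of $u$ and $v$.

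For the emulators, I need $V(H_{P_i})\cap V(H_{P_j})=\bnd{P_i}\cap\bnd{P_j}$. The emulator from \Cref{lem:vsbased:general} (and hence from \Cref{lem:faithful:optimize-general} via \Cref{l:assemble-from-parts}) contains the terminals together with auxiliary vertices: segment-tree nodes, SCC representatives, and copies. I would stipulate that these auxiliary vertices are fresh and private to each piece, which the construction permits since it explicitly chooses them disjoint from $X\cup Y$. Then every vertex shared between $H_{P_i}$ and $H_{P_j}$ is an original vertex of both $\bnd{P_i}$ and $\bnd{P_j}$, which gives the required equality.

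I expect the main obstacle to be bookkeeping rather than mathematics: checking that the boundary invariant really holds after a sequence of insertions combined with the temporary promotions, and that recomputing emulators only for the affected pieces keeps every $H_P$ consistent with its current $\bnd{P}$ and $E(P)$. I would handle this by stating the invariant explicitly: at every moment, for each $P\in\rdiv$, $H_P$ is a \re for $(P,\bnd{P})$ with private auxiliary vertices. I would then check that each update or query step recomputes exactly those $H_P$ whose piece or boundary changed.
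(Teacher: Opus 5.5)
Your proof is correct and follows the paper's argument for the minor-free data structure: you apply \Cref{lem:emulator-union-general} to the pieces $P\in\rdiv$ with terminal sets $\bnd{P}$, using $V(P)\cap V(P')=\bnd{P}\cap\bnd{P'}$, and conclude from $u,v\in\bnd{\rdiv}$. Your explicit check that the auxiliary vertices (SCC representatives, segment-tree nodes, copies) of distinct emulators $H_P$ are private, so that $V(H_P)\cap V(H_{P'})=\bnd{P}\cap\bnd{P'}$, verifies a hypothesis of the corollary that the paper leaves implicit.
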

\begin{proof}
By the definition of $\rdiv$, for any two distinct pieces $P,P'\in \rdiv$, we have $V(P)\cap V(P')=\bnd{P}\cap \bnd{P'}$.
Consequently, by~\Cref{lem:emulator-union-general}, $H_{u,v}=\bigcup_{P\in\rdiv}H_P$ is a reachability emulator for $\left(\bigcup_{P\in\rdiv}P,\,\,\bigcup_{P\in\rdiv}\bnd{P}\right)=(G,\bnd{\rdiv})$.
The lemma follows by $u,v\in \bnd{\rdiv}$ which is guaranteed at this point.
\end{proof}

\paragraph{Running time.}
The amortized time bound of both update and query follows from \eqref{eq:minor-free-piece-emulator-time} and \eqref{eq:minor-free-piece-emulator-recompute}, giving
\[ \Ot\left(n/r^{\delta/(2h-2)}+nr^{h\delta-1}+r^{3/2-\delta/(2h-2)}+r^{1/2+h\delta}\right). \]
Plugging in $r=n^{2/3}$, the bound becomes $\Ot(n^{1-\delta/(3h-3)}+n^{(1+2h\delta)/3})$.
Setting $\delta=1/(2h)$ yields an $\Ot(n^{1-1/(6h(h-1))})$ bound.

So far, we have proved an amortized update bound.
However, the entire amortization here comes from an expensive rebuilding step that happens at a predictable time (it is scheduled precisely every $\sqrt{r}$ updates).
Such rebuilding is subject to a standard deamortization technique of maintaining two copies of the data structure; see, e.g.,~\cite{DBLP:conf/soda/AbrahamCK17} and~\cite[Appendix~B, Theorem~B.1]{BrandNS19arxiv}.\space{}
We have thus proved~\Cref{thm:minor-free-dyn-reach}.

We remark that for a fixed value of $h$ the terms can be balanced in a slightly better way, by incorporating more tedious analysis. This however does not improve the asymptotic term $\Omega(1/ {h^2})$.

\subsection{Decremental boundary emulator}\label{sec:decr-emulator-overview}

We now refer to the decremental transitive closure data structure of Karczmarz~\cite{DBLP:conf/soda/Karczmarz18} that will be needed in our online data structure for planar graphs.
\cite{DBLP:conf/soda/Karczmarz18} showed that a planar digraph can be maintained subject to edge deletions so that reachability queries are supported in $\Ot(\sqrt{n})$ time, and the total update time of the data structure is $\Ot(n)$.
That is, if all the edges are eventually deleted, the amortized update time is $\Ot(1)$ per deletion.
The data structure of~\cite{DBLP:conf/soda/Karczmarz18} operates by explicitly maintaining reachability matrices for boundary sets $\bnd{P}$ of all pieces $P$ of a recursive decomposition of the graph using cycle separators~\cite{DBLP:journals/jcss/Miller86} (see also~\cite{KleinMS13}).
Each $\bnd{P}$ in this decomposition lies on $\Oh(1)$ holes of $P$.
In particular, if the decomposition algorithm in~\cite{DBLP:conf/soda/Karczmarz18} is initialized with a pre-defined boundary set $\bnd{G}$ of size $\Oh(\sqrt{n})$ lying on $\Oh(1)$ pairwise disjoint holes of $G$ (as detailed in~\Cref{app:decremental-emulator}), then the data structure can maintain the full reachability matrix of $\bnd{G}$ within $\Ot(n)$ total time subject to deletions in~$G$.
By exploiting the structural properties of this reachability matrix, we can efficiently convert it into a reachability emulator for $(G,\bnd{G})$ of size $\Ot(|\bnd{G}|)=\Ot(\sqrt{n})$.
As a result, we obtain the following theorem.

\begin{theorem}[\cite{DBLP:conf/soda/Karczmarz18}]\label{thm:decremental-bnd-emulator}
  Let $G=(V,E)$ be a plane digraph. Let $\bnd{G}\subseteq V$ be a subset of $\Oh(\sqrt{n})$ vertices lying on $\Oh(1)$ pairwise disjoint holes of $G$.
    There exists a data structure, initialized in $\Ot(n)$ time, that explicitly maintains a reachability emulator for $(G,\bnd{G})$ of size $\Ot(\sqrt{n})$ when $G$ is subject to edge deletions.
    The total update time of the data structure is $\Ot(n+k\sqrt{n})$, where $k$ is the number of deletions issued.
    The data structure is randomized and the results produced are correct with high probability.
\end{theorem}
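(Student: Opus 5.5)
We plan to use the decremental transitive closure structure of~\cite{DBLP:conf/soda/Karczmarz18}, initialized with the pre-defined boundary $\bnd{G}$. It will maintain the reachability matrix $M$ of $\bnd{G}\times\bnd{G}$ under deletions in $\Ot(n)$ total time. After each deletion we rebuild, in $\Ot(\sqrt{n})$ time, an emulator of size $\Ot(\sqrt{n})$ from $M$.

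First, we initialize the recursive cycle-separator decomposition of~\cite{DBLP:conf/soda/Karczmarz18} so that the root piece is $G$ with boundary $\bnd{G}$. Since $\bnd{G}$ has size $\Oh(\sqrt n)$ and lies on $\Oh(1)$ disjoint holes, it plays the role of the root boundary. The analysis then goes through unchanged and gives $\Ot(n)$ total update time for maintaining $M$. We will also make the structure report the entries of $M$ that change; the number of $1\to 0$ flips is $\Oh(|\bnd{G}|^2)=\Oh(n)$ in total, which fits the budget.

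Second, we turn $M$ into a compact emulator, using planarity of the boundary structure. Fix a pair of holes $h_1,h_2$ and order the vertices on each hole cyclically. For a source $x$ on $h_1$, the set of targets on $h_2$ reachable from $x$ need not be an interval. We therefore follow the standard approach for reachability among few faces (as in Subramanian or Thorup): cut the annulus between $h_1$ and $h_2$ by a path, reducing to the single-face case. In the single-face case, non-crossing arguments (the Monge-like structure of planar reachability) let us represent reachability via $\Oh(\log)$ levels of a divide-and-conquer on the boundary cycle. At each level, one separating dipath set is compressed as in \Cref{lem:faithful:construct}: we keep the first/last relevant vertices on the separator and link them by a chain. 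This gives $\Ot(|\bnd{G}|)$ edges, and every quantity needed is readable from $M$. The alternative is to compute the first/last relevant vertices directly from the decremental structure's internal separator dipaths, which are maintained anyway.

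Third, regarding time: if the emulator is a function of $M$ restricted to $\Ot(\sqrt n)$ "extremal" entries per level, we can recompute it after each deletion in $\Ot(\sqrt{n})$ time by scanning. This gives $\Ot(n+k\sqrt{n})$ in total. Randomization and high-probability correctness are inherited from~\cite{DBLP:conf/soda/Karczmarz18}.

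The main obstacle is the second step: certifying that the reachability matrix among $\Oh(1)$ holes admits an $\Ot(|\bnd{G}|)$-size emulator computable in $\Ot(|\bnd{G}|)$ time from the maintained data, without an $\Oh(|\bnd{G}|^2)$ scan. I would first try the single-hole case via a non-crossing (interval) argument and then reduce the multiple-hole case to it by cutting along dipaths.
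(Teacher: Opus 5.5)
Your first step matches the paper: you run the decremental structure of \cite{DBLP:conf/soda/Karczmarz18} with $\partial G$ as the root boundary and note that $G^*[\partial G]$ undergoes only $\Oh(|\partial G|^2)=\Oh(n)$ flips $1\to 0$. The paper also spells out the preprocessing that turns the holes into simple disjoint faces of a bounded-degree triangulation, which you gloss over. The gap is in your second and third steps, which you yourself flag as the main obstacle: they are not carried out, and the tools you propose would not work.
\begin{itemize}
\item Compressing ``separating dipath sets as in \Cref{lem:faithful:construct}'' needs dipath separators of $G$ and the linear-time $\mathrm{first}/\mathrm{last}$ computation of \Cref{lem:faithful:dipath} on all of $G$. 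That is $\Theta(n)$ time per rebuild and $\Theta(kn)$ in total. These values cannot be read from $M$, and the decremental structure does not maintain them: its separators are undirected cycle separators, not dipaths.
\item Cutting the annulus between two holes along a path is unjustified. Directed paths between terminals may cross the cut, and the cut vertices lie outside $\partial G$, so $M$ carries no information about them.
\item Your $\Ot(\sqrt n)$-per-deletion rebuild assumes some unspecified way of locating the ``extremal'' entries. Without it, rebuilding from $M$ means scanning $\Theta(|\partial G|^2)=\Theta(n)$ entries after every deletion.
\end{itemize}

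The missing ingredient is a fixed block structure of $M$ that does not depend on its entries (\Cref{lem:reach-matrix-parts}, from \cite{DBLP:conf/stoc/ItalianoKLS17}).
\begin{itemize}
\item $\partial G\times\partial G$ is partitioned into blocks $S_i\times T_i$ determined only by the cyclic order of $\partial G$ on its $\Oh(1)$ holes. Pairs on distinct holes are covered too, so no cutting is needed.
\item Let $T_i'\subseteq T_i$ be the block's active columns, those containing at least one $1$ in the block. Every row of the block, restricted to $T_i'$, is a union of $\Oh(1)$ intervals. The choice of blocks together with this restriction is what repairs the non-interval phenomenon you observed.
\item Every vertex lies in $\Oh(\log n)$ blocks, so $\sum_i(|S_i|+|T_i|)=\Ot(\sqrt n)$ and $\sum_i|S_i|\cdot|T_i|=\Oh(n)$.
\end{itemize}
Since the blocks never change, you can maintain under the reported flips the sorted set $T_i'$ (via column counts) and, for each row, its $\Oh(1)$-interval representation over $T_i'$. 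You move endpoints when a column dies, and split or merge intervals when an entry dies. This costs $\Ot(|T_i|)$ total per row, hence $\Ot(\sum_i|S_i|\cdot|T_i|)=\Ot(n)$ overall. After each deletion, build a segment tree over each $T_i'$ and add edges from every $s\in S_i$ to the $\Oh(\log n)$ canonical nodes of its intervals, as in \Cref{lem:vsbased:general}. This gives an emulator for $(G,S_i,T_i)$ of size $\Ot(|S_i|+|T_i|)$. \Cref{l:assemble-from-parts} merges these into an emulator for $(G,\partial G)$ of size $\Ot(\sqrt n)$ in $\Ot(\sqrt n)$ time, giving the $\Ot(n+k\sqrt n)$ total bound.
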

Above, by high probability correctness we mean that for any $c>0$, the constants in the data structure can be set in such a way that the probability of any of the $k+1$ emulators output being incorrect is at most $n^{-c}$.
In other words, this bounds the probability that we receive any
invalid \re for $(G,\bnd{G})$ during a fixed sequence of edge deletions.
In particular, we assume that the sequence of updates to $G$ is independent of the random bits used by the data structure.
Since in this paper we only consider reachability queries returning a single bit that is a function of the current graph~$G$, the required independence is guaranteed unless the data structure ever errs, which happens with low probability.
The high probability correctness of the data structure described in the following section is defined likewise.
More specifically, over a polynomial sequence of updates to $G$, we will be applying~\Cref{thm:decremental-bnd-emulator} for a polynomial number of auxiliary graphs, potentially padded to have $\Theta(\operatorname{poly}(n))$ vertices. The high probability correctness of the data structure's answers will follow via the union bound.

In~\Cref{app:decremental-emulator} we discuss~\cite{DBLP:conf/soda/Karczmarz18} in more detail and show how it is modified to yield~\Cref{thm:decremental-bnd-emulator}.

\subsection{The data structure for planar graphs}
Our data structure for planar graphs follows the same overall template as the earlier data structure for minor-free graphs (\Cref{sec:online-minor-free}) but the way piecewise emulators are updated is much more complex.
Moreover, the inserted edges' endpoints are treated differently than the $r$-division's boundary vertices.

Let $s,r\in [1,n]$, $s<r$, and $\delta\in [0,1]$ be parameters to be set later.
Again, the data structure operates in phases of $\sqrt{r}$ edge updates.
When a phase starts:
\begin{enumerate}
\item An {\em $r$-division with few holes} $\rdiv$ of the current graph $G$ is recomputed from scratch in $\Oh(n)$ time using~\Cref{lem:rdiv-few-holes-weak}.
\item For each piece $P\in \rdiv$, an $s$-division $\rdiv_P$ of $P$ is computed using~\Cref{t:rdiv}, with the boundary initialized to $\bnd{P}$ (as discussed in~\Cref{s:rdiv}), so that $\bnd{P}\subseteq \bnd{\rdiv_P}$ holds.

\end{enumerate}
While a phase proceeds, we denote by $G_0$ the graph at the beginning of the phase, whereas by~$G$ the current version of the graph.
Let $E^-\subseteq E(G_0)$ denote the edges deleted at least once since the phase started.
Let $E^+\subseteq E(G)$ be the currently present edges inserted in the current phase, including reinsertions of edges from $E^-$.

Denote by $G^-$ the graph $G_0$ with edges $E^-$ removed.
Hence, we have $G=G^-\cup E^+$ at all times.
Finally, let $B\subseteq V\setminus \bnd\rdiv$ be the set of endpoints of edges that ever get inserted to $E^+$ in a phase, except boundary vertices of~$\rdiv$.
Observe that $|B|\leq 2\sqrt{r}$.

Recall from the discussion in~\Cref{s:rdiv} that $r$-divisions (with few holes) remain valid under edge deletions.
Thus, $\rdiv$ computed for~$G_0$ remains valid for~$G^-$ throughout the phase.
Similarly, the piecewise $s$-divisions $\rdiv_P$ remain valid for the respective subgraphs $P\cap G^-$.
In the following, we will thus treat $\rdiv$ and all local $\rdiv_P$ as divisions for the decremental graph $G^-$ and its subgraphs, respectively.

For a piece $Q$ (of either $\rdiv$, or of $\rdiv_P$ for some $P\in\rdiv$), put $B_Q:=B\cap V(Q)$.
We use the following auxiliary data structures for each $P\in \rdiv$ during a phase:
\begin{enumerate}
\item A decremental emulator data structure from~\Cref{thm:decremental-bnd-emulator} (with the input boundary set $\bnd{P}$).
Denote by $H^*_P$ the $\Ot(\sqrt{r})$-sized emulator it maintains.
\item For each subpiece $Q\in \rdiv_P$, a \emph{faithful} emulator $H_{P,Q}$ for  $(Q,\bnd{Q}\cup B_Q)$, and a \emph{faithful} reverse emulator $H_{P,Q^r}$ for $(Q^r,\bnd{Q}\cup B_Q)$, both of size $\Ot(|\bnd{Q}\cup B_Q|)$. Recall that $Q^r$ is obtained from $Q$ by reversing all the edges.
\item Based on the above, an emulator $H_P$ for $(P,\,\bnd{P}\cup B_P)$ of size $\Ot\left(|B_P|\cdot r^{1/2-\delta/6} + \sqrt{r}\right)$.
The \linebreak construction of $H_P$ is given in~\Cref{lem:recompute-emulator}.
\end{enumerate}
By \Cref{t:rdiv},~\Cref{lem:rdiv-few-holes-weak},~\Cref{thm:decremental-bnd-emulator}~and~\Cref{lem:faithful:construct}, all the described data structures can be initialized in near-linear time when a phase starts.
Indeed, since $B=\emptyset$ at that time, we can simply put $H_P:=H_P^*$.
Consequently, the amortized time spent on this per update is $\Ot(n/\sqrt{r})$.

Let us now explain how the emulator $H_P$ can be computed from the other maintained emulators.
\begin{lemma}\label{lem:recompute-emulator}
    Let $P\in\rdiv$. Suppose $H^*_P$ and $H_{P,Q}$, $H_{P,Q^r}$ for all $Q\in \rdiv_P$ are given.
    Then $H_P$ can be computed in $\Ot\left(r^{3/2-\delta/6}/\sqrt{s}+r^{(1+8\delta)/2}\right)$ time.
\end{lemma}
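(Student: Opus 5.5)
We build $H_P$ as the union of three emulators: the decremental emulator $H^*_P$ for $(P,\bnd{P})$, an asymmetric emulator for $(P,\,B_P,\,\bnd{P}\cup B_P)$, and an asymmetric emulator for $(P,\,\bnd{P}\cup B_P,\,B_P)$. The latter is obtained from an emulator for $(P^r,\,B_P,\,\bnd{P}\cup B_P)$ by reversing its edges. Combining them by \Cref{l:assemble-from-parts} yields an emulator for $(P,\bnd{P}\cup B_P)$, because every pair in $(\bnd{P}\cup B_P)^2$ either lies in $\bnd{P}\times\bnd{P}$ or has an endpoint in $B_P$. In the reversed case, \Cref{l:assemble-from-parts} is applied with $(X_i,Y_i)=(\bnd{P}\cup B_P,B_P)$ to the reversal of an emulator for the reversed graph, which is an emulator for $(P,\bnd{P}\cup B_P,B_P)$. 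The size of $H^*_P$ is $\Ot(\sqrt r)$, so it suffices to build each asymmetric emulator with size $\Ot(|B_P|\cdot r^{1/2-\delta/6}+\sqrt r)$ within the stated time.

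For the forward asymmetric emulator, let $H_0=\bigcup_{Q\in\rdiv_P}H_{P,Q}$. Pieces of $\rdiv_P$ meet only in boundary vertices. Each $B_Q$ consists of non-boundary vertices of $\rdiv$, and we add them to the terminal set of their subpiece, so a vertex of $B_P$ lying in several subpieces is in $\bnd{\rdiv_P}$ anyway. Also, distinct $H_{P,Q}$ share only common terminals, since their Steiner vertices can be taken disjoint. Hence \Cref{lem:emulator-union-general} shows that $H_0$ is a \fre for $(P,\,X)$ with $X=\bnd{\rdiv_P}\cup B_P\supseteq\bnd{P}\cup B_P$.

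Its size is $\Ot(\sum_Q|\bnd{Q}\cup B_Q|)=\Ot(r/\sqrt s+\sqrt r)=\Ot(r/\sqrt s)$, because $\rdiv_P$ has $\Oh(r/s)$ pieces with $\Oh(\sqrt s)$ boundary each and $|B_P|\le 2\sqrt r$. Since $P$ (the current piece, a subgraph of a planar graph) is planar, \Cref{thm:prelim:planar-vc} and \Cref{lem:prelim:truncate} give VC and dual VC dimension at most $3$ for $\reachsys(P,X)$. We then apply \Cref{lem:faithful:optimize} with $d=3$, $Y=B_P$ and $Z=\bnd{P}\cup B_P$, where $|Z|=\Oh(\sqrt r)$. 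This produces an emulator for $(P,B_P,Z)$ of size $\Ot(|B_P|\cdot r^{(1-\delta/3)/2}+\sqrt r)=\Ot(|B_P|\,r^{1/2-\delta/6}+\sqrt r)$. The running time is
\[
\Ot\bigl(|H_0|\cdot r^{1/2-\delta/6}+r^{(1+8\delta)/2}\bigr)=\Ot\bigl(r^{3/2-\delta/6}/\sqrt s+r^{(1+8\delta)/2}\bigr),
\]
using $(2d+2)\delta=8\delta$. The reverse direction is symmetric, using the reverse emulators $H_{P,Q^r}$ and the fact that the dual VC bound also holds for $P^r$ by \Cref{thm:prelim:dual-vc}.

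The main obstacle is verifying the hypotheses of \Cref{lem:emulator-union-general}. One must check that $V(Q)\cap V(Q')$ equals the intersection of the terminal sets; here the argument that $B$-vertices are non-boundary vertices of $\rdiv$ is used, together with the convention that a $B$-vertex shared by subpieces is in $\bnd{\rdiv_P}$. One must also ensure that the Steiner vertices of different $H_{P,Q}$ are renamed apart. The remaining steps are direct invocations of the earlier lemmas and arithmetic.
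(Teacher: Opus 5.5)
Your proposal is correct and follows the paper's proof essentially step for step: you split $(\bnd{P}\cup B_P)^2$ into $\bnd{P}^2$ and the two asymmetric parts, glue the subpiece faithful emulators into $H_0$ via \Cref{lem:emulator-union-general}, extract the asymmetric emulators with \Cref{lem:faithful:optimize} for $d=3$, and combine everything with \Cref{l:assemble-from-parts}. Two minor remarks: the Steiner vertices of the $H_{P,Q}$ are automatically disjoint, since they lie in $V(Q)\setminus\bnd{Q}$, so no renaming is needed; and for the reverse direction you can apply \Cref{thm:prelim:planar-vc} directly to the planar digraph $P^r$, without going through \Cref{thm:prelim:dual-vc}.
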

\begin{proof}
Note that $B_P\subseteq V(P)\setminus \bnd{P}$.
Recall that $H^*_P$ is a reachability emulator for $(P,\bnd{P})$.
We have $(\bnd{P}\cup B_P)^2=\bnd{P}^2\cup ((\bnd{P}\cup B_P)\times B_P)\cup (B_P\times (\bnd{P}\cup B_P))$.
Thus, using~\Cref{l:assemble-from-parts} we can obtain the desired emulator $H_P$ from $H_P^*$ and reachability emulators for $(P,\bnd{P}\cup B_P,B_P)$ and $(P,B_P,\bnd{P}\cup B_P)$, constructed as follows.

Let us first consider computing an emulator for $(P,B_P,\bnd{P}\cup B_P)$.
To this end, let $H_0=\bigcup_{Q\in \rdiv_P}H_{P,Q}$.
Since for all $Q_1,Q_2\in \rdiv_P$ we have
$V(Q_1)\cap V(Q_2)\subseteq \bnd{Q_1}\cap \bnd{Q_2}$, \Cref{lem:emulator-union-general} implies that
$H_0$ is a faithful emulator for $\left(\bigcup_{Q\in\rdiv_P}Q,\,\,\bigcup_{Q\in\rdiv_P}(\bnd{Q}\cup B_Q)\right)=(P,\,\bnd{\rdiv_P}\cup B_P)$ of size
\[\Ot\left(\sum_{Q\in\rdiv_P}\left(|\bnd{Q}|+|B_Q\setminus\bnd{Q}|\right)\right)=\Ot\left(\frac{r}{s}\cdot \sqrt{s}+|B_P|\right)=\Ot\left(\frac{r}{\sqrt{s}}+|B_P|\right)=\Ot\left(\frac{r}{\sqrt{s}}\right).\]
Above, we have used $|B_P|=\Oh(\sqrt{r})$ and $s<r$.
Consequently, by~\Cref{thm:prelim:planar-vc}~and~\Cref{lem:faithful:optimize}, a reachability emulator
for $(P,B_P,\bnd{P}\cup B_P)$ of size \[
\Ot\left(|B_P|\cdot |\bnd{P}\cup B_P|^{1-\delta/3} + |\bnd{P}\cup B_P|\right)=\Ot\left(|B_P|\cdot r^{1/2-\delta/6} + \sqrt r\right)\]
can be computed in time
\[ \Ot\left(|H_0|\cdot r^{1/2-\delta/6}+|\bnd{P}\cup B_P|^{1+8\delta}\right)=\Ot\left(r^{3/2-\delta/6}/\sqrt{s}+r^{(1+8\delta)/2}\right). \]

To compute the symmetric emulator for $(P,\bnd{P}\cup B_P,B_P)$ we proceed analogously, but using the ``reverse'' data structures, as explained next.
Note that $\bigcup_{Q\in \rdiv_P}H_{P,Q^r}$ is a faithful reachability emulator for $(P^r,\bnd{\rdiv_P}\cup B_P)$.
We can thus obtain an emulator for $(P^r,B_P,\bnd{P}\cup B_P)$ using~\Cref{lem:faithful:optimize}.
A reverse of that emulator is an emulator for $(P,\bnd{P}\cup B_P,B_P)$.

The desired size bound of $H_P$ is obtained by taking into account the $\Ot(\sqrt{r})$ size bound of $H^*_P$.
\end{proof}

\paragraph{Edge insertion.}
Upon insertion of an edge $e=(u,v)$, we append it to $E^+$.
Recall that we only process pieces of $G^-$ so $e$ is not being
added to any of the maintained emulators.
However, the vertices $\{u,v\}\setminus \bnd{\rdiv}$ need to be added to the set $B$.
Suppose that one of them, say $u$, does not belong to $\bnd {\rdiv}$.
Let $P$ be the unique piece from $\rdiv$ containing $u$.
If $u \in \bnd{\rdiv_P}$ then $u$ is already considered in the faithful emulators for the $s$-division $\rdiv_P$.
Otherwise, there is a unique piece $Q$ from $\rdiv_P$ containing $u$.
In this case, we recompute the emulators $H_{P,Q},H_{P,Q^r}$ from scratch in $\Ot(|Q|)=\Ot(s)$ time using~\Cref{lem:faithful:construct}.
Next, in both cases, we recompute $H_P$ using~\Cref{lem:recompute-emulator}
in $\Ot\left(r^{3/2-\delta/6}/\sqrt{s}+r^{(1+8\delta)/2}\right)$ time.
Summing up, single edge insertion incurs time
\begin{equation}\label{eq:edge-insert}
\Ot\left(r^{3/2-\delta/6}/\sqrt{s}+r^{(1+8\delta)/2}+s\right).
\end{equation}

\paragraph{Edge deletion.}
Consider a deletion of edge $e\in E$.
Note that the set $B$ does not grow due to edge deletions.
If $e\in E^+$, that is, $e$ was inserted in the current phase, we simply remove $e$ from that set and terminate.

Otherwise, $e$ is an edge of a unique piece $P\in \rdiv$
and a unique piece $Q\in \rdiv_P$.
We delete $e$ from the pieces it resides in.
Afterwards, the maintained emulators need to be updated to reflect the changes in~$P$ and $Q$.
We thus pass the deletion to the data structure of~\Cref{thm:decremental-bnd-emulator}
which updates the emulator~$H_P^*$. The resulting emulator is correct with high probability, as discussed in~\Cref{sec:decr-emulator-overview}.
We will analyze the amortized time of this operation in the end.
Then, we recompute the faithful emulators $H_{P,Q}$, $H_{P,Q^r}$, and subsequently $H_P$ as in the case of edge insertion.
The cost of these operations is estimated in \eqref{eq:edge-insert}.

\paragraph{Query.}
To handle a query about a $u\to v$ path in the current $G$, we first proceed as in the case of edge insertion by temporarily
adding $\{u,v\}\setminus (\bnd{\rdiv}\cup B)$ to $B$ and updating the maintained emulators accordingly.
We then check whether a $u\to v$ path exists in the graph \[H_{u,v}=\left(\bigcup_{P\in\rdiv}H_P\right)\,\cup\, (\bnd{\rdiv}\cup B,E^+).\]
Recall that $\sum_{P\in\rdiv} |B_P| = |B| \le 2\sqrt r$ and so the graph $H_{u,v}$ has size
\[ \sum_{P\in\rdiv}\Ot\left(\sqrt{r}+|B_P|\cdot r^{1/2-\delta/6}\right)+\Oh(\sqrt{r})=\Ot\left(\frac{n}{\sqrt{r}}+r^{1-\delta/6}\right).\]
The above also bounds the cost of graph search on $H_{u,v}$.
The correctness is analyzed in the  lemma below.
\begin{lemma}
    The vertex $v$ is reachable from $u$ in $G$ if and only if it is reachable from $u$ in $H_{u,v}$.
\end{lemma}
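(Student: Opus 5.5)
The plan is to show that $H_{u,v}$ is a \re for $(G,\bnd{\rdiv}\cup B)$, using \Cref{lem:emulator-union-general} and \Cref{lem:emulator-union}. The claim then follows because, after the temporary extension of $B$, both $u$ and $v$ lie in $\bnd{\rdiv}\cup B$.

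First, I treat each piece $P\in\rdiv$ of the decremental graph $G^-$ with terminal set $X_P:=\bnd{P}\cup B_P$. By construction (\Cref{lem:recompute-emulator}), $H_P$ is a \re for $(P,X_P)$, provided all maintained emulators are up to date; $H^*_P$ is correct with high probability by \Cref{thm:decremental-bnd-emulator}. Next I check the disjointness hypotheses of \Cref{lem:emulator-union-general}. For distinct $P,P'\in\rdiv$ we have $V(P)\cap V(P')=\bnd{P}\cap\bnd{P'}$. Every vertex of $B$ lies outside $\bnd{\rdiv}$, so it belongs to a unique piece. Hence $X_P\cap X_{P'}=\bnd{P}\cap\bnd{P'}=V(P)\cap V(P')$.

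For the emulators themselves, I assume, as is implicit in \Cref{l:assemble-from-parts} and the segment-tree construction, that all auxiliary (Steiner) vertices of different $H_P$ are fresh and distinct. Then $V(H_P)\cap V(H_{P'})=X_P\cap X_{P'}$ as well. Since $G^-=\bigcup_{P}P$, \Cref{lem:emulator-union-general} gives that $\bigcup_P H_P$ is a \re for $(G^-,\bnd{\rdiv}\cup B)$.

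Second, I add the inserted edges. Every edge of $E^+$ has both endpoints in $\bnd{\rdiv}\cup B$, because endpoints of inserted edges outside $\bnd{\rdiv}$ are put into $B$. So $J:=(\bnd{\rdiv}\cup B,E^+)$ is a digraph with $V(J)\subseteq \bnd{\rdiv}\cup B$. Applying \Cref{lem:emulator-union} to the pair $(G^-,\bnd{\rdiv}\cup B)$ and $(J,V(J))$ is the same argument as \Cref{lem:emulator-union-simple}, but without needing faithfulness. It shows that $H_{u,v}$ is a \re for $(G^-\cup E^+,\bnd{\rdiv}\cup B)=(G,\bnd{\rdiv}\cup B)$. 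The required vertex intersections hold because $V(J)$ is contained in the terminal set on both sides.

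The main obstacle is bookkeeping rather than any deep step. I have to verify that after a deletion, an insertion, or a temporary query extension of $B$, every $H_P$ with $B_P$ changed or with a deleted edge was indeed recomputed. This covers the case where $u\in\bnd{\rdiv_P}$ and the subpiece emulators did not need recomputation, which is fine since they already contain $u$ as a terminal. I also have to verify that the vertex-intersection conditions are not broken by shared Steiner vertices, which I handle by relabeling them per piece.
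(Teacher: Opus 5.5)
Your proposal is correct and follows the paper's proof essentially step for step. You apply \Cref{lem:emulator-union-general} to the piecewise emulators $H_P$ with terminal sets $\bnd{P}\cup B_P$, using $B\cap\bnd{\rdiv}=\emptyset$ and the uniqueness of the piece containing a non-boundary vertex to get $V(P)\cap V(P')=(\bnd{P}\cup B_P)\cap(\bnd{P'}\cup B_{P'})$, and then apply \Cref{lem:emulator-union} with $(\bnd{\rdiv}\cup B,E^+)$. Your remark that the auxiliary vertices of different $H_P$ must be fresh, so that $V(H_P)\cap V(H_{P'})$ meets the corollary's hypothesis, makes explicit a condition the paper leaves implicit.
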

\begin{proof}
Let $P,P'\in\rdiv$.
Note that $V(P)\cap V(P')=\bnd{P}\cap\bnd{P'}$.
Moreover, recall that the sets $B$ and $\bnd{\rdiv}$ are disjoint, so $B_P$ and $B_{P'}$ are also disjoint.
As a result, we have $V(P)\cap V(P')=(\bnd{P}\cup B_P)\cap (\bnd{P'}\cup B_{P'})$.
We can thus apply~\Cref{lem:emulator-union-general} to the emulators $H_{P}$, $P\in\rdiv$, obtaining that $\bigcup_{P\in\rdiv} H_P$ is a reachability emulator for
\[ \left(\bigcup_{P\in\rdiv} P,\,\,\bigcup_{P\in\rdiv}(\bnd{P}\cup B_P)\right)=(G^-,\,\bnd{\rdiv}\cup B).\]
But the graph $(\bnd\rdiv\cup B,E^+)$ is trivially an emulator for itself,
so by applying~\Cref{lem:emulator-union} to $\bigcup_{P\in\rdiv}H_P$ and the graph $(\bnd\rdiv\cup B,E^+)$, we conclude that their union $H_{u,v}$ is a reachability emulator for $(G^-\cup E^+,\bnd{\rdiv}\cup B)=(G,\bnd{\rdiv}\cup B)$.
In particular, since $\{u,v\}\subseteq \bnd{\rdiv}\cup B$, and by the definition of a reachability emulator, an $u\to v$ path exists in $G$ if and only if an $u\to v$ path exists in $H_{u,v}$.
\end{proof}
After testing reachability in $H_{u,v}$, we revert the inclusion of the temporarily inserted vertices in $B$.
The total cost of the query procedure is thus given by
\eqref{eq:edge-insert} for
the temporary changes made to $B$, and
$\Ot\left(\frac{n}{\sqrt{r}}+r^{1-\delta/6}\right)$ for running graph search on $H_{u,v}$, that is:
\begin{equation}\label{eq:online-query-cost}
    \Ot\left(\frac{n}{\sqrt{r}}+r^{1-\delta/6}+r^{3/2-\delta/6}/\sqrt{s}+r^{(1+8\delta)/2}+s\right).
\end{equation}

\paragraph{Running time.}
Recall that in each phase, all the used data structures are initialized in $\Ot(n)$ time. Hence, the amortized phase initialization cost per update is $\Ot(n/\sqrt{r})$.
Denote by $d_P$ the total number of edges deleted from piece $P$ in a phase.
Clearly, we have $\sum_{P\in\rdiv}d_P\leq \sqrt{r}$.
The total time spent on updating the emulators $H^*_P$ maintained by data structures of~\Cref{thm:decremental-bnd-emulator} in a single phase is hence
\[ \Ot\left(\sum_{P\in\rdiv}|P|+d_P\sqrt{r}\right)=\Ot\left(n+\sqrt{r}\sum_{P\in\rdiv}d_P\right)=\Ot(n+r)=\Ot(n). \]
Again, this is $\Ot(n/\sqrt{r})$ amortized time per update.

Note that the time cost~\eqref{eq:online-query-cost} of the query procedure dominates the cost of performing edge insertions and deletions (excluding the time spent in the decremental emulator data structure of~\Cref{thm:decremental-bnd-emulator}).
It also subsumes the aforementioned $\Ot(n/\sqrt{r})$ amortized cost of initialization and maintaining the emulators $H^*_P$.

Hence, to minimize the maximum of these time costs, it is enough to choose $r,s,\delta$, so that the query cost~\eqref{eq:online-query-cost} is minimized.
Intuitively, picking $s$ to be slightly but polynomially smaller than $r$ and a very small $\delta>0$ leads to an $\Ot(n/\sqrt{r}+r^{0.99})$ bound.
The optimal choice is $r=n^{37/55}$, $s=n^{73/110}$, $\delta=9/74$;
then the bound becomes $\Ot(n^{73/110})=\Oh(n^{0.664})$.
We have thus proved~\Cref{thm:planar-dyn-reach}.

\section{Improved VC dimension bound for planar reachability set systems}
\label{sec:improved-vc}

In this section we prove \Cref{thm:prelim:planar-vc} from \Cref{sec:prelim} improving the bound on the VC dimension of planar reachability set systems from four to three.

\lemPlanarVC*
\begin{proof}
    By \Cref{thm:prelim:dual-vc} it suffices to show that VC dimension of $\reachsys(G)$ is at most~three for every planar digraph.
    Suppose for the sake of contradiction that there exists a planar digraph $G$ with $v_1,v_2,v_3,v_4 \in V(G)$ such that $Y = \{v_1,v_2,v_3,v_4\}$ is shattered by $\reachsys(G)$.
    Let $\pi$ be some topological ordering of the strongly connected components of $G$.
    For each pair $(i,j)$, $1\le i<j \le 4$, there exists $u_{ij} \in V(G)$ such that $\reach_G(u_{ij}, Y) = \{v_i,v_j\}$; among possibly multiple choices for $u_{ij}$ we choose a vertex whose strongly connected component, denoted $U_{ij}$, is the latest in the ordering~$\pi$.

    Let $P_{ij}^{i}$ and $P_{ij}^{j}$ be the corresponding dipaths from $u_{ij}$ to $v_i$ and $v_j$, respectively.
    Next, for  $i \in [4]$ let $V_i$ denote the set of all vertices appearing on dipaths $P_{-}^{i}$ minus the vertices in the sets $U_{-}$.

    \begin{claim}\label{claim:prelim:antichain}
        For each distinct $i, j \in [4]$ and each vertex $x \in V_i$ it holds that $v_j \not\in \reach_G(x,Y)$.
    \end{claim}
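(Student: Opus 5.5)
The argument is a contradiction that uses the maximality (latest in $\pi$) of the chosen components $U_{ij}$. Fix distinct $i,j\in[4]$ and $x\in V_i$. By definition, $x$ lies on a dipath $P_{ab}^{i}$ for some pair $\{a,b\}\ni i$, and $x\notin U_{ab}$. Suppose towards a contradiction that $v_j\in\reach_G(x,Y)$.

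First I would place $\reach_G(x,Y)$ inside a known set. Since $x$ lies on $P_{ab}^i$, it is reachable from $u_{ab}$, so $\reach_G(x,Y)\sub\reach_G(u_{ab},Y)=\{v_a,v_b\}$. Since $x$ also reaches $v_i$ along the rest of the path, $v_i\in\reach_G(x,Y)$. If $v_j\in\reach_G(x,Y)$ with $j\ne i$, then $j$ must be the other index of the pair, i.e.\ $\{a,b\}=\{i,j\}$. In that case $\reach_G(x,Y)=\{v_i,v_j\}$, so $x$ is itself a valid candidate for $u_{ij}$.

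Next I compare the positions of components in $\pi$. Since $u_{ij}\overset{G}\leadsto x$, the strongly connected component of $x$ is equal to or later than $U_{ij}$ in the topological ordering $\pi$. We assumed $x\notin U_{ij}$, so the component of $x$ is strictly later than $U_{ij}$. This contradicts the choice of $u_{ij}$ as a vertex whose component is latest in $\pi$ among all vertices $w$ with $\reach_G(w,Y)=\{v_i,v_j\}$. Hence $v_j\notin\reach_G(x,Y)$.

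The only delicate step is the first one: I must check that $x$ cannot lie on a path $P_{ab}^i$ with $j\notin\{a,b\}$ and still reach $v_j$. This is handled by the containment $\reach_G(x,Y)\sub\{v_a,v_b\}$, which follows from $u_{ab}\leadsto x$. I should also note that $V_i$ removes vertices in all sets $U_{-}$, and the argument only needs $x\notin U_{ij}$. The claim therefore holds, with no planarity used.
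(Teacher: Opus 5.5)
Your proof is correct and follows essentially the same argument as the paper. Both show that $x$ lying on some $P^i_{ab}$ forces $\reach_G(x,Y)\sub\{v_a,v_b\}$, hence $\{a,b\}=\{i,j\}$ and $\reach_G(x,Y)=\{v_i,v_j\}$. Then $u_{ij}\leadsto x$ together with $x\notin U_{ij}$ contradicts the choice of $U_{ij}$ as the latest such component in $\pi$.
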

    \begin{innerproof}
        Suppose that $x \in V_i$ and $v_j \in \reach_G(x,Y)$.
         Then $ \{v_i,v_j\} \sub \reach_G(x, Y)$.
         Hence, the only vertex of the form $u_{-}$ that can reach $x$ is $u_{ij}$ which implies
        $x \in V(P_{ij}^{i})$ and in fact  $ \{v_i,v_j\} = \reach_G(x, Y)$.
        Since $V_i \cap U_{ij} = \emptyset$ we obtain $x \not\in U_{ij}$.
        It follows that the strongly connected component of $x$ is later than $U_{ij}$ in the ordering $\pi$.
         This contradicts the choice of $u_{ij}$.
    \end{innerproof}

    It follows that the sets $V_1,V_2,V_3,V_4$ are pairwise disjoint.

\begin{figure}[t]
    \centering
    \includegraphics[width=0.8\textwidth]{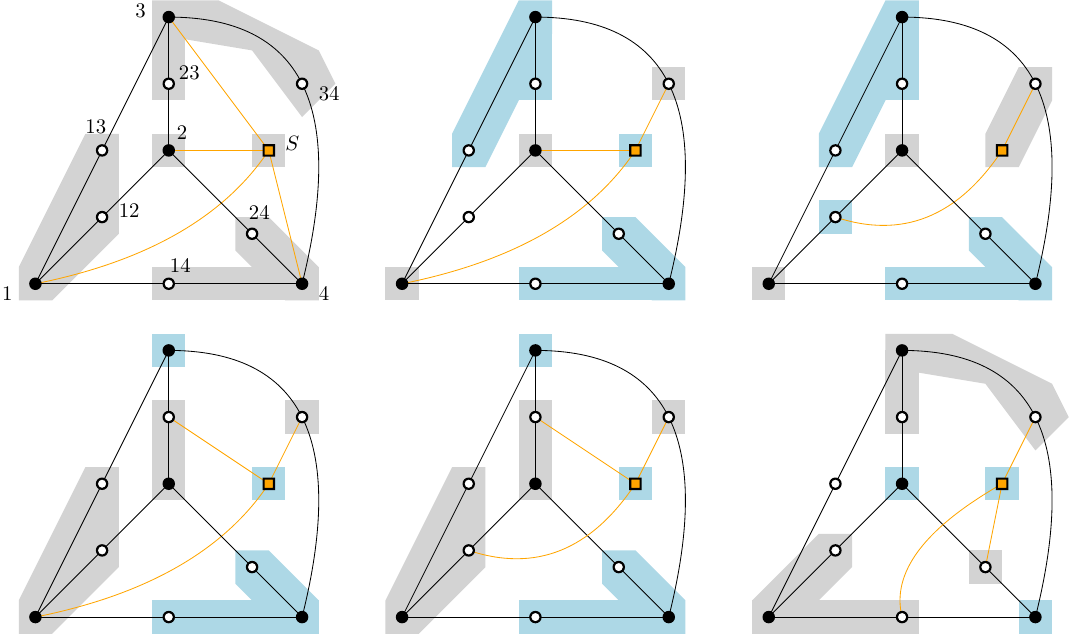}
    \caption{Illustration of the graph $H$ in the six scenarios in the proof of \Cref{thm:prelim:planar-vc}.
    The black vertices originate from contracting the sets $V_1,V_2,V_3,V_4$ and
    the white disks represent the vertices corresponding to the sets $U_-$, as marked with the labels in the top left graph.
    These vertices together induce a subdivision of $K_4$.
    The orange square is the vertex created by contracting the set $S$.
    For each $i \in [4]$ it must either be adjacent to the $V_i$-vertex or have a common neighbor with the $V_i$-vertex.
    Regardless of the choice of a face to place the $S$-vertex, drawing one of these edges is impossible without a crossing.
    To justify this formally, we consider all the possibilities modulo symmetry.
    In each scenario we have marked a minor model of either $K_5$ or $K_{3,3}$, which certifies that $H$ is not planar.
    }
    \label{fig:planar-vc}
\end{figure}
    \begin{claim}\label{claim:prelim:connected}
        For each $i \in [4]$ the set $V_i$ is non-empty and the graph $G[V_i]$ is weakly connected.
    \end{claim}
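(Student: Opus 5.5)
Non-emptiness comes first. I would show that $v_i \in V_i$. Since $\reach_G(u_{ij},Y)=\{v_i,v_j\}$, the vertex $v_i$ lies on $P^i_{ij}$, so it suffices to check $v_i\notin U_{k\ell}$ for every pair $k<\ell$. Suppose $v_i\in U_{k\ell}$. Then $v_i$ and $u_{k\ell}$ are strongly connected, so $\reach_G(v_i,Y)=\{v_k,v_\ell\}$. But $v_i\in\reach_G(v_i,Y)$ trivially, so $i\in\{k,\ell\}$; say $k=i$. Then $v_\ell$ is reachable from $v_i$. Now $v_i$ lies on $P^i_{im}$ for any $m$, and $u_{im}$ reaches $v_i$, hence also reaches $v_\ell$. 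Choosing $m\notin\{i,\ell\}$ gives $\reach_G(u_{im},Y)\ni v_\ell$, which contradicts $\reach_G(u_{im},Y)=\{v_i,v_m\}$. So $v_i\in V_i$.

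For weak connectivity, I would show that every $x\in V_i$ is joined to $v_i$ inside $G[V_i]$. Take $x\in V_i$ lying on some path $P^i_{ij}$, and consider the suffix $Q$ of $P^i_{ij}$ from $x$ to $v_i$. Every vertex $z$ on $Q$ reaches $v_i$ and lies on a dipath of the form $P^i_-$. It remains to show that $z\notin U_{k\ell}$ for all $k<\ell$. Suppose $z\in U_{k\ell}$. Then $\reach_G(z,Y)=\{v_k,v_\ell\}$, and since $z$ reaches $v_i$ we get $i\in\{k,\ell\}$. Also, $z$ is reachable from $x$ (it comes after $x$ on $Q$), so $\reach_G(x,Y)\supseteq\{v_k,v_\ell\}$. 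Because $i\in\{k,\ell\}$, the set $\{v_k,v_\ell\}$ contains some $v_{j'}$ with $j'\ne i$. Then $x\in V_i$ reaches $v_{j'}$, contradicting \Cref{claim:prelim:antichain}. Therefore all of $Q$ lies in $V_i$, and $Q$ is a dipath in $G[V_i]$ from $x$ to $v_i$. It follows that $G[V_i]$ is weakly connected, with every vertex joined to $v_i$.

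The main obstacle is the exclusion of vertices of the sets $U_-$ from these path suffixes. It is handled by combining \Cref{claim:prelim:antichain} with the fact that reachability sets are constant on strongly connected components. One case needs a separate check: the argument above already requires $x$ itself not to be in any $U$, but that is given by the definition of $V_i$. The case $z=v_i$ is covered by the non-emptiness step.
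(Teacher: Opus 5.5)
Your proof is correct and follows essentially the same route as the paper. The paper likewise shows $\reach_G(v_i,Y)=\{v_i\}$ (otherwise some $u_{im}$ would reach three vertices of $Y$) to get $v_i\in V_i$, and then uses \Cref{claim:prelim:antichain} to argue that every vertex after $x$ on $P^i_{ij}$ reaches no vertex of $Y\setminus\{v_i\}$ and hence cannot lie in any $U_-$.
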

    \begin{innerproof}
        Observe that $\reach_G(v_i, Y) = \{v_i\}$
        because if $v_i \overset{G}\leadsto v_j$ for some $j \ne i$ then \linebreak $\{v_i,v_j,v_k\} \sub \reach_G(u_{ik}, Y)$ for any $k \notin \{i,j\}$.
        Consequently, $v_i \in V_i$ and it suffices to show that every vertex in $V_i$ can reach $v_i$ within $G[V_i]$.

        Let $x \in V_i$.
        This vertex lies on dipath $P_{ij}^{i}$ for some $j \ne i$.
        We claim that every vertex $y$ located on the dipath $P_{ij}^{i}$ after $x$ also belongs to $V_i$.
        Suppose it does not.
        By \Cref{claim:prelim:antichain} no vertex in $Y \sm \{v_i\}$
         is reachable from $x$ and hence also from $y$.
        Therefore, $y$ cannot belong to any set $U_{-}$ because then it would reach two vertices from $Y$.
        This implies $y \in V_i$.
        Consequently, $x$ can reach $v_i$ within $G[V_i]$ along the dipath $P_{ij}^{i}$.
    \end{innerproof}

    By \Cref{claim:prelim:antichain} all the sets $\{U_{ij} \colon 1 \le i < j \le 4\}$ and $V_1,V_2,V_3,V_4$ are pairwise disjoint.
    By \Cref{claim:prelim:connected} each of them induces a weakly connected subgraph.
    Contracting each of these sets into single vertices yields a minor model of a subdivision of $K_4$ (see \Cref{fig:planar-vc}).
    Let $U$ denote the union of all sets $U_-, V_-$.

    Since $Y$ is shattered, there is also a vertex $z \in V(G)$ for which $\reach_G(z,Y) = Y$.
    Note that $z \not\in U$.
    For $i \in [4]$ let $P_z^i$ be some fixed dipath from $z$ to $v_i$.
    Next, let $S_z^i$ be the maximal prefix of  $P_z^i$ that is disjoint from $U$.
    We define $S$ as the union of $\{S_z^i \colon i \in [4]\}$.
    \begin{claim}\label{claim:prelim:1234}
        For each $i \in [4]$ either $V_i \cap N_G(S) \ne \emptyset$ or there is $j \ne i$ so that $U_{ij} \cap N_G(S) \ne \emptyset$.
    \end{claim}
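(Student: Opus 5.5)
Fix $i \in [4]$ and look at the dipath $P_z^i$ from $z$ to $v_i$. Since $v_i \in V_i \subseteq U$ (by \Cref{claim:prelim:connected}) and $z \notin U$, the path $P_z^i$ must meet $U$. Let $w$ be the first vertex of $P_z^i$ lying in $U$. By maximality of $S_z^i$, the vertex $w$ directly follows the last vertex of $S_z^i$. Hence $w \in N_G(S)$, where $N_G(S)$ is the set of vertices outside $S$ adjacent to $S$; note $w \notin S$ because $S$ is disjoint from $U$. The vertex $w$ lies on a dipath to $v_i$, so $v_i \in \reach_G(w,Y)$. The argument is a case analysis on which part of $U$ contains $w$.

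\textbf{Case $w \in V_j$.} By \Cref{claim:prelim:antichain}, no vertex of $V_j$ reaches any $v_{j'}$ with $j' \ne j$. Since $w$ reaches $v_i$, we must have $j = i$, so $w \in V_i \cap N_G(S)$, which is the first alternative of the claim.

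\textbf{Case $w \in U_{ab}$.} All vertices of the strongly connected component $U_{ab}$ have the same reachability set as $u_{ab}$, namely $\reach_G(w,Y) = \{v_a,v_b\}$. Since $v_i$ belongs to this set, $i \in \{a,b\}$. Writing $j$ for the other index, we get $w \in U_{ij} \cap N_G(S)$ with $j \ne i$, which is the second alternative.

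These two cases exhaust $U$, so the claim follows. The only point that needs care is that reachability is constant on the component $U_{ab}$; this holds because all vertices of a strongly connected component reach exactly the same vertices. I do not expect a real obstacle: the statement is essentially immediate from the definition of $S$ together with \Cref{claim:prelim:antichain}.
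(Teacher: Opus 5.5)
Your proof is correct and matches the paper's argument: take the first vertex of $P_z^i$ lying in $U$, rule out $V_j$ for $j\ne i$ via \Cref{claim:prelim:antichain}, and for $U_{ab}$ use that its vertices reach exactly $\{v_a,v_b\}$ to force $i\in\{a,b\}$. You additionally spell out why this vertex lies in $N_G(S)$ (it is the out-neighbor of the last vertex of $S_z^i$), which the paper leaves implicit.
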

    \begin{innerproof}
        We have $v_i \in V_i \sub U$ hence the dipath $P_z^i$ enters the set $V_i$ at some point.
        Suppose however that the first vertex $x$ on $P_z^i$ that lies in $U$ is not from $V_i$.
        But $x \not\in V_j$ for any $j \ne i$ due to \Cref{claim:prelim:antichain}.
        Hence $x \in U_{jk}$ for some $j,k \in [4]$.
        This implies that $\{v_i,v_j,v_k\} \sub \reach_G(u_{jk},Y)$.
        But $u_{jk}$ reaches only two vertices from $Y$, so
     $i \in \{j,k\}$. The claim follows.
    \end{innerproof}

    Modulo symmetry, one of the six following scenarios occurs: $S$ is adjacent to either (a) $V_1,V_2,V_3,V_4$ or (b) $V_1,V_2,U_{34}$ or (c) $U_{12},U_{34}$ or (d)  $V_1,U_{23},U_{34}$ or (e) $U_{12},U_{23},U_{34}$ or (f) $U_{14},U_{24},U_{34}$.
    Contract each of the sets $S$, $U_-$, and $V_-$ into single vertices, and delete all other vertices and all edges except those of the indicated subdivision of $K_4$ and the edges from $S$ specified by one of the six scenarios.
    The resulting minor $H$ of $G$ is a subdivision of $K_4$ plus the vertex originating from $S$.
    In \Cref{fig:planar-vc} we show that such a graph $H$ cannot be planar, so $G$ is not planar as well.
    The obtained contradiction implies that no set of four vertices can be shattered by $\reachsys(G)$ hence its VC dimension is at most three.
\end{proof}

\section{Lower bound for planar \re{s}}
\label{sec:lower-bound}

\begin{figure}[t]
    \centering
    \includegraphics[width=0.9\textwidth]{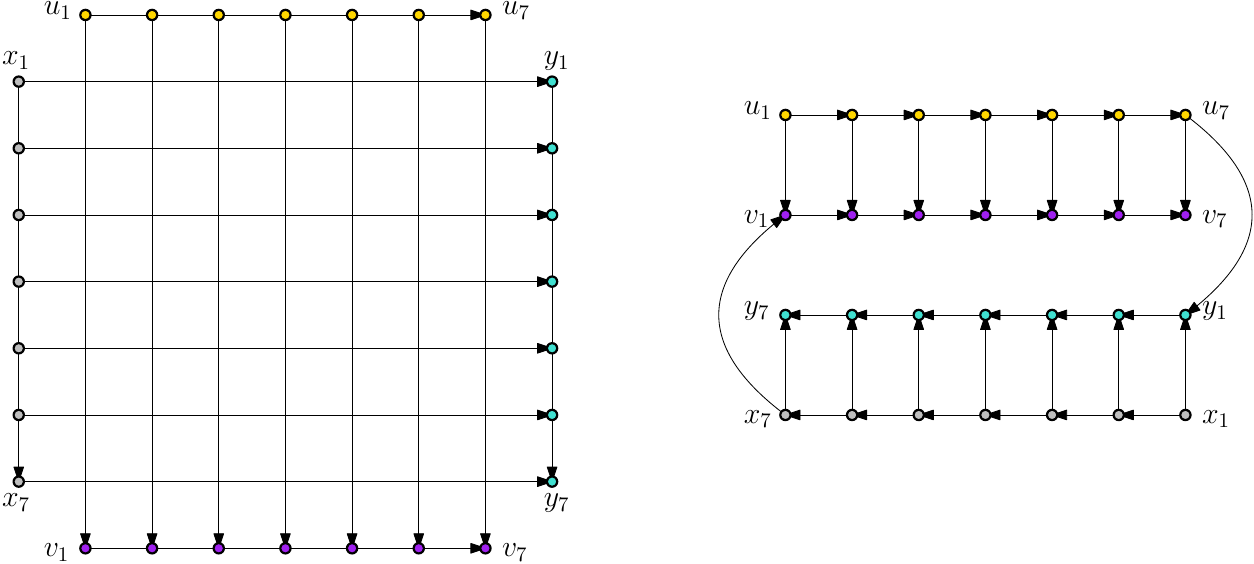}
    \caption{Left: The construction $(G,T)$ from \cite{Goranci20} for which every reachability preserving minor requires $\Omega(|T|^2)$ vertices.
    We orient the horizontal edges to the right and the vertical ones to the bottom.
    The terminal set $T$ comprises four groups $u_i, v_i,x_i, y_i$, $i \in [k]$, occupying the boundary of the grid.
    We have marked the first and the last vertex in each group.
    Right: A planar \re for $(G,T)$ on $\Oh(|T|)$ vertices.
    The lower bound from \cite{Goranci20} does not apply when one can use a different planar graph with unrestricted arrangement of terminals.
    }
    \label{fig:counter}
\end{figure}

\begin{figure}[h]
    \centering
    \includegraphics[width=0.5\textwidth]{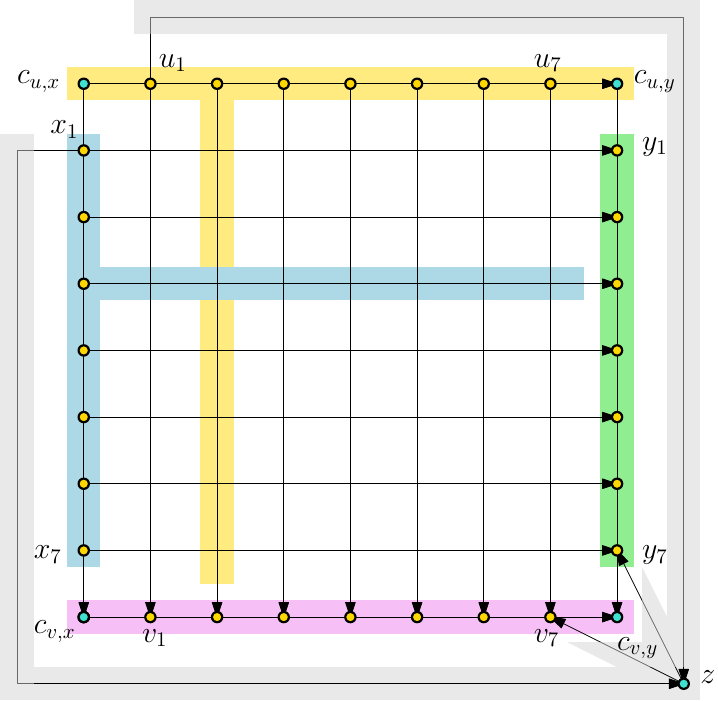}
    \caption{Illustration to \Cref{thm:lb}.
    The graph $G$ is obtained from the one in \Cref{fig:counter} by inserting the  corners $c_{u,x}, c_{u,y}, c_{v,x}, c_{v,y}$, a new vertex $z$, and edges $(x_1,z)$, $(u_1,z)$, $(z,y_k)$, $(z,v_k)$.
    All the five inserted vertices also become terminals, in addition to  $u_i, v_i,x_i, y_i$, $i \in [k]$.
    The colored areas illustrate a minor model of $K_5$ appearing when the horizontal path $P_3$ and the vertical path $Q_2$ have empty intersection in the emulator (\Cref{claim:lb:intersect}).
    }
    \label{fig:lower-bound}
\end{figure}

Goranci, Henzinger, and Peng~\cite{Goranci20}
considered {\em reachability preserving minors} which are reachability emulators that can be obtained from the original digraph by edge removal, edge contraction, and vertex removal, without changing the locations of terminals.
They gave an example showing that a reachability preserving minor for $(G,T)$ may need $\Omega(|T|^2)$ vertices, even when $G$ is a directed grid and $T$ comprises its boundary vertices.
Their example however does not rule out possibility of constructing \re{s} that are not minors of $G$ but they are still planar.
This kind of sparsification has been used for preserving approximate distances in undirected planar graphs~\cite{ChangKT22, chang2022arxiv} which was later improved to ensure the minor relation~\cite{chang2025distance}.
In our setting, we show on \Cref{fig:counter} that
rearranging the terminal locations in
the example from~\cite{Goranci20} allows for a planar \re of size $\Oh(|T|)$.
So their argument is not sufficient to completely rule out repetitive reachability compression discussed in \Cref{sec:sparsification}.

We strengthen the argument from~\cite{Goranci20} to argue that the same lower bound applies when we do not require $H$ to be a minor of $G$, but we only request $H$ to be planar.
We need more freedom in setting up the terminals so they no longer lie on a single face but they still occupy only $\Oh(1)$ faces, which is a common setup in arguments relying on, e.g., $r$-divisions.

\thmLB*
\begin{proof}
    The construction of $(G,T)$ is depicted on \Cref{fig:lower-bound}.
    The graph $G$ is a directed $(k+2)\times(k+2)$ grid, oriented bottom-right, plus a special vertex $z$.
    The terminal set $T$ is described in the caption therein.

    Consider a planar \re $H$ for $(G,T)$ that has minimum possible number of vertices.
    We will prove that $|V(H)| \ge (k-2)^2$.
    To simplify our arguments, we first show that $H$ cannot contain edges that could be contracted.

    \begin{claim}\label{claim:lb:adj-reach}
        Let $u,v \in V(H)$ and $(u,v) \in E(H)$.
        Then $\reach_H(u,T) \ne \reach_H(v,T)$.
    \end{claim}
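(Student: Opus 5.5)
Suppose, for contradiction, that $(u,v)\in E(H)$ and $\reach_H(u,T)=\reach_H(v,T)$. I plan to contract the edge $(u,v)$ and show that the result is a planar \re for $(G,T)$ with one fewer vertex, contradicting the minimality of $|V(H)|$. Contraction of an edge preserves planarity of the underlying undirected graph, so only two things need checking: that terminals survive, and that reachability among $T$ is unchanged.

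First, handle the terminals. If neither $u$ nor $v$ is in $T$, the merged vertex may be named arbitrarily. If exactly one of them is a terminal, the merged vertex takes that terminal's name. If both $u,v\in T$, then $v\in\reach_H(u,T)=\reach_H(v,T)$ and, by the same equality, $u\in\reach_H(v,T)$. Hence $u$ and $v$ are mutually reachable in $H$, and since $H$ is an emulator, also in $G$. The grid part of $G$ is acyclic, and so is $G$ with $z$ added: every edge at $z$ goes from the top-left region $\{x_1,u_1\}$ to the bottom-right region $\{y_k,v_k\}$, consistent with the bottom-right orientation. So two distinct terminals cannot be mutually reachable, and this case cannot occur.

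Second, let $H'=H/(u,v)$ with merged vertex $w$, and show that reachability on $T\times T$ is preserved. Any path in $H$ maps to a walk in $H'$, so reachability can only increase. For the reverse direction, take a path in $H'$ between terminals $a\to b$ that passes through $w$. It enters $w$ via an edge $(p,u)$ or $(p,v)$ and leaves via an edge $(v,q)$ or $(u,q)$. The only problematic combination is entering at $v$ and leaving from $u$, since $u\to v$ is available in $H$ but $v\to u$ may not be. In that case the path gives $a\overset{H}\leadsto v$ and $u\overset{H}\leadsto b$. Because $b\in\reach_H(u,T)=\reach_H(v,T)$, we get $v\overset{H}\leadsto b$, and so $a\overset{H}\leadsto b$. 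If the path passes through $w$ several times, the same argument applies to each passage, or equivalently to the first entry and the last exit.

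The main subtlety is the both-terminals case. It relies on acyclicity of $G$, which I would verify directly from the construction.
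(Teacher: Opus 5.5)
Your proposal is correct and is essentially the paper's argument. You contract $(u,v)$ and note that any $T$-to-$T$ path through the merged vertex gives $t_1\overset{H}\leadsto v$ and $u\overset{H}\leadsto t_2$, so $\reach_H(u,T)=\reach_H(v,T)$ yields $t_1\overset{H}\leadsto t_2$; this uniform formulation also covers the case where the merged vertex is itself an endpoint of the path, which your entry/exit phrasing leaves implicit. You then contradict vertex-minimality, and your acyclicity argument excluding $u,v\in T$ just spells out the paper's remark that distinct terminals have distinct sets $\reach_H(t,T)$.
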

    \begin{innerproof}
        Suppose that $\reach_H(u,T) = \reach_H(v,T)$.
        Note that each $t \in T$ has different set $\reach_H(t,T)$
        so at least one of $u,v$ does not belong to $T$.
        We claim that digraph $H'$, obtained by contraction of
        $\{u,v\}$ into a vertex $uv$,
         remains a \re for $(G,T)$.
        Contracting an edge can only increase reachability between some vertices so we only need to show $t_1 \overset{H'}\leadsto t_2 \Rightarrow t_1 \overset{H}\leadsto t_2$.
        Consider any $t_1,t_2 \in T$ and an $(t_1,t_2)$-dipath $P$ in $H'$.
        If $uv \not\in V(P)$ then $P$ is also present in $H$.
        Suppose then that $uv \in V(P)$.
        It follows that $t_1 \overset{H}\leadsto v$ and $u \overset{H}\leadsto t_2$.
        But $\reach_H(u,T) = \reach_H(v,T)$ hence $v \overset{H}\leadsto t_2$ and so $t_1 \overset{H}\leadsto t_2$.
        Therefore, $H'$ is indeed a \re for $(G,T)$.
        As contraction preserves planarity, this contradicts the choice of $H$ as a minimal planar emulator.
    \end{innerproof}

    \begin{claim}\label{claim:lb:adj-path}
        Let $a,b \in T$ and $w \in V(H) \sm \{a,b\}$ be a vertex lying on an
         $(a,b)$-dipath $P$ in $H$.
         Then $\reach_H(b,T) \subsetneq \reach_H(w,T) \subsetneq \reach_H(a,T)$.
    \end{claim}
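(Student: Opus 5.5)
The plan is to derive both strict inclusions from two facts. Reachability sets are monotone along a dipath. Claim~\ref{claim:lb:adj-reach} then turns equality of reachability sets into a contradiction. We also use that the terminal reachability in $H$ coincides with that in $G$, where distinct terminals have distinct reachability sets.

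First, the weak inclusions. Since $w$ lies on an $(a,b)$-dipath, $a \overset{H}\leadsto w$ and $w \overset{H}\leadsto b$. Transitivity gives $\reach_H(b,T) \sub \reach_H(w,T) \sub \reach_H(a,T)$. The remaining work is to rule out equality on each side. We argue by contradiction, treating the right inclusion; the left one is symmetric under reversing all edges, which reverses the roles of $a$ and $b$.

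Suppose $\reach_H(w,T) = \reach_H(a,T)$. Consider the subpath $a=p_0,p_1,\dots,p_m=w$ of $P$. By monotonicity along this subpath,
\[\reach_H(a,T) \supseteq \reach_H(p_1,T) \supseteq \dots \supseteq \reach_H(w,T) = \reach_H(a,T),\]
so all these sets are equal. Since $w \ne a$, we have $m \ge 1$, so the edge $(p_0,p_1) \in E(H)$ joins two vertices with equal reachability sets. This contradicts Claim~\ref{claim:lb:adj-reach}.

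Symmetrically, if $\reach_H(w,T)=\reach_H(b,T)$, the subpath from $w$ to $b$ has $m\ge1$ edges because $w\ne b$. All vertices on it then share the same set, and its last edge contradicts Claim~\ref{claim:lb:adj-reach}.

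The only point needing care is that the subpath of $P$ between $a$ and $w$ (respectively $w$ and $b$) really contains an edge. This is guaranteed by $w \notin \{a,b\}$. We may take $P$ to be a simple dipath, and if $w$ appears on it, it appears at a position strictly between the two ends. No planarity or grid structure is needed for this claim; it rests entirely on the minimality of $H$ as captured by Claim~\ref{claim:lb:adj-reach}.
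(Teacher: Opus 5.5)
Your proof is correct and matches the paper's argument: the weak inclusions come from $a \overset{H}\leadsto w \overset{H}\leadsto b$, and equality on either side forces every vertex on the corresponding subpath of $P$ (which has at least one edge since $w\notin\{a,b\}$) to have the same reachability set, contradicting Claim~\ref{claim:lb:adj-reach}. Your aside about ``reversing all edges'' is not literally a valid symmetry, since reversal turns forward reachability sets into backward ones, but this is harmless because you handle the case $\reach_H(w,T)=\reach_H(b,T)$ directly.
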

    \begin{innerproof}
        We have $\reach_H(b,T) \subseteq \reach_H(w,T) \subseteq \reach_H(a,T)$ because $a \overset{H}\leadsto w \overset{H}\leadsto b$.
        If $\reach_H(b,T) = \reach_H(w,T)$ then all vertices on $P$ after $w$ also share the same reachability set.
        In particular, two such vertices are adjacent, which contradicts \Cref{claim:lb:adj-reach}.
        The case $\reach_H(a,T) = \reach_H(w,T)$ is symmetric.
    \end{innerproof}

Next, we will show that all the edges $(a,b)$ in $G$ connecting the terminals on the grid must be also present in $H$.
To this end, we first check that any potential internal vertex on an $(a,b)$-dipath in $H$ has the same backward reachability in $T$ as $a$.

    \begin{claim}\label{claim:lb:cases}
        Let $a,b \in T \sm \{z\}$ be such that $(a,b) \in E(G)$ and $w \in V(H) \sm \{a,b\}$ be a vertex lying on an
         $(a,b)$-dipath in $H$.
         Next, let $t \in T$
         satisfy $t \overset{H}\leadsto w$.
        Then  $t \overset{H}\leadsto a$.
    \end{claim}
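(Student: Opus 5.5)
The plan is to reduce the claim to a statement about reachability among terminals in $G$, which we can read off the grid directly. Since $H$ is a \re for $(G,T)$, reachability among terminals is the same in $H$ and in $G$, so it suffices to prove $t \overset{G}\leadsto a$. From $t \overset{H}\leadsto w \overset{H}\leadsto b$ we get $t \overset{G}\leadsto b$. By \Cref{claim:lb:adj-path} applied to the $(a,b)$-dipath through $w$, we also have $\reach_H(b,T) \subsetneq \reach_H(w,T) \subseteq \reach_H(a,T)$. I will use these two facts in two complementary ways, depending on where the grid edge $(a,b)$ lies. Since all edges are oriented right or down, the boundary edges between terminals lie on one of four sides.

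\emph{Case 1: $(a,b)$ lies on the bottom row or the right column}, so $a$ is immediately left of $b$, or immediately above $b$. Here I would use the successor side. In $G$, every vertex reachable from $a$ but not from $b$ must leave $a$ by the edge not leading to $b$. On the bottom row or right column there is no such edge inside the grid, so I expect $\reach_G(a,T) \setminus \reach_G(b,T) = \{a\}$. Because $\reach_H(w,T)$ strictly contains $\reach_H(b,T)$ and is contained in $\reach_H(a,T)$, which agree with the $G$-sets on $T$, we get $a \in \reach_H(w,T)$. Hence $t \overset{H}\leadsto w \overset{H}\leadsto a$, as required.

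\emph{Case 2: $(a,b)$ lies on the top row or the left column.} Here I would use the predecessor side. A terminal reaching a top-row vertex $b$ in $G$ must stay in the top row, so it lies weakly left of $b$, and similarly for the left column. Any such terminal other than $b$ itself lies weakly before $a$ on that side, and therefore reaches $a$ along the boundary path. It remains to exclude $t=b$. If $b \overset{H}\leadsto w$, then $w$ and $b$ would have equal reachability sets, contradicting the strict inclusion from \Cref{claim:lb:adj-path}.

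The main obstacle is the bookkeeping around the special vertices, which is where the clean picture above could fail. First, $z$ adds the paths $x_1 \to z \to y_k$ and $u_1 \to z \to v_k$, so some terminals gain predecessors that are not on the same side of the grid. I would check for each side that a terminal reaching $b$ through $z$ (namely $x_1$, $u_1$, or $z$ itself) still reaches $a$. Alternatively, in Case 1, I would check that the extra successors $y_k$ and $v_k$ do not enter $\reach(a)\setminus\reach(b)$. Second, edges incident to the corner terminals $c_{\cdot,\cdot}$ join two sides, and I would verify them individually. I expect each special edge to fall into one of the two arguments above, with the successor argument preferred whenever $a$ has out-degree one within the grid.
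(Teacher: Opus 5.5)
Your proposal is correct and follows essentially the same route as the paper: for bottom-row and right-column edges $\reach_G(a,T)\setminus\reach_G(b,T)=\{a\}$, and for top-row and left-column edges every terminal reaching $b$ lies weakly before $b$ on the same side, with $t=b$ excluded by \Cref{claim:lb:adj-path}. In the first group the paper calls the situation vacuous, since $\reach_H(w,T)\subsetneq\reach_H(a,T)$ also holds, whereas you conclude $w\overset{H}\leadsto a$ directly. The special-vertex checks you defer are immediate: $z$ reaches no top-row or left-column vertex, and no bottom-row or right-column vertex reaches $z$, because its in-neighbors are only $x_1,u_1$ and its out-neighbors only $y_k,v_k$.
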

    \begin{innerproof}
        By \Cref{claim:lb:adj-path}, the set $\reach_H(w,T)$ must be of the form $\reach_H(b,T) \cup N$ where $N$  is  a proper and non-empty subset of $\reach_H(a,T) \sm \reach_H(b,T) = \reach_G(a,T) \sm \reach_G(b,T)$.
        In particular, this is impossible when $\reach_G(a,T) = \reach_G(b,T) \cup \{a\}$.
        This immediately rules out the following cases: (1) $b = c_{v,y}$ (bottom-right corner), (2) $a = c_{v,x}$ (leaving bottom-left corner), (3) $a = c_{u,y}$ (leaving top-right corner), (4) $a = v_i$, $b = v_{i+1}$ (bottom row), and (5) $a = y_i$, $b = y_{i+1}$ (right column).

        We inspect the remaining cases and exploit the fact that $t \overset{H}\leadsto w$ implies $t \overset{H}\leadsto b$ and thus $t \overset{G}\leadsto b$ but the reachability set of $t$ must be strictly greater than the one for $b$.

        \begin{enumerate}
            \item Case $a = c_{u,x}$ (top-left corner).
            We consider $b = u_1$ since the subcase $b = x_1$ is analogous.
            The only terminals that reach $u_1$ are $c_{u,x}$ and $u_1$ so these are the only candidates for $t$.
            \Cref{claim:lb:adj-path} rules out the case $t=u_1$.
            This implies  $t=a$.

            \item Case $a = x_k$, $b = c_{v,x}$ (entering bottom-left corner).
            We have $N \subset \reach_G(x_k,T) \sm \reach_G(c_{v,x},T) = \{x_k,y_k\}$.
            Reaching both $c_{v,x}$ and one of these vertices is possible only
            when $t$ = $x_i$ for some $i \in [k]$ or $t = c_{u,x}$.
            Hence, $t \overset{H}\leadsto x_k$.

            \item Case $a = u_k$, $b = c_{u,y}$ (entering top-right corner) is analogous to the previous one.

            \item Case $a = u_i$, $b = u_{i+1}$ (top row).
            If $i > 1$ then $N \subset \{u_i,v_i\}$ and if $i = 1$ then $N \subset \{u_1,v_1,z\}$.
            Reaching both $u_{i+1}$ and one of these vertices is possible only
            when $t=u_j$ for $j \le i$ or $t = c_{u,x}$.
            In both cases we have $t \overset{H}\leadsto u_i = a$.

            \item Case $a = x_i$, $b = x_{i+1}$ (left column) is analogous to the previous one.

        \end{enumerate}
    \end{innerproof}

    \begin{claim}\label{claim:lb:edge}
        Let $a,b \in T \sm \{z\}$ be such that $(a,b) \in E(G)$.
        Then $(a,b) \in E(H)$.
    \end{claim}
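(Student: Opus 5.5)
We argue by contradiction, using the minimality of $H$ through \Cref{claim:lb:adj-reach} and \Cref{claim:lb:cases}. Suppose $(a,b) \in E(G)$ with $a,b \in T \sm \{z\}$ but $(a,b) \notin E(H)$. Since $H$ is a \re and $a \overset{G}\leadsto b$, there is an $(a,b)$-dipath $P$ in $H$ with at least one internal vertex. Let $w$ be the first internal vertex of $P$, so $(a,w) \in E(H)$.

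The plan is to show that $a$ could be contracted into $w$, or that $w$ is otherwise redundant, contradicting minimality. By \Cref{claim:lb:adj-reach} we have $\reach_H(w,T) \ne \reach_H(a,T)$, and \Cref{claim:lb:adj-path} gives $\reach_H(b,T) \subsetneq \reach_H(w,T) \subsetneq \reach_H(a,T)$. On the forward side there is therefore no contradiction, so we use the backward side instead. By \Cref{claim:lb:cases}, every terminal $t$ with $t \overset{H}\leadsto w$ also satisfies $t \overset{H}\leadsto a$. We apply this to every internal vertex $w'$ of $P$. In particular, the last internal vertex $w'$, which has $(w',b) \in E(H)$, has the same backward terminal-reachability as $a$. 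Consequently, the backward reachability sets of $w'$ and $b$ differ only by terminals that reach $b$ but not $a$.

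Now we modify $H$. Delete the internal vertices of $P$ that are not needed, or more cleanly, take the vertex $w'$ and contract the edge $(w',b)$. This fails only if it creates new terminal connections $t \leadsto t'$ with $t \overset{H}\leadsto b$ and $w' \overset{H}\leadsto t'$. Alternatively, replace the whole path by the direct edge $(a,b)$ and argue that the vertex count strictly decreases. The cleanest route I would try first is the reverse analogue of \Cref{claim:lb:adj-reach}: contracting an edge $(u,v)$ is safe whenever the backward terminal-reachability sets of $u$ and $v$ coincide. The proof is symmetric: a new path through $uv$ uses $t_1 \leadsto u$ and $v \leadsto t_2$, and $t_1 \leadsto v$ follows from equal backward sets. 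Given that, the edge $(a,w)$ is contractible, because \Cref{claim:lb:cases} shows the backward terminal set of $w$ is contained in that of $a$, and $a \overset{H}\leadsto w$ gives the reverse inclusion. Contraction preserves planarity and lowers $|V(H)|$, contradicting the minimality of $H$.

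The main obstacle is making sure this backward contraction argument is airtight. Specifically, we must verify that the inclusion from \Cref{claim:lb:cases} holds with the sets computed in $H$ and not only in $G$, which it does because \Cref{claim:lb:cases} is stated for $H$. We must also check that the contracted vertex keeps the label $a$, so the terminal set is preserved; this is fine because $w \notin T$, since otherwise $w$ would be a terminal strictly between $a$ and $b$, which the case analysis excludes. Once these are checked, the conclusion $(a,b) \in E(H)$ follows: if $P$ had internal vertices we could contract, so the only $(a,b)$-dipath in a minimum $H$ is the edge itself.
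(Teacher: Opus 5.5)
Your final route is exactly the paper's argument: contract the first edge $(a,w)$ of an $(a,b)$-dipath, and use \Cref{claim:lb:cases} together with $a \overset{H}\leadsto w$ to see that $a$ and $w$ are reached by the same terminals, so the contraction creates no new terminal-to-terminal connections and contradicts minimality. The exploratory detours about $w'$ and $(w',b)$ can be dropped, and the one step you left vague, $w \notin T$, is settled in the paper by applying \Cref{claim:lb:cases} with $t = w$ to get $w \overset{H}\leadsto a$, which contradicts \Cref{claim:lb:adj-path}.
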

    \begin{innerproof}
        Since  $a \overset{G}\leadsto b$ it must be
        $a \overset{H}\leadsto b$.
        Let $P$ be some $(a,b)$-dipath in $H$.
        Suppose for the sake of contradiction that the first vertex $w$ after $a$ on $P$ is not $b$.
        Note that $w \not\in T$ because otherwise \Cref{claim:lb:cases} applied to $t=w$ implies that  $w \overset{H}\leadsto a$ which contradicts \Cref{claim:lb:adj-path}.
        We claim that contracting the edge $(a,w)$ does not create new reachabilities among $T$, similarly as in \Cref{claim:lb:adj-reach}.
        Suppose that $t_2 \in T$ is reachable from $t_1 \in T$  by a dipath $Q$ in the contracted graph.
        If $Q$ is not present in $H$, then $Q$ uses the new vertex and so $t_1 \overset{H}\leadsto w$ and $a \overset{H}\leadsto t_2$.
        But then \Cref{claim:lb:cases} guarantees that $t_1 \overset{H}\leadsto a$ which in turn implies $t_1 \overset{H}\leadsto t_2$.
        As in \Cref{claim:lb:adj-reach}, this contradicts the minimality of $H$.
    \end{innerproof}

    Let $Z_x,Z_u,Z_y,Z_v$ be some dipaths in $H$ connecting respectively: $x_1 \to z$, $u_1\to z$, $z \to y_k$, $z \to v_k$.
    Let $Z_\cup$ be the set of internal vertices on  $Z_x,Z_u,Z_y,Z_v$, plus $z$.
    We will use $Z_\cup$ in a  construction of a hypothetical minor model and we will need to ensure that it is disjoint from other branch sets.

    \begin{claim}\label{claim:lb:disjoint-zt}
        It holds that $T \cap Z_\cup = \{z\}$.
    \end{claim}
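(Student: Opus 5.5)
The plan is to use only that $H$ is a \re for $(G,T)$: for terminals $t_1,t_2 \in T$ we have $t_1 \overset{H}\leadsto t_2$ if and only if $t_1 \overset{G}\leadsto t_2$. The inclusion $z \in T \cap Z_\cup$ holds by definition. For the converse, take a terminal $t \in T\sm\{z\}$ that is an internal vertex of one of the four dipaths $Z_x,Z_u,Z_y,Z_v$. Since dipaths are simple, $t$ differs from both endpoints of that dipath. We will show that this is impossible, one dipath at a time, by transferring the two subpath reachabilities through $t$ from $H$ back to $G$.

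Consider first $Z_x$, an $(x_1,z)$-dipath. An internal terminal $t$ gives $x_1 \overset{H}\leadsto t \overset{H}\leadsto z$. All three vertices are terminals, so $x_1 \overset{G}\leadsto t \overset{G}\leadsto z$. In $G$ the only in-neighbours of $z$ are $x_1$ and $u_1$. Because the grid is oriented towards the bottom-right, the vertices of $G$ that can reach $z$ are exactly $c_{u,x},x_1,u_1,z$. Among these, $x_1$ cannot reach $c_{u,x}$ or $u_1$: both lie weakly above $x_1$, and $u_1$ lies in the top row, strictly above the left-column vertex $x_1$. So $t \in \{x_1,z\}$, which contradicts $t$ being internal. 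The case of $Z_u$ is symmetric, with the roles of $u_1$ and $x_1$ exchanged.

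Now consider $Z_y$, a $(z,y_k)$-dipath. An internal terminal $t$ gives $z \overset{G}\leadsto t \overset{G}\leadsto y_k$. The out-neighbours of $z$ are $y_k$ and $v_k$. By the orientation, the vertices reachable from $z$ are $z$, $y_k$, $v_k$ and $c_{v,y}$. Of these, only $y_k$ and $z$ can reach $y_k$: the bottom-row vertex $v_k$ only reaches $c_{v,y}$, and $c_{v,y}$ is a sink. Hence $t \in \{z,y_k\}$, again a contradiction, and $Z_v$ is handled symmetrically.

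The only step that needs real care is the reachability bookkeeping in $G$. This means checking from \Cref{fig:lower-bound} the exact positions of $x_1,u_1,y_k,v_k$ and of the corners relative to the bottom-right orientation, in particular that $x_1$ and $u_1$ are mutually unreachable and that $v_k$ and $y_k$ are mutually unreachable. If the figure's layout differs slightly, I would redo exactly these four small reachability computations. The rest of the argument is uniform, since in every case the terminal $t$ would have to be both reachable from one endpoint and able to reach the other, and in $G$ only the endpoints themselves have this property.
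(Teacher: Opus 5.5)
Your proof is correct and follows the paper's argument. An internal terminal $t$ of $Z_x$ would satisfy $x_1 \overset{G}\leadsto t \overset{G}\leadsto z$, yet the only vertices reaching $z$ are $z, x_1, u_1, c_{u,x}$, and of these only the endpoints are reachable from $x_1$. The paper leaves the remaining three dipaths as analogous, whereas you also work out $Z_y$ explicitly.
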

    \begin{innerproof}
        We show that no vertex from $T$ is internal on $Z_x$ and the remaining cases are analogous.
        Any vertex on $Z_x$ is reachable from $x_1$ and reaches $z$.
        The only vertices from $T$ that reach $z$ are $z,x_1,u_1$ and $c_{u,x}$.
        The first two are the endpoints of $Z_x$.
        Neither $u_1$ nor $c_{u,x}$ are reachable from $x_1$.
        The claim follows.
    \end{innerproof}

    For each $i \in [k]$ we have  $x_i \overset{G}\leadsto y_i$ and $u_i \overset{G}\leadsto v_i$ hence the same holds in $H$.
    Let us fix  a $(x_i,y_i)$-dipath $P_i$ in $H$ and
    a $(u_i,v_i)$-dipath $Q_i$ in $H$.
    Again, we need to ensure several disjointness properties.

    \begin{claim}\label{claim:lb:disjoint-parallel}
        For each distinct $i,j \in [k]$ the dipaths $P_i,P_j$ are vertex-disjoint.
        Similarly, the dipaths $Q_i,Q_j$ are vertex-disjoint.
    \end{claim}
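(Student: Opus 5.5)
The plan is to argue by reachability sets: a vertex common to $P_i$ and $P_j$ must reach, and be reached from, terminals in a combination that $G$ does not realize. Suppose $i<j$ and $w \in V(P_i)\cap V(P_j)$. Then $x_i \overset{H}\leadsto w$ and $x_j \overset{H}\leadsto w$. Also $w \overset{H}\leadsto y_i$ and $w \overset{H}\leadsto y_j$. Hence $x_j \overset{H}\leadsto y_i$, and since $H$ is a \re for $(G,T)$, also $x_j \overset{G}\leadsto y_i$. I will then check in the grid of \Cref{fig:lower-bound} that this is impossible.

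The check uses the geometry. The terminal $x_j$ sits on the left column, in row $j$. Edges point only right and down, so $x_j$ reaches only vertices in rows $\ge j$, apart from what it gains through $z$. The terminal $y_i$ sits on the right column, in row $i<j$, so no grid path from $x_j$ reaches it. Paths through $z$ need a separate look. Only $x_1$ and $u_1$ (and the corner $c_{u,x}$) enter $z$, and $z$ leads only to $y_k$ and $v_k$. A path $x_j \to z \to y_i$ would require $x_j$ to reach $z$, which forces $j=1$, contradicting $j>i\ge 1$. Alternatively it would require $y_k$ to reach $y_i$, which forces $i=k$, contradicting $i<j\le k$. Thus $x_j \not\overset{G}\leadsto y_i$, a contradiction.

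The case of $Q_i,Q_j$ is symmetric after swapping rows with columns: the pair $u,v$ replaces $x,y$, and $z$ is entered from $u_1$ and exits to $v_k$. A common vertex of $Q_i$ and $Q_j$ with $i<j$ would give $u_j \overset{G}\leadsto v_i$. But $u_j$ lies in column $j$ and can only move right, while $v_i$ lies in column $i<j$, and $z$ does not help for the same reason as before.

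The only real care point is the precise terminal indexing and where $z$ attaches in \Cref{fig:lower-bound}. I read the rows as indexed by $i$ for $x_i,y_i$ and the columns by $i$ for $u_i,v_i$, with $z$ attached as $x_1,u_1\to z\to y_k,v_k$. Given that reading, every $G$-path from $x_j$ to $y_i$ with $j>i$ is ruled out directly.
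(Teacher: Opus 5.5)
Your proof is correct and follows the paper's argument: for $i<j$, a shared vertex gives $x_j \overset{H}\leadsto y_i$ (respectively $u_j \overset{H}\leadsto v_i$), which contradicts the emulator property because $G$ has no such path. The paper simply asserts this non-reachability in $G$, and your check of the right/down grid orientation and of the routes through $z$ fills in that detail correctly.
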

    \begin{innerproof}
        Assume w.l.o.g. $i < j$.
        Suppose that $P_i,P_j$ share a vertex.
        Then $x_j \overset{H}\leadsto y_i$.
        This violates the emulator specification since $y_i$ is not reachable from $x_j$ in $G$.
        An analogous argument works for $Q_i,Q_j$.
    \end{innerproof}

    \begin{claim}\label{claim:lb:disjoint-corners}
        For each $i \in [2,k-1]$ no $t \in T \cup Z_\cup$ appears as an internal vertex on  $P_i$ (resp. $Q_i$).
    \end{claim}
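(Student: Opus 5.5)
The plan is a reachability argument. Any vertex $w$ on $P_i$ satisfies $x_i \overset{H}\leadsto w \overset{H}\leadsto y_i$. So if $w = t \in T$, then $x_i \overset{G}\leadsto t \overset{G}\leadsto y_i$, because $H$ emulates reachability among $T$. I will first list all terminals lying between $x_i$ and $y_i$ in $G$, and then exclude each one.

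In the grid, the terminals reachable from $x_i$ are:
\begin{itemize}
\item $x_j$ for $j \ge i$;
\item the corner $c_{v,x}$;
\item all $v_\ell$;
\item the corner $c_{v,y}$;
\item $y_j$ for $j \ge i$;
\item $z$ when $i = 1$.
\end{itemize}
Among these, the ones that reach $y_i$ are only $x_i$ and $y_i$ themselves. For $j > i$, $x_j$ cannot reach $y_i$ (the same fact used in \Cref{claim:lb:disjoint-parallel}). The bottom row, the bottom corners and $y_j$ for $j \neq i$ cannot go up or back to row $i$. Since $i \ge 2$, $z$ is not reachable from $x_i$, because only $x_1$ and $u_1$ (and $c_{u,x}$) lead into $z$. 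Hence $t \in T$ would have to be an endpoint of $P_i$, not an internal vertex. The argument for $Q_i$ is symmetric: it uses columns, with $u_i \to v_i$.

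Next I treat $t \in Z_\cup \setminus \{z\}$, which is not a terminal by \Cref{claim:lb:disjoint-zt}. Such a $t$ lies on one of $Z_x, Z_u, Z_y, Z_v$.
\begin{itemize}
\item If $t$ is on $Z_x$ or $Z_u$, then $t \overset{H}\leadsto z$. Combined with $x_i \overset{H}\leadsto t$, this gives $x_i \overset{H}\leadsto z$, hence $x_i \overset{G}\leadsto z$. This is false for $i \ge 2$.
\item If $t$ is on $Z_y$ or $Z_v$, then $z \overset{H}\leadsto t \overset{H}\leadsto y_i$, hence $z \overset{G}\leadsto y_i$. But $z$ reaches only $y_k$, $v_k$ and $c_{v,y}$ (and itself), so this fails for $i \le k-1$.
\end{itemize}
The same two cases settle $Q_i$, now using $u_i$ with $i \ge 2$ and $v_i$ with $i \le k-1$. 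This is exactly why the range $[2,k-1]$ appears in the statement.

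The main obstacle is not conceptual. It is making sure the reachability sets in $G$ are correct at the corners and at the $z$-edges. I will double-check the exact edges into and out of $z$ and the corner attachments against \Cref{fig:lower-bound}: $(x_1,z)$, $(u_1,z)$, $(z,y_k)$, $(z,v_k)$, plus the corners on the grid. In particular, I need that no corner lies between $x_i$ and $y_i$ for $2 \le i \le k-1$. This holds because $c_{v,x}$ and $c_{v,y}$ reach only the bottom row, $c_{v,y}$ and the right corner, while $c_{u,x}$ and $c_{u,y}$ are not reachable from $x_i$.
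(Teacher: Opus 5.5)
Your proof is correct and follows the same reachability argument as the paper. A vertex on $P_i$ lies between $x_i$ and $y_i$, which leaves only the terminals $x_i,y_i$, and any vertex of $Z_\cup$ would force $x_i \overset{G}\leadsto z$ or $z \overset{G}\leadsto y_i$, both impossible for $i\in[2,k-1]$; your explicit treatment of the corners and of $z$ as a terminal only spells out what the paper states tersely.
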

    \begin{innerproof}
        We prove the claim for $P_i$ w.l.o.g.
        When $t \in V(P_i)$ then
        $x_i \overset{H}\leadsto t$ and $t \overset{H}\leadsto y_i$.
        This rules out the top row $(t \ne u_j)$ and the bottom row $(t \ne v_j)$.
        If $t = x_j$ or $t = y_j$ then it must be $i=j$ so $t$ is not internal.
        If $t \in Z_\cup$ then  $z \overset{H}\leadsto t$ or $t \overset{H}\leadsto z$.
        This implies that
         $z \overset{H}\leadsto y_i$ or $x_i \overset{H}\leadsto z$.
         Consequently, $z \overset{G}\leadsto y_i$ or $x_i \overset{G}\leadsto z$.
         None of these connections occur for $i \in [2,k-1]$.
    \end{innerproof}

We are ready to prove the main argument that lower bounds the size of $H$.

    \begin{claim}\label{claim:lb:intersect}
        For each $i,j \in [2,k-1]$ the dipaths $P_i,Q_j$ share at least one vertex.
    \end{claim}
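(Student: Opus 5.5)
Suppose, for contradiction, that $P_i$ and $Q_j$ are vertex-disjoint for some $i,j \in [2,k-1]$. We will build a $K_5$ minor in $H$, as suggested by the colored areas in \Cref{fig:lower-bound}, contradicting planarity of $H$. The five branch sets are built from the following pieces, which we first check are pairwise disjoint.
\begin{enumerate}
\item The path $P_i$ and the path $Q_j$.
\item The set $Z_\cup$, which is disjoint from $P_i$ and $Q_j$ by \Cref{claim:lb:disjoint-corners} (their endpoints are terminals distinct from $z$, and \Cref{claim:lb:disjoint-zt} handles them).
\item The four corner regions, each the corner terminal together with the boundary paths of $H$ guaranteed by \Cref{claim:lb:edge}. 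All grid edges between consecutive boundary terminals are present in $H$, so the whole boundary cycle of terminals $c_{u,x},u_1,\dots,u_k,c_{u,y},y_1,\dots,y_k,c_{v,y},v_k,\dots,v_1,c_{v,x},x_k,\dots,x_1$ lies in $H$ as a cycle $C$ on terminals.
\end{enumerate}
By \Cref{claim:lb:disjoint-corners}, the internal vertices of $P_i$ and $Q_j$ avoid $T \cup Z_\cup$. Hence $P_i$ meets $C$ only at $x_i$ and $y_i$, and $Q_j$ meets $C$ only at $u_j$ and $v_j$.

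The intended branch sets come from cutting $C$ at the four endpoints $x_i,u_j,y_i,v_j$, which appear on $C$ in the cyclic order $x_i$ (left), $u_j$ (top), $y_i$ (right), $v_j$ (bottom). This yields four arcs $A_{xu}$ (through $c_{u,x}$), $A_{uy}$ (through $c_{u,y}$), $A_{yv}$ (through $c_{v,y}$) and $A_{vx}$ (through $c_{v,x}$). The first attempt assigns $P_i$ together with one adjacent arc, $Q_j$ together with another arc, and uses $Z_\cup$ as the fifth branch set. The arcs $A_{xu}$ and $A_{yv}$ are where $Z_\cup$ attaches: $Z_x$ starts at $x_1$ and $Z_u$ starts at $u_1$, both in $A_{xu}$ because $i,j \ge 2$; symmetrically, $Z_y$ and $Z_v$ end at $y_k$ and $v_k$ in $A_{yv}$.

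A cleaner alternative is a $K_{3,3}$ or Kuratowski argument via the Jordan curve theorem. The union $C \cup P_i \cup Q_j$ is the cycle $C$ with two disjoint chords whose endpoints alternate along $C$. In any planar embedding of $H$, two disjoint chords with alternating endpoints cannot both lie on the same side of $C$. So one chord lies inside $C$ and the other outside. The connector $Z_\cup$ joins $A_{xu}$ to $A_{yv}$ and avoids $C \cup P_i \cup Q_j$ except at its endpoints $x_1,u_1,y_k,v_k$; in particular, by \Cref{claim:lb:disjoint-zt} its internal vertices avoid $T$, and hence avoid $C$. These two arcs are separated by both chords: on one side of $C$, the chord drawn there separates $A_{xu}$ from $A_{yv}$, since its endpoints alternate with these arcs. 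Therefore $Z_\cup$ would have to cross $P_i$ or $Q_j$, which is a contradiction.

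Formally, $C \cup P_i \cup Q_j \cup (\text{a path in } Z_\cup \text{ from } x_1 \text{ to } y_k)$ forms a subdivision of $K_{3,3}$ or $K_4$ plus a crossing; the standard ``cycle with three pairwise alternating disjoint chords'' configuration is nonplanar. Indeed, a path in $Z_\cup$ from $x_1$ to $y_k$ is a third chord whose endpoints alternate with those of both $P_i$ and $Q_j$: in the cyclic order $x_1, x_i, v_j, y_k, y_i, u_j$ we have $x_1 \in A_{xu}$ and $y_k \in A_{yv}$. Three pairwise interlacing disjoint chords of a cycle give a nonplanar graph. This is the route I would follow.

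The main obstacle is justifying that the third chord is internally disjoint from $C$, $P_i$ and $Q_j$. Its interior avoids terminals by \Cref{claim:lb:disjoint-zt}, and avoids $P_i$ and $Q_j$ by \Cref{claim:lb:disjoint-corners}. I also need the $Z$-paths to meet each other only at $z$, or simply to take the walk $Z_x$ followed by $Z_y$ and shortcut it to a path. Finally, the endpoints must interlace, which uses $2 \le i,j \le k-1$.
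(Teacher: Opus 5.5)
Your final route is correct, but it is not the paper's. The paper builds a $K_5$ minor. Its branch sets are the four sides of the grid boundary (the top row with both top corners plus $Q_j - v_j$; the left column plus $P_i - y_i$; the bottom row with both bottom corners; the right column), together with $Z_\cup$. This uses all four edges of the $z$-gadget.

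You instead exhibit a subdivision of $K_{3,3}$ as a subgraph of $H$. It consists of:
\begin{itemize}
\item the terminal boundary cycle $C$, present by \Cref{claim:lb:edge};
\item three chords: $Q_j$, $P_i$, and an $x_1$--$y_k$ path $R$ obtained by shortcutting $Z_x \cup Z_y$.
\end{itemize}
The chord endpoints lie on $C$ in the cyclic order $u_j, y_i, y_k, v_j, x_i, x_1$, so the three chords pairwise interlace. The checks you need are exactly the ones you give:
\begin{itemize}
\item interiors of $P_i, Q_j$ avoid $T \cup Z_\cup$ (\Cref{claim:lb:disjoint-corners});
\item the interior of $R$ avoids $T$ (\Cref{claim:lb:disjoint-zt});
\item $i \notin \{1,k\}$, so the six branch vertices are distinct.
\end{itemize}

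Your argument needs no contractions. It also shows that, for this claim, only the half $x_1 \to z \to y_k$ of the gadget is used. The paper's $K_5$ model is more symmetric and matches \Cref{fig:lower-bound}.

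For a clean write-up, drop the unfinished arc-assignment attempt and the vague phrase ``$K_{3,3}$ or $K_4$ plus a crossing''. State directly that $C \cup P_i \cup Q_j \cup R$ is a subdivision of $K_{3,3}$ with bipartition $\{u_j, y_k, x_i\}$ versus $\{y_i, v_j, x_1\}$.
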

    \begin{innerproof}
        Assume for the sake of contradiction that $P_i,Q_j$ are vertex-disjoint.
        We will use this assumption to construct a  $K_5$-minor in $H$.
        We define
        \begin{itemize}
            \item $C_u = \{u_i \colon i \in [k]\} \cup \{c_{u,x}, c_{u,y}\} \cup V(Q_j) \sm \{v_j\}$
            \item $C_x = \{x_i \colon i \in [k]\} \cup V(P_i) \sm \{y_i\}$
            \item $C_v =  \{v_i \colon i \in [k]\} \cup \{c_{v,x}, c_{v,y}\}$
            \item $C_y =  \{y_i \colon i \in [k]\}$
        \end{itemize}

        We claim that $(C_u,C_x,C_v,C_y,Z_\cup)$ are branch sets of a minor model of $K_5$ in $H$.
        See \Cref{fig:lower-bound} for an illustration.

        First, we check that each of these sets induces a connected subgraph of $H$.
        The top-row set $\{u_i \colon i \in [k]\} \cup \{c_{u,x}, c_{u,y}\}$ is connected in $H$ due to \Cref{claim:lb:edge}.
        The same argument applies to bottom row, left column, and right column.
        Appending a path ($Q_j - v_j$ or $P_i - y_i$) to a set containing its endpoint preserves connectivity.
        Also, $H[Z_\cup]$ is connected as a union of four paths visiting $z$.

        Second, we check that every two branch sets are connected by an edge.
        The set $Z_\cup$ is adjacent to $x_1,u_1,v_k,y_k$ which belong to the four remaining branch sets.
        By \Cref{claim:lb:edge} the edge $(c_{u,x}, x_1)$ is present in $H$ which gives a connection between $C_u$ and $C_x$.
        The same argument applies to the remaining three pairs meeting at the corners.
        The adjacency for $(C_u,C_v)$ and $(C_x,C_y)$
        follows from the fact the path $Q_j$ (resp. $P_i$) intersects both $C_u,C_v$ (resp. both $C_x,C_y$) and does not use any vertices from outside these sets.

        Third, we check that the branch sets are disjoint.
        The set $Z_\cup$ is disjoint with any other due to \Cref{claim:lb:disjoint-zt} and \ref{claim:lb:disjoint-corners}.
        Next, $V(P_i)$ and $V(Q_j)$ are disjoint by the initial assumption.
        By \Cref{claim:lb:disjoint-corners}, no terminal appears as an internal vertex on $P_i$ or $Q_j$ so appending them to $C_u$ and $C_x$ does not create overlaps with any other branch sets.

        We have constructed a minor model of $K_5$ which certifies that $H$ cannot be planar.
        This contradicts the assumption that $P_i,Q_j$ are vertex-disjoint.
    \end{innerproof}

    For $i,j \in [2,k-1]$ let $h_{i,j} \in V(H)$ be a vertex in the intersection of $P_i$ and $Q_j$.
    \Cref{claim:lb:disjoint-parallel} implies that $h_{i,j} \ne h_{i',j'}$ for distinct pairs $(i,j)$ and $(i',j')$.
    It follows that $|V(H)| \ge (k-2)^2$.
\end{proof}

\section{Conclusion}
\label{sec:conclusion}

We have presented new algorithms for dynamic planar reachability, breaking the $\Oh(n^{1/2})$ and  $\Oh(n^{2/3})$ barriers in the offline and online settings, respectively.
The first result rules out the possibility of repeating
the hardness reduction known for dynamic planar shortest paths~\cite{Abboud16} and encourages one to seek further improvements in the running time.
We first discuss the perspective of improving the online algorithm.

\paragraph{Online reachability.}

The only source of randomization and amortization in \Cref{thm:planar-dyn-reach} comes from \cite{DBLP:conf/soda/Karczmarz18}
which, in turn, inherits these from \cite{DBLP:journals/siamcomp/Bernstein16}.
It is plausible that this component could be replaced with a deterministic one based on~\cite{Diks07} that could also yield worst-case bounds below $n^{2/3}$.
It is unclear however whether that data structure can be adapted to the specific scenario occurring in~\Cref{thm:planar-dyn-reach}.

When it comes to substantial running time improvements,
we see two independent directions.
Several constructions of planar \re{s} rely on the assumption that the terminals lie on $\Oh(1)$ faces~\cite{subramanian1993fully, Diks07, DBLP:conf/soda/Karczmarz18}.
However, it is currently not clear how to avoid the $\Ot(r)$ rebalancing bottleneck when maintaining such $r$-divisions under boundary set extensions.
A more efficient
approach to maintaining an $r$-division with few holes under $\Ot(\sqrt{r})$ edge insertions could help achieve a much better update bound.
It is worth noting that
when the maximum degree is $\Oh(1)$ then a
piece $P$ with~$k$ additional terminals picked in an adversarial way could be decomposed into $\Oh(k)$ pieces of total boundary size $\Ot\left(\sqrt{rk}\right)$ while avoiding the $\Ot(r)$ bottleneck if one stores a recursive decomposition of $P$ (see, e.g.,~\cite{DBLP:journals/talg/Charalampopoulos22}).
However, this approach alone seems not enough to obtain a faster data structure.

Recall from~\Cref{subsec:online} that
Thorup's emulators~\cite{thorup2004compact} can be used in place of Subramanian's emulators~\cite{subramanian1993fully}, removing the dependency on $r$-divisions with few holes and thus the aforementioned $\Ot(r)$ bottleneck.
However, to
bring the update/query time closer to $n^{1/2}$
using Thorup's emulators one
probably needs to dynamize their construction.

\paragraph{Subpolynomial time.}
It is tempting to try to extend the offline scheme described in \Cref{subsec:twolayers} to incorporate three or more layers of chunks.
Unfortunately, the \re $J$ computed for $(\widehat H,Y)$ in the second layer has lower quality than the first one: it has either size $|Y|^2$ or it is subquadratic but not faithful anymore.
In the first case, no beneficial trade-off is possible for the third layer.
In the second case, we do not know how to extract smaller emulators from $J$ in subquadratic time.

There is also a hard limit to this analysis due to the fact that the bound for computing $J$ contains the super-linear term $|\widehat H| \cdot |Y|^{1-\delta}$ where $\delta = \frac 1 3$.
Together with the bound for the time needed to compute $\widehat H$, this results in the terms $n^{1-\alpha} + n^{\alpha - \beta/3}$ in the amortized upper bound where $\alpha > \beta > 0$ govern the sizes of chunks in the first two layers.
It holds that $\max(1-\alpha, \,\alpha - \beta/3) \ge 2/5$.
Similarly, for every fixed $\delta < 1$ one can obtain a lower bound of~$n^{\Omega(1)}$ on the corresponding sum.
Hence the key to bypass this barrier would be to compute $J$ in almost-linear time in $|\widehat H|$.

\begin{question}
    Let $G$ be a planar digraph and $X \sub V(G)$.
    Is there an algorithm that, given a \re $H$ for $(G,X)$ and $Y \sub X$, runs in time $|H|^{1+o(1)}$ and outputs a \re for $(G,Y)$ of size $|Y|^{1+o(1)}$?
\end{question}

A positive resolution of this question would result in an offline algorithm with  $n^{o(1)}$ query/update time by following the scheme from~\cite{chen2020fast}.
As the case of dynamic planar approximate distances suggests~\cite{ChangKT22, chang2022arxiv, Filtser24}, this may be the key to decipher the online complexity of dynamic planar reachability as well.
It is intriguing that we know that a \re for $(G,Y)$ of size $\Ot(|Y|)$ exists and $H$ contains all the information needed to compute it but we do not know how to extract this information when we cannot access the planar structure of $G$.
In fact, it is unclear whether this problem can be solved even in polynomial time since we do not know how to reconstruct $G$ from~$H$ alone.
It may be the case that such an algorithm does not exist if we allow $H$ to be an arbitrary \re and a potential solution may require
restricting both emulators (input and output) to some well-behaving class, e.g., faithful.

\paragraph{Minor-free digraphs.}

In \Cref{thm:minor-free-dyn-reach} we have presented a reachability data structure for $K_h$-minor-free digraphs with
truly sublinear query/update time.
It is desirable to transfer our results for planar digraphs to $K_h$-minor-free digraphs but there are two concrete reasons that make it challenging.
First, in the online algorithm we exploit the fact that the computed $r$-division has a nice topological structure (``few holes'') and so we can use \Cref{thm:decremental-bnd-emulator} to compress the information about paths connecting terminals lying on $\Oh(1)$ faces.
Second, it is unclear whether the construction from \Cref{sec:faithful} can be generalized to $K_h$-minor-free digraphs
due to the need of finding balanced separators formed by $\Oh(1)$ dipaths (\Cref{lem:faithful:balanced}).
It has been claimed that separators with a~similar property exist in $K_h$-minor-free digraphs~\cite{Bhattacharyya12} but we are only aware of a construction that (1) concerns shortest-paths separators and (2) estimates the running time as an~unspecified polynomial~\cite{abraham2006object}.
Since this construction requires full access to the Robertson-Seymour decomposition of a graph, improving the running time to near-linear (or close-to-linear) seems challenging.

\paragraph{Online dynamic distances.}

Finally, recall that for the task of maintaining exact distances in a dynamic undirected planar graph the best query/update time remains $\Ot(n^{2/3})$~\cite{FR06,DBLP:journals/talg/KaplanMNS17} and the optimal $\Ot(\sqrt{n})$ bound has been only achieved in the offline setting~\cite{Das22}.
It has been pointed out in~\cite[\S 2]{Das22} that, in order to improve upon $2/3$ in the online setting, it suffices to maintain a so-called {\em dense distance graph} under edge weight increases in truly sublinear update time.
While the set systems studied in this work are inappropriate to address this challenge, there are other, graph-distance-related, set systems that also enjoy bounded VC dimension in planar graphs~\cite{li2019planar, LeW24, KarczmarzZ25}.
Our study of dynamic reachability
shows that the advantages of such set systems can be leveraged to shave off polynomial factors in dynamic data structures and suggests a way to tackle the dynamic distances as~well.
\appendix

\newcommand{\tr}{\mathcal{T}}

\section{Decremental reachability emulator}\label{app:decremental-emulator}
In this section, we outline how the data structure of~\cite{DBLP:conf/soda/Karczmarz18} implies~\Cref{thm:decremental-bnd-emulator}.\footnote{Throughout this paper, references to sections and numbered statements in~\cite{DBLP:conf/stoc/ItalianoKLS17,DBLP:conf/soda/Karczmarz18} use the numbering of the respective full versions linked in the bibliography.}

\paragraph{Decremental transitive closure.} Let us first review the decremental transitive closure data structure of~\cite{DBLP:conf/soda/Karczmarz18}.
By the reachability-preserving reduction of~\cite[Lemma~5.2]{DBLP:conf/soda/Karczmarz18}, based on~\cite[Lemma~6.1 and Corollary~6.2]{DBLP:conf/stoc/ItalianoKLS17}, we may assume that $G$ has a \emph{simple recursive decomposition}.
This decomposition is a tree $\tr(G)$ representing a hierarchy of $\Oh(n)$ subgraphs $P\subseteq G$ of total size $\Ot(n)$, called pieces.
The graph $G$ is the root of $\tr(G)$, and the leaf pieces have size $\Oh(1)$.
Each non-leaf piece $P$ has two children $P_1,P_2$ such that $P_1\cup P_2=P$.
The piece~$P$ comes with a boundary set $\bnd{P}$ of vertices that includes vertices shared with pieces that are neither ancestors nor descendants of $P$ in $\tr(G)$. For non-leaf pieces $P$, we have $\bnd{P}\subseteq \bnd{P_1}\cup \bnd{P_2}$ and $V(P_1)\cap V(P_2)\subseteq \bnd{P_1}\cap \bnd{P_2}$.
Moreover, $\bnd{P}$ lies on $\Oh(1)$ pairwise disjoint simple holes (recall the definition from~\Cref{s:rdiv}),
called \emph{separator curves} in~\cite{DBLP:conf/stoc/ItalianoKLS17,DBLP:conf/soda/Karczmarz18} (see \cite[Definition~4.1]{DBLP:conf/stoc/ItalianoKLS17}). \space{}
Finally, we have
$|\bnd{P}|=\Oh(\sqrt{n})$ for every piece $P$ and $\sum_{P\in\tr(G)}|\bnd{P}|^2=\Ot(n)$.

For any digraph $H$, let $H^*$ denote the transitive closure of $H$, i.e., a graph on $V(H)$ such that $(u,v)\in E(H^*)$ if and only if there is a  path $u\to v$ in $H$.
For each piece $P$, the data structure \emph{explicitly} maintains the induced subgraph $P^*[\bnd{P}]$ of the transitive closure of $P$.
For leaf pieces, these constant-size graphs are recomputed directly.
Observe that $P^*[\bnd{P}]$ can be equivalently viewed as a matrix of $|\bnd{P}|^2$ bits encoding the reachability between vertices $\bnd{P}$ in $P$.
When an edge $e$ is deleted from $G$, it is also deleted from all the pieces $P\in\tr(G)$ it resides in.
As a result, each piece $P$ is also a decremental graph, and so is the transitive closure $P^*[\bnd{P}]$.
Consequently, due to any sequence of edge deletions issued to $G$, each $P^*[\bnd{P}]$ can undergo at most $|\bnd{P}|^2$ updates, where a $1$-bit turns into a $0$-bit if $P^*[\bnd{P}]$ is viewed as a matrix.

The way the boundaries of the decomposition's pieces are defined implies that $P^*[\bnd{P}]$ can be computed inductively as $P^*[\bnd{P}]=(P_1^*[\bnd{P_1}]\cup P_2^*[\bnd{P_2}])^*[\bnd{P}]$.
The main technical idea behind the data structure of~\cite[Theorem~5.17]{DBLP:conf/soda/Karczmarz18} is that for each non-leaf piece $P$, $P^*[\bnd{P}]$ can be explicitly maintained subject to batches of updates to the ``children'' graphs $P_1^*[\bnd{P_1}]$ and $P_2^*[\bnd{P_2}]$ in $\Ot\left(|\bnd{P_1}|^2+|\bnd{P_2}|^2\right)$ total time using a Monte Carlo randomized algorithm correct with high probability.
As a result, the matrices $P^*[\bnd{P}]$ can be updated bottom-up after every edge deletion in $G$.
All the updates are thus processed in $\Ot\left(\sum_{P\in\tr(G)}|\bnd{P}|^2\right)=\Ot(n)$ total time.

\paragraph{Adjustments to~\cite{DBLP:conf/soda/Karczmarz18}.} Let us now describe the modifications needed  to derive~\Cref{thm:decremental-bnd-emulator} from the outlined data structure.
Recall that $G$ comes with pre-defined boundary $\bnd G$ of size $O(\sqrt{n})$ located on $\Oh(1)$ pairwise disjoint holes of $G$.
To prepare the graph $G$ for decomposition described in~\cite{DBLP:conf/stoc/ItalianoKLS17}, we need to ensure that $G$ is simple, connected, constant-degree, each non-hole face is a triangle, and the holes are disjoint simple faces of $G$.
To this end, we first make $G$ connected using auxiliary edges outside the holes.
For the pre-defined vertices $B\subseteq \bnd{G}$ of hole $h$, draw an auxiliary facial cycle of size $\Oh(|B|)$ inside $h$, meeting $G$ exactly in $B$ and using one corner at each vertex of $B$.
Afterwards, the vertices $B$ lie on a simple face of $G$; relabel $h$ to be that face.
Note that at this point the holes are simple and disjoint.
Next, apply the vertex-cycle expansion of~\cite[Section~8.1]{DBLP:conf/stoc/ItalianoKLS17} in the given embedding to make $G$ bounded-degree.
This step makes each $h$ grow by a constant factor; it is also easy to assign cycle representatives $B'$ of $B$ on the expanded hole $h$.
Then, triangulate the non-hole faces using the construction with new vertices from~\cite[Section~6]{DBLP:conf/soda/Karczmarz18}, which preserves the embedding, simplicity, and bounded degree.
Finally, apply the decomposition of~\cite[Section~8.2]{DBLP:conf/stoc/ItalianoKLS17} with the obtained holes (faces) as initial holes.
The resulting root boundary has size $\Oh(|\bnd G|)$ and contains representatives of $\bnd G$ on $\Oh(1)$ pairwise disjoint simple faces; the same subtree bound $\Oh(n\log n+|\bnd G|^2)=\Ot(n)$ preserves the required quantitative properties.
After deleting the auxiliary edges other than the edges inside the strongly connected vertex gadgets of both expansion steps, restricting the maintained root matrix to these representatives and relabeling them explicitly maintains $G^*[\bnd G]$. \space{}
The total update time of the data structure remains $\Ot(n)$.

We now explain how the matrix $G^*[\bnd{G}]$ can be efficiently converted into an $\Ot(\sqrt{n})$-sized reachability emulator for $(G,\bnd{G})$.
We rely on the following reachability matrix decomposition from~\cite{DBLP:conf/stoc/ItalianoKLS17}, also used in~\cite{DBLP:conf/soda/Karczmarz18}.
Roughly speaking, the decomposition is based on the simple observation that if four vertices $a,b,c,d$ lie on a single hole of $G$ in that (cyclic) order, then any two $a\to c$ and $b\to d$ paths have to cross.
This can be used to prove that the reachability between any non-interleaving sets $A,B$ on a single hole has a concise representation in the form of $|A|$ intervals over a subset of $B$ reachable from $A$, i.e., has complexity $\Oh(|A|+|B|)$ as opposed to $\Oh(|A|\cdot |B|)$.
For example, if $\bnd{G}$ lies on a single hole of $G$, then the set of all pairs $(u,v)\in \bnd{G}\times\bnd{G}$ can be recursively partitioned into $\Oh(|\bnd{G}|)$ subsets $A\times B$ of the aforementioned form so that the total sum of $|A|+|B|$ is $O(|\bnd{G}|\log|\bnd{G}|)$.
Indeed, if one splits $\bnd{G}$ into non-interleaving sets $X,Y$ of roughly equal size, then the sought decomposition can be formed to include $X\times Y$, $Y\times X$, and the recursive decompositions of $X\times X$ and $Y\times Y$.
Reachability between pairs from distinct holes can be handled similarly, see~\cite{DBLP:conf/stoc/ItalianoKLS17} for a proof.
\newcommand{\aparts}{\mathcal{A}}
\newcommand{\act}{\operatorname{act}}

\begin{lemma}[{\cite[Lemma~4.7]{DBLP:conf/stoc/ItalianoKLS17}}]\label{lem:reach-matrix-parts}
    Let $U\subseteq V(G)$ lie on $\Oh(1)$ pairwise disjoint simple holes of a plane digraph $G$.
    Let $A\in \{0,1\}^{U\times U}$ be the reachability matrix such that for $s,t\in U$, $A_{s,t}=1$ if and only if there is a path $s\to t$ in $G$.
    For any $X,Y\subseteq U$, let $A^{X,Y}$ denote the submatrix of $A$ with rows $X$ and columns $Y$.
    Moreover, let $\act(A^{X,Y})$ denote the subset of columns $Y'\subseteq Y$ that contain at least one $1$ in $A^{X,Y}$.

    There exists an ordering $\prec$ of $U$ and a partition of $A$ into a set $\aparts=\{A^{S_1,T_1},\ldots,A^{S_k,T_k}\}$ of disjoint submatrices of $A$ satisfying the following conditions.
    \begin{enumerate}[(1)]
    \item For each $s,t\in U$, there exists exactly one $A^{S_i,T_i}\in\aparts$ such that $s\in S_i$ and $t\in T_i$.
    \item For each $A^{S_i,T_i}\in\aparts$ with $T'=\act(A^{S_i,T_i})$, the ones in each row of $A^{S_i,T'}$ form $\Oh(1)$ intervals of the set $T'$ ordered by $\prec$.
    \item For any $s\in U$ ($t\in U$, respectively) there are $\Oh(\log{|U|})$ matrices $A^{S_i,T_i}\in\aparts$ such that $s\in S_i$ ($t\in T_i$, resp.).
\end{enumerate}
The diagonal entries form singleton submatrices.
The sets $S_i,T_i$ that define the partition $\aparts$ do not depend on the entries of $A$, but only on the cyclic order of $U$ on the $\Oh(1)$ holes the set $U$ lies on. The sets $S_i,T_i$ and the order $\prec$ can be computed in $\Oh(|U|^2)$ time {given the vertices $U$ partitioned into the $O(1)$ curves along with the cyclic orders of these vertices on the curves.}\space{}
\end{lemma}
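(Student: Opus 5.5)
This is Lemma~4.7 of~\cite{DBLP:conf/stoc/ItalianoKLS17}. The plan follows the sketch preceding it: a recursive non-interleaving split within a single hole, plus a separate treatment of pairs of distinct holes. The topological fact used throughout is the crossing argument. If $a,b,c,d$ appear in this cyclic order on one hole (or, more generally, the relevant endpoints lie on the boundary of the closed disk-like region complementary to the holes), then any $a\to c$ path and any $b\to d$ path share a vertex. Hence $a\leadsto d$ and $b\leadsto c$.

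\emph{Single hole, non-interleaving sets.} Let $S,T$ be disjoint arcs of the same hole. Order $S$ along its arc and $T$ along its arc in the opposite direction, so that $S$ followed by $T$ is the cyclic order. Consider $s_1\prec s_2$ in $S$ and $t_1\prec t_2\prec t_3$ in $T'=\act(A^{S,T})$.
\begin{itemize}
\item Suppose $s$ reaches $t_1$ and $t_3$ but not $t_2$. Because $t_2$ is active, some $s'\in S$ reaches $t_2$.
\item A path from $s'$ to $t_2$ must cross either the $s\to t_1$ path or the $s\to t_3$ path, depending on which side of $s$ the vertex $s'$ lies. Either crossing yields $s\leadsto t_2$, a contradiction.
\end{itemize}
So each row of $A^{S,T'}$ is a single interval. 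The interval endpoints also move monotonically in $s$, but that is not needed here.

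\emph{Recursive partition of one hole.} Split the cyclic sequence $U_h$ into two contiguous arcs $X,Y$ of equal size. Output the blocks $X\times Y$ and $Y\times X$, then recurse on $X\times X$ and $Y\times Y$. Recursion stops at singletons, which produces the diagonal singleton submatrices. Each recursive level is again a contiguous arc, so the non-interleaving argument applies at every level. The ordering $\prec$ is the cyclic order cut at one point; every arc is then an interval of $\prec$ (up to reversal, which the interval property tolerates). Each element appears in $O(\log|U|)$ blocks, one per level, which gives (3). Property (1) follows from the partition structure.

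\emph{Distinct holes.} For two holes $h_1\neq h_2$, one cannot simply take the single block $U_{h_1}\times U_{h_2}$. The crossing argument fails because the region between the holes is an annulus, in which paths can wind around. This is the main obstacle. The plan is to cut the annulus along a curve from $h_1$ to $h_2$ that avoids the embedding of $G$ except possibly at vertices. This is the standard trick: after the cut, both holes lie on one face of a planar "cut-open" picture.

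The difficulty is that paths may cross the cut, which breaks the crossing lemma. To handle this, do the following:
\begin{enumerate}
\item Split $U_{h_1}$ recursively into halves $X$, and likewise $U_{h_2}$ into halves.
\item Apply the pairing to blocks $X\times Y$ with $X$ an arc of $h_1$ and $Y$ an arc of $h_2$. Choose the cut to pass between the arcs, so that $X$ and $Y$ appear non-interleaved on a single topological disk boundary.
\end{enumerate}
It must be checked that the $O(1)$ intervals claimed in (2) survive. The expected justification is that a path can be homotoped to cross the cut boundedly often up to reachability-equivalence, so a constant number of "winding classes" arises. Each class contributes one interval, giving $O(1)$ intervals per row.

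This bookkeeping carries the real content. The plan is to cite~\cite[Lemma~4.7]{DBLP:conf/stoc/ItalianoKLS17} for the multi-hole case and verify only the single-hole case in detail. With $O(1)$ hole pairs, each contributing $O(\log|U|)$ blocks per vertex, (3) holds overall. The sets depend only on the cyclic orders, and enumerating the $|U|^2$ entries gives $O(|U|^2)$ construction time.
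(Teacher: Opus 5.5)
Your single-hole argument is correct and matches the sketch the paper gives before citing \cite{DBLP:conf/stoc/ItalianoKLS17}. It uses the four-point crossing fact, proves one interval per row for two non-interleaving arcs (your case split on which side of $s$ the vertex $s'$ lies picks the right crossing pair), and splits recursively into $X\times Y$, $Y\times X$, $X\times X$, $Y\times Y$. This yields (1), (3), the singleton diagonal, and independence from the entries of $A$. Like the paper, you defer distinct holes to the citation, so the overall plan stands.

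Your paragraph on distinct holes, however, is wrong and should be replaced. The annulus is not an obstacle: the single block $U_{h_1}\times U_{h_2}$ already has the required property. The cut-and-winding scheme you outline is also unsupported. A cut through vertices can be crossed by paths arbitrarily often, and nothing bounds the number of ``winding classes''.

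Here is a direct argument.
\begin{enumerate}
\item Place a vertex $\alpha$ inside the region of $h_1$, joined to all of $U_{h_1}$. Place a vertex $\beta$ inside the region of $h_2$, joined to all of $U_{h_2}$ in their cyclic order. Planarity is kept, since these two regions are disjoint and avoid $G$.
\item Suppose $s\in U_{h_1}$ reaches $t_a,t_c$ but not $t_b,t_d$, where $t_a,t_b,t_c,t_d$ are in cyclic order on $h_2$. Suppose also that $s_b\leadsto t_b$ and $s_d\leadsto t_d$ for some $s_b,s_d\in U_{h_1}$, so $t_b,t_d$ are active columns.
\item Let $W$ be a simple (undirected) $t_a$--$t_c$ path inside the union of an $s\to t_a$ dipath and an $s\to t_c$ dipath. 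Every vertex of $W$ is reachable from $s$.
\item The cycle $C=W+\beta t_a+\beta t_c$ separates $t_b$ from $t_d$, by the rotation at $\beta$. Note that $t_b,t_d\notin C$, since they are not reachable from $s$.
\item The walks $\alpha s_b\cdots t_b$ and $\alpha s_d\cdots t_d$ avoid $C$, because any common vertex would be reachable from $s$.
\item Hence $t_b$ and $t_d$ both lie on the side of $C$ containing $\alpha$, a contradiction.
\end{enumerate}
Thus in each row of the block, the ones among the active columns form a single cyclic run on $h_2$, which is at most two intervals of $\prec$. Each vertex gains only $\Oh(1)$ further blocks, one per other hole. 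Winding plays no role, because the separating cycle is built from the two dipaths leaving $s$.
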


We additionally maintain the matrix $A=G^*[\bnd{G}]$ partitioned by applying~\Cref{lem:reach-matrix-parts} to the representatives of $\bnd G$ on the expanded root boundary.
That is, when the data structure is initialized, we compute the partition $\aparts=\{A^{S_1,T_1},\ldots,A^{S_k,T_k}\}$.
Note that $\sum_i^k (|S_i|+|T_i|)=\Oh(|\bnd{G}|\log{n})=\Ot(\sqrt{n})$ and $\sum_i^k |S_i|\cdot |T_i|=\Oh(|\bnd{G}|^2)=\Oh(n)$ by items (3) and (1) of~\Cref{lem:reach-matrix-parts}, respectively.

Each $A^{S_i,T_i}$, being a submatrix of $A$, also undergoes monotonic updates when edges are deleted from~$G$.
The total number of such updates is clearly at most $|S_i|\cdot |T_i|$.
Subject to such updates, we maintain the set $T_i'=\act(A^{S_i,T_i})$, as defined in~\Cref{lem:reach-matrix-parts}, by storing the count of ones in each column of $A^{S_i,T_i}$.
Moreover, for each $s\in S_i$, we maintain the set $\{t:A^{S_i,T_i}_{s,t}=1\}$ of ones in the row $s$ of $A^{S_i,T_i}$ in the form of a minimum-size collection $J_s$ of $\Oh(1)$ intervals over $T_i'$.
Recall that this is always possible by~\Cref{lem:reach-matrix-parts}.

Whenever an edge of $G$ is deleted, (1) the set $T_i'$ might shrink due to some column becoming all-zeroes and (2) some ones in the row $s$ of $A^{S_i,T_i}$ might turn to zeroes.
We need to update $J_s$ to reflect these changes.

To this end, we process column removals from $T_i'$ one by one, and subsequently entry changes in row $s$ all at once.
First consider a change to $T_i'$. Suppose some $z$ is removed from $T_i'$ as a result of the update, i.e., the column $z$ becomes all-zeroes.
For each interval $[x,y]\in J_s$, if $x=y=z$, then such an interval is removed from $J_s$.
Otherwise, if $x<y$, then if $x=z$, the endpoint $x$ is replaced with the successor of $x$ in $T_i'$. Similarly, if $y=z$, then the endpoint $y$ is replaced with its predecessor in $T_i'$.
Note that the above operations can be implemented in $\Ot(1)$ time if $T_i'$ is additionally maintained sorted, e.g., in a balanced binary tree. Thus, a single removal from $T_i'$ can be processed in $\Ot(|J_s|)=\Ot(1)$ time.

When changes to $T_i'$ are processed, each interval in $J_s$ already has endpoints in the updated set~$T_i'$.
We merge any intervals that have become adjacent in this updated order, in $\Ot(1)$ time, so that the collection is again minimum-size.
The next step is to process entry changes in row $s$.
Let $X$ be the subset of such $x\in T_i'$ that $A_{s,x}=1$ before the update and $A_{s,x}=0$ after the update.
We need to replace $J_s$ with a minimal collection of intervals over $T_i'$ whose union equals $\left(\bigcup J_s\right)\setminus X\subseteq T_i'$.
Note that any resulting interval $I$ will be a subinterval of some $[a,b]$ from the current $J_s$.
To compute the part of the sought collection that is a subset of $[a,b]\in J_s$, sort the elements $\{x_1,\ldots,x_l\}=X\cap [a,b]$ in increasing order in $\Ot(|X|)$ time.
For each $x_i$, compute its predecessor $x_i^-$ and successor $x_i^+$ in $T_i'$.
Finally, add the non-empty intervals out of $[a,x_1^-],[x_1^+,x_2^-],\ldots,[x_{l-1}^+,x_l^-],[x_l^+,b]$ to the constructed collection.
This way, the new collection $J_s$ can be found in $\Ot(|X|)$ time.
Recall from~\Cref{lem:reach-matrix-parts} that the resulting minimal collection has size $\Oh(1)$.

Throughout, for each row $s\in S_i$, there are at most $|T_i|$ removals from $T_i'$, and at most $|T_i|$ entry changes in row $s$ of $A^{S_i,T_i}$, so maintaining the interval representations of $\{t:A_{s,t}^{S_i,T_i}\}$ takes $\Ot(|T_i|)$ total time. Over all $s\in S_i$, this amounts to $\Ot(|S_i|\cdot|T_i|)$ total time over all updates to $A^{S_i,T_i}$.

When the data structure updates an entry $(s,t)$ of the matrix $G^*[\bnd{G}]$ (i.e., the entry is changed from~$1$ to $0$) the update is passed to the unique $A^{S_i,T_i}\in\aparts$ with $(s,t)\in S_i\times T_i$.
Consequently, the required interval representations of rows of all $A^{S_i,T_i}\in\aparts$ are maintained in $\Ot\left(\sum_i^k|S_i|\cdot|T_i|\right)=\Ot(n)$ time.

Finally, the following lemma is used to efficiently convert the reachability matrix on $\bnd{G}$ into a reachability emulator.
\begin{lemma}\label{lem:matrix-to-emulator}
Given the partition $\aparts=\{A^{S_1,T_1},\ldots,A^{S_k,T_k}\}$ of the reachability matrix $A=G^*[\bnd{G}]$ from \Cref{lem:reach-matrix-parts} along with the respective sets $T_i'=\act(A^{S_i,T_i})$ and interval representations of ones in all rows of $A^{S_i,T_i'}$, one can construct a reachability emulator for $(G,\bnd{G})$ of size $\Ot(\sqrt{n})$ in $\Ot(\sqrt{n})$ time.
\end{lemma}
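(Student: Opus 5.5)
The plan is to reuse the segment-tree gadget from the proof of \Cref{lem:vsbased:general}, once for each submatrix of the partition $\aparts$. For each $A^{S_i,T_i}\in\aparts$, write $T_i'=\act(A^{S_i,T_i})$ ordered by $\prec$. Build a segment tree $\mathcal{T}_i$ from \Cref{lem:segment-tree} whose leaves are fresh copies of the vertices of $T_i'$, in this order, and direct its edges from parents to children. Each leaf copy of $t\in T_i'$ gets an edge to the original vertex $t\in\bnd{G}$. For each $s\in S_i$, and for each of the $\Oh(1)$ intervals $[a,b]$ in the maintained collection $J_s$, add edges from the original $s$ to every node of $(\mathcal{T}_i)_{[a,b]}$. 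The emulator $H$ is the union of these gadgets over all $i$, glued along the original vertices of $\bnd{G}$.

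For the size and time bound, gadget $i$ has $\Oh(|T_i'|)\le\Oh(|T_i|)$ tree nodes. It also has $\Oh(|S_i|\log|T_i|)$ edges leaving $S_i$, because each row has $\Oh(1)$ intervals by \Cref{lem:reach-matrix-parts}(2) and each interval splits into $\Oh(\log n)$ canonical nodes. Summing over $i$ gives $\Ot(\sum_i(|S_i|+|T_i|))=\Ot(|\bnd{G}|\log n)=\Ot(\sqrt n)$, by item (3) of \Cref{lem:reach-matrix-parts} (each vertex lies in $\Oh(\log n)$ sets $S_i$ and in $\Oh(\log n)$ sets $T_i$). Building each segment tree is linear by \Cref{lem:segment-tree}, and the canonical nodes are found in $\Oh(\log n)$ time per interval. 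So the construction runs in $\Ot(\sqrt n)$ time.

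For correctness, I would argue edge by edge, as in the claim inside \Cref{lem:vsbased:general}.

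\emph{Completeness.} Suppose $A_{s,t}=1$. By item (1) there is a unique $i$ with $s\in S_i$ and $t\in T_i$. Since $t$ has a one in its column, $t\in T_i'$, and $t$ lies in some interval of $J_s$. Hence $s$ reaches a canonical ancestor of the leaf copy of $t$, then the leaf itself, then $t$.

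\emph{Soundness.} Every path in $H$ decomposes at original vertices into hops of the form $s\to$ tree node $\to$ leaf copy $\to t$, each inside a single gadget $i$. Because tree edges point downward, the leaf reached is a descendant of the chosen node. Such a leaf lies in an interval of $J_s$, so $A_{s,t}=1$ and $s\overset{G}\leadsto t$. Concatenating the hops gives the reachability in $G$.

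The one subtle point is that the intervals in $J_s$ are taken over $T_i'$ rather than $T_i$. This is exactly why the leaves must be $T_i'$: an interval over $T_i$ would also contain zero columns, whereas over $T_i'$ it contains only ones. The maintenance procedure described before the lemma keeps $J_s$ minimal over the current $T_i'$, so no further argument is needed.

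Finally, for the dynamic use in \Cref{thm:decremental-bnd-emulator}, I would simply rebuild $H$ after each deletion. This costs $\Ot(\sqrt n)$ per deletion and matches the claimed total bound of $\Ot(n+k\sqrt n)$.
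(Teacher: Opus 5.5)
Your proposal is correct and follows the paper's proof: one segment tree over $T_i'$ (ordered by $\prec$) for each part of $\aparts$, with each row's $\Oh(1)$ intervals split into $\Oh(\log n)$ canonical nodes, the size bounded via item (3) of \Cref{lem:reach-matrix-parts}, and the emulator rebuilt after each deletion. The only difference is cosmetic: you glue the gadgets directly along $\bnd{G}$ using fresh leaf copies with leaf-to-original edges, whereas the paper builds each $H_i$ as an emulator for $(G,S_i,T_i)$ and combines them with \Cref{l:assemble-from-parts}, which performs the same copy-and-connect gluing.
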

\begin{proof}
    For each $A^{S_i,T_i}$, we will construct a reachability emulator $H_i$ for $(G,S_i,T_i)$ of size $\Ot(|S_i|+|T_i|)$ in $\Ot(|S_i|+|T_i|)$ time.
    Then the result will follow by applying~\Cref{l:assemble-from-parts} to $H_1,\ldots,H_k$.
    The resulting emulator size and running time is $\Ot\left(\sum_{i=1}^k|H_i|\right)=\Ot\left(\sum_i^k |S_i|+|T_i|\right)=\Ot(\sqrt{n})$, as desired.

    The emulator $H_i$ is obtained using the segment tree trick, similarly as we did in~\Cref{lem:vsbased:general}.
    We construct a segment tree $F$ from~\Cref{lem:segment-tree} with leaves $T_i'$ (ordered by $\prec$) and edges of the tree directed from parents to children.
    The vertex set of the emulator is $S_i\cup T_i\cup V(F)$ and the tree $F$ is included in $H_i$.
    For each row $s\in S_i$, we compute the decomposition of the $\Oh(1)$ intervals in the interval representation of the ones in row $s$ of $A^{S_i,T_i'}$ into $\ell=\Oh(\log{n})$ intervals $[a_1,b_1],\ldots,[a_\ell,b_\ell]$ represented in the segment tree~$F$.
    Finally, we add directed edges from $s$ to vertices $F_{[a_j,b_j]}$ for $j=1,\ldots,\ell$ which ensures that $s$ can reach a leaf $t\in T_i'$ if and only if $A^{S_i,T_i}_{s,t}=1$.
    Clearly, the total number of edges in the emulator is $\Oh(|T_i'|+|S_i|\log{n})=\Ot(|S_i|+|T_i|)$.
    One can argue that $H_i$ is indeed a reachability emulator for $(G,S_i,T_i)$ analogously as was done in the proof of~\Cref{lem:vsbased:general}.
\end{proof}

After each update to $G$ and updating the data structures associated with the partition $\aparts$, we recompute the reachability emulator for $(G,\bnd{G})$ using~\Cref{lem:matrix-to-emulator}.
Since this costs $\Ot(\sqrt{n})$ worst-case time per deletion,~\Cref{thm:decremental-bnd-emulator} follows.

\section{Constructing an $r$-division with few holes}\label{app:rdiv-few-holes}

Let us begin by discussing the discrepancies between the usual definition of an $r$-division of few holes \cite{KleinMS13} and the one that we use.
In~\cite{KleinMS13}, an embedding of $G$ is assumed, and the pieces of an $r$-division with few holes $\rdiv$ are assumed to share that embedding.
The pieces of $\rdiv$ are supposed to be connected and are not necessarily pairwise disjoint.
The holes of a piece $P$ are defined to be the faces of~$P$ that are not originally faces of $G$.
For disambiguation, let us call these the \coldef{face-holes} of $P$.
The (bounding cycles of) face-holes of~$P$ contain the boundary vertices of $P$ exclusively, and each vertex of $\bnd{P}$ lies on (the bounding cycle of) some face-hole of $P$.

With such a definition of an \coldef{$r$-division with face-holes},~\cite{KleinMS13} show:

\begin{theorem}[\cite{KleinMS13}]\label{t:rdiv-few-holes}
  Let $G$ be a simple triangulated connected plane graph with $n$ vertices.
  For any $r\in [1,n]$, an $r$-division with few face-holes of $G$
  can be computed in $\Oh(n)$ time.
\end{theorem}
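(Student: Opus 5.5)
The plan is to follow the recursive cycle-separator scheme of Frederickson and Klein--Mozes--Sommer, in two stages: a decomposition procedure whose correctness is easy to establish, and then a speed-up to linear time. Throughout, $G$ is triangulated, connected and simple. Hence Miller's theorem~\cite{DBLP:journals/jcss/Miller86} applies to every piece after the face-holes are triangulated with dummy weight-zero vertices. It yields a simple cycle separator of size $\Oh(\sqrt{|P|})$ that balances any chosen vertex or face weight, in time linear in $|P|$.

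\textbf{Step 1: the decomposition procedure.} A piece $P$, starting from $G$ itself, is split with a Miller separator $C$ into the interior and exterior of $C$. The boundary vertices of each child are the old boundary vertices it contains together with the vertices of $C$. Each face-hole of a child is either an old face-hole or the face bounded by $C$ merged with the old face-holes that $C$ crosses. The weight fed to Miller's theorem cycles with the recursion depth modulo three:
\begin{itemize}
\item at depth $\equiv 0$, weight on vertices, which balances $|P|$;
\item at depth $\equiv 1$, weight on boundary vertices, which balances $|\bnd{P}|$;
\item at depth $\equiv 2$, weight on the faces of $P$ corresponding to face-holes, which balances the number of holes.
\end{itemize}
Recursion stops once $|P|\le r$.

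\textbf{Step 2: the invariants.} Two bounds need proofs.
\begin{itemize}
\item \emph{Boundary size.} Every three levels, $|\bnd{P}|$ shrinks by a constant factor. The separator added at depth $i$ has size $\Oh(\sqrt{n/c^{i}})$, a geometric series. Together these give $|\bnd{P}|=\Oh(\sqrt{r})$ at the leaves.
\item \emph{Number of holes.} Each split adds only $\Oh(1)$ new face-holes, because $C$ is a single cycle. Meanwhile, balancing by holes halves the count every third level, so it stays $\Oh(1)$.
\end{itemize}
A counting argument then shows there are $\Oh(n/r)$ leaves. It sums $\Oh(\sqrt{|P|})$ separator vertices over the recursion tree and charges the duplicated boundary vertices, as in Frederickson. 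This gives a valid $r$-division with few face-holes. The procedure alone runs in $\Oh(n\log(n/r))$ time.

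\textbf{Step 3: removing the logarithmic factor.} This is the step I expect to be the main obstacle. Following~\cite{KleinMS13}, I would first contract the graph into $\Oh(n/\log n)$ connected clusters of size $\Oh(\log n)$ using a linear-time clustering, such as Frederickson's topological clustering on the bounded-degree dual. I would then run the recursion on the contracted graph, where a cycle separator through cluster-vertices is refined back into a cycle of $G$ locally. The aim is that the upper levels cost $\Oh(n)$ in total, while the lower levels work on pieces of size $\Oh(r)$ arranged into a bounded number of layers with geometrically decreasing total size.

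The delicate points are the following:
\begin{itemize}
\item The refined separators must remain simple cycles that respect the existing face-holes.
\item Cluster boundaries must not inflate $|\bnd{P}|$ beyond $\Oh(\sqrt{r})$.
\item The charging must show that each level's work is geometrically dominated, for example by bounding the total piece size at depth $i$ by $n(1+\Oh(2^{-i/2}))$.
\end{itemize}
If this refinement turns out to be too fragile, the fallback is to cite the linear-time hierarchical decomposition of~\cite{KleinMS13} directly and extract the level whose pieces have size $\Theta(r)$.
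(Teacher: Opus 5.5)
\paragraph{Assessment.} The paper gives no proof of this statement: it is imported from \cite{KleinMS13}, whose main result already provides linear-time (recursive) $r$-divisions with few holes. Your closing fallback is therefore exactly the paper's route and suffices on its own. No ``level extraction'' is needed, since \cite{KleinMS13} state the single-$r$ version directly. Your self-contained argument, however, does not prove the theorem, and the gaps are below.

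\paragraph{The stopping rule in Steps~1--2.} Stopping once $|P|\le r$ does not give $|\partial P|=\Oh(\sqrt r)$ at the leaves. Your geometric-series bound is $\Oh(\sqrt{n/c^{i}})$ for a piece at depth~$i$, which is $\Oh(\sqrt r)$ only when $i\approx\log_c(n/r)$. At depths $\not\equiv 0 \pmod 3$ the separator $C$ does not balance size. So a child of a piece with $|P|\gg r$ can already have at most $r$ vertices and stop early. That child contains all of $V(C)\cap V(P)$, and nothing in your argument prevents this from being $\Theta(\sqrt{|P|})\gg\sqrt r$ vertices (e.g.\ $|P|=r^{3/2}$). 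You must also keep splitting pieces with more than $c\sqrt r$ boundary vertices. The $\Oh(n/r)$ piece count then needs a Frederickson-style charging of these extra splits against the total boundary $\Oh(n/\sqrt r)$. Summing separator sizes alone bounds the total boundary, not the number of leaves.

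\paragraph{The linear-time step.} Step~3 is the whole content of the theorem and is left as a list of hopes. With a single clustering into clusters of size $\Theta(\log n)$, a contracted separator for a piece of size $s$ has $\Oh(\sqrt{s/\log n})$ cluster-vertices. Each crossed cluster may contribute $\Theta(\log n)$ vertices, so its lift to $G$ is only guaranteed to have $\Oh(\sqrt{s\log n})$ vertices. Plugging this into your boundary recurrence yields $\Oh(\sqrt r)$ leaf boundaries only if contracted separators are used for pieces of size at least about $r\log n$. The remaining uncontracted recursion from size $r\log n$ down to $r$ still has $\Theta(\log\log n)$ levels of $\Theta(n)$ work each, i.e.\ $\Theta(n\log\log n)$ in total.

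Linear time needs a hierarchy of contractions at geometrically decreasing scales. It also needs a proof that the lifted separators remain simple cycles compatible with the existing face-holes and do not inflate boundaries. These are exactly your unresolved ``delicate points'', and supplying them is what \cite{KleinMS13} does. Without that, your argument gives at most $\Oh(n\log(n/r))$ time, and only after the stopping rule is fixed.
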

It is worth noting that the bounding cycles of the pieces' face-holes in the decomposition of~\Cref{t:rdiv-few-holes} are not necessarily simple or vertex-disjoint in general.

We now explain how to convert the decomposition produced by~\Cref{t:rdiv-few-holes} so that it complies with our requirements posed on an $r$-division with few holes $\rdiv$ (see~\Cref{s:rdiv}).
\Cref{lem:rdiv-few-holes-weak} will follow.

Let $G$ be an arbitrary planar digraph.
We first embed $G$, connect its components and triangulate it using auxiliary edges, thus obtaining $G^+\supseteq G$.
Then, we apply~\Cref{t:rdiv-few-holes} to $G^+$, obtaining an $r$-division with few face-holes $\rdiv^+$.
Recall that the pieces of $\rdiv^+$ might not be necessarily edge-disjoint.

With $\rdiv^+$ in hand, we construct the pieces $P\in\rdiv$ from the corresponding pieces $P^+\in\rdiv^+$ by first setting $P=P^+\cap G$ and then assigning edges appearing in multiple obtained pieces to one arbitrary of them to guarantee edge-disjointness.
We put $\bnd{P}:=\bnd{P^+}$.
Note that the bounding cycle of any face-hole of $P^+$ forms a hole of any subgraph $P$ of $P^+$ according to our definition of a hole.
The bounding cycles of face-holes of~$P^+$ are not necessarily pairwise disjoint, though.
To guarantee the pairwise disjointness of holes of $P$, we simply assign every boundary vertex $v\in \bnd{P}$ to an arbitrarily chosen hole such that $v$ lies on the bounding cycle of the corresponding face-hole in $P^+$.
Finally, we adjust the other holes that originally passed through $v$ slightly to bypass $v$ without touching other holes.
By construction, $\rdiv$ is an $r$-division with few holes according to the definition in~\Cref{s:rdiv}.

\bibliographystyle{alphaurl}
\bibliography{bib}

\end{document}